\numberwithin{equation}{section}
\newcommand{\C}{\mathbb{C}}
\newcommand{\R}{\mathbb{R}}
\newcommand{\DAz}{\frac{1}{\sqrt{2}}\mathbb{D}}
\renewcommand{\d}{ {\delta} }
\newcommand{\tb}{\bullet}
\newcommand{\tw}{\circ}
\theoremstyle{plain}
\newtheorem{theorem}{Theorem}[section]
\newtheorem{corollary}[theorem]{Corollary}
\newtheorem{assumption}[theorem]{Assumption}
\newtheorem*{assumption*}{Assumption}
\newtheorem{claim}[theorem]{Claim}
\newtheorem{lemma}[theorem]{Lemma}
\newtheorem{definition}[theorem]{Definition}
\newtheorem{proposition}[theorem]{Proposition}
\theoremstyle{remark}
\newtheorem{remark}[theorem]{Remark}
\newcommand{\old}[1]{}
\newcommand{\G}{\mathcal{G}}
\newcommand{\T}{\mathcal{T}}
\newcommand{\Or}{\mathcal{O}}
\newcommand{\eps}{\varepsilon}
\newcommand{\ZZ}{\mathbb{Z}}
\newcommand{\CC}{\mathbb{C}}
\newcommand{\Ordo}{\mathcal{O}}
\DeclareMathOperator{\im}{Im}
\DeclareMathOperator{\re}{Re}
\renewcommand{\d}{\,\mathrm{d}}
\renewcommand{\i}{\mathrm{i}}
\newcommand{\e}{\mathrm{e}}
\DeclareMathOperator{\dist}{dist}
\def\LipKd{{\mbox{\textsc{Lip(}$\kappa$\textsc{,}$\delta$\textsc{)}}}}
\def\ExpFat{{{\mbox{\textsc{Exp-Fat(}$\delta$\textsc{)}}}}}
\newcommand{\TG}{\widetilde{A}}
\newcommand{\HS}{H}
\title[title]{Perfect t-embeddings of uniformly weighted Aztec diamonds and tower graphs}
\author[Tomas Berggren]{Tomas Berggren$^\mathrm{a}$}
\author[Matthew Nicoletti]{Matthew Nicoletti$^\mathrm{a}$}
\author[Marianna Russkikh]{Marianna Russkikh$^\mathrm{b}$}
\thanks{\textsc{${}^\mathrm{A}$ Massachusetts Institute of Technology, Department of Mathematics, 77 Massachusetts Avenue, Cambridge, Massachusetts, 02139–4307, United States of America}}
\thanks{\textsc{${}^\mathrm{B}$  California Institute of Technology, Division of Physics, Mathematics and Astronomy, 1200 E California Blvd, Pasadena, CA 91125, United States of America}}
\thanks{\texttt{tomasb@mit.edu}, \texttt{mnicolet@mit.edu}, \texttt{russkikh@caltech.edu}}
\begin{document}

\old{\begin{abstract}
 In this work we rigorously check that, as predicted in a recent work~\cite{Ch-R},
 certain 
 perfect t-embeddings of the uniformly weighted 
 Aztec diamond satisfy all assumptions of
 the main theorem of~\cite{CLR2}. 
This provides a first example for which the theorem can be used to prove convergence of gradients of height fluctuations to those of the Gaussian free field.
An important part of our proof is to exhibit exact integral formulas for the perfect t-embedding of the uniformly weighted Aztec diamond.
 In addition, we construct and analyze perfect t-embeddings of another sequence of uniformly weighted finite graphs called tower graphs. Although we do not check all technical assumptions of the mentioned theorem
 for these graphs, we use perfect t-embeddings to derive a simple transformation which identifies height fluctuations on the tower graph with those of the Aztec diamond.
 

\end{abstract}}

\old{
\begin{abstract}
 In this work we study a sequence of perfect t-embeddings~$\T_n$ of the uniformly weighted 
Aztec diamonds. We exhibit exact integral formulas for this sequence of perfect t-embeddings. 
Using these formulas we show that~$\T_n$ satisfy a certain ``rigidity condition'' for $n$ large enough;  that their corresponding origami maps are Lipschitz with constant strictly less than one on the discrete level; and that in the limit the corresponding origami maps give a Lorentz-minimal surface. This rigorously verifies the predictions of~\cite{Ch-R} and assumptions of~\cite{CLR2} in the setup of uniformly weighted Aztec diamond.

In addition, we construct and analyze perfect t-embeddings of another sequence of uniformly weighted finite graphs called tower graphs and show the convergence of the corresponding origami maps to the same Lorentz-minimal surface. We use perfect t-embeddings to derive a simple transformation which identifies height fluctuations on the tower graph with those of the Aztec diamond.
\end{abstract}}

\begin{abstract}
In this work we study a sequence of perfect t-embeddings of uniformly weighted Aztec diamonds. We show that these perfect t-embeddings can be used to prove convergence of gradients of height fluctuations to those of the Gaussian free field. In particular we provide a first proof of the existence of a model satisfying all conditions of the main theorem of~\cite{CLR2}. This confirms the prediction of~\cite{Ch-R}. An important part of our proof is to exhibit exact integral formulas for perfect t-embeddings of uniformly weighted Aztec diamonds.

In addition, we construct and analyze perfect t-embeddings of another sequence of uniformly weighted finite graphs called tower graphs. Although we do not check all technical assumptions of the mentioned theorem for these graphs, we use perfect t-embeddings to derive a simple transformation which identifies height fluctuations on the tower graph with those of the Aztec diamond.
\end{abstract}

\keywords{Dimer model, Gaussian free field, aztec diamond, tower graphs, perfect t-embeddings, Lorentz-minimal surfaces}


\maketitle

\tableofcontents


\section{Introduction}\label{sec:intro}

\begin{figure}
 \begin{center}
 \vspace{-120pt}
\includegraphics[scale=.46, trim={3cm 8cm 2cm 7cm}, clip]{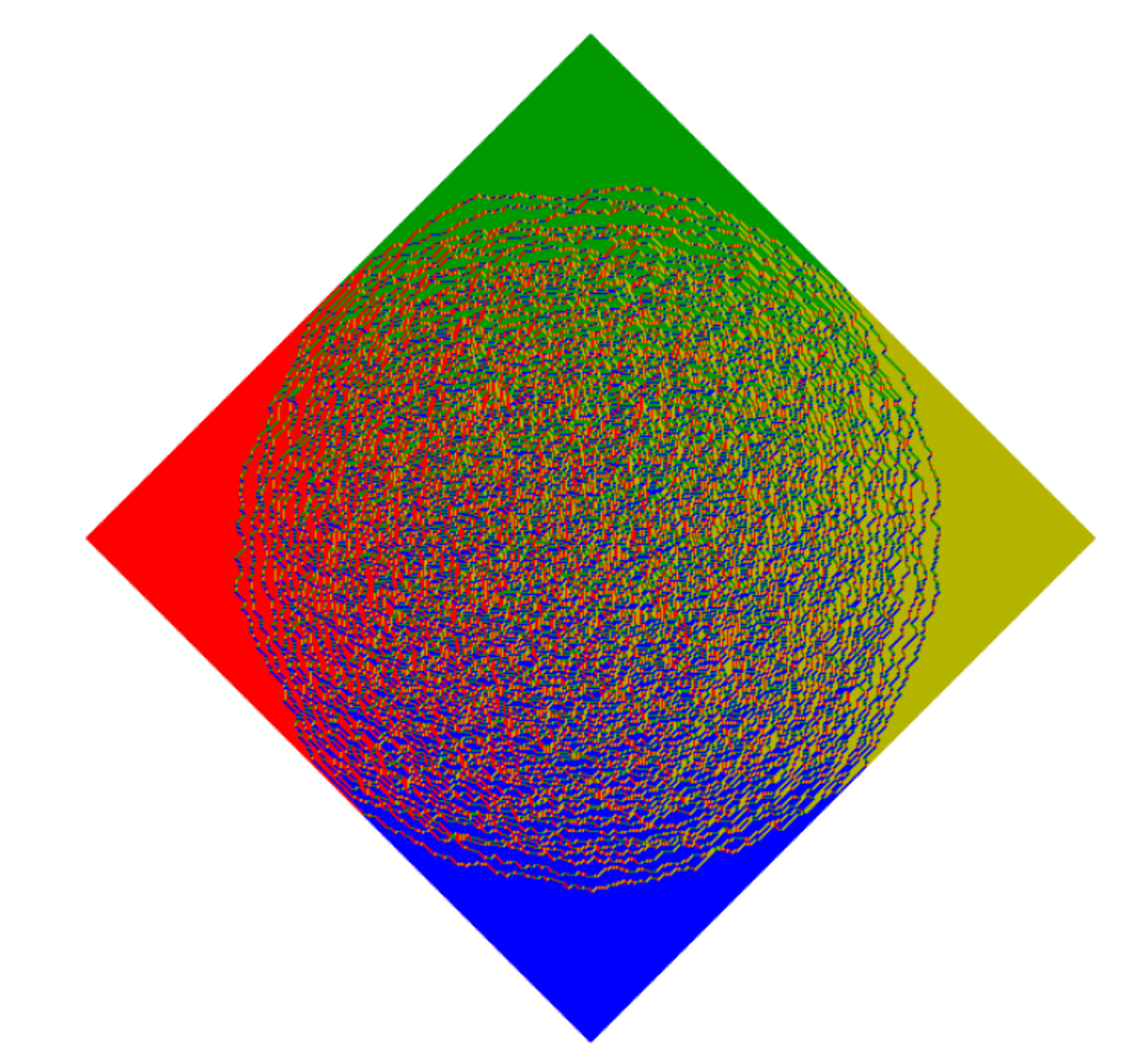}
\qquad
\includegraphics[scale=.35]{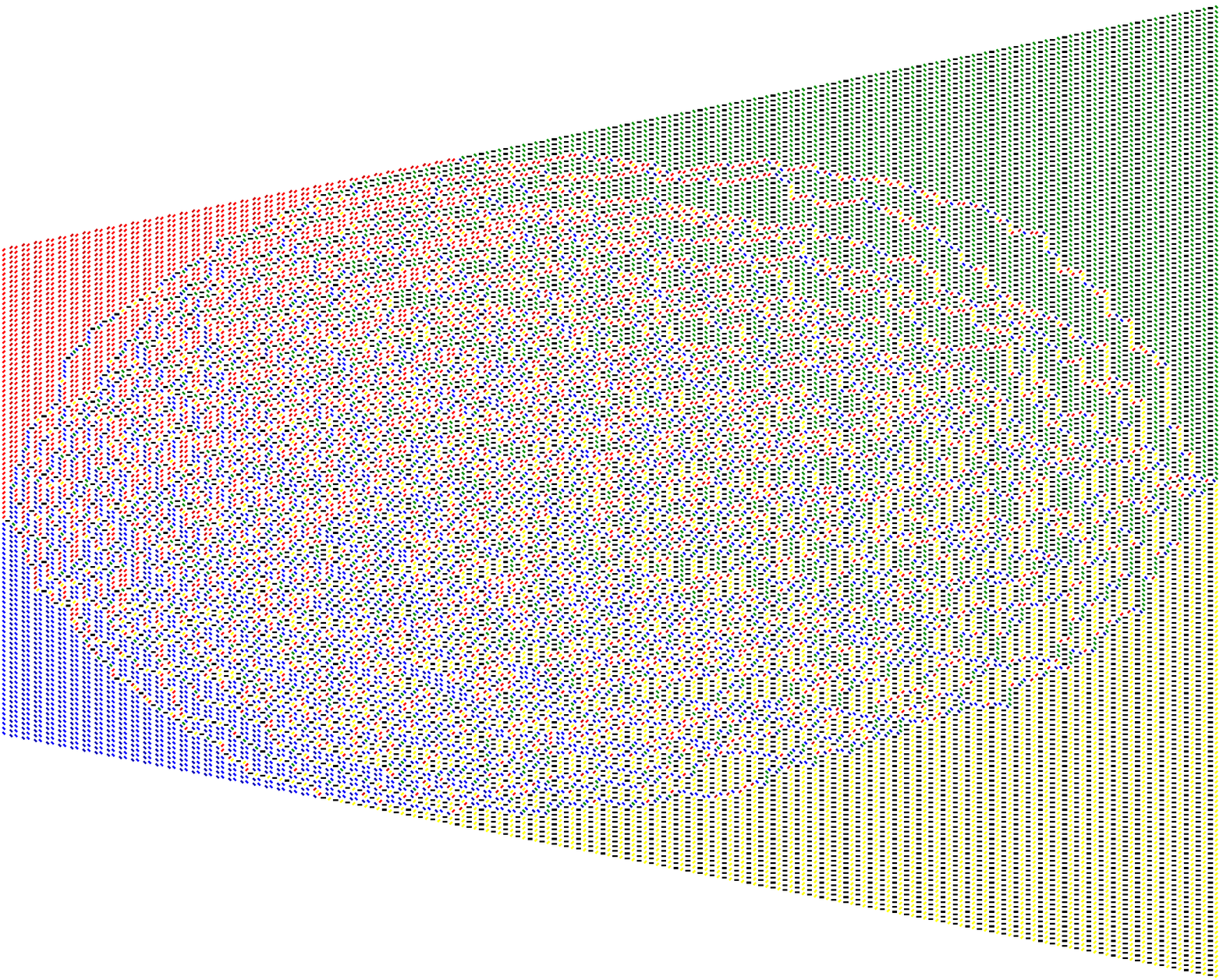}
  \caption{Uniform domino tiling of Aztec diamond (left) and tower graph~(right).}
   \label{fig:rocket}
 \end{center}
\end{figure}

A \emph{dimer cover}, or \emph{perfect matching} of a finite, bipartite, planar graph $\mathcal{G}$ is a subset of edges such that each vertex is incident to exactly one edge. Suppose in addition~$\mathcal{G}$ is equipped with edge weights $\nu : E \rightarrow  \mathbb{R}_{+}$. Then the \emph{dimer model} on~$\mathcal{G}$ is the probability measure on perfect matchings~$m$ defined by
$$\mathbb{P}(m) =\frac{\prod\limits_{e \in m} \nu(e)}{\sum\limits_{m'} \prod\limits_{e \in m'} \nu(e)} .$$
Dimer covers can also be viewed as stepped surfaces via \emph{Thurston's height function} which is defined on the faces of the graph, and the asymptotic behavior of the dimer model can be described via the corresponding random surfaces; see the surveys \cite{gorin2021lectures, Kenyon2007Lecture, kenyon2003introduction} and references therein.

In this work we focus on the uniformly weighted Aztec diamond~$A_n$, or equivalently on uniformly random domino tilings of a certain domain, see Figure~\ref{fig:rocket}, left. For~$n$ large,~$\frac{1}{n} A_n$ approximates the continuous domain~$\mathcal{A} = \{|x| + |y| < 1\}$. It is known that the disk~$\frac{1}{\sqrt{2}}\mathbb{D}$ inscribed in~$\mathcal{A}$ is the \emph{liquid region}, also known as the \emph{rough} region; outside of this region the height function~$h_n$ asymptotically does not fluctuate, whereas in its interior, long range correlations of~$h_n - \mathbb{E}[h_n]$ persist~\cite{jockusch1998random, cohn-elki-prop-96}. In particular, there is a nontrivial complex structure on the liquid region obtained by solving the \emph{complex Burgers equation}, and height fluctuations~$h_n - \mathbb{E}[h_n]$ converge to a Dirichlet Gaussian free field \emph{in this complex structure} on~$\frac{1}{\sqrt{2}}\mathbb{D}$, see~\cite{bufetov2016fluctuations}.

This phenomenon is a special case of a general conjecture of Kenyon-Okounkov~\cite{OkounkovKenyon2007Limit}, which describes the fluctuations of the dimer model height function on subgraphs of~$\mathbb{Z}^2$-periodic lattices with~$\mathbb{Z}^2$-periodic edge weights. Even in this generality, the appearance of Gaussian free field in the complex structure induced by the solution of a certain Burgers equation is expected. Proofs of the Kenyon-Okounkov conjecture have been obtained in a variety of special cases using algebraic and probabilistic techniques \cite{BorFerr2008DF, Duits2011GFF, Petrov2012, Petrov2012GFF,  BoutillierLiSHL, BufetovKnizel, bufetov2016fluctuations, li2021asymptotics, HuangGFF}.

 In some other cases, an appropriate notion of discrete holomorphicity has been used to rigorously analyze the scaling limit of the dimer model. There are several examples of 2-dimensional lattice models in statistical physics, including the critical Ising model on the square lattice and isoradial graphs, 
 which have been analyzed using some discretization of complex analysis  and potential theory; see for example \cite{chelkak2011discrete, chelkak2012universality, ChHI, ChI, HS, SmTowards, SmConfInv, Smirnov2010DiscreteCA} and references therein. Examples where  discretizations of complex analysis
 have been developed and utilized to prove fluctuation results for the dimer model include \cite{Kenyon2000Conformal, Kenyon2004Height, russkikh2018dimers, russkikh2020dominos, Laslier21}.

Recently in~\cite{KLRR, CLR1}, a new type of dimer graph embedding called a \emph{t-embedding} (or \emph{Coulomb gauges}) and a corresponding mapping of the plane called the \emph{origami map} were introduced, and it was shown in~\cite{KLRR} that these structures are preserved under the elementary graph transformations. 
In~\cite{CLR1} a theory of discrete complex analysis on t-embeddings is developed, and in~\cite{CLR2} it is extended and the notion of \emph{perfect t-embeddings} is introduced. Using these ingredients, \cite{CLR2} proves that if a sequence~$(\mathcal{T}_n, \mathcal{O}_n)$ of perfect t-embeddings and associated origami maps satisfies certain regularity conditions, and if the graphs of~$\mathcal{O}_n$ over~$\mathcal{T}_n$ converge to a \emph{maximal surface~$S$ in the Minkowski space~$\mathbb{R}^{2, 1}$}~(the term~\emph{Lorentz-minimal surface} was used instead in an earlier preprint version of this work, and also in~\cite{CLR2} and~\cite{Ch-R}), see Section~\ref{subsec:CLR_result} for the definition, then 
 the gradients of $n$-point correlation functions converge to those of the standard Gaussian free field in the intrinsic metric of the surface~$S$.
In particular, the theorem of~\cite{CLR2} provides a new description of the complex structure inducing the limiting Gaussian free field; this description is not restricted to the~$\mathbb{Z}^2$-periodic setting, whereas the characterization using the Burgers equation is.


\old{
It is shown in~\cite{CLR1} that the version of discrete complex analysis on t-embeddings generalizes many other standard discretizations of complex analysis. In particular, it generalizes the one on 
\emph{s-embeddings} carrying the planar Ising model  earlier introduced in~\cite{ChProceedings}. It was shown in~\cite{KLRR} that s-embeddings are a particular case of the t-embeddings under the combinatorial correspondence of the Dimer and Ising models~\cite{Dub}. The connection between discrete complex analysis on s-embeddings~\cite{Ch-s-emb} and t-embeddings explained in~\cite[Section 8]{CLR1}, and the former can be seen as prototypical example of the work~\cite{CLR1}.}



\begin{figure}
 \begin{center}
\includegraphics[width=0.45\textwidth]{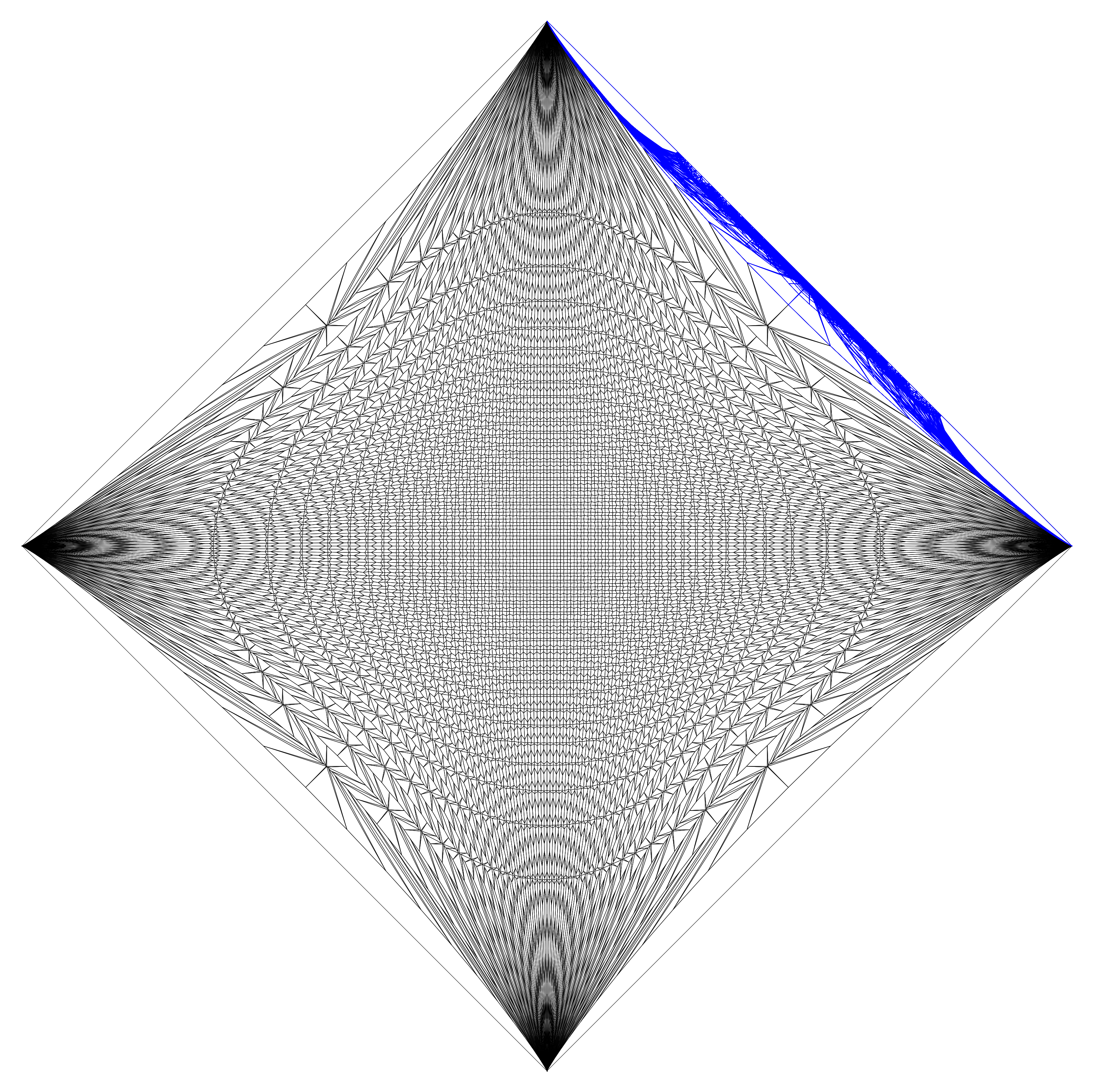}
$\quad\quad$
     \includegraphics[width=0.45\textwidth]{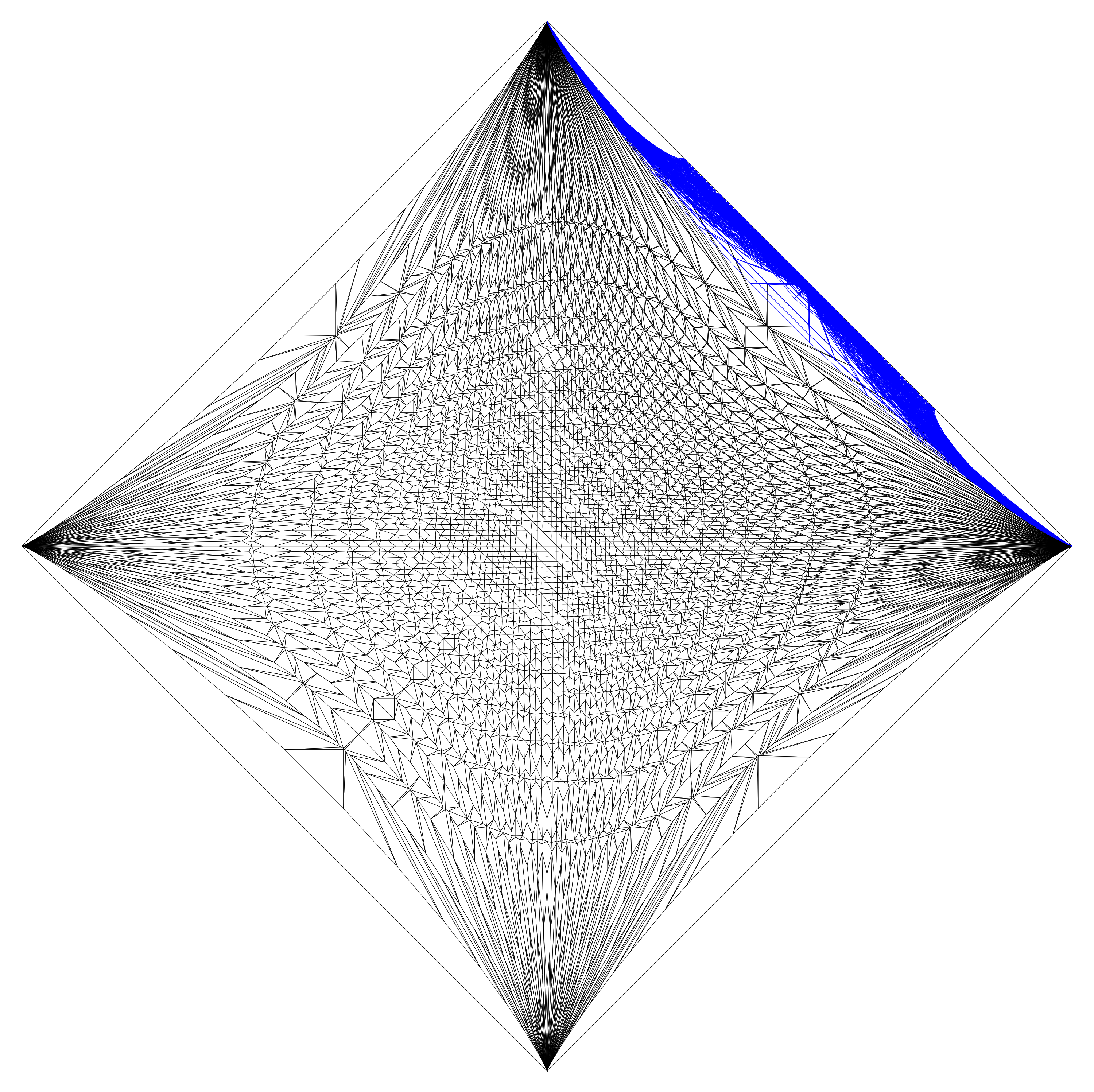}
  \caption{Perfect t-embedding (black) and the image of the embedded graph under the origami map (blue) of Aztec diamond (left) of size $100$ and of tower graph (right) of size $50$.} 
  \label{fig:AdT}
 \end{center}
\end{figure}

The existence of perfect t-embeddings in general, and the validity of the hypotheses of the theorem in any particular case, are left as open questions; this is the motivation for our work.
An important contribution towards that end is the work~\cite{Ch-R}: the authors derive 
 a recurrence relation for  a sequence of perfect t-embeddings for the uniformly weighted Aztec diamond of increasing size. Noting that the recurrence relation can be viewed as a discretization of the wave equation, the authors argue that the scaling limit of the corresponding origami maps is a maximal surface in~$\R^{2,1}$. They also identify this surface and explain that the conformal structure of the limiting maximal surface coincides with the known (Kenyon-Okounkov) conformal structure for the Aztec diamond height fluctuations. 

In this work we study the perfect t-embeddings of uniformly weighted Aztec diamonds, see Figure~\ref{fig:AdT}, with the goal of verifying all assumptions of the main theorem of~\cite{CLR2}. 
 We solve the recurrence relation derived in~\cite{Ch-R} by obtaining a simple expression for the perfect t-embeddings in terms of certain edge inclusion probabilities. Using the solution we give a proof of the convergence of the origami maps to the maximal surface predicted in~\cite{Ch-R}.
We also produce a rigorous verification of~$\LipKd$ and $\ExpFat$, Assumptions~\ref{assumption:Lip} and~\ref{assumption:Exp_fat} described in Section~\ref{subsec:CLR_result}. This provides a first example in which all requirements of the theorem of~\cite{CLR2} are fully verified.
Furthermore, we construct and study perfect t-embeddings of uniformly weighted \emph{tower graphs}, see Figures~\ref{fig:rocket} and~\ref{fig:AdT}, and we find that the graphs of origami maps converge to the \emph{same} maximal surface as in the Aztec diamond case.

We believe our work is an important illustration of the validity and the power of the machinery developed in~\cite{CLR1, CLR2}. On the other hand, our proof of~$\LipKd$ and~$\ExpFat$ for the Aztec diamond, as well as our computation of the conformal structure for the tower graph using perfect t-embeddings, heavily relies on the integrable structures available in these two cases. For example, we utilize the exact formulas for the \emph{inverse Kasteleyn matrix} of the Aztec diamond, which could instead be used to directly prove GFF fluctuations in the Aztec diamond, as is outlined in~\cite{CKBAztec}. In future work, we plan to return to the question of applying the framework of~\cite{CLR1, CLR2} to study the dimer model in other domains, where such rich structure may not be available.

\old{ DO NOT REMOVE THIS PIECE (NOT NOW)
\textcolor{blue}{
The existence of perfect t-embeddings in general, and the validity of the hypotheses of the theorem in any particular case, are left as open questions; this is the motivation for our work.
An important contribution is the work~\cite{Ch-R}: the authors construct there a sequence of perfect t-embeddings for the uniformly weighted Aztec diamond; analyzing the discrete time evolution of the perfect t-embedding they argued that the scaling limit of the corresponding origami maps is a Lorentz-minimal surface, they also identify this surface and explain that the conformal structure of such limiting Lorentz-minimal surface coincide with the known (Kenyon-Okounkov) conformal structure for the Aztec diamond height fluctuations. We build on that work, provide rigorous proofs of some statements from there and complement it with several new ideas.}


\textcolor{blue}{
In this work we study the perfect t-embeddings of uniformly weighted Aztec diamonds, see Figure~\ref{fig:AdT}, with the goal of verifying all assumptions of the main theorem of \cite{CLR2}. We give a proof of the convergence to the Lorentz-minimal surface predicted in \cite{Ch-R}, and also a rigorous verification of Assumptions \ref{assumption:Lip} and \ref{assumption:Exp_fat} described in Section~\ref{subsec:CLR_result}. This provides a first example in which all requirements of the theorem of~\cite{CLR2} are fully verified, and completes a new proof of Gaussian fluctuations for the uniform Aztec diamond. In the process, we discover a simple and useful relationship between certain entries of the inverse Kasteleyn matrix and the t-embedding itself; namely, we are able to write an exact formula for the t-embedding of a size $n$ Aztec diamond in terms of certain edge inclusion probabilities for the dimer model on the size $n$ Aztec diamond. Furthermore, we construct and study perfect t-embeddings of uniformly weighted \emph{tower graphs} (see Figures~\ref{fig:rocket} and~\ref{fig:AdT}, right), and we find that the graphs of origami maps converge to the \emph{same} Lorentz-minimal surface as in the Aztec diamond case.
}
}

\subsection{Statement of results}
\label{subsec:results}
We divide the discussion of our results into two parts; in the first part we state our results about perfect t-embeddings of the uniformly weighted Aztec diamond. In the second part we state results about perfect t-embeddings of uniformly weighted tower graphs, and we describe a connection between the two sequences of embeddings.

The main objects of our study are perfect t-embeddings and the origami maps associated to weighted bipartite graphs. Let us start with a brief definition of these objects and refer the reader to Section~\ref{sec:definitions} for precise definitions. Suppose we have a planar bipartite graph~$\mathcal{G} = (V, E)$ which is equipped with a set of positive edge weights~$\{\nu(e)\}_{e \in E}$. We denote by~$\mathcal{G}^*$ the \emph{augmented dual} of~$\mathcal{G}$, see Figure~\ref{fig:augmented}.
\begin{itemize}
\item \textbf{Perfect t-embeddings.} A \emph{t-embedding}~$\mathcal{T}$ of~$\mathcal{G}$ is a proper embedding of the augmented dual graph~$\mathcal{G}^*$ into~$\mathbb{C}$ satisfying \emph{the angle condition}, and a certain \emph{gauge equivalence} between geometric edge weights and the given edge weights~$\nu(e)$. A t-embedding is \emph{perfect} if in addition the image under~$\T$ of the outer vertices of~$\mathcal{G}^*$ is a tangential polygon to the unit circle, and edges between outer vertices and interior ones lie on angle bisectors. See Figure~\ref{fig_period} for an example. 
\item \textbf{Origami maps.} The origami map~$\mathcal{O}(z)$ is a mapping from the subset of~$\mathbb{C}$ bounded by the image under~$\mathcal{T}$ of the outer vertices of~$\mathcal G^*$, defined up to rotation and translation, which can be obtained from~$\T$ as follows; first choose a reference face~$v_0$ which is fixed by~$\Or$; then for~$z$ in another face~$v$, to compute~$\mathcal{O}(z)$ one reflects across each edge of the t-embedding along a face path from~$v$ to~$v_0$. Well-definedness of this procedure follows from the angle condition.
\end{itemize}

The main goal of this paper is to verify the assumptions 
of~\cite[Theorem 1.4.]{CLR2}. The assumptions of the main result of~\cite{CLR2} are rather technical. We postpone the precise statement until Section~\ref{subsec:CLR_result}. Instead we provide here an imprecise statement of the assumptions of the theorem. The theorem assumes the existence of a sequence of \emph{perfect} t-embeddings~$\mathcal{T}_n$ of bipartite graphs~$\mathcal{G}_n$, with origami maps~$\Or_n$, satisfying the following three properties: 
\begin{enumerate}[(I)]
\item The pair $(\T_n,\Or_n) \to (z,\vartheta(z))$, as $n\to \infty$, where the graph $(z,\vartheta(z)) \in\mathbb{R}^2\times\mathbb{R}$ is a maximal surface in the Minkowski space~$\R^{2,1}$. \label{ass:informal_lorentz}
\item At the discrete level the origami map is Lipschitz continuous with constant strictly less than one. \label{ass:informal_lip}
\item For almost every face, the radius of the largest circle which can be inscribed in the face cannot decay exponentially fast as~$n \rightarrow \infty$.\label{ass:informal_exp-fat}
\end{enumerate}
 It was shown in~\cite{CLR2}  that under these assumptions the gradients of $n$-point correlation functions converge to those of the standard Gaussian free field in the intrinsic metric of the maximal surface. See Theorem~\ref{thm:CLR2thm} for the precise statement.

\begin{figure}
 \begin{center}
 \includegraphics[width=0.6\textwidth]{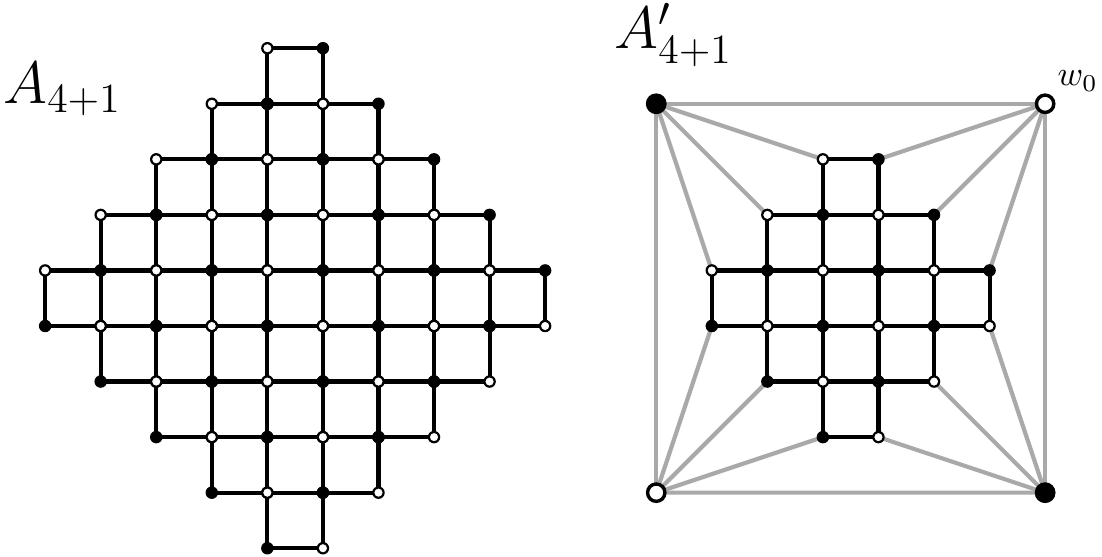}
 $\quad$
 \includegraphics[width=0.3\textwidth]{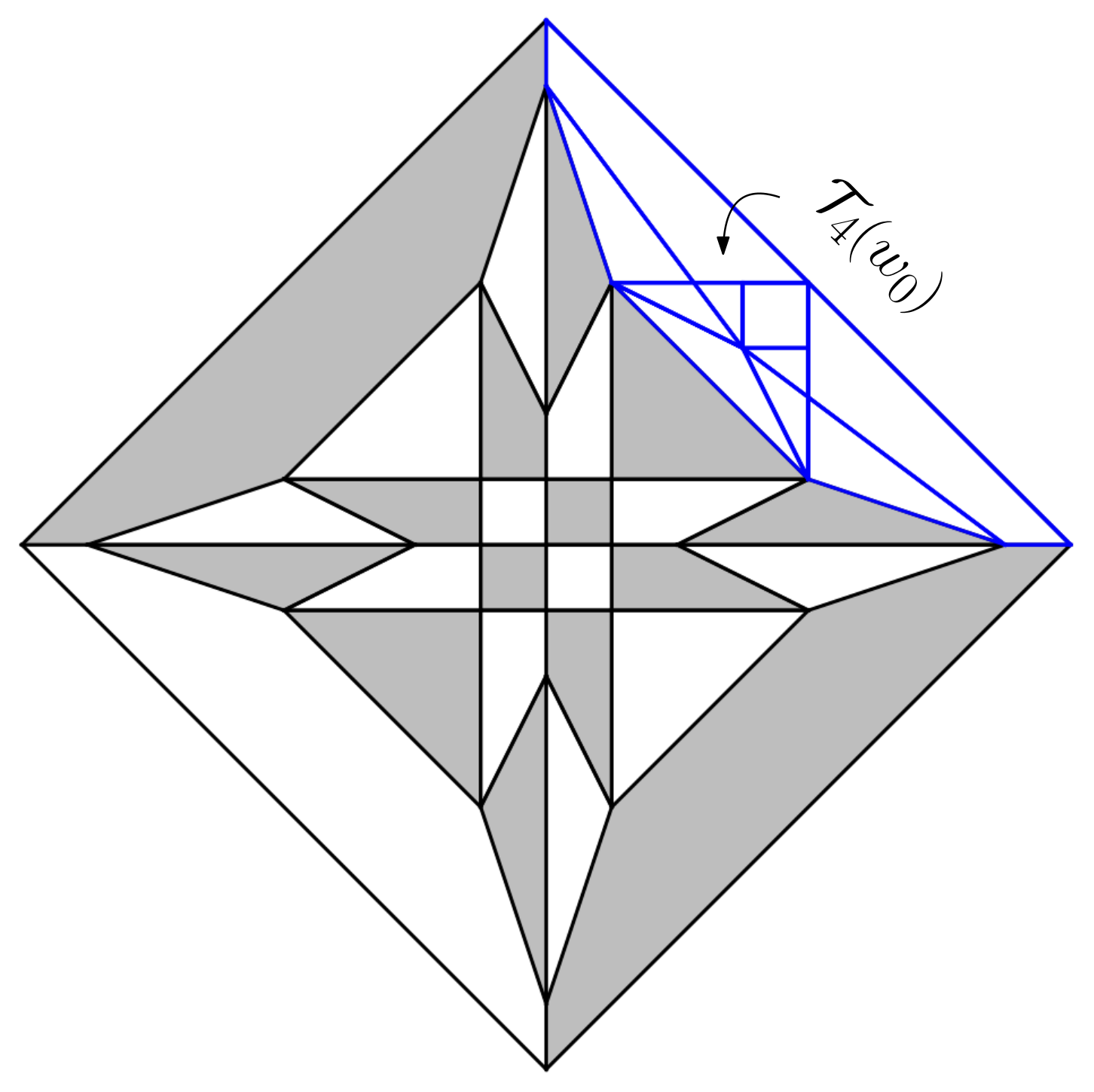}
  \caption{{\bf Left:} Aztec diamond $A_{4+1}$ of size $4+1$. 
  {\bf Middle:} Its reduced Aztec diamond $A'_{4+1}$.  {\bf Left and middle:} Weights on black edges are~$1$ and weights on grey ones are~$2$.  {\bf Right:} perfect t-embedding~$\T_{ 4}((A'_{4+1})^*)$ shown in black and its origami map~$\Or_4((A'_{4+1})^*)$ is shown in blue. The face $\T_4(w_0)$ is the reference face for the origami map.
  }\label{fig_period}
 \end{center}
\end{figure}

\subsubsection{Uniformly weighted Aztec diamond.} 
Our main theorem, Theorem~\ref{thm:Aztec_assumptions}, states that a certain sequence of perfect t-embeddings of the uniformly weighted Aztec diamond satisfies all assumptions of~\cite[Theorem 1.4.]{CLR2}. In order to show this, we found formulas for the perfect t-embeddings and their origami maps in terms of edge probabilities of the dimer model on the uniformly weighted Aztec diamond. These expressions yield simple exact contour integral formulas for perfect t-embeddings of the uniformly weighted Aztec diamond and their origami maps. 
Using these formulas, we show that under the perfect t-embedding, all frozen regions asymptotically are collapsed to points.  

Let us index faces of the shifted square lattice~$(\mathbb{Z}+\frac{1}{2})^2$ by integer pairs~$(j, k)$. Then the Aztec diamond~$A_n$ of size~$n$ is the subgraph defined by the union of faces~$\{(j, k) : |j| + |k| < n\}$, equipped with edge weights which are identically~$1$. The \emph{reduced Aztec diamond} of size~$n+1$, denoted~$A_{n+1}'$, is obtained from~$A_{n+1}$ by a sequence of contractions of degree two vertices on the boundary, see Section~\ref{sec:aztec} for a precise definition. The augmented dual~$(A_{n+1}')^*$ has four boundary vertices indexed by~$(n, 0), (0, n), (-n, 0), (0, -n)$. 
 We study a perfect t-embedding~$\T_n$ of~$A_{n+1}'$
 constructed in~\cite{Ch-R} (we review this construction in Section~\ref{sec:aztec}), such that these four vertices of~$(A_{n+1}')^*$ map to~$1, \i, -1, -\i$, respectively. 
  Define~$\Or_n$ as the corresponding origami map, specified by the reference face of~$\T_n$ adjacent to the boundary vertices~$\T_n(n,0)$ and~$\T_n(0,n)$. See Figure~\ref{fig_period} for an example with~$n=4$. We will show that the limit of the origami maps~$\Or_n$ lies on a segment between the points~$1$ and~$i$, see Figure~\ref{fig:AdT} for the illustration of this convergence. So it is natural to define another version of the origami map by $\Or'_n=\e^{i\tfrac{\pi}{4}}\left(\Or_n - \tfrac{1+i}{2}\right),$
which is a composition of a translation and rotation of the initial one. Note that $\Or_n'$ maps the vertices ~$(n, 0), (0, n), (-n, 0), (0, -n)$ of~$(A_{n+1}')^*$ to~$1/\sqrt{2}$,~$-1/\sqrt{2}$,~$1/\sqrt{2}$, and~$-1/\sqrt{2}$, respectively.

Since we postpone stating precisely the assumptions~\eqref{ass:informal_lorentz}-\eqref{ass:informal_exp-fat} until Section~\ref{sec:definitions}, we give an informal statement of our main result, and refer the reader to Theorem~\ref{thm:main_thm} for a precise statement. 
\begin{theorem}[Informal version]\label{thm:Aztec_assumptions}
Let~$\T_n$ be the sequence of perfect t-embeddings of the reduced uniformly weighted Aztec diamonds~$A_{n+1}'$, with corresponding origami maps~$\Or_n'$. All assumptions of the main theorem of~\cite{CLR2} hold for the sequence~$\T_n$.
\end{theorem}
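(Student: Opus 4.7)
The plan is to reduce all three conditions (I)--(III) to asymptotic analysis of explicit contour integral formulas for the perfect t-embeddings $\T_n$ of $A_{n+1}'$ and their origami maps $\Or_n'$. The key observation is that the recurrence of~\cite{Ch-R} characterizing $\T_n$ can be solved in closed form: $\T_n$ evaluated at each face of $(A_{n+1}')^*$ is expressed as a simple linear combination of edge inclusion probabilities of the uniform dimer model on $A_{n+1}$. Since those probabilities admit well known double contour integral representations coming from the inverse Kasteleyn matrix of the Aztec diamond, this yields explicit contour integral formulas for both $\T_n$ and $\Or_n'$.

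First I would derive and verify these integral formulas. The verification has two parts: (a) show the candidate satisfies the recurrence of~\cite{Ch-R} by direct manipulation of the integrand, and (b) check the outer boundary conditions, namely that the four corner vertices map to $1,\i,-1,-\i$ and the outer polygon is tangential to the unit circle. Uniqueness of the perfect t-embedding solving the recurrence with these boundary data then identifies the candidate with $\T_n$.

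To establish (I), I would run a steepest descent analysis of the integrals: inside the liquid region $\DAz$ the two relevant saddle points yield a smooth pointwise limit $(z,\vartheta(z))$; outside $\DAz$ no saddle contributes, so each of the four frozen corners of $\mathcal A$ collapses to a single point on $\partial\mathbb D$ under $\T_n$, which also justifies the rotation-translation passage from $\Or_n$ to $\Or_n'$ recorded in the introduction. Finally one must check that the limit surface is maximal in $\R^{2,1}$; this follows from the discrete wave-equation structure identified in~\cite{Ch-R} together with the explicit form of $\vartheta$ produced by the saddle-point computation.

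For (II) the discrete Lipschitz constant of $\Or_n'$ is the edge-wise supremum of a ratio of increments which the integral formulas let us write as a quotient of kernels, and I expect this quotient to admit a pointwise upper bound strictly less than $1$ with explicit error control. For (III) the inscribed radii of faces of $\T_n$ can be bounded below by finite differences of the same integrals. The main obstacle I anticipate is the behaviour in a neighborhood of the arctic circle $\partial \DAz$: there the Lipschitz ratio tends to $1$ and face sizes shrink, so the strict discrete inequality in (II) and the subexponential lower bound in (III) must be extracted from a delicate balance between the $n$-dependent error in the saddle-point expansion and the $n$-dependent distance from $\partial\DAz$. Carrying out this boundary-layer analysis uniformly in the face index is where the bulk of the technical work will lie.
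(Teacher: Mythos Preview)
Your overall architecture---solve the recurrence in terms of edge probabilities, pass to double contour integrals via the inverse Kasteleyn matrix, and run steepest descent---is exactly what the paper does. But there is one genuine misconception and one structural difference worth flagging.

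\textbf{The boundary layer is not the obstacle.} You anticipate that the bulk of the work lies in controlling the Lipschitz ratio and face sizes near $\partial\DAz$, where the former tends to $1$ and the latter to $0$. This concern is misplaced: conditions (II) and (III) in Theorem~\ref{thm:CLR2thm} are only required to hold on compact subsets of the \emph{image} domain $\Omega=\diamondsuit$, with $\kappa$ and the fatness parameters allowed to depend on the compact set. The paper shows (Lemma~\ref{lem:compact_pullback}) that for $n$ large the $\T_n$-preimage of any compact $\mathcal K_\diamondsuit\subset\diamondsuit$ is contained in a compact subset of the liquid region $\DAz$, precisely because frozen regions collapse to the four corners (Corollary~\ref{Cor_main_frozen}). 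So you never need to work near the arctic circle at all, and no boundary-layer analysis is required.

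\textbf{The route to (II) and (III) is different.} You propose to bound the discrete Lipschitz constant as an edge-wise supremum of a kernel ratio, but note that $\LipKd$ is a statement about arbitrary pairs $z,z'$ with $|z-z'|\ge\delta$, not about adjacent vertices; the constant for a long segment is not simply the worst edge ratio, because the origami increments along a path can cancel. The paper instead isolates a single structural condition, the \emph{rigidity condition} (Assumption~\ref{assumption:bounded}): on compacts, edge lengths of $\T_n$ are uniformly comparable to $1/n$ and face angles are uniformly in $(\varepsilon,\pi-\varepsilon)$. It then proves, by a general combinatorial argument independent of the Aztec diamond (Proposition~\ref{prop:lip}, Proposition~\ref{prop:exp_fat_under_rigiditi}), that rigidity implies both $\LipKd$ and $\ExpFat$ with $\delta_n=\tfrac{\log n}{n}$. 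The Aztec-specific work is then reduced to verifying rigidity, which requires only the \emph{second-order} term in the steepest descent expansion of $d\T_n$ (Lemmas~\ref{lem:bound_edges} and~\ref{lem:bound_angles})---a clean local computation at the saddle points, with no uniform-in-boundary-distance estimates needed. Your direct approach might be made to work, but the rigidity decomposition is both simpler and reusable for other graphs.
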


The perfect t-embeddings~$\T_n$ of the uniformly weighted Aztec diamonds are described in~\cite{Ch-R} in terms of a recurrence relation. We recall this construction in Section~\ref{sec:aztec_rec}. An important discovery of the present paper is that the recurrence for~$\T_n$ can be solved in terms of other dimer model observables, which in turn have representations as double contour integrals. 
 We use this exact solvability as a main ingredient to prove Theorem~\ref{thm:Aztec_assumptions}.
In the rest of this section we state our formula for~$\T_n$, and then describe how it is used to prove that~\eqref{ass:informal_lorentz}-\eqref{ass:informal_exp-fat} hold.

There are four types of edges in the Aztec diamond~$A_n$; for a face~$(j, k)$ of~$A_n$ with~$j + k + n$ odd, the East edge is the vertical edge on the east boundary of the face. We similarly define North, West, and South edges. 
The following theorem provides a surprising relation between the perfect t-embedding~$\T_n$ and origami map~$\Or_n'$ of~$A_{n+1}'$ and certain edge inclusion probabilities of~$A_n$. Note that the interior faces~${\{(j, k) : |j| + |k| < n\}}$ of the Aztec diamond~$A_n$ of size~$n$ are in bijection with the interior vertices of~$(A_{n+1}')^*$. Denote by~$\mathcal{T}_{n}(j, k)$ the image under the perfect t-embedding of vertex~$(j, k) \in (A_{n+1}')^*$ and by~$\Or_n'(j, k)$ the image of~$\mathcal{T}_{n}(j, k) \in \C$ under~$\Or_n'$.
\begin{theorem}\label{thm:T_edge_probabilities}
For~$|j| + |k| < n$ and~$j + k + n$ odd, let~$p_E(j, k, n)$ ($p_N(j, k, n)$, $p_W(j, k, n)$, $p_S(j, k, n)$, respectively) denote the probability that the edge on the East (North, West, South, respectively) boundary of the face~$(j, k)$ is present in a uniformly random dimer cover of~$A_n$. Then for any interior vertex~$(j, k) \in (A_{n+1}')^*$ with~$j + k + n$ odd, the perfect t-embedding~$\T_{n}$ of~$A_{n+1}'$ and its origami map~$\Or_{n}'$ are given by
\begin{equation}\label{eq:T_p_e}
\T_{n}(j,k)=p_E(j,k,n)+\i p_N(j,k,n)-p_W(j,k,n)-\i  p_S(j,k,n)
\end{equation}
and
\begin{align}\label{eq:O_p_e}
\Or_{n}'(j,k)&= \tfrac{1}{\sqrt{2}} \left( p_E(j,k,n)- p_N(j,k,n)+p_W(j,k,n) -  p_S(j,k,n) \right) \\
&+ \i \tfrac{1}{\sqrt{2}} \left(  -1 + p_E(j,k,n) + p_N(j,k,n) + p_S(j,k,n) + p_W(j,k,n)  \right). \notag
\end{align}
\end{theorem}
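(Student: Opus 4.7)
The plan is to prove~\eqref{eq:T_p_e} first, and then derive~\eqref{eq:O_p_e} from it. The perfect t-embedding $\T_n$ of $A_{n+1}'$ is characterized (up to the global M\"obius normalization sending the four boundary vertices to $1,\i,-1,-\i$) by the recurrence relation of~\cite{Ch-R}, which will be recalled in Section~\ref{sec:aztec_rec}. Therefore, to prove~\eqref{eq:T_p_e} it suffices to check that the candidate map
\[
(j,k)\longmapsto p_E(j,k,n)+\i p_N(j,k,n)-p_W(j,k,n)-\i p_S(j,k,n)
\]
satisfies the same recurrence and the same boundary conditions; uniqueness of the perfect t-embedding with the prescribed image of the four outer vertices then closes the argument. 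The natural language for both sides is the inverse Kasteleyn matrix of $A_n$: each $p_\star(j,k,n)$ equals, up to a known phase, an entry of $K_n^{-1}$ evaluated at the two endpoints of the corresponding edge, so the right-hand side of~\eqref{eq:T_p_e} is a specific linear combination of four $K_n^{-1}$-entries around the face $(j,k)$.

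Concretely, I would proceed in three steps. First, establish the base case for a small $n$ by direct computation, verifying both the recurrence and the formula there. Second, give the induction step: recall the shuffling bijection between dimer covers of $A_n$ and of $A_{n+1}$, which induces a well-known linear recurrence relating the edge probabilities $p_E(\cdot,\cdot,n),p_N(\cdot,\cdot,n),p_W(\cdot,\cdot,n),p_S(\cdot,\cdot,n)$ to $p_E(\cdot,\cdot,n-1)$ and so on; then verify by an algebraic manipulation that the particular combination $p_E+\i p_N-p_W-\i p_S$ is mapped by this linear system exactly to the right-hand side of the $\T_n$-recurrence of~\cite{Ch-R}. Third, check the boundary values: at the four outer vertices the symmetry of the uniform Aztec diamond forces the appropriate pairs of $p_\star$ to collapse to $1$ or $0$, matching the prescribed images $1,\i,-1,-\i$. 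As an alternative to matching recurrences directly, both the perfect t-embedding and the edge probabilities admit exact double contour integral representations (from the $R_n$-function of~\cite{Ch-R} and from the explicit inverse Kasteleyn matrix of the uniform Aztec diamond, respectively); comparing these two contour integrals after a residue calculation gives the same identification.

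Once~\eqref{eq:T_p_e} is established, the formula~\eqref{eq:O_p_e} follows from the definition of $\Or_n$ as the composition of reflections of the t-embedding along a face path from $(j,k)$ to the reference face. Because consecutive reflections on the plane compose to translations whose increments are determined by the directed edge vectors of $\T_n$, the resulting formula for $\Or_n(j,k)$ becomes a telescoping sum in the $\T_n$-values; using~\eqref{eq:T_p_e} and the shuffling-type identities between neighbouring probabilities this sum simplifies to a local expression in the four $p_\star(j,k,n)$ at the face itself. Finally, applying the rotation and translation $\Or_n' = \e^{\i\pi/4}(\Or_n - (1+\i)/2)$ normalises the outer vertices as claimed and produces~\eqref{eq:O_p_e}. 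The main obstacle is the alignment step: the recurrence of~\cite{Ch-R} lives naturally on the augmented dual $(A_{n+1}')^*$, while the shuffling recurrence for $p_\star$ lives on the faces of $A_n$, so the bulk of the work is to write both in a common coordinate system and to keep track of the discrete parity condition $j+k+n$ odd along the induction.
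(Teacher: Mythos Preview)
Your approach to~\eqref{eq:T_p_e} is essentially the paper's, though organized slightly differently. The paper introduces the fundamental solution $f_0$ of the discrete wave recurrence~\eqref{eq:rec-unif}, identifies it with $2(p_E(j,k,n)-p_E(j-1,k,n-1))$ via the shuffling update rules~\eqref{eq:p_e_shuffling}, and then invokes the Cohn--Elkies--Propp summation~\eqref{eqn:peformula} to conclude $p_E=f_E$, $p_N=f_N$, etc.\ separately; your plan to check directly that the combination $p_E+\i p_N-p_W-\i p_S$ satisfies the $\T_n$-recurrence amounts to the same identification. One point of caution: the boundary vertices $(n,0),(0,n),(-n,0),(0,-n)$ of $(A_{n+1}')^*$ are not interior faces of $A_n$, so the quantities $p_\star(n,0,n)$ are not literally edge probabilities---in the paper the boundary values are encoded as source terms in the recurrence~\eqref{eq:rec-unif} rather than recovered from any ``collapse to $0$ or $1$'' of actual probabilities.

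Your proposed derivation of~\eqref{eq:O_p_e}, however, has a genuine gap. Consecutive reflections across two lines do not compose to a translation unless the lines are parallel; in general they compose to a rotation, so the origami map is not a telescoping sum of $\T_n$-increments and there is no reason to expect a local expression in the four $p_\star(j,k,n)$ to emerge from that computation. The paper does not attempt to compute $\Or_n$ from $\T_n$ via its geometric definition at all. Instead, it proves (Corollary~\ref{cor:or_elem} and Proposition~\ref{prop:o_rec}) that the origami map is preserved under the same elementary transformations as the t-embedding, and therefore satisfies the \emph{same} recurrence~\eqref{eq:rec-unif} as $\T_n$, only with boundary data $(b_0,b_E,b_N,b_W,b_S)=(0,1,\i,1,\i)$ in place of $(0,1,\i,-1,-\i)$. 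Once $p_\star=f_\star$ is known, this immediately gives $\Or_n=p_E+\i p_N+p_W+\i p_S$, and the affine change to $\Or_n'$ yields~\eqref{eq:O_p_e}. So the missing ingredient is not a calculation but the structural fact that origami maps evolve under central moves exactly as t-embeddings do.
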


\old{
\begin{figure}
\centering
\includegraphics[scale=.7]{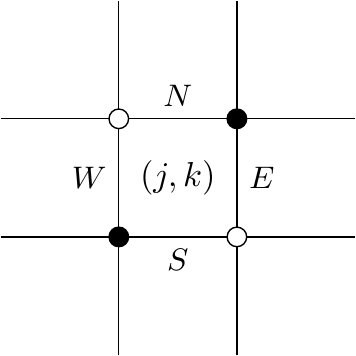}
\caption{The edges on the East, North, West, and South boundary of face~$(j, k)$. The edge probability $p_E(j, k, n)$ is the probability that the East edge of face $(j, k)$ is contained in a random dimer cover of $A_n$, and similarly for~$p_N, p_W, p_S$.}\label{fig:pE}
\end{figure}}

A direct corollary of the above theorem is that each frozen region in the Aztec diamond is mapped to a point under~$\T_n$ in the large~$n$ limit. The East frozen region of~$A_n$ is the region which asymptotically consists of only East edges, or in other words, where the edge probabilities~$p_E\to 1$, and~$p_N,p_W,p_S\to 0$, as~$n\to \infty$. The North, West and South frozen regions are defined similarly. It is well known~\cite{cohn-elki-prop-96}, and we recall this fact in Proposition~\ref{prop:edge_prob_convergence} in the text, that the edge probabilities converge uniformly on compact subsets. 
\begin{corollary}\label{Cor_main_frozen}
As~$n \rightarrow \infty$, the perfect t-embedding of~$A_{n+1}'$ maps each frozen region to a point. More precisely, for all~$j+k+n$ odd,
one has
\begin{align*}
(\mathcal{T}_{n}(j, k),\mathcal{O}_{n}'(j, k))&=o(1)+
\begin{cases}
(1,1/\sqrt{2} ) &\text{ if }\, (j/n, k/n) \in \text{East frozen region}\\
(i,-1/\sqrt{2} ) &\text{ if }\, (j/n, k/n) \in \text{North frozen region}\\
(-1,1/\sqrt{2} ) &\text{ if }\, (j/n, k/n) \in \text{West frozen region}\\
(-i,-1/\sqrt{2} ) &\text{ if }\, (j/n, k/n) \in \text{South frozen region}
\end{cases} 
\end{align*}
where the~$o(1)$ error is uniform for~$(j/n, k/n)$ in compact subsets as~$n\to\infty$. 
\end{corollary}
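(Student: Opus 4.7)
The plan is to simply combine the exact formulas of Theorem~\ref{thm:T_edge_probabilities} with the classical convergence result for edge inclusion probabilities of the uniformly weighted Aztec diamond (Proposition~\ref{prop:edge_prob_convergence}). There is essentially no new analytic input needed, only bookkeeping of the four limiting values.

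First I would recall that, by the definition of the four frozen regions, the edge probabilities have well-understood limits on compacts: in the East frozen region $p_E(j,k,n) \to 1$ while $p_N, p_W, p_S \to 0$, uniformly on compact subsets of the (interior of the) East frozen region, and analogously for the other three regions. This uniformity on compacts is exactly the content of the cited Proposition~\ref{prop:edge_prob_convergence} (originally due to Cohn--Elkies--Propp). Therefore the $o(1)$ error in the statement is uniform for $(j/n,k/n)$ in compact subsets of the corresponding frozen region.

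Next I would substitute these limits directly into the two identities
\[
\T_{n}(j,k) = p_E + \i p_N - p_W - \i p_S,
\]
\[
\Or_{n}'(j,k) = \tfrac{1}{\sqrt{2}}(p_E - p_N + p_W - p_S) + \i\tfrac{1}{\sqrt{2}}(-1 + p_E + p_N + p_S + p_W),
\]
from Theorem~\ref{thm:T_edge_probabilities}, and check the four cases one by one. In the East region one gets $\T_n \to 1$ and $\Or_n' \to \tfrac{1}{\sqrt{2}} + \i\tfrac{1}{\sqrt{2}}(-1+1) = \tfrac{1}{\sqrt{2}}$; in the North region, $\T_n \to \i$ and $\Or_n' \to -\tfrac{1}{\sqrt{2}}$; in the West region, $\T_n \to -1$ and $\Or_n' \to \tfrac{1}{\sqrt{2}}$; in the South region, $\T_n \to -\i$ and $\Or_n' \to -\tfrac{1}{\sqrt{2}}$. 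These four computations exactly match the four cases in the statement.

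The only subtlety (and the closest thing to an obstacle) is that Theorem~\ref{thm:T_edge_probabilities} gives the formula only for interior vertices with $j+k+n$ odd, so the corollary is stated under the same parity restriction. Once one has checked that the cited convergence of edge probabilities is valid uniformly on compact subsets of the frozen regions — which is standard and will be reproduced as Proposition~\ref{prop:edge_prob_convergence} — there is nothing more to do. Hence the corollary is a direct consequence of Theorem~\ref{thm:T_edge_probabilities} and Proposition~\ref{prop:edge_prob_convergence}.
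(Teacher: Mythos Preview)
Your proposal is correct and follows exactly the paper's approach: the paper states this corollary as a direct consequence of Theorem~\ref{thm:T_edge_probabilities} (the representation of $\T_n$ and $\Or_n'$ in terms of edge probabilities) together with Proposition~\ref{prop:edge_prob_convergence} (the Cohn--Elkies--Propp convergence of edge probabilities), without giving further details. Your case-by-case verification is precisely the bookkeeping the paper leaves implicit.
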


The edge probabilities of the uniformly weighted Aztec diamond are well understood. Indeed, the edge probabilities can be expressed in terms of the inverse Kasteleyn matrix, which is known to admit a double integral formula. Theorem~\ref{thm:T_edge_probabilities} therefore provides us with expressions of~$\T_n$ and~$\Or_n'$ in terms of double integrals. The integral expressions allows for asymptotic analysis, using a classical steepest descent analysis. To prove Theorem~\ref{thm:Aztec_assumptions} we need the first two terms in the series expansion of~$\T_n$.

The first order term in the asymptotic expansion of~$\T_n$ and~$\Or_n'$ is enough to prove that assumption~\eqref{ass:informal_lorentz} holds. More precisely, using the leading order terms we prove a conjecture stated in~\cite[Section 3]{Ch-R}, stated as a proposition below.
\begin{proposition}\label{prop:eqn:to_conv}
Let~$ \mathcal{T}_{n}(j,k)$ denote the image of the vertex~$(j, k) \in (A_{n+1}')^*$ under the perfect t-embedding of~$A_{n+1}'$ and~$\mathcal{O}_{n}'(j,k) $ denote the image of this vertex under the origami map. Then
\begin{align}\label{eqn:tconv}
\mathcal{T}_{n}(j,k) &= z(j/n, k/n) + o(1) \\
\mathcal{O}_{n}'(j,k) &= \vartheta(j/n, k/n) + o(1) \label{eqn:oconv}
\end{align}
where~$z(x,y)$ and~$\vartheta(x, y)$ are explicit smooth functions (see Section~\ref{sec:T_n_conv} for the precise definition of the functions~$z, \vartheta$) defined on the rescaled Aztec domain~$\mathcal{A} \coloneqq \{|x| + |y| < 1\}$. The~$o(1)$ error is uniform for~$(j/n, k/n)$ in compact subsets of~$\mathcal{A}$.
\end{proposition}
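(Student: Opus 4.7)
The plan is to combine Theorem~\ref{thm:T_edge_probabilities} with classical double contour integral formulas for the edge inclusion probabilities of the uniform Aztec diamond, and then to perform a steepest descent analysis on the resulting integrals. The first step is to invoke Theorem~\ref{thm:T_edge_probabilities}, which expresses $\mathcal{T}_{n}(j,k)$ and $\mathcal{O}_{n}'(j,k)$ as explicit $\mathbb{C}$-linear combinations of the four edge probabilities $p_E(j,k,n)$, $p_N(j,k,n)$, $p_W(j,k,n)$, $p_S(j,k,n)$. Each of these probabilities equals a product of an entry of the Kasteleyn matrix $K_n$ of $A_n$ with an entry of its inverse $K_n^{-1}$, and the entries of $K_n^{-1}$ for the uniform Aztec diamond admit a well-known double contour integral representation. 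Substituting the rescaling $(j,k) = (nx, ny)$ yields integrands of the form $e^{n S(\,\cdot\,;\,x,y)}$ times rational factors, with an explicit phase $S$ whose saddle points are the roots of a quadratic equation depending on $(x,y)$.

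The second step is a classical steepest descent analysis. For $(x,y)$ in the open liquid region $\{x^2+y^2<1/2\}$, the two saddle points form a complex conjugate pair bounded away from each other and from the contours; a standard contour deformation then produces an asymptotic expansion of each edge probability whose leading non-oscillatory term is an explicit function of the complex saddle. For $(x,y)$ in an open frozen region, the saddles are real, and deforming the contours past them shows that the edge probabilities converge to $0$ or $1$ at an exponential rate, in agreement with Corollary~\ref{Cor_main_frozen}. Substituting these expansions into formulas (\ref{eq:T_p_e}) and (\ref{eq:O_p_e}) produces the limits, and it remains to verify that they coincide with the functions $z(x,y)$ and $\vartheta(x,y)$ of Section~\ref{sec:T_n_conv}; this is a direct algebraic comparison between the saddle expressions and the explicit formulas for $z, \vartheta$.

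The principal obstacle is establishing the uniformity of the $o(1)$ error on compact subsets of $\mathcal{A}$. On compact subsets of the open liquid region, and on compact subsets of each open frozen region, the steepest descent error bounds are standard and manifestly uniform, because the critical points remain at a positive distance from each other and from the contours. The delicate regime is a neighborhood of the arctic circle $\{x^2+y^2 = 1/2\}$, where the two saddles coalesce. I would handle this either by a uniform Airy-type steepest descent expansion, which yields a $O(n^{-1/3})$ error across the arctic circle, or by decomposing a compact subset of $\mathcal{A}$ into pieces fully contained in the liquid or frozen regions together with a shrinking collar around the arctic circle, bounding the collar contribution directly from the known monotonicity and boundedness of the edge probabilities together with continuity of $z, \vartheta$. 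Either route delivers the uniform $o(1)$ error asserted in (\ref{eqn:tconv})--(\ref{eqn:oconv}).
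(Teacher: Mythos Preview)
Your outline is correct, but it takes a considerably longer route than the paper. Both approaches start from Theorem~\ref{thm:T_edge_probabilities}, expressing $\mathcal{T}_n$ and $\mathcal{O}_n'$ as $\mathbb{C}$-linear combinations of the four edge probabilities. From there the paper simply cites the classical result of Cohn--Elkies--Propp (stated in the text as Proposition~\ref{prop:edge_prob_convergence}), which already gives the uniform convergence of $p_E, p_N, p_W, p_S$ to the explicit functions $\Psi_E, \Psi_N, \Psi_W, \Psi_S$ on compact subsets of $\mathcal{A}$, including across the arctic circle. Since $z$ and $\vartheta$ are defined in Section~\ref{sec:T_n_conv} precisely as the same $\mathbb{C}$-linear combinations of $\Psi_E, \Psi_N, \Psi_W, \Psi_S$ (together with the identity $\Psi_E+\Psi_N+\Psi_W+\Psi_S\equiv 1$ to kill the imaginary part of $\mathcal{O}_n'$), the proposition follows in a few lines with no further analysis.

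Your proposal instead re-derives the Cohn--Elkies--Propp asymptotics from scratch via steepest descent on the double contour integrals. This is a valid route, and your discussion of the arctic-circle regime (coalescing saddles, Airy-type uniform expansion or a collar argument) correctly identifies the one delicate point. The paper does carry out exactly this kind of steepest descent analysis, but only later in Section~\ref{sec:complicated_stuff}, and only on compact subsets of the \emph{liquid region}, where it is needed to extract the next-order term for the rigidity condition (Lemma~\ref{lem:bdd}). For the first-order statement of Proposition~\ref{prop:eqn:to_conv} itself, the paper deliberately avoids that machinery by leaning on the existing literature. What your approach buys is self-containment and a direct path to the finer asymptotics you will need anyway; what the paper's approach buys is a two-line proof of this particular proposition.
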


It was shown in~\cite{Ch-R} that the surface~$\{(z(x,y),\vartheta(x,y)) : (x,y) \in \mathcal{A}\}$ is indeed a maximal surface in the Minkowski space~$\mathbb{R}^{2,1}$.

Since the assumptions~\eqref{ass:informal_lip} and~\eqref{ass:informal_exp-fat} are conditions on the discrete level, the leading term in the asymptotic expansion of~$\T_n$ is not sufficient to prove these assumptions. Instead we need to go to the second order term. We use it to prove a certain \emph{rigidity} of the graph~$\T_n\left((A_{n+1}')^*\right)$.
\begin{lemma}\label{lem:bdd}
Let~$\T_n$ be the perfect t-embedding of the reduced Aztec diamond~$A_{n+1}'$. In compact subsets of the interior of the domain covered by~$\T_n\left((A_{n+1}')^*\right)$, the edge lengths of~$\T_n$ are of order~$1/n$, and the angles are bounded away from~$0$ and~$\pi$. 
\end{lemma}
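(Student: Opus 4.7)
The plan is to prove both conclusions from a refined asymptotic expansion of~$\T_n(j,k)$ whose leading order is already given by Proposition~\ref{prop:eqn:to_conv}. By Theorem~\ref{thm:T_edge_probabilities}, each~$\T_n(j,k)$ is a linear combination of four edge inclusion probabilities of the uniformly weighted Aztec diamond, and each such probability admits a double contour integral representation coming from the inverse Kasteleyn matrix of~$A_n$. A standard steepest descent analysis of these integrals, uniform on compact subsets of the liquid region, produces an expansion
\[
\T_n(j,k)=z(j/n,k/n)+\frac{1}{n}\,c_n(j,k)+O(1/n^2),
\]
in which the~$O(1)$ term reproduces Proposition~\ref{prop:eqn:to_conv} and the coefficient~$c_n(j,k)$ is uniformly bounded, depends smoothly on~$(j/n,k/n)$, and depends on~$(j,k)$ only through the local parity of the face. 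By Corollary~\ref{Cor_main_frozen}, compact subsets of the interior of the domain~$\T_n((A_{n+1}')^*)$ correspond via the limit map~$z$ to compact subsets of the open liquid region~$\tfrac{1}{\sqrt{2}}\DD$, so it suffices to work with vertices whose rescaled positions~$(j/n,k/n)$ lie in a prescribed compact subset~$K$ of the liquid region.

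For the \emph{edge length} bound, fix adjacent vertices $(j,k)$ and~$(j',k')$ of~$(A_{n+1}')^*$ with image in~$K$, so that $(\Delta j,\Delta k)=(j'-j,k'-k)$ is one of the four lattice unit vectors. Taylor expanding at~$(j/n,k/n)$ yields
\[
\T_n(j',k')-\T_n(j,k)=\frac{1}{n}\,dz(j/n,k/n)\cdot(\Delta j,\Delta k)+\frac{1}{n}\bigl(c_n(j',k')-c_n(j,k)\bigr)+O(1/n^2).
\]
The description of~$(z,\vartheta)$ as a maximal surface in~$\RR^{2,1}$ (Proposition~\ref{prop:eqn:to_conv}) ensures that~$dz$ is a nondegenerate real-linear map~$\RR^2\to\CC$ at every point of the liquid region, with singular values bounded away from~$0$ and~$\infty$ on~$K$. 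A uniform analysis of the subleading term~$c_n$ shows that the~$dz$-contribution dominates, yielding~$|\T_n(j',k')-\T_n(j,k)|\asymp 1/n$ uniformly on~$K$.

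For the \emph{angle} bound, each interior vertex of~$(A_{n+1}')^*$ is incident to edges corresponding to distinct lattice unit displacements. By the preceding step, the edge vectors emanating from~$\T_n(j,k)$, rescaled by~$n$, are close to the images of these distinct lattice directions under the nondegenerate linear map~$dz(j/n,k/n)$. Nondegeneracy of~$dz$ implies that the images of distinct lattice directions form a non-degenerate quadrilateral with pairwise angles strictly between~$0$ and~$\pi$; these angles vary continuously in~$(j/n,k/n)$ and are therefore bounded away from~$0$ and~$\pi$ uniformly on~$K$. The~$O(1/n^2)$ perturbation of the edge directions cannot close these gaps, so all angles at~$\T_n(j,k)$ are bounded away from~$0$ and~$\pi$ for~$n$ large.

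The principal technical obstacle is the uniform extraction of the subleading~$1/n$ term in the contour integral representations of the edge probabilities: one must carry the saddle point expansion through to the next order and verify that the subleading correction~$c_n$ cannot conspire with the leading~$dz$-contribution to produce anomalously short edges on~$K$. Once this input is in hand, the remainder of the argument is a soft combination of Taylor expansion and the elementary linear algebra of a nondegenerate real~$2\times 2$ matrix.
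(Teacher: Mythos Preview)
Your high-level strategy---express $\T_n$ via edge probabilities, rewrite these as double contour integrals through the inverse Kasteleyn matrix, and run steepest descent---is exactly the paper's route. The gap is in the structure you claim for the $1/n$ correction.

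The assertion that ``$c_n(j,k)$ \dots\ depends on $(j,k)$ only through the local parity of the face'' is incorrect. The steepest descent expansion of the edge integrals produces, at leading order $1/n$, a contribution from cross terms between the two saddles $\xi,\bar\xi$ carrying the phase $\beta=\exp\!\bigl(-2\i(n{+}1)\,\mathrm{Im}\,S(\xi(j/n,k/n))\bigr)$. Since $(n{+}1)\,\mathrm{Im}\,S(\xi)$ changes by an $O(1)$ amount when $(j,k)$ moves by a single lattice step, $\beta$ oscillates at the lattice scale and is \emph{not} determined by parity. Hence $c_n(j',k')-c_n(j,k)$ is generically of order~$1$, so your displayed edge formula has two $O(1/n)$ contributions---the $dz$ piece and the oscillatory piece---of the \emph{same} order. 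Nothing in your argument rules out their cancellation; you in fact flag exactly this as ``the principal technical obstacle'' but do not resolve it. The angle argument inherits the same defect: the edge vectors at $\T_n(j,k)$ are not close to $dz\cdot(\Delta j,\Delta k)$ alone, so the nondegeneracy of $dz$ does not by itself bound the angles.

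The paper avoids differencing $\T_n$ and instead writes $d\T_n(e_{\iota,\iota'}(j,k))$ directly as a double integral with an explicit prefactor $G_{\iota,\iota'}(z,w)G_{\mathcal T}(z,w)$, then extracts its leading $1/n$ term via saddle point. The resulting expression factors (see the display analogous to~\eqref{eq:length_edge_factor}) into three pieces, each of the form $1+(\text{modulus-one phase})\cdot\frac{\xi-\i}{\bar\xi-\i}$ up to harmless factors. The strict inequality $\bigl|\tfrac{\xi-\i}{\bar\xi-\i}\bigr|<c<1$ on compacta of the liquid region is what prevents cancellation and yields the two-sided edge-length bound; a parallel computation on the ratio of two adjacent edge expressions gives the angle bound. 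This inequality (or something equivalent) is the missing ingredient in your proposal.
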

For a precise version of the above lemma we refer to Lemmas~\ref{lem:bound_edges} and~\ref{lem:bound_angles}. In Section~\ref{sec:lip_expfat} we prove that the above rigidity of the graph is sufficient to prove assumptions~\eqref{ass:informal_lip} and~\eqref{ass:informal_exp-fat}. In fact, we prove that such rigidity is sufficient for any bipartite graph under some mild assumptions.

\begin{figure}
 \begin{center}
 \includegraphics[width=0.35\textwidth]{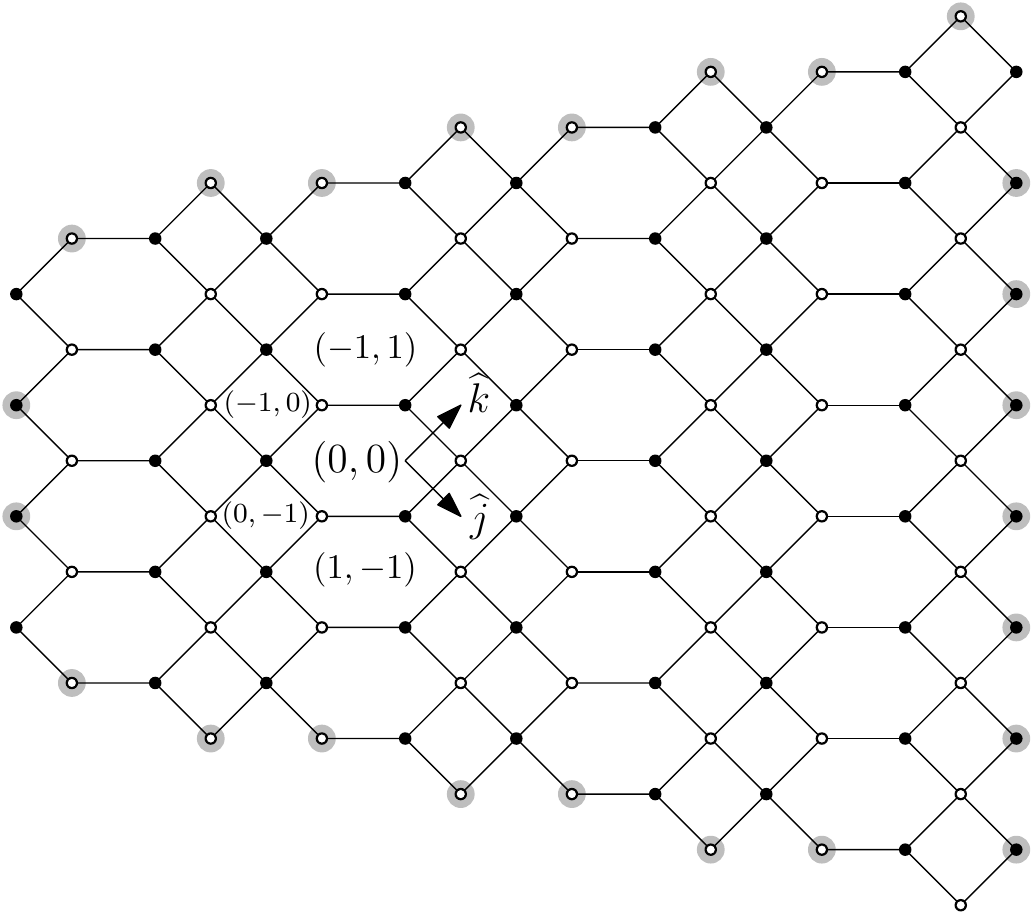}
 $\quad\quad$
  \includegraphics[width=0.45\textwidth]{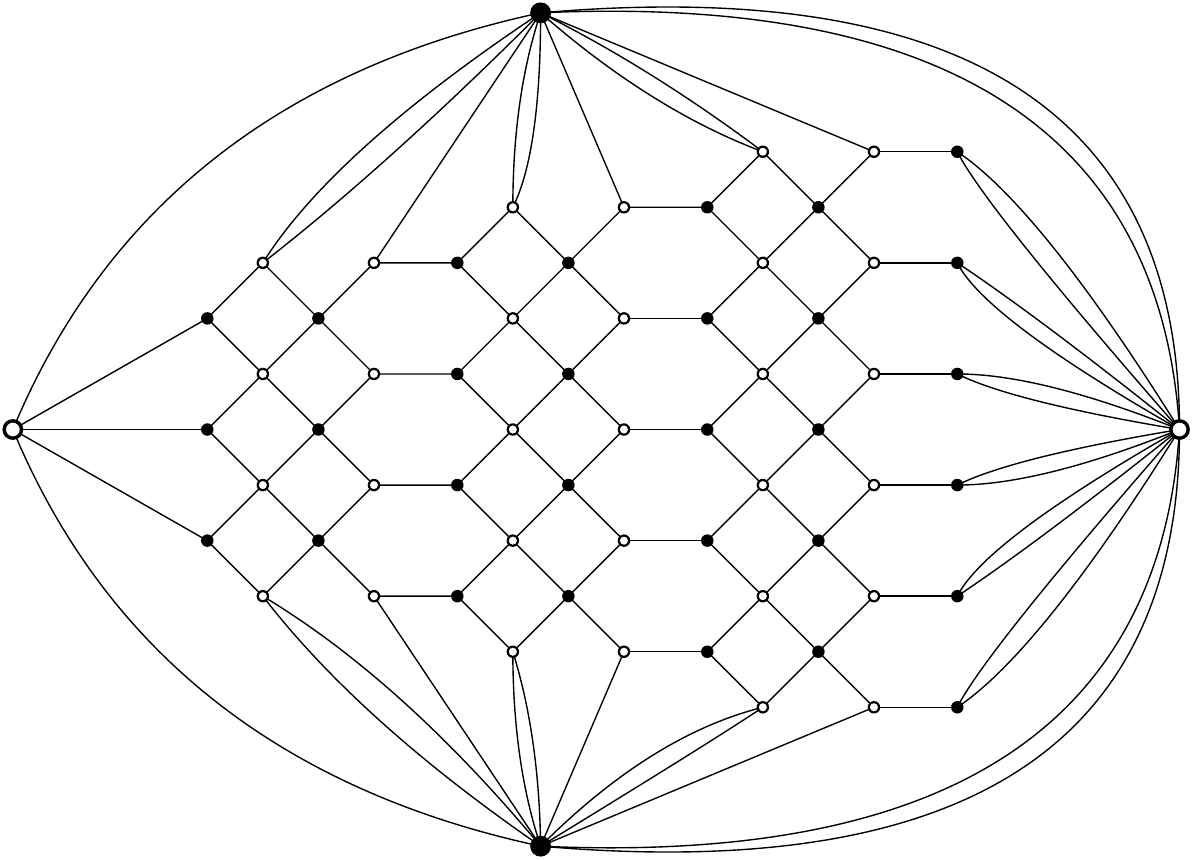}
   \includegraphics[width=0.4\textwidth]{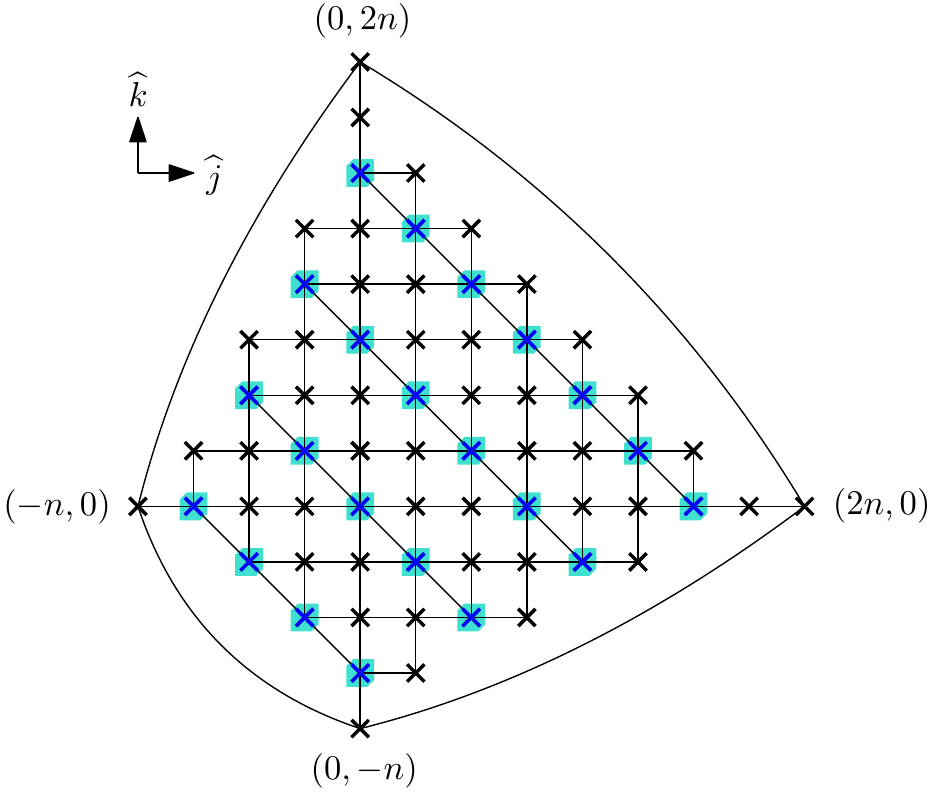}
    $\quad\quad\quad\quad$
   {\includegraphics[width=0.35\textwidth]{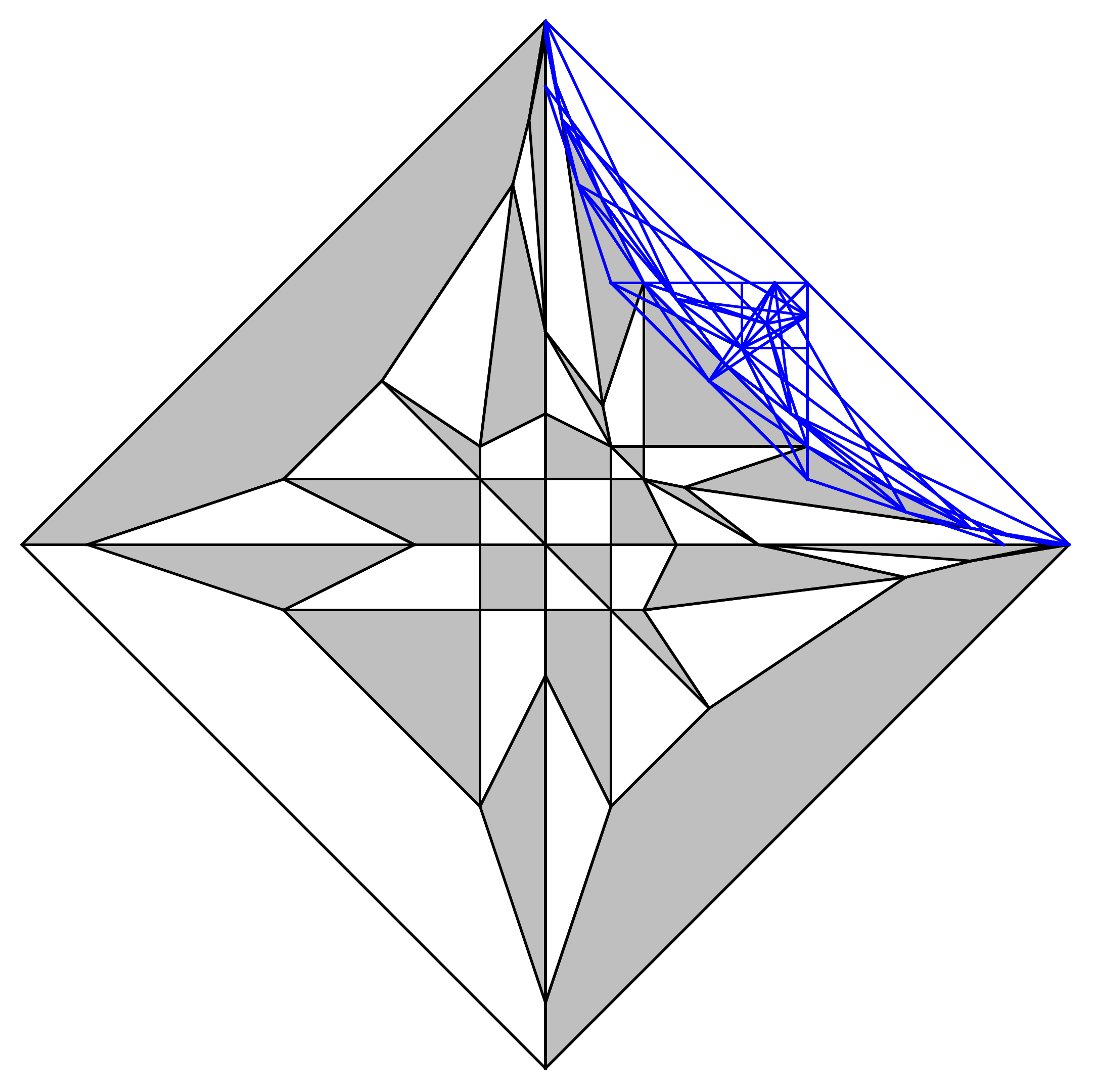}}
 \end{center}
\caption{{\bf Top:} A tower graph~$\TG_4$ of size~$4$ and its reduced one~$\TG'_4$. To obtain our coordinates on faces, one may shrink horizontal edges to~$0$, and then use coordinates~$j, k$ corresponding to basis vectors~$\hat{j}, \hat{k}$ shown in the figure. To get a reduced graph contract degree two vertices emphasised in grey.\\ {\bf Bottom: } Augmented dual of ~$\TG'_4$ (vertices of the augmented dual corresponding to hexagonal faces of~$\TG_4$ are emphasized in turquoise).
Perfect t-embedding~$\widetilde\T_4$ (black) together with the origami map ~$\widetilde\Or_4$  (blue)
of the augmented dual~$(\TG'_4)^*$.}\label{fig:tower}
\end{figure}


\subsubsection{Uniformly weighted tower graphs}
The other sequence of graphs we study in this work are \emph{tower graphs}; these graphs were introduced in~\cite{borodin2015random}. We show that a sequence of perfect t-embeddings of uniformly weighted tower graphs can be expressed in terms of perfect t-embeddings of the uniformly weighted Aztec diamond. As a corollary, we obtain that these perfect t-embeddings satisfy the assumption~\eqref{ass:informal_lorentz} that graphs of the origami maps converge to a maximal surface in the Minkowski space~$\R^{2, 1}$.

The faces of the tower graph have the structure of the square lattice; see Figure~\ref{fig:tower} for the exact coordinate system we use to index faces. The set of faces of a tower graph $\TG_n$ of size $n$ is given by $(j, k)$ such that 
\begin{equation}\label{eq:tower_coord_intro}
\begin{cases}
 j+k < 2n \quad &\text{ for } j, k \geq 0,\\
 |j|+|k| <n \quad &\text{ for } j, k \leq 0,\\
 k-2j < 2n \quad &\text{ for } j<0, k>0,\\
 j-2k < 2n \quad &\text{ for } j>0, k<0 .
\end{cases}
\end{equation}
We construct and study a sequence of perfect t-embeddings and origami maps~$\widetilde\T_{n}, \widetilde\Or_{n}$ of the sequence of  uniformly weighted \emph{reduced tower graphs}~$\TG_n'$ of size~$n$, see Figure~\ref{fig:tower} for an example. As for the 
Aztec diamond, we introduce a shifted and rotated origami map~${\widetilde\Or'_{n} \coloneqq e^{\i \frac{\pi}{4}}(\widetilde\Or_{n}-\frac{1+\i}{2})}$.  For a dual vertex~$(j,k)\in (\TG_n')^*$, denote by~$\widetilde\T_{n}(j, k), \widetilde\Or'_{n}(j, k)$ the image of that vertex under~$\widetilde\T_{n}, \widetilde\Or'_{n}$, respectively. The construction of~$\widetilde\T_{n},\widetilde\Or'_{n}$ utilizes a shuffling algorithm similar to that of the Aztec diamond; however, in this case there are two rounds of spider moves required to construct~$\widetilde\T_{n+1}$ from~$\widetilde\T_{n}$. Thus, it is natural to consider the t-embedding and origami map of an intermediate graph after the first round, so we make a variable change~$m = 2 n$, and we  introduce
functions~$\widetilde{\T}(j, k, m), \widetilde{\mathcal{O}}'(j, k, m)$ which are defined for all integers~$m \geq 1$, and satisfy~$\widetilde\T_{n}(j, k)=\widetilde{\T}(j, k, 2 n)$ and~$ \widetilde\Or_{n}'(j, k)=\widetilde{\Or}'(j, k, 2 n)$. See Section~\ref{sec:tower} for details. As in the case of the Aztec diamond, the boundary vertices of~$(\TG_n')^*$ are mapped to~$1, \i, -1, -\i$ and~$\frac{1}{\sqrt{2}}, - \frac{1}{\sqrt{2}}, \frac{1}{\sqrt{2}}, -\frac{1}{\sqrt{2}}$ by~$\widetilde\T_n$ and~$\widetilde \Or'_{n}$, respectively.

\begin{figure}
 \begin{center}
 { \includegraphics[scale=.5]{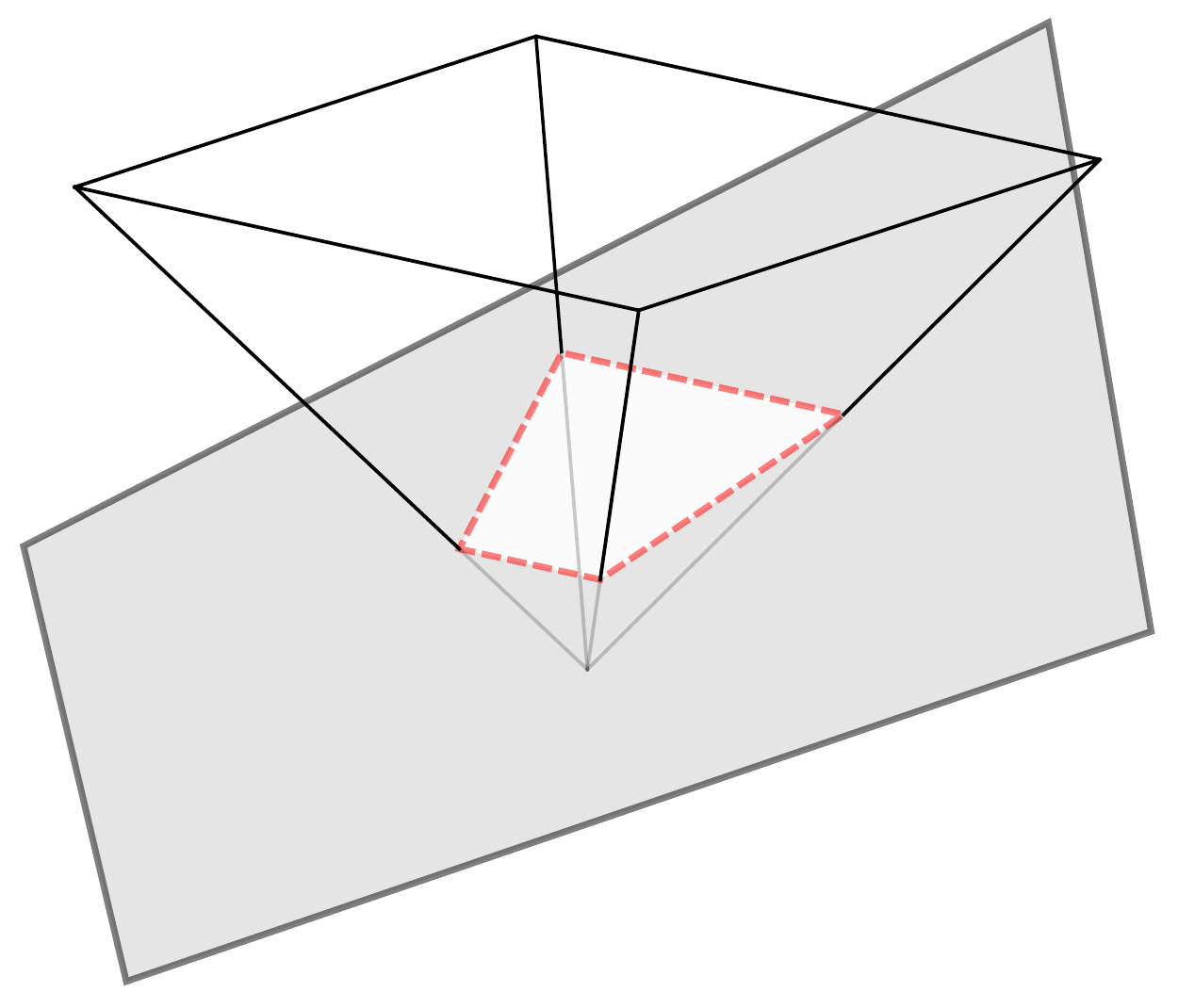} \hspace{20pt}
  \includegraphics[scale=.45]{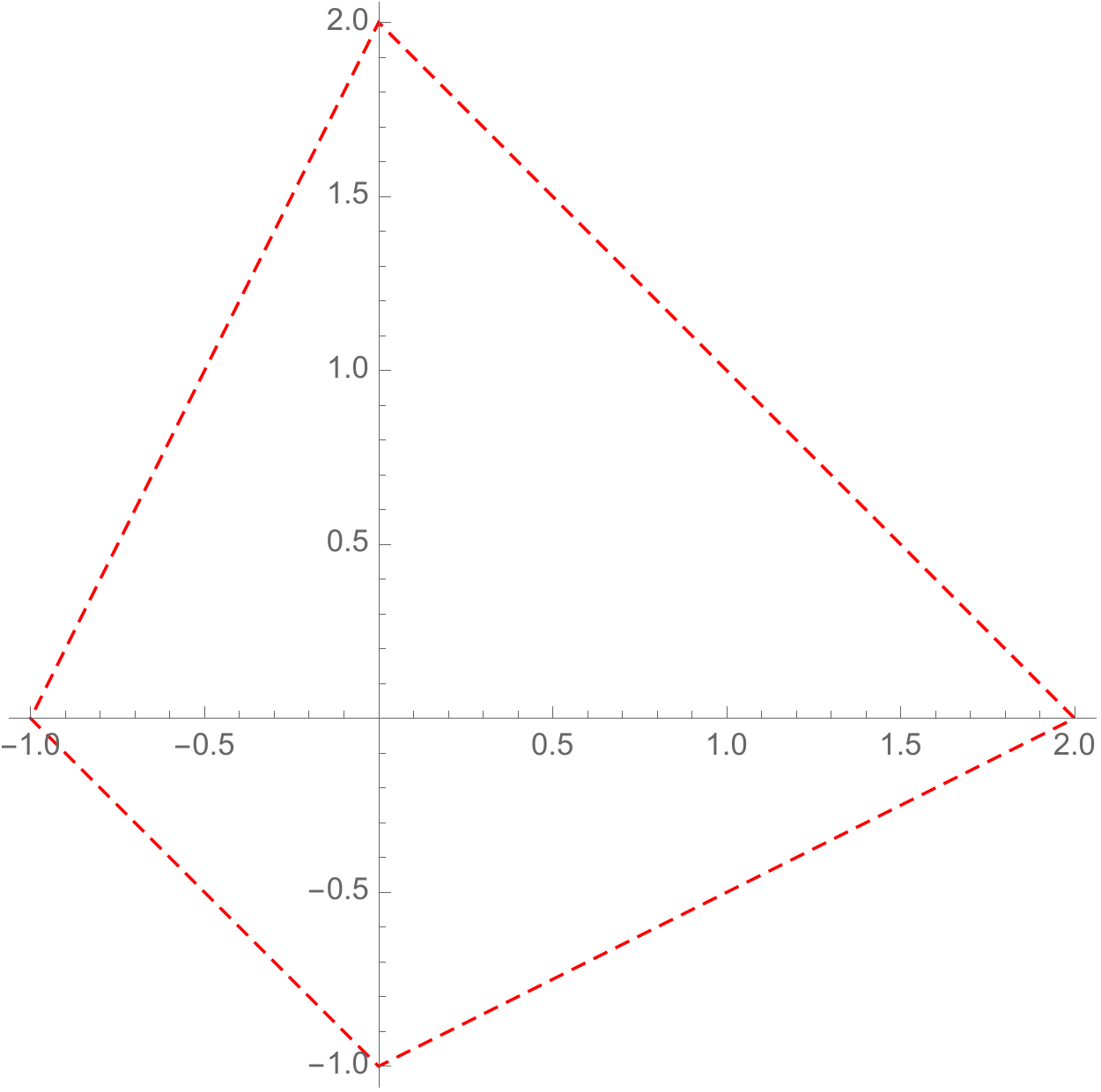}}
  \caption{Left: The plane along which limiting values of~$\T_{n}$ are sampled to obtain~$\widetilde\T$ according to Theorem~\ref{thm:Aztec_Tower}. In the figure on the left, the coordinates~$(x, y, t)$ correspond to integer coordinates~$(\lfloor \frac{x}{\epsilon} \rfloor, \lfloor \frac{y}{\epsilon} \rfloor)$ in the Aztec diamond of size~$n = \lfloor \frac{t}{\epsilon} \rfloor$, for some small~$\epsilon > 0$. The domain of the Aztec t-embedding, in continuous~$(x,y,t)$ coordinates, is the cone~$|x|+|y| <t$, and the plane slicing it is~$t = \frac{1}{3}(x+y+4)$. Right: An illustration of the rescaled augmented dual of the tower graph in the plane (compare with Figure~\ref{fig:tower}).  }
  \label{fig:tower_slice}
 \end{center}
\end{figure}

The following theorem relates the perfect t-embeddings and origami maps of  uniformly weighted tower graphs with those of  uniformly weighted Aztec diamonds. 
\begin{theorem}\label{thm:Aztec_Tower}
Let~$\T_{n}$ denote the perfect t-embedding of the reduced Aztec diamond~$A_{n+1}'$, and let~$\widetilde{\T}$ define the perfect t-embeddings of the reduced tower graphs~$\TG_n'$ as described above. For all interior vertices~$(j, k)$ of~$(A_{n+1}')^*$ with~$j+k+n$ odd
\begin{align*}
\T_{n}(j,k)&=\widetilde{\T}\left(j, \,k, \, \tfrac32 n - \tfrac12(j+k)-\tfrac12\right), \\
\Or_{n}'(j,k)&=\widetilde{\Or}'\left(j, \,k, \,\tfrac32 n - \tfrac12(j+k)-\tfrac12\right) .
\end{align*}
\end{theorem}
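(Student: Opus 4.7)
The plan is to proceed by induction on the size parameter, exploiting the shared spider-move structure of the two sequences of perfect t-embeddings. Both $\T_n$ and $\widetilde{\T}(\cdot,\cdot,m)$ are produced by iterated deterministic local updates: in the Aztec case, incrementing $n$ by one corresponds to a single full round of spider moves (the recurrence recalled in Section~\ref{sec:aztec_rec}), while in the tower case, incrementing the tower size by one corresponds to two rounds, so that the intermediate index $m$ tracks half-rounds. The specific slicing $m(j,k,n) = \tfrac{3}{2}n - \tfrac{1}{2}(j+k) - \tfrac{1}{2}$ is naturally explained by the fact that an Aztec spider move at a given face corresponds, along the slicing direction, to a particular composition of tower spider moves; the factor $3/2$ precisely encodes how much one Aztec time step advances the tower half-step parameter at a typical diagonal face.

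The induction would be organized as follows. For the base case, I would check the identity at the smallest nontrivial vertex $(j,k) = (0,0)$ with $n = 1$ and corresponding $m = 1$, by a direct calculation on the initial embeddings. For the inductive step, assume the identity holds for all interior vertices of $(A_{n+1}')^*$. Apply the Aztec recurrence to write $\T_{n+1}(j,k)$ as a local rational function $F$ of the four values $\T_n(j \pm 1, k \pm 1)$. By the induction hypothesis, each of these four values equals the corresponding tower value $\widetilde{\T}(j \pm 1, k \pm 1, m_\pm)$, where the four indices $m_\pm$ differ from the target $m(j,k,n+1)$ by integer offsets that are compatible with applying the appropriate composition of tower spider moves. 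It then suffices to verify that this composition of tower moves produces the same output as $F$ applied to the inductive values; this is a local combinatorial check. The identity for $\widetilde{\Or}'$ is then automatic, since the origami map is a deterministic functional of the t-embedding via reflections along face paths from the reference face, and the reference faces agree on the two sides.

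The main obstacle is this local combinatorial verification, given that the tower graph contains two distinct face types (square and hexagonal) and therefore has two kinds of spider-move steps that must be assembled in the right order to match a single Aztec move on the sliced faces. A cleaner alternative, which I would pursue in parallel, is to first establish a tower analog of Theorem~\ref{thm:T_edge_probabilities}, expressing $\widetilde{\T}(j,k,m)$ as a linear combination of four tower edge-inclusion probabilities, and then reduce the claim to matching these probabilities with the Aztec edge probabilities $p_E, p_N, p_W, p_S$ through their double-contour-integral representations. Under this route, the slicing correspondence $m = \tfrac{3}{2}n - \tfrac{1}{2}(j+k) - \tfrac{1}{2}$ should translate directly into a change of variables identifying the two inverse-Kasteleyn integrals, turning Theorem~\ref{thm:Aztec_Tower} into a concrete integral identity that can be verified by a contour deformation.
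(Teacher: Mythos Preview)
Your proposal points in the right direction but misses the structural simplification that makes the paper's proof short. The key observation you do not use is that, because all face weights equal $1$ in the uniform case, the central (spider) move becomes \emph{linear}: the update rule for both the Aztec and the tower embeddings takes the form
\[
\T(j,k,n+1)+\T(j,k,n-1)=\tfrac12\big(\T(j-1,k,n)+\T(j+1,k,n)+\T(j,k-1,n)+\T(j,k+1,n)\big),
\]
rather than a general rational function. (Incidentally, the neighbors are $(j\pm1,k)$ and $(j,k\pm1)$, not the diagonal ones you wrote.) The paper therefore abstracts both $\T_n$ and $\widetilde\T$ as the unique solutions of two linear recurrences with prescribed source terms on the boundary rays, and then observes that the substitution $(j,k,n)\mapsto(j,k,\tfrac32 n-\tfrac12(j+k)-\tfrac12)$ carries one recurrence into the other and matches the zero initial data. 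This reduces the whole argument to a direct algebraic check plus a short verification that the supports of the nonzero values correspond; no ``local combinatorial verification'' of how tower spider moves assemble into an Aztec move is needed.

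Your treatment of the origami identity also has a gap. It is not automatic from the t-embedding identity via ``reflections along face paths,'' because the two sides live on different graphs and the coordinate change is precisely what is at issue. The paper instead uses that $\Or_n$ and $\widetilde\Or$ satisfy the \emph{same} linear recurrences as $\T_n$ and $\widetilde\T$, only with different boundary source terms (Propositions~\ref{prop:o_rec} and~\ref{prop:o_rec_tower}); the coordinate-change argument then applies verbatim. Your alternative route through a tower analog of Theorem~\ref{thm:T_edge_probabilities} and matching contour integrals would be substantially harder and is unnecessary here.
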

The change of coordinates used to write the perfect t-embeddings of tower graphs in terms of those of the Aztec diamonds, obtained by inverting the coordinate change above, is visualized in Figure~\ref{fig:tower_slice}.

Using the theorem above and the convergence from Proposition~\ref{prop:eqn:to_conv}, we obtain the following corollary. Similarly to Theorem~\ref{thm:Aztec_assumptions}, we state the corollary informally here, and refer the reader to Corollary~\ref{cor:tower_coord} in the text for the precise statement.

\begin{corollary}\label{cor:convergence_tower}
The sequence of tower graph t-embeddings and origami maps
satisfies assumption~\eqref{ass:informal_lorentz}. Moreover, the graphs~$(\widetilde\T_n, \widetilde\Or_{n}')$ converge to the same maximal surface which arises as the limit of Aztec diamond t-embeddings and origami maps.
\end{corollary}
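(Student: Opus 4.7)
The plan is to reduce the corollary directly to Proposition~\ref{prop:eqn:to_conv} using the identity provided by Theorem~\ref{thm:Aztec_Tower}. Inverting the relation $m = \frac{3}{2}n - \frac{1}{2}(j+k) - \frac{1}{2}$ yields $n = (2m+j+k+1)/3$, so Theorem~\ref{thm:Aztec_Tower} (applied with $m = 2n$) lets me express
\begin{equation*}
\widetilde\T_{n}(j, k) = \T_{N(j,k,n)}(j,k), \qquad \widetilde\Or'_{n}(j,k) = \Or'_{N(j,k,n)}(j,k),
\end{equation*}
where $N(j,k,n) := (4n+j+k+1)/3$. The first task is to check that as $(j,k)$ ranges over the relevant vertices of $(\TG'_n)^*$, the index $N$ is a positive integer of the parity required for Theorem~\ref{thm:Aztec_Tower} to apply; this should be immediate from the two-stage shuffling construction of $\widetilde\T$ in Section~\ref{sec:tower}.

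Next I introduce the natural rescaling $\xi := j/n$, $\eta := k/n$, under which the face-index constraints~\eqref{eq:tower_coord_intro} cut out a fixed continuum domain $D_{\mathrm{tow}} \subset \mathbb{R}^2$. For a sequence of tower vertices with $(j_n/n, k_n/n) \to (\xi, \eta) \in D_{\mathrm{tow}}$, a direct computation gives $N/n \to (4+\xi+\eta)/3 > 0$ and hence
\begin{equation*}
\Big(\frac{j_n}{N}, \frac{k_n}{N}\Big) \to \Psi(\xi, \eta) := \Big(\frac{3\xi}{4+\xi+\eta},\, \frac{3\eta}{4+\xi+\eta}\Big).
\end{equation*}
This is precisely the coordinate change drawn in Figure~\ref{fig:tower_slice}: the plane $t = (x+y+4)/3$ slicing the cone $|x|+|y| < t$ at height $N/n$.

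The key geometric step is to verify that $\Psi$ is a homeomorphism from $D_{\mathrm{tow}}$ onto the Aztec domain $\mathcal{A}$, mapping compact subsets of the interior to compact subsets of the interior. An explicit inverse $\Psi^{-1}(x,y) = \bigl(\frac{4x}{3-x-y},\, \frac{4y}{3-x-y}\bigr)$ is obtained by solving the system, and a case analysis verifies that each of the four boundary pieces of $D_{\mathrm{tow}}$ described in~\eqref{eq:tower_coord_intro} is sent to a corresponding piece of $\{|x|+|y|=1\}$. Bijectivity and properness of $\Psi$ then let me transfer the locally uniform convergence of Proposition~\ref{prop:eqn:to_conv} to the tower setting, yielding
\begin{equation*}
\widetilde\T_n(j_n, k_n) = z(\Psi(\xi,\eta)) + o(1), \qquad \widetilde\Or'_n(j_n, k_n) = \vartheta(\Psi(\xi,\eta)) + o(1),
\end{equation*}
locally uniformly on $D_{\mathrm{tow}}$. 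Since $\Psi$ bijects $D_{\mathrm{tow}}$ onto $\mathcal{A}$, the image $\{(z(\Psi(\xi,\eta)),\vartheta(\Psi(\xi,\eta))) : (\xi,\eta)\in D_{\mathrm{tow}}\}$ coincides as a subset of $\mathbb{R}^{2,1}$ with the Aztec maximal surface $\{(z(x,y),\vartheta(x,y)) : (x,y) \in \mathcal{A}\}$. The two sequences therefore accumulate on the same maximal surface — only its parametrization differs — and since maximality is a local geometric property of the surface rather than of any particular parametrization, assumption~\eqref{ass:informal_lorentz} follows at once.

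The main anticipated obstacle is not analytic but combinatorial: I must confirm that the inversion $n \mapsto N(j,k,n)$ always lands on a valid Aztec diamond index with the correct parity, and that the labellings of faces by $(j,k)$ used to define $\widetilde\T$ and $\T_N$ are compatible under the coordinate change of Theorem~\ref{thm:Aztec_Tower}. Once this bookkeeping is settled, the corollary is essentially an immediate consequence of Theorem~\ref{thm:Aztec_Tower}, Proposition~\ref{prop:eqn:to_conv}, and the elementary bijectivity of $\Psi$; no new asymptotic input is required.
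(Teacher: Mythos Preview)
Your proposal is correct and follows essentially the same route as the paper: invert the relation of Theorem~\ref{thm:Aztec_Tower} to write $\widetilde\T_n(j,k)=\T_{N(j,k,n)}(j,k)$, apply Proposition~\ref{prop:eqn:to_conv} at the Aztec index $N$, and observe that the resulting coordinate change $\Psi(\xi,\eta)=\bigl(\tfrac{3\xi}{4+\xi+\eta},\tfrac{3\eta}{4+\xi+\eta}\bigr)$ bijects the rescaled tower domain onto~$\mathcal A$, so the limiting graph is the same surface $S_\diamondsuit$. The paper's own proof (Corollary~\ref{cor:tower_coord}) is the same argument, presented a bit more tersely and without your explicit inverse $\Psi^{-1}$ or the case-by-case boundary check.
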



See Figure~\ref{fig:surface_both}, which illustrates that the graphs of~$\Or_{n}', \widetilde\Or_{n}'$ both approximate the same surface.

\begin{remark}
The fluctuations of the dimer model height function on certain finite subgraphs of lattices made up of alternating layers of squares and hexagons was studied in~\cite{BoutillierLiSHL}, and their class of graphs includes tower graphs. In Section~\ref{sec:T_n_conv}, we compare the conformal structure of the maximal surface obtained via perfect t-embeddings with the conformal structure for the Gaussian free field derived in \cite{BoutillierLiSHL}, and indeed we find that they match (and furthermore, both match with the conformal structure of Kenyon-Okounkov). In particular, the conformal structures of the Gaussian free field on the Aztec diamond and tower graph are identified under this change of coordinates, and thus so are the limiting height fluctuations.
\end{remark}

\begin{remark}
Using Theorem~\ref{thm:Aztec_Tower}, and the fact that asymptotically frozen regions map to frozen regions under the change of coordinates there (see the discussion after Corollary~\ref{cor:tower_coord}, in particular Claim~\ref{claim:same_structure}), one obtains an analog of Corollary~\ref{Cor_main_frozen} for tower graphs; each frozen region maps to a point as~$n \rightarrow \infty$.
\end{remark}

\begin{figure}
 \begin{center}
\begin{minipage}{0.7\columnwidth}
  \hspace{-3.cm}
   {\includegraphics[width=1.2\textwidth]{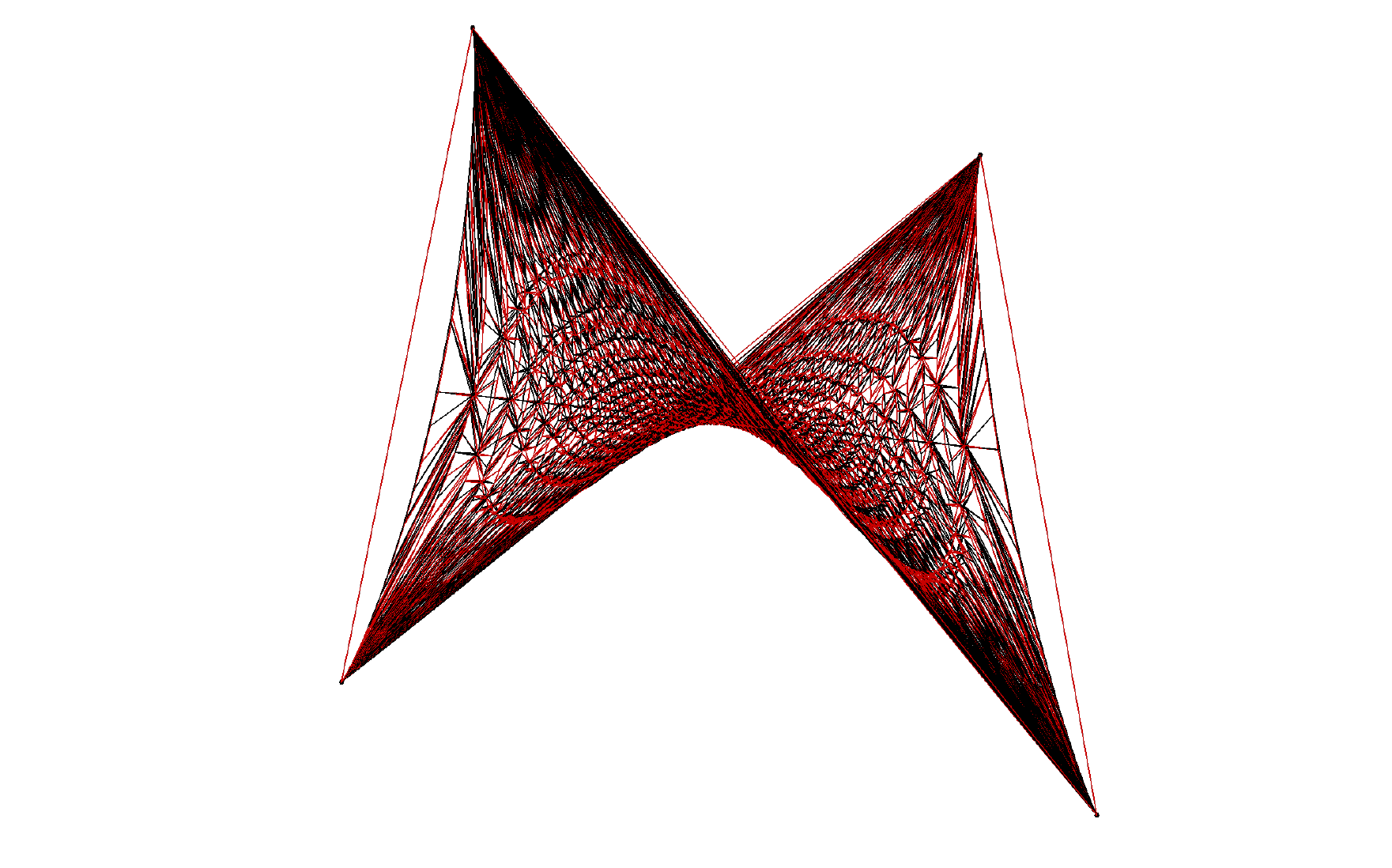}}
\end{minipage}
 \hspace{-3.cm}
\begin{minipage}{0.29\columnwidth}
   {\includegraphics[width=1.4\textwidth]{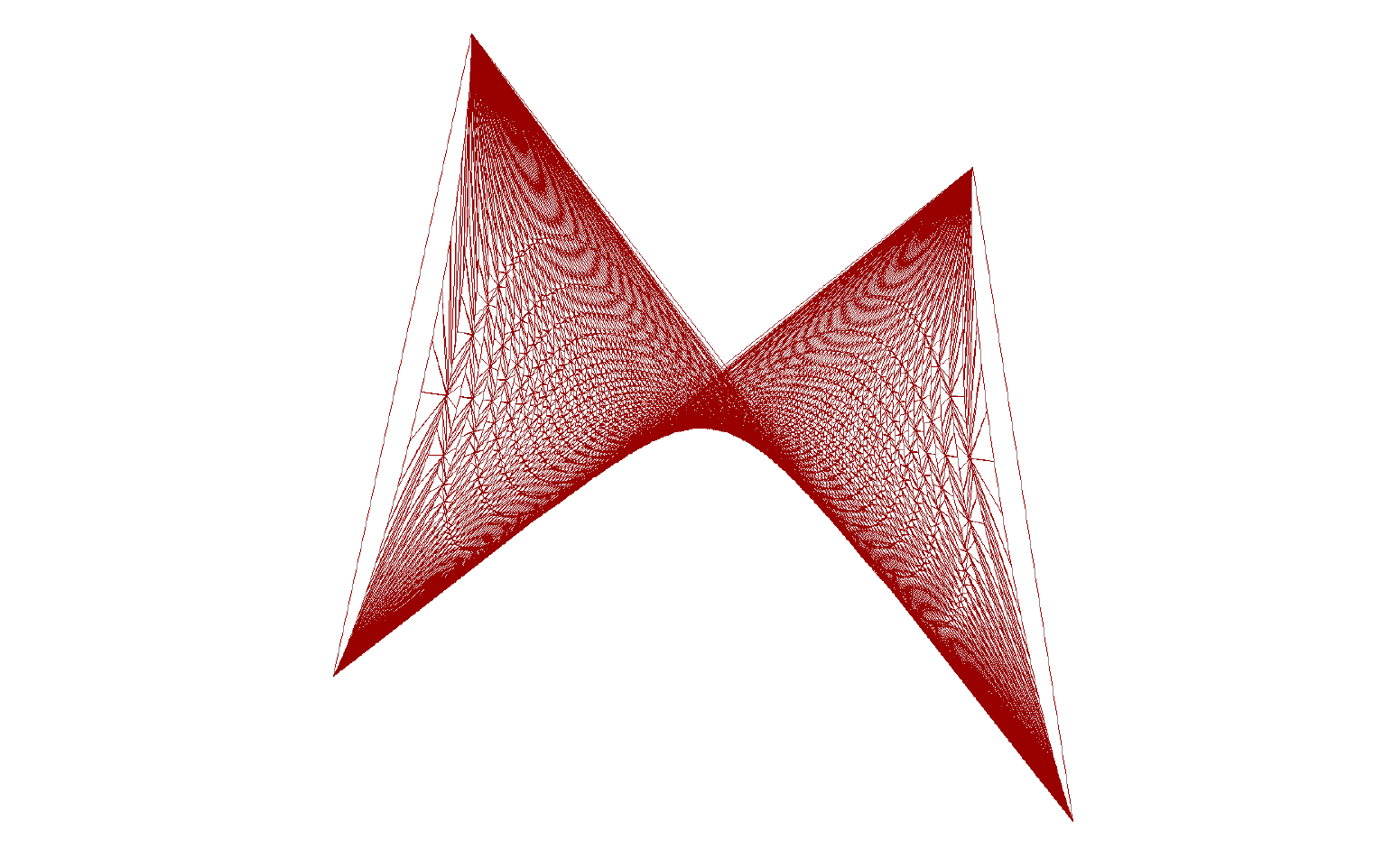}}
   {
   \includegraphics[width=1.4\textwidth]{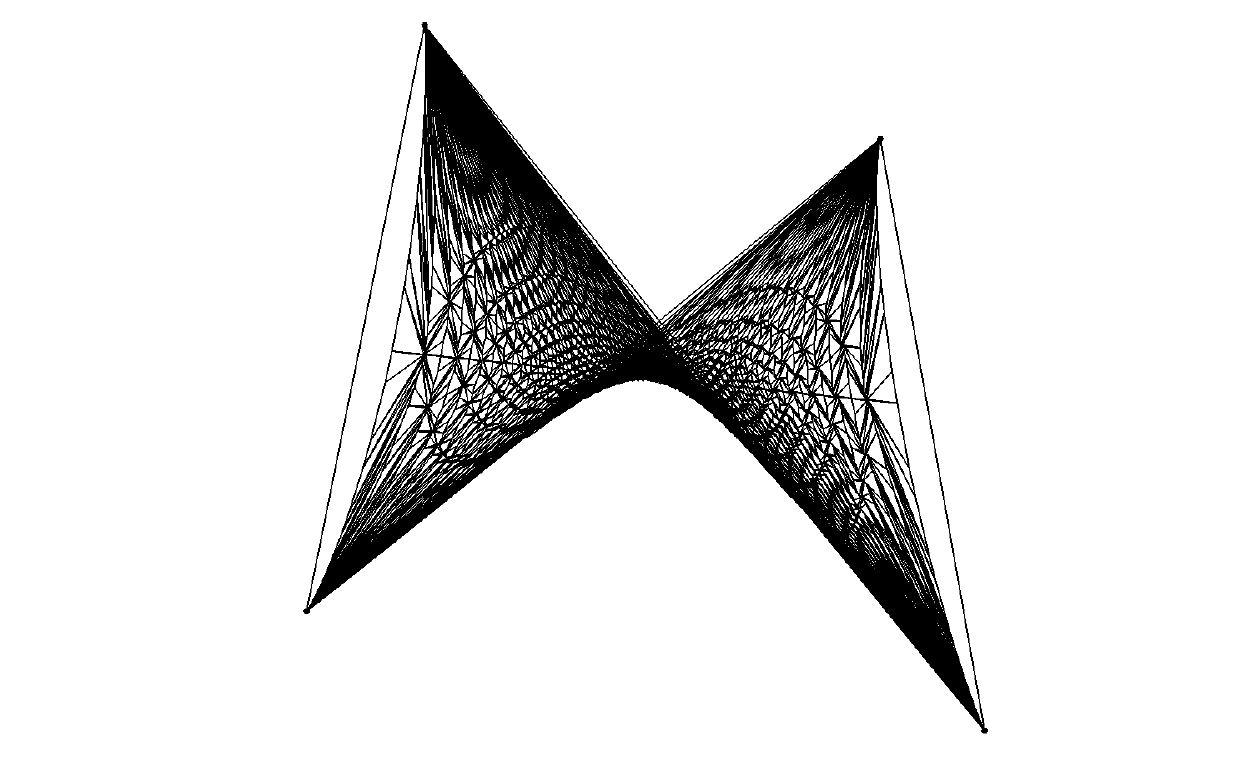}}
 \end{minipage}
  \caption{The graphs of origami maps~$(\T_n, \Or'_n)$,~$(\widetilde\T_n, \widetilde \Or'_n)$ for the~$n = 100$ Aztec diamond, shown in red, and~$n = 50$ tower graph, shown in black. Both graphs approximate the same surface~$S_\diamondsuit$. In the picture, we plot the real part of the origami maps, since they are only real valued in the limit.}
  \label{fig:surface_both}
 \end{center}
\end{figure}

\addtocontents{toc}{\protect\setcounter{tocdepth}{1}}
\subsection*{Organization.} The paper is organized as follows. In Section~\ref{sec:definitions}, we recall the essential facts and objects in the study of the dimer model, and we precisely define perfect t-embeddings. In Section~\ref{subsec:CLR_result} we state the main theorem of~\cite{CLR2}, and in particular we describe in detail its assumptions, whose verification is the content of 
Theorem~\ref{thm:Aztec_assumptions}. Then we define elementary transformations, and state known results about how perfect t-embeddings and origami maps evolve under elementary transformations. In Section~\ref{sec:aztec} we recall the construction of the perfect t-embedding and origami map for the uniformly weighted Aztec diamonds, 
and we prove Theorem~\ref{thm:T_edge_probabilities}. In Section~\ref{sec:tower} we introduce a similar construction of the perfect t-embedding and origami map of a uniformly weighted tower graph, and show how to obtain exact formulas for the tower graph perfect t-embeddings and origami maps, 
and we prove Theorem~\ref{thm:Aztec_Tower}. In Section~\ref{sec:assumptions} we deduce the convergence of the Aztec and tower polygonal surfaces~$(\mathcal{T}_n, \mathcal{O}'_n)$ to the maximal surface, in other words, we prove Proposition~\ref{prop:eqn:to_conv} and Corollary~\ref{cor:convergence_tower}. In the process we also obtain Corollary~\ref{Cor_main_frozen}. In addition, we prove that if we have any sequence of t-embeddings which satisfies a condition that we call the \emph{rigidity condition}, then the conditions~$\LipKd$ and~$\ExpFat$ hold on compact subsets of~$\Omega_{\T_n}$ with an appropriate choice of~$\delta = \delta_n$. Then using these results and Lemma~\ref{lem:bdd}, which states that the perfect t-embeddings of the Aztec diamond satisfy the rigidity condition, we prove Theorem~\ref{thm:Aztec_assumptions}. In Section~\ref{sec:complicated_stuff}, we prove Lemma~\ref{lem:bdd}, thus completing the proof of Theorem~\ref{thm:Aztec_assumptions}.
\addtocontents{toc}{\protect\setcounter{tocdepth}{2}}

\addtocontents{toc}{\protect\setcounter{tocdepth}{1}}
\subsection*{Acknowledgements}  The authors are grateful to Alexei Borodin for his support and interest. 
We also would like to thank Leonid Petrov and Patrik Ferrari for the help with the simulations. MR is grateful to Mikhail Basok for many helpful discussions and sharing his ideas on checking Lipschitz assumption. 
TB~was supported by the Knut and Alice Wallenberg Foundation grant KAW~2019.0523.
MR~was partially supported by the Swiss NSF grant P400P2-194429 and by A.~Borodin's Simons Investigator grant.
\addtocontents{toc}{\protect\setcounter{tocdepth}{2}}

\section{Preliminaries}\label{sec:definitions}
\subsection{Dimer model}
Let~$(\G, \nu)$ be a weighted planar bipartite graph, where~$\nu$ is a positive real-valued function defined on the set of edges of the graph~$\G$. A dimer configuration of a given graph is a subset of edges that covers every vertex exactly once. Let us denote the set of all dimer configurations of~$\G$ by~$\mathcal{M}(\G)$. Then the probability measure on dimer configurations of~$(\G, \nu)$ is given by
 \[\mathbb{P}(m)=\frac{\prod\limits_{e\in m} \nu(e)}{\sum\limits_{m\in\mathcal{M}}\prod\limits_{e\in m} \nu(e)}.\]
 
Let us recall the definition of a Kasteleyn matrix, which is a fundamental object in the Dimer model. To define the Kasteleyn matrix we first need to introduce Kasteleyn signs. Complex signs~$\tau(e)\in\mathbb{C}, |\tau(e)|=1$ on edges of a bipartite graph~$\G$ are called Kasteleyn signs if around a face of~$\G$ of degree~$2k$ the alternating product of signs over the edges of this face is~$(-1)^{k+1}$. A Kasteleyn matrix~$K$ is a weighted, complex-signed adjacency matrix whose
rows index the black vertices and columns index the white vertices, and the signs
are chosen to be Kasteleyn (i.e. for~$b,w\in \G$ one has~$K(b,w)=\tau(bw)\nu(bw)$ if~$b$ and~$w$ are adjacent
and~$K(b,w)=0$ otherwise). It is known, that for any choice of Kasteleyn signs the partition function~$\sum_{m\in\mathcal{M}}\prod_{e\in m} \nu(e)$ is given by~$|\det K|$ and all local statistics of the measure on dimer configurations can be evaluated in terms of the inverse Kasteleyn matrix.

 Recall that two weight functions~$\nu, \nu': E(\G)\to \mathbb{R}_{+}$ are gauge equivalent if there exists a pair of gauge functions~$(F^\bullet_{\operatorname{gauge}}, F^\circ_{\operatorname{gauge}})$ 
such that 
$\nu'(bw)=F^\circ_{\operatorname{gauge}}(w)\nu(bw)F^\bullet_{\operatorname{gauge}}(b).$  Note that gauge equivalent weight functions define the same probability measure on dimer configurations.
Given edge weights~$\nu$ on a bipartite graph, one can associate a face weight~$X_{v^*}$ to each face of~$\G$ by
\[X_{v^*}:=\prod_{s=1}^d\frac{\nu({b_sw_s})}{\nu({b_sw_{s+1}})},\]
where the face $v^*$ has degree $2d$ with vertices denoted by~$w_1, b_1, \ldots , w_d, b_d$ in clockwise order. Note that two edge weight functions are gauge equivalent if and only if they correspond to the same face weights.

\subsection{Perfect t-embeddings and origami maps} In this section we remind of basic properties and definitions of \emph{t-embeddings}, also known as \emph{Coulomb gauges} (except that a Coulomb gauge need not be a \emph{proper} embedding). We refer an interested reader to \cite{KLRR, CLR1, CLR2} for more details.

\begin{figure}
 \begin{center}
 \includegraphics[scale=.59]{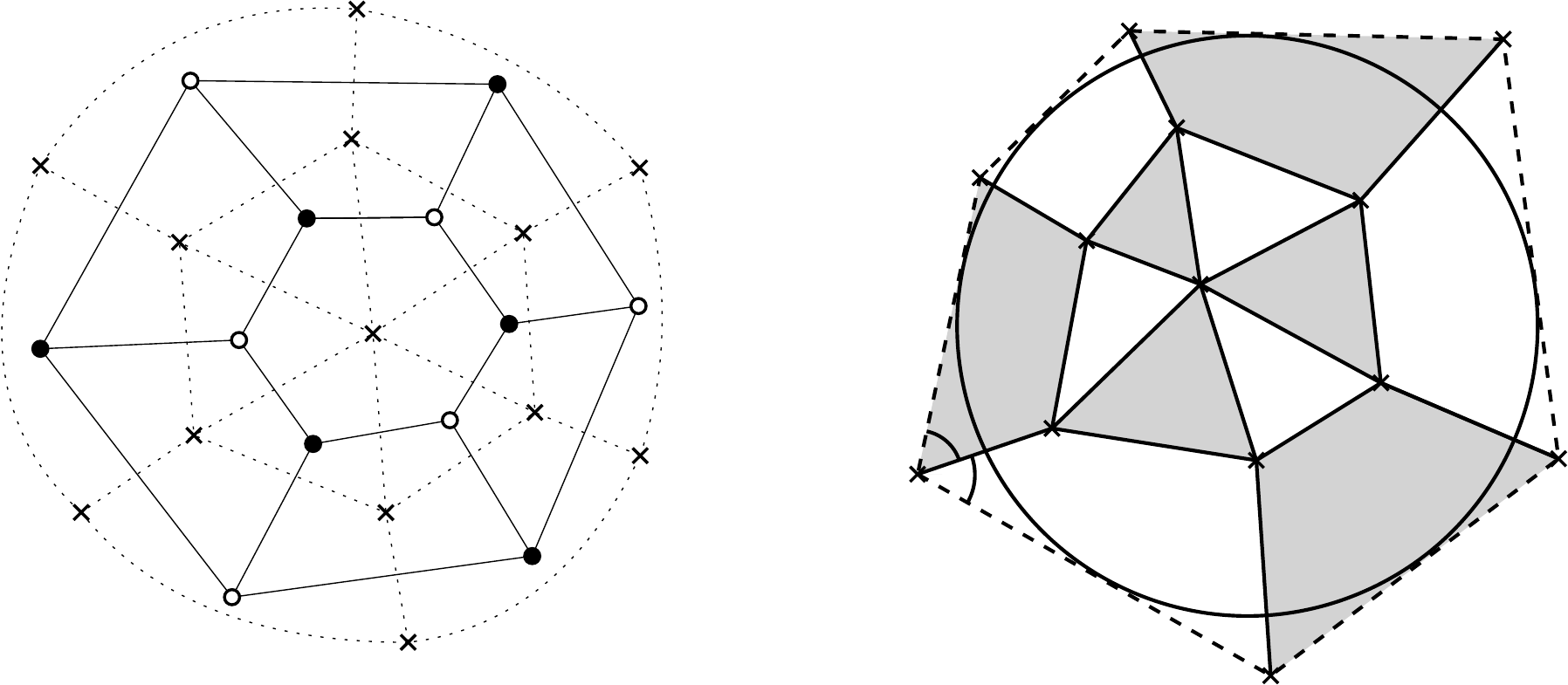}
  \caption{Left: A planar bipartite graph, together with its augmented dual. Right: A perfect t-embedding of the graph, with emphasis on the required boundary conditions.}
  \label{fig:augmented}
 \end{center}
\end{figure}
 

In the finite case a t-embedding is an embedding of the {\emph{augmented dual}}~$\G^*$ of~$\G$. To get an augmented dual~$\G^*$ one should add a vertex~$v_{\operatorname{out}}$ to the graph~$\G$ and connect it with all vertices of the outer face of~$\G$; the augmented dual is then the dual graph of~$\G\cup v_{\operatorname{out}}$. See Figure~\ref{fig:augmented} for an example.

\begin{definition}\label{def:temp} Given weighted planar bipartite graph~$(\G, \nu)$, a t-embedding $\T(\G^*)$ is a proper embedding of an augmented dual graph
such that the following conditions are satisfied:
\begin{itemize}
\item[$\bullet$] the sum of the angles at each inner vertex of $\T(\G^*)$ at the corners corresponding to black faces is equal to $\pi$ (and similarly for white faces),
\item[$\bullet$] the geometric weights (dual edge lengths) $|\T(v^*_1)-\T(v^*_2)|$ are gauge equivalent to $\nu_e$, where $v^*_{1,2}$ are vertices of the dual edge $e^*$.
\end{itemize}
\end{definition}

\begin{remark}The last condition of the above definition can be equivalently formulated in terms of face weights:
In the setup of Definition~\ref{def:temp},
 let $v^*$ be a face of
degree~$2d$ with vertices labeled by~$w_1, b_1, \ldots , w_d, b_d$ (in clockwise order). Let~$X$ be face weights corresponding to the weight function~$\nu$.
Then the second condition of Definition~\ref{def:temp} is equivalent to 
\[X_{v^*}=\prod_{s=1}^d\frac{|\T(v^*)-\T(v^*_{2s-1})|}{|\T(v^*_{2s})-\T(v^*)|} \quad \text{ for all inner } v^*\in\G^*,\]
where by~$v^*_1, v^*_2, \ldots, v^*_{2d}$ we denote dual vertices adjacent to~$v^*$, such that~$v^*v^*_{2s-1}=(w_s b_s^*)$
and~$v^*v^*_{2s}=(b_sw_{s+1}^*)$.   

Moreover, the angle condition 
 implies that 
\[X_{v^*}=(-1)^{d+1}\prod_{s=1}^d\frac{\T(v^*)-\T(v^*_{2s-1})}{\T(v^*_{2s})-\T(v^*)}.\]
\end{remark}

The existence of t-embeddings was shown in~\cite{KLRR} for the following cases:
\begin{itemize}
\item Let~$\G$ be a planar bipartite weighted graph with outer face of
degree~$4$. Fix a convex quadrilateral~$P$. Then there are~$2$ t-embeddings
of the augmented dual of~$\G$ which produce weights that are gauge
equivalent to the original weights and such that the four outer dual
vertices are mapped to the vertices of~$P$.
\item Let~$\G$ be an infinite weighted bipartite graph, periodic in two directions. Then periodic t-embeddings of the dual of~$\G$ producing edge weights that are gauge equivalent to the original ones are in bijection with liquid ergodic Gibbs measures on~$\G$.
\end{itemize}
In general, if the boundary conditions are not specified and there are~$2k$ outer boundary vertices, it is expected to have many (the number is a function only of the graph~$\G$) t-embeddings, though properness of at least one of them is still an open question, i.e. it is possible that the edges of these t-embeddings intersect. The existence of at least one (maybe non-proper) t-embedding in this setup is shown in~\cite[Section 3.4]{KLRR}.

We are interested in t-embeddings with special boundary conditions. The notion of \emph{perfect t-embedding} was introduced in~\cite{CLR2}. Due to~\cite{CLR2} these embeddings are expected  to recover the conformal structure of the dimer model height fluctuations. 

\begin{definition}
A t-embedding is perfect if the outer face of $\T(\G^*)$ is a tangential polygon
to a circle and all the non-boundary edges adjacent to boundary vertices lie on bisectors of the corresponding angles.
\end{definition}

The existence and uniqueness (up to some isomorphisms) of perfect t-embeddings of a given abstract planar  bipartite graph carrying the dimer model is still an open question, we refer an interested reader to a discussion in~\cite[Section 4]{CLR2}.

To each t-embedding $\T(\G^*)$ one can associate the so-called origami map $\Or:\G^*\to\mathbb{C}$.
Let us start with a definition of the origami map in terms of the origami square root function introduced in~\cite[Section 2.2]{CLR1}. Following~\cite{CLR1}, for each $w\sim b$ denote by $d\T(bw^*)$ the complex number $\T(v')-\T(v)$, where $vv'$ is an oriented edge $(bw^*)$ with $w$ on the left.  Note that the angle condition for t-embeddings implies that~$K_{\T}$, defined by~{$K_{\T}(b, w):=d\T(bw^*)$}  if~$b$ and~$w$ are adjacent
and~{$K_{\T}(b,w)=0$} otherwise, is a Kasteleyn matrix. Denoting by $\mathbb{T}$ the unit circle in $\mathbb{C}$, the origami square root function $\eta: B\cup W\to \mathbb{T}$  is defined by 
\begin{equation}\label{eq:origami_sq_roor}
\overline{\eta}_b\overline{\eta}_w=\frac{d\T(bw^*)}{|d\T(bw^*)|},
\end{equation}
see~\cite[Definition 2.4]{CLR1}.  The function $\eta$ itself branches over every vertex, however $\eta^2$ is well-defined. Note also, that $\eta: W \to \mathbb{T}$ is defined up to a global rotation, so we can choose~{$\eta_{w_0}=1.$} 

Denote by~$\Omega_{\T}$ the discrete domain corresponding to t-embedding~$\T(\G^*)$. For each vertex~$v$ of a bipartite graph~$\G$ we denote the corresponding face of the t--embedding by~$\T(v)$. Then~$\Omega_{\T}$ can be seen as~$\cup_{v\in\G}\T(v)$.

\begin{definition}\label{def:O_dO}
The origami map $\Or$ is a primitive of the piece-wise constant differential form~$d\Or$ defined by 
\begin{equation}\label{eq:dO}
d\Or(z) = 
\begin{cases}
\eta_w^2 \,d z \text{ if } z\in \T(w) \\
\overline{\eta}_b^2 \,d\overline{z} \text{ if } z\in \T(b).
\end{cases}
\end{equation}
\end{definition}

Let us also remind of an alternative definition of the origami map. 
\begin{definition}\label{def:O} 
Fix a white root face~$w_0$, and set~$\Or(z)=\T(z)$ for all~$z\in \T(w_0)$. For any other face~$v$ of the $\G^*$ consider a face path~$\gamma$ from~$\T(w_0)$ to~$\T(v)$ and reflect (all points of)~$\T(v)$ along each edge of the t-embedding crossing~$\gamma$ to get~$\Or(v)$.
\end{definition}

Because of the angle condition, this procedure is independent of the path~$\gamma$, so~$\Or$ is well-defined. Informally speaking, to get an origami map~$\Or(\G^*)$ from~$\T(\G^*)$ one can choose a white root
face~$w_0$, set~$\Or(w_0)=\T(w_0)$, and fold the plane along every edge of the t-embedding. Using this interpretation it is easy to see that if~$\T$ is a perfect t-embedding, then~$\Or$ maps the outer face of~$\G^*$ onto a line.

\begin{remark}
The primitive of~$d\Or(z)$ is defined up to a global additive constant. To match Definition~\ref{def:O_dO} with  Definition~\ref{def:O} set~$\eta_{w_0}=1$ and~$\Or(w_0)=\T(w_0)$.
\end{remark}

\subsection{Main result of \cite{CLR2}}
\label{subsec:CLR_result}

\begin{figure}
 \begin{center}
 \includegraphics[scale=.65]{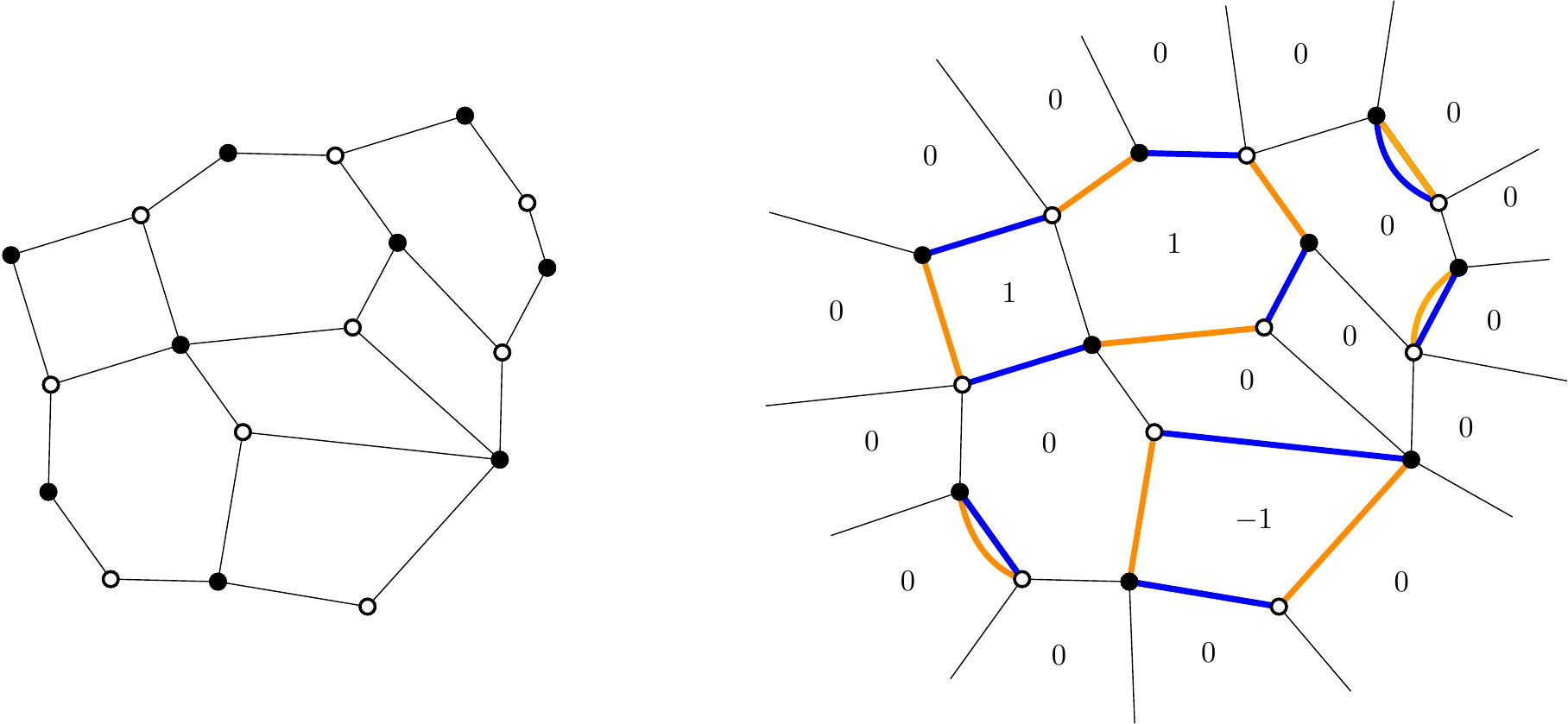}
  \caption{A planar bipartite graph~$\mathcal{G}$, and a height function on~$\mathcal{G}$ together with its corresponding perfect matching~$M$ (orange), and a fixed reference matching~$M_0$ (blue).
  }
  \label{fig:height_fn}
 \end{center}
\end{figure}

Recall that one of our main goals is to verify all assumptions of~\cite[Theorem 1.4]{CLR2} for the sequence of perfect t-embeddings of uniformly weighted Aztec diamonds.


First, we recall the definition of the height function~$h$ corresponding to a random matching~$M$ sampled from the dimer model on~$\mathcal{G}$. The height~$h(v^*)$ is a function on vertices of the augmented dual~$\mathcal{G}^*$. We fix some reference matching~$M_0$ of~$\mathcal{G}$, and fix the boundary values~$h(v^*) =0$ for outer vertices of~$\mathcal{G}^*$. Then, for each edge~$(b w^*) = u^* v^*$, oriented from~$u^*$ to~$v^*$ so that the white vertex is on the left, define~$h(v^*)-h(u^*) = -\mathbf{1}_{bw \in M} + \mathbf{1}_{bw \in M_0}$, see Figure~\ref{fig:height_fn}. Note that~$h$ depends on the reference matching~$M_0$, but~$h - \mathbb{E}[h]$ does not.

Now to fix notation, define the pseudo-Riemannian manifold~$\mathbb{R}^{2,1}$ as~$\mathbb{R}^3$ equipped with the metric~$d x^2 + d y^2 - d \vartheta^2$. Recall that we denote by~$\Omega_{\T}$ the region covered by the union of faces of~$\mathcal{T}$. We are now ready to formulate the main theorem of \cite{CLR2}. 
\begin{theorem}[\cite{CLR2}]\label{thm:CLR2thm}
Suppose there exists a sequence of perfect t-embeddings~$\mathcal{T}_n$ of weighted planar bipartite graphs~$\mathcal{G}_n$ satisfying the following properties:
\begin{enumerate}
\makeatletter
\renewcommand{\labelenumi}{\textbf{\theenumi}}
\renewcommand{\theenumi}{\textbf{\arabic{enumi}}}
\makeatother
\item \label{item:a} The regions~$\Omega_{\T_n}$ approximate (in the Hausdorff sense) a simply connected domain~$\Omega \subset \mathbb{C}$ as~$n \rightarrow \infty$.
\item \label{item:b} We have the convergence of origami maps~$\mathcal{O}_n(z) \rightarrow \vartheta(z)$, uniformly on compact subsets, for some function~$\vartheta : \Omega \rightarrow \mathbb{R}$. Furthermore, the limiting graph~$S_{\Omega}=\{(z, \vartheta(z))\}_{z \in \Omega}$ of~$\mathcal{O}_n$ over~$\mathcal{T}_n$ is a \emph{space-like surface with zero mean curvature in~$\mathbb{R}^{2,1}$.}
\item \label{item:c} There is a sequence of scales~$\delta = \delta_n \rightarrow 0$, such that the sequence~$\T_n$ satisfies Assumptions~\ref{assumption:Lip} and ~\ref{assumption:Exp_fat} on compact subsets (see the next paragraph for precise descriptions of the assumptions).
\end{enumerate}
Let~$\zeta_{m, i} \in \mathbb{D}, m = 1,\dots, k, i = 1,2$ be points of the disk, and let~$v_{m, i}^{(n)}$ be vertices of~$\T_n$ which approximate the points~$z(\zeta_{m, i})$, where~$(z(\zeta), \vartheta(\zeta ))$ is a conformal parameterization of~$S_{\Omega}$. Suppose that for~$i = 1, 2$,  the points~$v_{1, i}^{(n)},\dots,v_{k, i}^{(n)}$ stay uniformly away from each other and from  the boundary~$\partial \Omega$. Then, 
$$\mathbb{E}\left[\prod_{m=1}^k (\bar{h}_n(v_{m, 2}^{(n)}) - \bar{h}_n(v_{m, 1}^{(n)}))\right] \rightarrow \sum_{r_1,\dots,r_k \in\{1,2\}}(-1)^{r_1+\cdots+r_k} G_k(\zeta_{1, r_1}, \dots, \zeta_{k, r_k}),$$
where~$\bar{h}_n(v) = h_n(v)- \mathbb{E}[h_n(v)]$ denotes the mean subtracted height function on~$\T_n$, and~$G_k$ is the~$k$-point correlation function of the Gaussian free field on~$\mathbb{D}$.
\end{theorem}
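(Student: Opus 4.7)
The strategy I would adopt is the standard one for proving GFF convergence in the dimer model via discrete complex analysis, specialized to the t-embedding framework of \cite{CLR1, CLR2}. First, I would express the multi-point moments of height increments $\bar h_n(v_{m,2})-\bar h_n(v_{m,1})$ through Kenyon's formula. Writing each height difference as a signed sum over dimers intersecting a dual path between $v_{m,1}$ and $v_{m,2}$, the centered moments become sums of products of entries of $K_{\T_n}^{-1}$, where $K_{\T_n}$ is the Kasteleyn matrix naturally associated with the t-embedding (as discussed after Definition~\ref{def:O_dO}). The combinatorial identity reducing the $k$-point moment to a determinant (or permanent minus diagonal) of such entries is the starting point, independent of the geometry of $\T_n$.

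Next, I would pass from $K_{\T_n}^{-1}$ to a genuinely geometric object. The machinery of \cite{CLR1} identifies rows (and columns) of $K_{\T_n}^{-1}$, suitably normalized by the origami square root function $\eta$ from \eqref{eq:origami_sq_roor}, with \emph{t-holomorphic} functions on the t-embedding, whose leading singular behavior at a prescribed vertex is a discrete analog of a Cauchy kernel. The idea is to replace each entry of $K_{\T_n}^{-1}$ inside the moment computation with values of such a discrete Cauchy kernel $\mathcal{K}_n(\cdot, v)$, and to study its limit as $n\to\infty$. Here the perfection of $\T_n$ enters crucially: on perfect t-embeddings, the discrete complex analysis of \cite{CLR2} couples the t-holomorphic equation with the geometry of the graph of $\Or_n$ over $\T_n$, in such a way that the natural limiting equation is Cauchy-Riemann in the intrinsic complex structure of the maximal surface $S_\Omega$.

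The analytic heart of the proof is to upgrade this heuristic to an actual convergence statement: $\mathcal{K}_n(\cdot, v) \to \mathcal{K}(\cdot, v)$ uniformly on compact subsets away from $v$, where $\mathcal{K}$ is the Cauchy kernel on $S_\Omega$ with respect to its conformal parameterization $z(\zeta)$. This is exactly where assumption~\eqref{item:c} of Theorem~\ref{thm:CLR2thm} is used. The Lipschitz assumption $\LipKd$ forces $|\partial\Or_n|$ to stay bounded away from $1$, which is what keeps the discrete t-holomorphic operator uniformly elliptic on scale $\delta_n$; the $\ExpFat$ condition prevents pathological degeneration of faces that would otherwise obstruct discrete a priori estimates (Harnack-type bounds, equicontinuity). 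Together they yield compactness for bounded families of discrete t-holomorphic functions, so any subsequential limit is a continuous holomorphic function on $S_\Omega$; uniqueness of the limit is then enforced by matching boundary behavior and the prescribed singularity at $v$. I expect this step—establishing discrete regularity and then identifying the continuous limit—to be the main obstacle, since it requires propagating the macroscopic assumption~\eqref{item:b} through the mesoscopic scale $\delta_n$ down to the microscopic scale using only the geometric hypotheses $\LipKd$ and $\ExpFat$.

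Finally, with the convergence of $\mathcal{K}_n\to \mathcal{K}$ in hand, I would plug back into the moment formula from the first step. The discrete combinatorial identity that expressed moments as sums of products of $K_{\T_n}^{-1}$ entries becomes, in the limit, exactly the Wick-type identity that expresses GFF moments $\mathbb{E}[\prod_m (\phi(\zeta_{m,2})-\phi(\zeta_{m,1}))]$ as alternating sums $\sum_{r_1,\dots,r_k}(-1)^{r_1+\cdots+r_k} G_k(\zeta_{1,r_1},\dots,\zeta_{k,r_k})$, once one uses that the real part of the Cauchy kernel on $\mathbb{D}$ (pulled back by the uniformizing conformal map $\zeta\mapsto z(\zeta)$) reproduces the Dirichlet Green's function. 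The fact that $S_\Omega$ is maximal in $\mathbb{R}^{2,1}$ guarantees that the intrinsic conformal structure used to identify $\mathbb{D}$ with $\Omega$ is indeed the right one for this Green's function identity to hold, completing the identification with the Gaussian free field.
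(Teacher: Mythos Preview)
This theorem is not proved in the present paper; it is quoted from \cite{CLR2} (note the attribution in the theorem heading) and used as a black box. The entire purpose of the current paper is to \emph{verify the hypotheses} \ref{item:a}--\ref{item:c} of this theorem in the special case of the uniformly weighted Aztec diamond, not to reprove the theorem itself. So there is no ``paper's own proof'' to compare your proposal against.

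That said, your sketch is a reasonable high-level summary of the strategy actually carried out in \cite{CLR1, CLR2}: express height-increment moments via $K_{\T_n}^{-1}$, interpret the relevant entries as discrete t-holomorphic Cauchy kernels, use $\LipKd$ and $\ExpFat$ to get a priori regularity and compactness for t-holomorphic functions, identify subsequential limits as holomorphic functions in the conformal structure of the maximal surface $S_\Omega$, and finally recover the Wick structure of the GFF. If you were asked to reconstruct the proof of the cited result, this would be the right outline; but for the purposes of the present paper you should simply cite \cite{CLR2} and move on.
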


\begin{remark}
In the setting of the theorem, the surface~$S_{\Omega}$ carries a canonical conformal structure, namely, the conformal structure of the induced metric from the embedding in~$\R^{2,1}$. Thus, a conformal parameterization exists and is unique up to biholomorphic maps of the disc.
\end{remark}

\begin{remark}
The conclusion of the theorem above is slightly weaker than convergence of height fluctuations to a Gaussian free field; see~\cite[Remark 1.5]{CLR2} for a further discussion.
\end{remark}

We continue on to precisely state the two technical assumptions
in item~\ref{item:c} of the theorem above in more detail, as verifying these is a central piece of our main theorem. The first assumption roughly speaking requires that (at the discrete level) the origami map is Lipschitz with constant strictly less than~$1$. Given constants~$\delta>0$ and~$\kappa\in (0,1)$, we say that a t-embedding~$\T$ satisfies the assumption~$\LipKd$ on a compact subset~$K \subset \Omega$ if
 \[|\Or(z')-\Or(z)|\leq \kappa |z'-z|\]
 for all~$z,z'\in K$ such that~$|z'-z|\geq\delta.$

\begin{assumption}\label{assumption:Lip}
Given a sequence~$\delta=\delta_n\to 0$, we say that a sequence of t-embeddings~$\T_n$ satisfies the Lipschitz assumption on compact subsets if for each compact~$\mathcal{K} \subset \Omega$ there exists~${\kappa \in (0, 1)}$ such that for~$n$ large enough~$\mathcal{T}_n$ satisfies~$\LipKd$ on~$\mathcal{K}$.
\end{assumption}

The second assumption is a non-degeneracy condition on the faces of the sequence of t-embeddings, which roughly speaking asserts that for almost every face, the radius of the largest circle which can be inscribed in the face cannot decay exponentially fast as~$n \rightarrow \infty$. We say a face of~$\mathcal{T}$ is~$\rho$-fat if there exists a disk of radius~$\rho$ contained in the face.  
 To formulate the second assumption, we also need to recall the definition of a splitting of a t-embedding introduced in~\cite{CLR1}. A \emph{splitting}~$\mathcal{T}_{\mathrm{spl}}^\circ$ (resp.~$\mathcal{T}_{\mathrm{spl}}^{\tb}$) of~$\mathcal{T}$ is obtained from~$\mathcal{T}$ by adding diagonals to each white (resp. black) face of degree larger than $3$ such that all of these faces are split into triangles.
\begin{assumption}\label{assumption:Exp_fat}
Given a sequence~$\delta=\delta_n\to 0$, we say that a sequence of t-embeddings~$\T_n$ satisfies the assumption~$\ExpFat$ on compact subsets if for each compact~$\mathcal{K} \subset \Omega$, there exists some sequence~$\delta' = \delta'_n \rightarrow 0$ such that
\begin{itemize}
\item There exist splittings~$(\mathcal{T}_{n})_{\mathrm{spl}}^{\tb}$ such that after removing all $\exp(-\delta' \delta^{-1})$-fat white faces and black triangles, the size of any remaining vertex connected component in~$\mathcal{K}$ converges to $0$ as $n \rightarrow \infty$.

\item There exists splittings $(\mathcal{T}_{n})_{\mathrm{spl}}^{\tw}$ satisfying the same condition.
\end{itemize}

\end{assumption}

\subsection{Perfect t-embeddings and elementary transformations of bipartite graphs}
There are~$3$ types of elementary transformations of the planar bipartite graph that, in a certain sense, preserve partition and correlation functions of the dimer model, see Figure~\ref{fig:elem}.
One of the crucial (on the discrete level) properties of t-embeddings is related to elementary transformations of bipartite graphs. 
To state this property we need to remind a definition of a \emph{central move} introduced in~\cite{KLRR}. Let  $u, u_1, u_2, u_3, u_4 \in \mathbb{C}$ be distinct points. Note that the equation 
\begin{equation}\label{eq:cental_move}
\frac{(u_1-z)(u_3-z)}{(u_2-z)(u_4-z)}=\frac{(u_1-u)(u_3-u)}{(u_2-u)(u_4-u)}
\end{equation}
has two roots $z = u$ and $z = \tilde{u}$.
The map $u\mapsto\tilde{u}$ is called a central move with respect to $u_1, \ldots, u_4$. Let us also recall one of the geometrical properties of a central move given in \cite[Remark 6]{KLRR}.

\begin{figure}
 \begin{center}
\includegraphics[width=0.23\textwidth]{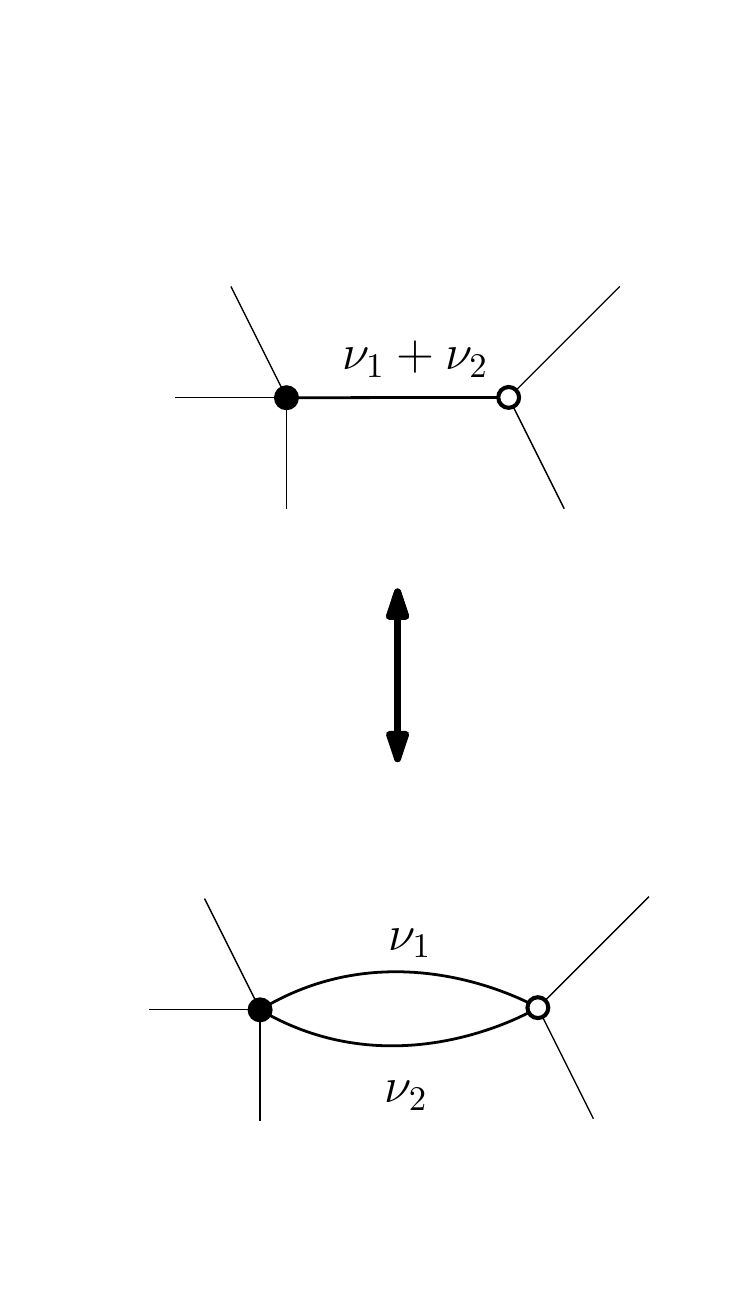}
$\quad\quad\quad$
\includegraphics[width=0.23\textwidth]{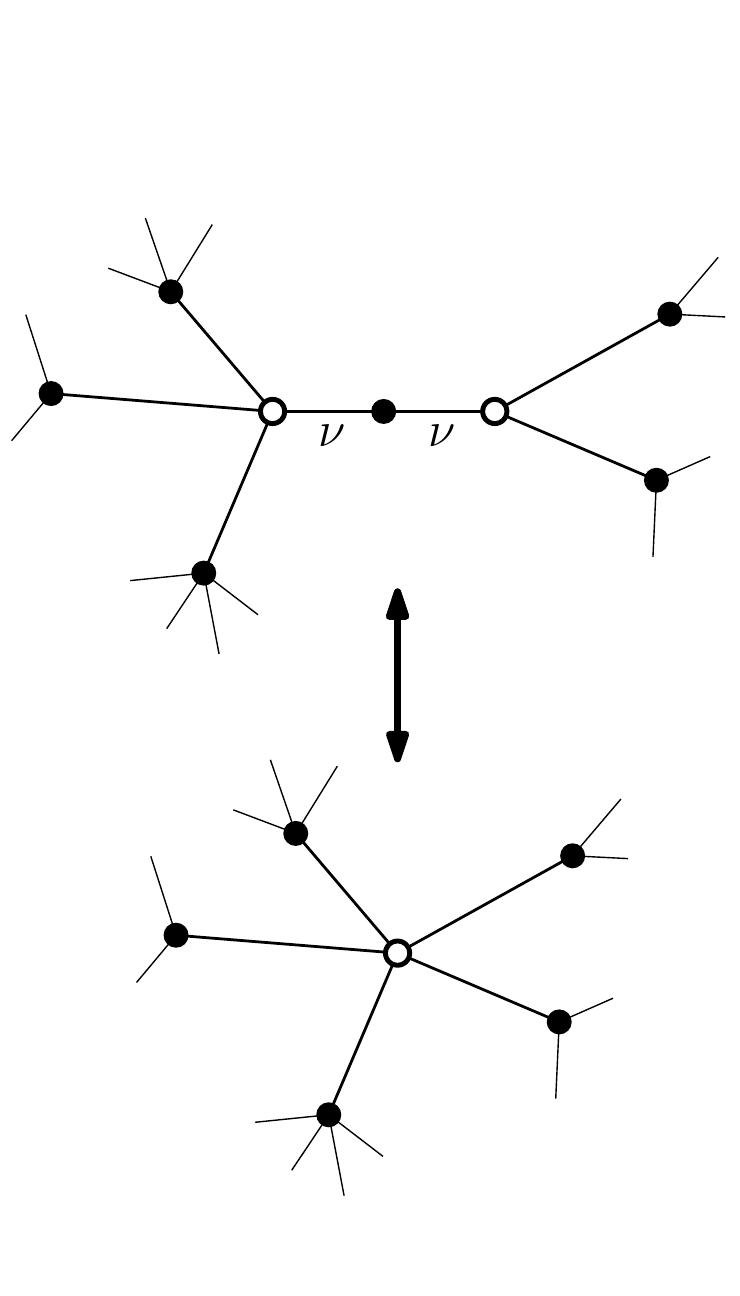}
$\quad\quad\quad$
\includegraphics[width=0.23\textwidth]{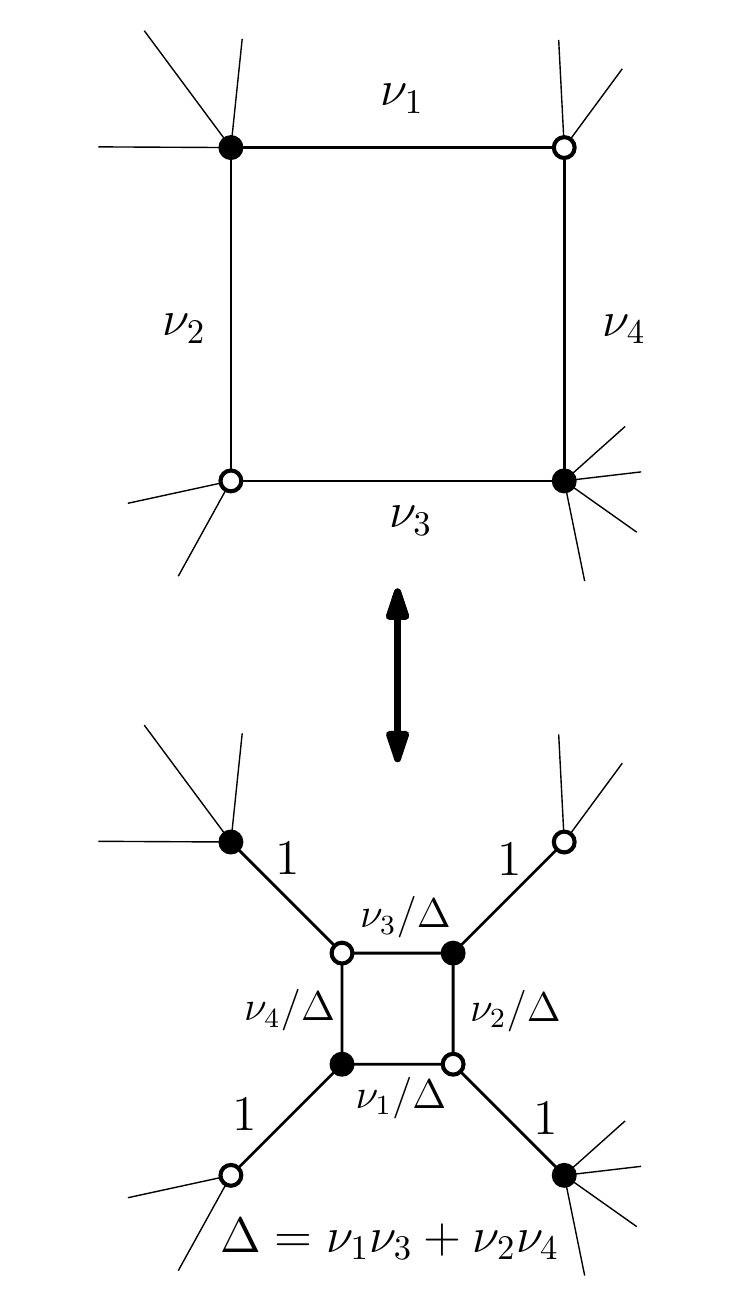}
 \end{center}
\caption{Elementary transformations of weighted bipartite graphs:
(1)~a single edge with weight~$\nu_1 + \nu_2$ can be replaced by parallel edges with weights~$\nu_1, \nu_2$; double edges  with weights~$\nu_1, \nu_2$ can be merge to a single edge with weight~$\nu_1 + \nu_2$;
(2)~contracting a degree~$2$ vertex whose edges have equal weights; splitting a vertex of degree $d_1+d_2$ to two vertices of degrees $d_1+1$ and $d_2+1$ and adding a degree two vertex between them;
(3)~spider move, with the weight change as shown.
}\label{fig:elem}
\end{figure}

\begin{figure}
 \begin{center}
  \includegraphics[width=0.25\textwidth]{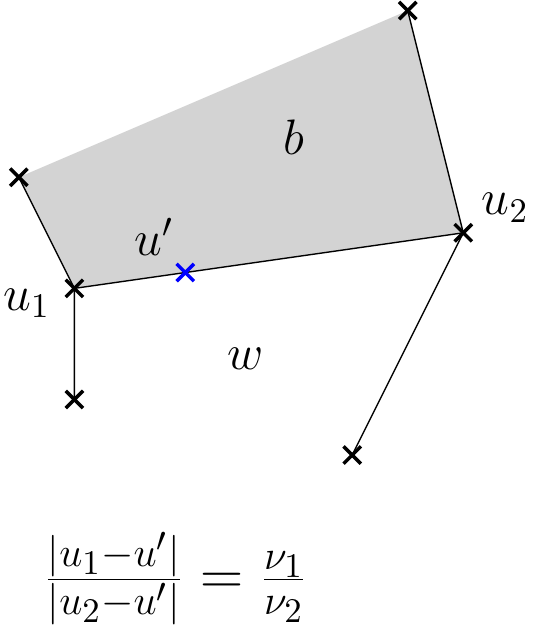}
$\,$
  \vline
  $\quad$
  \includegraphics[width=0.25\textwidth]{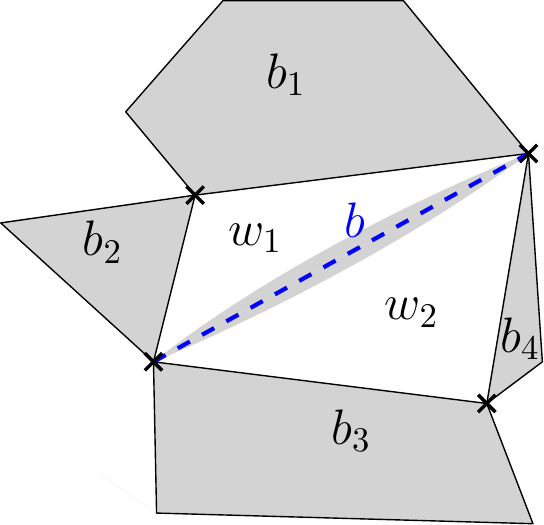}
    $\quad$
  \vline
  $\quad$
    \includegraphics[width=0.35\textwidth]{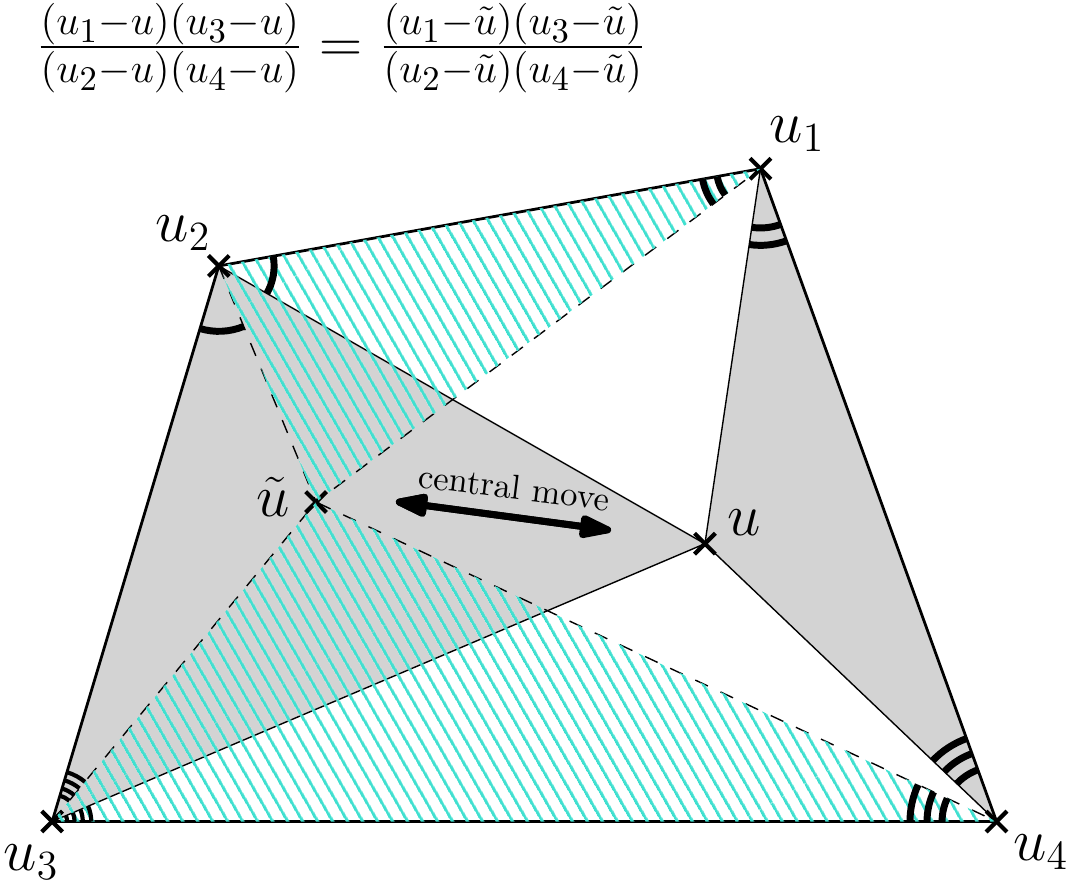}
  \caption{Transformations of t-embeddings: (1) adding / removing vertex of degree $2$; (2) adding / removing diagonal of a face of a t-embedding; (3) central move (points obtained by a central move are isogonal conjugate). 
  }\label{fig:elem_t_emb}
 \end{center}
\end{figure}

\begin{remark}[\cite{KLRR}]\label{isog}
The isogonal conjugate of a point $u$ with respect to a quadrilateral $u_1u_2u_3u_4$ is given by reflecting the lines $uu_i$ about the angle bisectors of $u_i$. If these four reflected lines happen to intersect at one point, then this point  is called the isogonal conjugate of $u$. The points obtained by a central move are isogonal conjugate, see Figure~\ref{fig:elem_t_emb}.
\end{remark}

\begin{proposition}[\cite{KLRR}]\label{prop:elem}
T-embeddings of~$\G^*$ are preserved under elementary transformations of~$\G$. More precisely, 
\begin{enumerate}
\item replacement of a single edge with weight~$\nu_1 + \nu_2$ by parallel edges with weights~$\nu_1, \nu_2$ corresponds to adding a point dividing corresponding edges of the t-embedding in proportion~$[\nu_1: \nu_2]$; merging double edges corresponds to removing a degree~$2$ vertex of the t-embedding, which due to the angle condition has to lie on a line with its two adjacent vertices;
\item contracting a degree~$2$ vertex corresponds to removing a degree two face of a t-embedding which can be seen as a diagonal of a face of the t-embedding; splitting a vertex of degree $d_1+d_2$ to two vertices of degrees $d_1+1$ and $d_2+1$ and adding a degree two vertex between them corresponds to adding a diagonal to the corresponding face of the t-embedding with respect to the structure of the splitted graph;
\item a spider move corresponds to a central move of points of t-embedding.
\end{enumerate}
\end{proposition}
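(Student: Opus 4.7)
The plan is to treat each of the three elementary transformations separately and verify, in each case, that the two defining properties of a t-embedding from Definition~\ref{def:temp} -- the angle condition at each interior vertex, and the gauge equivalence between geometric edge lengths and the transformed edge weights (equivalently, the face-weight formula) -- continue to hold after the prescribed geometric operation. Since each elementary move affects only a bounded combinatorial neighborhood of $\G$, it suffices to check the conditions locally; outside that neighborhood the embedding, and hence both conditions, are unchanged.

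For transformation (1), splitting an edge $e$ into two parallel edges of weights $\nu_1,\nu_2$ creates a bigon face in $\G$ and thus inserts a new degree-$2$ vertex $v^*$ along the dual edge $e^*$. Only a single corner of each colour meets $v^*$, so the angle condition there forces each of these two corners to equal $\pi$; this is exactly the collinearity statement. The bigon is a $d=1$ face, so the face-weight formula of Definition~\ref{def:temp} reads $X_{\mathrm{bigon}}=|\T(v^*)-\T(v_1^*)|/|\T(v_2^*)-\T(v^*)|$, which must equal $\nu_1/\nu_2$ and hence pins the division ratio to $[\nu_1:\nu_2]$. Merging parallel edges is the inverse operation, and the collinearity asserted in (1) is just the same angle condition read in reverse. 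Transformation (2) is analogous: a degree-$2$ vertex of $\G$ corresponds to a degenerate degree-$2$ face of $\G^*$, realised in the t-embedding as a diagonal of a polygonal face of $\T$. Adding such a diagonal subdivides one polygon face into two without altering the angle at any existing vertex -- the original corner at each endpoint is split additively into two new corners -- so the angle condition is preserved. The face-weight formulas for the two new faces then recombine, by a telescoping cancellation along the diagonal, to the face weight of the original unsplit face, which is exactly the gauge equivalence required for the split graph (the equal-weight hypothesis for contraction is what makes the reverse operation unambiguous).

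Transformation (3), the spider move, is the most intricate and is where I expect the bulk of the work. A spider move in $\G$ modifies adjacencies at a degree-$4$ vertex, which in $\T(\G^*)$ corresponds to a quadrilateral face with a distinguished central vertex $u$ adjacent to its four corners $u_1,\dots,u_4$. To verify that the new t-embedding is obtained by moving $u$ to its image $\tilde u$ under the central move of equation~\eqref{eq:cental_move}, one checks the angle condition at the corners $u_i$ using the isogonal-conjugate characterisation of Remark~\ref{isog}: reflecting each line $u u_i$ about the local angle bisector at $u_i$ preserves the bipartite splitting of the total angle at $u_i$, so the sum of corner angles of each colour at $u_i$ is unchanged. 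The angle condition at the new central vertex $\tilde u$ and the transformation of the four surrounding face weights then reduce together to an algebraic manipulation of~\eqref{eq:cental_move}, which is to be compared with the combinatorial spider-move weight rule shown in Figure~\ref{fig:elem}.

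The main obstacle is the bookkeeping in case~(3): one has to track the bipartite colouring, the cyclic order of the four corners, and the alternating signs $(-1)^{d+1}$ in the face-weight formula carefully in order to confirm that the central-move identity~\eqref{eq:cental_move} matches exactly the algebraic relation between the four neighbouring face weights before and after the spider move. The first two cases are routine once the correct interpretation of degenerate degree-$2$ faces -- as collinear point insertions in case (1) and as diagonals in case (2) -- has been fixed.
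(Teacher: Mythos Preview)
The paper does not actually give a proof of this proposition: it is stated with the attribution \cite{KLRR} and no argument is supplied, so there is nothing in the present paper to compare your attempt against. Your outline is therefore not competing with any proof here; it would be a reconstruction of the argument from \cite{KLRR}.

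That said, as an outline your approach is the natural one and is essentially what is done in \cite{KLRR}: treat each elementary move locally and verify the angle condition and the face-weight condition after the corresponding geometric operation. Your handling of cases (1) and (2) is correct. For case (3) you have identified the right ingredients --- the isogonal-conjugate description of the central move for the angle condition, and an algebraic check that \eqref{eq:cental_move} reproduces the spider-move face-weight transformation --- but you stop short of actually carrying out either computation. In particular, the statement that ``the angle condition at the new central vertex $\tilde u$ and the transformation of the four surrounding face weights then reduce together to an algebraic manipulation of \eqref{eq:cental_move}'' is exactly the content that needs to be written out; as it stands this is a plan rather than a proof. If you want a self-contained argument you will need to do that bookkeeping explicitly, matching the cross-ratio identity \eqref{eq:cental_move} with the spider-move weight rule in Figure~\ref{fig:elem}.
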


\begin{remark} Note that perfect t-embeddings are preserved under elementary transformations that preserve boundary vertices in all intermediate graphs. 
\end{remark}

\begin{corollary}\label{cor:or_elem}
In the same sense as in Proposition~\ref{prop:elem}, origami maps are preserved under elementary transformations of $\G$.
\end{corollary}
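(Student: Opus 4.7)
The plan is to verify preservation of the origami map under each of the three elementary transformations catalogued in Proposition~\ref{prop:elem}. Since $\Or$ is determined by $\T$ through Definition~\ref{def:O_dO} (the primitive of the piecewise-constant form $d\Or$ encoded by the squared origami square roots $\eta_v^2$) or equivalently by the folding construction of Definition~\ref{def:O}, the strategy is to track how these data change under each move.

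For the first two moves the argument is essentially bookkeeping. In Move 1, inserting a degree-$2$ vertex of $\T$ on an existing edge leaves every face, every face coloring, and every edge direction unchanged; the form $d\Or$ is therefore unchanged, and the $\Or$-value at the new vertex of $\G^*$ is consistently obtained by integration along the (still collinear) edge. In Move 2, inserting a diagonal splits a face of $\T$ into two faces of the same color, together with a degenerate degree-$2$ face whose image in $\T$ is a single segment. The two new faces of the original color inherit the same $\eta^2$ from the unchanged exterior edges on their boundaries; the value of $\eta^2$ on the degenerate face is then determined uniquely by relation~\eqref{eq:origami_sq_roor}, and consistency of $\Or$ across the diagonal reduces to a direct computation using the collinearity of the two diagonal edges together with the angle condition.

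The main obstacle, as expected, is Move 3, the central move replacing a $4$-valent interior vertex $u$ of $\T$ by its isogonal conjugate $\tilde u$. The plan is to use the folding formulation of $\Or$. For any face whose dual $\G$-vertex lies outside the local region bounded by $u_1,u_2,u_3,u_4$, one can choose a face path from the root that avoids that region, so $\Or$ is preserved on such a face. For the four faces of $\T$ incident to $u$ before the move (respectively to $\tilde u$ after it), one must compare, at each corner $u_i$, the composition of reflections across the two $\T$-edges meeting $u_i$ that pass through $u$ with the analogous composition passing through $\tilde u$. This is precisely the analytic content of the isogonal conjugate characterization recalled in Remark~\ref{isog}: the reflection of the line $uu_i$ about the angle bisector at $u_i$ is the line $\tilde u u_i$. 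Concatenating these corner identities around the quadrilateral, together with the t-embedding angle conditions at $u$ and $\tilde u$, yields the required preservation and completes the verification.
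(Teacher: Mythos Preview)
Your treatment of Moves 1 and 2 is fine and matches the paper's. The gap is in Move 3. What ``preserved in the same sense as Proposition~\ref{prop:elem}'' means for the spider move is that the origami image of the new central vertex~$\tilde u$ is the \emph{central move} of~$\Or(u)$ with respect to~$\Or(u_1),\dots,\Or(u_4)$, i.e.\ that
\[
\frac{(\Or(u_1)-\Or(u))(\Or(u_3)-\Or(u))}{(\Or(u_2)-\Or(u))(\Or(u_4)-\Or(u))}
=\frac{(\Or(u_1)-\Or(\tilde u))(\Or(u_3)-\Or(\tilde u))}{(\Or(u_2)-\Or(\tilde u))(\Or(u_4)-\Or(\tilde u))}.
\]
Your argument establishes that~$\Or$ is unchanged on the exterior and hence at the four vertices~$u_i$ (this is correct and in fact cleaner than the paper's computation of the quadrilateral angles), but the sentence ``concatenating these corner identities around the quadrilateral \dots\ yields the required preservation'' does not produce the cross-ratio identity above. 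Comparing the reflection across~$u_iu$ with the reflection across~$u_i\tilde u$ tells you how the origami images of the four inner triangles rotate relative to one another; it does not by itself pin down~$\Or(\tilde u)$ as the second root of the central-move equation.

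The paper closes this gap with a different observation: for any adjacent~$v,v'\in\G^*$ one has $|\Or(v)-\Or(v')|=|\T(v)-\T(v')|$, and the alternating-angle condition at each inner vertex of~$\T$ transfers verbatim to~$\Or$. These two facts force the cross-ratio $\prod_i(\Or(u_i)-\Or(u))^{(-1)^i}$ to equal the corresponding cross-ratio in~$\T$, and likewise for~$\tilde u$; since~$u$ and~$\tilde u$ are the two roots of~\eqref{eq:cental_move} in~$\T$-coordinates, $\Or(u)$ and~$\Or(\tilde u)$ are the two roots in~$\Or$-coordinates. That is the step your outline is missing.
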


\begin{proof} 
Adding a point dividing an edge of the t-embedding in proportion~$[\nu_1: \nu_2]$ corresponds to adding a point dividing corresponding edge of the origami map in the same proportion. 

Diagonal of a face of the t-embedding is actually a face itself which has degree~$2$ and folding along both of them doesn't change the rest of the origami map.  Therefore adding / removing a diagonal of a face of the t-embedding implies adding / removing a diagonal of a corresponding face of the origami map.

Let $u_1, u_2, u_3, u_4$ and~$u$ be vertices of the t-embedding as shown on Figure~\ref{fig:or_spider}. Recall that a spider move of the initial bipartite graph corresponds to a central move for corresponding vertices of a t-embedding, see~\cite[Equation (6)]{KLRR}.

Let us first show that the positions of points $u_i$ under the origami map stay the same after applying a central move to the perfect t-embedding. To see this, recall that the points $u$ and $\tilde{u}$ are isogonal conjugate (see Remark~\ref{isog}), therefore the angle between the lines~$\Or(u_{i-1})\Or(u_{i})$ and~$\Or(u_{i})\Or(u_{i+1})$ is equal to the angle between the lines~$uu_i$ and~$u_i\tilde{u}$.
Since $|\Or(u_i)\Or(u_{i+1})|=|u_iu_{i+1}|$, we obtain that  the positions of points $u_i$ under the origami map stay the same after applying a central move up to a global translation, and this global translation is fixed to be the same by the  boundary conditions, for example by choosing~$\eta_{w_0}=1$ for a fixed face~$w_0$.

Finally, let~$\tilde{u}$ be the vertex obtained from~$u$ by a central move, i.e.
\[\frac{(u_1-u)(u_3-u)}{(u_2-u)(u_4-u)}=\frac{(u_1-\tilde{u})(u_3-\tilde{u})}{(u_2-\tilde{u})(u_4-\tilde{u})},\]
we want to show that
\[\frac{(\Or(u_1)-\Or(u))(\Or(u_3)-\Or(u))}{(\Or(u_2)-\Or(u))(\Or(u_4)-\Or(u))}=\frac{(\Or(u_1)-\Or(\tilde{u}))(\Or(u_3)-\Or(\tilde{u}))}{(\Or(u_2)-\Or(\tilde{u}))(\Or(u_4)-\Or(\tilde{u}))},\]
which is a central move for the vertices of the origami map.
Indeed, note that  $|\T(v)\T(v')|=|\Or(v)\Or(v')|$ for any adjacent vertices $v, v'\in\G^*$ and 
the sum of the angles at each inner vertex of $\Or(\G^*)$ at the corners corresponding to black faces is equal to~$\pi$ (since these angles are equal to the angles of the corresponding t-embedding). 
\end{proof}

\begin{figure}
 \begin{center}
 \includegraphics[width=0.6\textwidth]{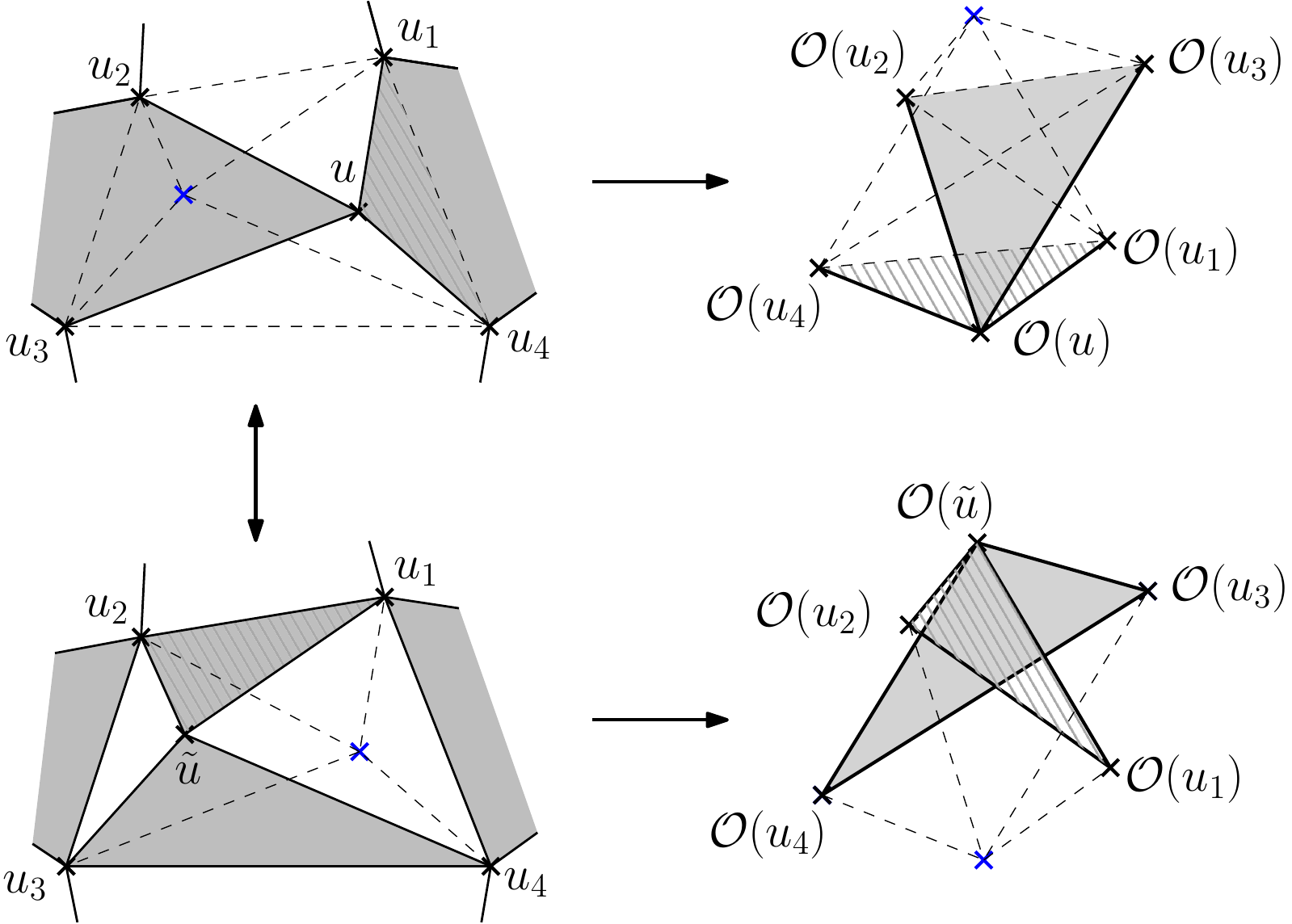}
  \caption{Central move of a vertex of a t-embedding and corresponding origami maps. 
  }\label{fig:or_spider}
 \end{center}
\end{figure}


\section{Perfect t-embeddings of Aztec diamonds}\label{sec:aztec}
We first recall the construction of a perfect t-embedding of Aztec diamond with homogeneous dimer weights given in~\cite{Ch-R}. In Section~\ref{sec:aztec_prob} we show a connection of perfect t-embeddings of Aztec diamond with edge probabilities. We use this connection in Section~\ref{sec:assumptions} to show that perfect t-embeddings of the uniform dimer model on Aztec diamond satisfy all assumptions of Theorem~\ref{thm:CLR2thm}.

\subsection{The Aztec diamond and its reduction.}\label{sec:aztec_def}
Consider the square grid~$(\mathbb{Z}+\frac12)^2$. The faces of such a grid can be naturally indexed by pairs~$(j,k)\in\mathbb{Z}^2.$ Let~$n$ be a positive integer. The \emph{Aztec diamond}~$A_{n+1}$  of size~$n+1$ is a subset of faces~$(j,k)$ of the square grid~$(\mathbb{Z}+\frac12)^2$ such that~$|j|+|k| \leq n$. 
 We also choose the bipartite coloring of vertices of the Aztec diamond such that all North-Eastern corners are black, see Figure~\ref{fig_period}.

Following~\cite{Ch-R} let us also define \emph{reduced Aztec diamond}~$A'_{n+1}$  of size~$n+1$. To obtain the reduced Aztec diamond $A'_{n+1}$ from $A_{n+1}$ one should make the following sequence of moves:
\begin{itemize}
\item[$\bullet$] contract black vertices $(j\pm\frac12,k\pm\frac12)$ of $A_{n+1}$ with $j+k=\pm n$;
\item[$\bullet$] contract white vertices $(j\pm\frac12,k\mp\frac12)$ of $A_{n+1}$ with $\begin{cases} j-k=\pm n\\  
\{|j|, |k|\} \neq \{0,n\} \end{cases}$;  
 \item[$\bullet$] merge pairwise all the $4n$ obtained pairs of parallel edges.
\end{itemize}

Note that inner vertices of the augmented dual $(A'_{n+1})^*$ are in natural correspondence with inner faces of $A_n$ and therefore can be indexed by $(j,k)\in\mathbb{Z}^2$ with $|j|+|k| < n$. The boundary vertex of augmented dual $(A'_{n+1})^*$ adjacent to the dual vertex indexed by~$(n-1,0)$ we index by~$(n,0)$. Similarly we index the other three boundary vertices of the augmented dual~$(A'_{n+1})^*$ by~$(0,n)$, $(-n,0)$ and $(0,-n)$.


\subsection{Recurrence relation for t-embeddings}\label{sec:aztec_rec} In this section we remind the reader of a construction of a perfect t-embedding of the reduced Aztec diamond with homogeneous dimer weights introduced in~\cite{Ch-R}. For the rest of the paper we focus on uniform measure on dimer configurations, in other words we fix all face weights to be~$1$.

The Aztec diamond of size $n+1$ can be obtained from the Aztec diamond of size $n$ using a sequence of elementary transformations. The same holds for the reduced Aztec diamonds, see Figure~\ref{fig:shuffling}. This fact together with Proposition~\ref{prop:elem} was used in~\cite[Proposition 2.4]{Ch-R} to construct perfect t-embeddings~$\T((A'_{n+1})^*)$. Denote by~$\T_{n}(j,k)$ the position of the perfect t-embedding of the vertex of~$(A'_{n+1})^*$ indexed by~$(j,k)$. The following proposition gives a constructive way to define a sequence of perfect t-embeddings~$\T_n$ of reduced Aztec diamonds. 

\begin{figure}
 \begin{center}
\includegraphics[width=0.9\textwidth]{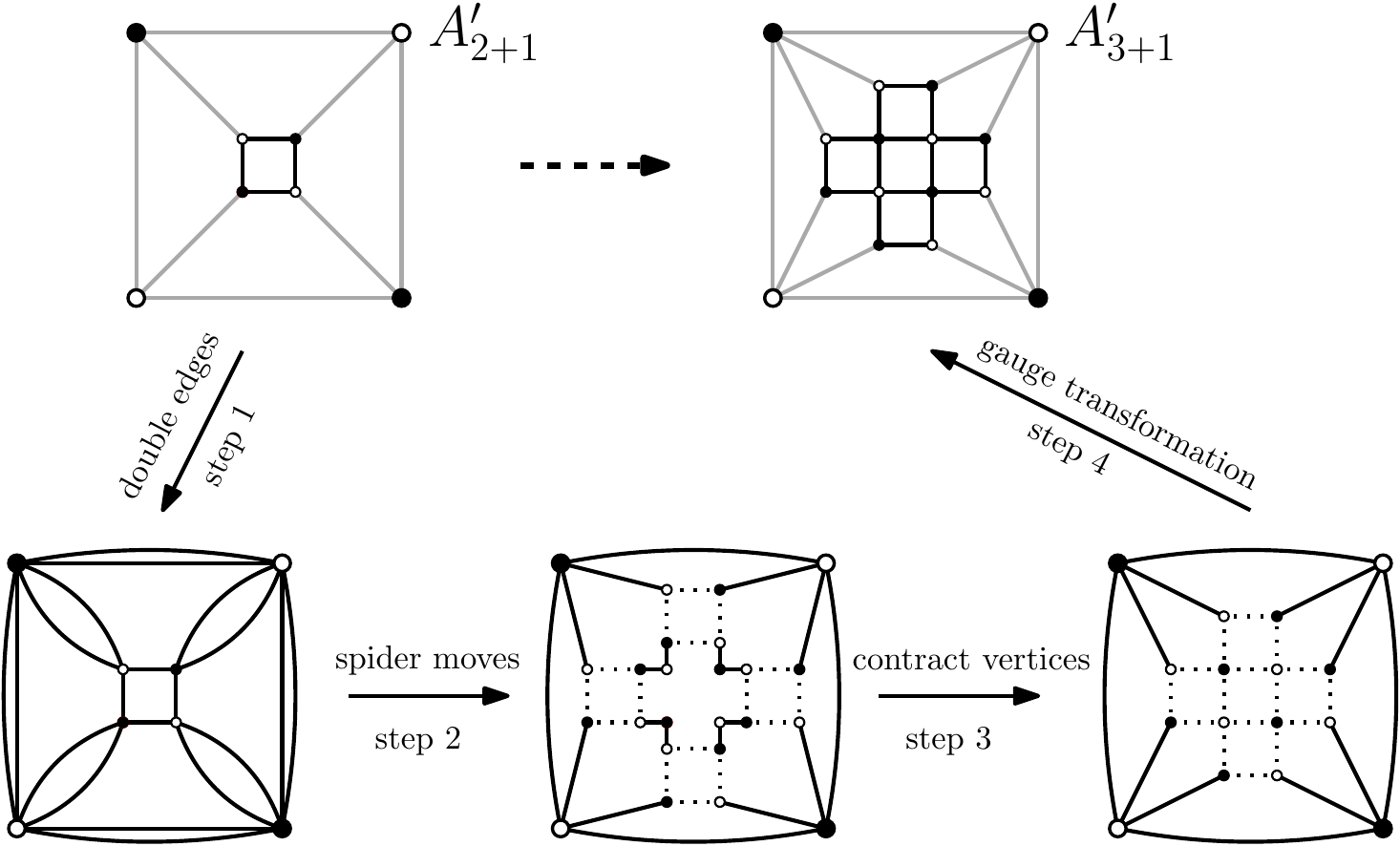}
  \caption{ The steps to get~$A'_{3+1}$ from~$A'_{2+1}$. Weights of black dotted edges are~$\tfrac{1}{2}$, of black filled ones are~$1$ and of grey filled edges are~$2$. The first step is to split all edges adjacent to boundary vertices (note that all these edges have weight~$2$) to double edges with weight~$1$ each. The second step is to apply a spider move at each face indexed by~$(j, k)$ with~$j+k+n$ odd. Then one contract all obtained by spider moves vertices of degree~$2$ as the third step. And finally, the last forth step is to apply a gauge transformation at each white vertex: multiply all edge weights by~$2$.
  }\label{fig:shuffling}
 \end{center}
\end{figure}

\begin{proposition}[\cite{Ch-R}]\label{prop:t_rec_aztec} 
The perfect t-embedding~$\T_{n+1}(j,k)$ can be obtained from~$\T_{n}(j,k)$ using the following update rules
\begin{enumerate}
\item $\T_{n+1}(0,\pm(n+1)) =\pm \i$ 
 and $\T_{n+1}(\pm(n+1),0) = \pm 1$.
\item For $\{|j|, |k|\}=  \{0,n\}$
\begin{equation*}
\T_{n+1}(\pm n,0)=\frac{1}{2}\Big(\T_n(\pm n,0)+\T_n(\pm(n-1),0)\Big),
\end{equation*}
\begin{equation*}
\T_{n+1}(0,\pm n)=\frac{1}{2}\Big(\T_n(0,\pm n)+\T_n(0,\pm(n-1))\Big).
\end{equation*}
\item
For $1\leq j\leq n-1$, $|j|+|k|=n$, $\{|j|, |k|\} \neq \{0,n\}$
\[
\T_{n+1}(j,\pm(n-j))
=\frac{1}{2}\Big(\T_n(j-1,\pm(n-j))+\T_n(j,\pm(n-j-1))\Big).
\]
For $-(n-1)\leq j\leq -1$,  $|j|+|k|=n$
\[
\T_{n+1}(j,\pm(n+j))
=\frac{1}{2}\Big(\T_n(j,\pm(n+j-1))+\T_n(j+1,\pm(n+j))\Big).
\]
\item For $|j|+|k|<n$ and $j+k+n$ even, $\T_{n+1}(j,k)=\T_n(j,k)$.
\item For $|j|+|k|<n$ and $j+k+n$ odd,
\begin{multline*}
\T_{n+1}(j,k)+\T_n(j,k)=\frac{1}{2}\Big(\T_{n+1}(j-1,k)+\T_{n+1}(j+1,k) 
+\T_{n+1}(j,k+1)+\T_{n+1}(j,k-1)\Big).
\end{multline*}
\end{enumerate}
\end{proposition}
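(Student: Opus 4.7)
The plan is to read off the recurrence from the four-step sequence of elementary transformations taking $A'_n$ to $A'_{n+1}$ shown in Figure~\ref{fig:shuffling}, applying Proposition~\ref{prop:elem} at each step. Step~(i) (splitting each boundary edge in ratio $[1:1]$) and step~(iii) (contracting the degree-$2$ vertices produced by spider moves) amount to midpoint insertions and diagonal manipulations which, together with the prescribed boundary conditions of a perfect t-embedding, yield the boundary rules~(1)--(3) directly. Rule~(4) is immediate since faces with $j+k+n$ even are untouched by any of the four steps. The gauge rescaling in step~(iv) does not affect positions and may be ignored.

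The content lies in Rule~(5), the translation of step~(ii): a spider move at every face $(j,k)$ with $|j|+|k|<n$ and $j+k+n$ odd, which by item~(3) of Proposition~\ref{prop:elem} is a central move of $\T_n(j,k)$. Setting $u=\T_n(j,k)$, $\tilde u=\T_{n+1}(j,k)$, taking $u_1,\ldots,u_4$ to be the four neighbors (equal to $\T_{n+1}(j\pm 1,k),\T_{n+1}(j,k\pm 1)$ by Rule~(4)), and writing $k_1=(u_1-u)(u_3-u)$, $k_2=(u_2-u)(u_4-u)$, the central-move equation~\eqref{eq:cental_move} rearranges to the quadratic
$$
(k_2-k_1)z^2 + \bigl[k_1(u_2+u_4) - k_2(u_1+u_3)\bigr]z + \bigl[k_2 u_1 u_3 - k_1 u_2 u_4\bigr] = 0,
$$
whose two roots are precisely $u$ and $\tilde u$. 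By the face-weight identity from the remark following Definition~\ref{def:temp}, for a degree-$4$ face one has $X_{v^*}=-k_1/k_2$; since the spider-move face has weight $1$, this forces $k_1=-k_2$. Vieta's sum-of-roots formula then collapses to
$$
u+\tilde u \;=\; \frac{k_2(u_1+u_3) - k_1(u_2+u_4)}{k_2 - k_1} \;=\; \tfrac12(u_1+u_2+u_3+u_4),
$$
which is exactly Rule~(5).

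The single non-routine bookkeeping point is verifying that at the moment each spider move is applied, the corresponding face weight is indeed~$1$. I would check this inductively through the shuffling: the original Aztec diamond has uniform face weights $1$; the boundary-edge split in step~(i) alters weights only at boundary faces and leaves every interior spider-move face with weight~$1$; and the subsequent contractions and gauge rescaling preserve positions. With this in hand, the proposition is an immediate consequence of Proposition~\ref{prop:elem}. The only step that relies on more than direct invocation of Proposition~\ref{prop:elem} is the identification $k_1=-k_2$ with the face weight, which is where the \emph{uniform} choice of weights is essential, explaining why the clean ``discrete wave equation'' form of Rule~(5) appears in this setting.
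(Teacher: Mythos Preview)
Your argument is correct and follows essentially the same route as the paper's (brief, outlined) proof: read each update rule off the corresponding step of the shuffling via Proposition~\ref{prop:elem}, and for Rule~(5) use that the spider-move face has weight~$1$ to simplify the central-move equation. You are more explicit than the paper in deriving Rule~(5)---the paper simply cites \cite[Equation~(6)]{KLRR} together with the face-weight-$1$ observation, whereas you write out the quadratic and apply Vieta---but the substance is identical. Two minor points of bookkeeping: rules~(2)--(3) come from step~(i) alone (the midpoint insertions), not from step~(iii), which only removes the diagonals created by spider moves and does not affect boundary positions; and when you justify that the four neighbours in Rule~(5) are $\T_{n+1}$-values ``by Rule~(4)'', note that some neighbours may lie on $|j'|+|k'|=n$ and are then given by Rules~(2)--(3) rather than~(4)---either way they are already $\T_{n+1}$-values at the time of the spider move, so the formula stands.
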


For the reader’s convenience, let us briefly outline the proof of this statement.

\begin{proof}
The steps to get reduced Aztec diamond of size $n+1$ from the reduced Aztec diamond of size $n$ are shown on Figure~\ref{fig:shuffling}.
Due to Proposition~\ref{prop:elem}, the first step corresponds to adding mid-points of the corresponding edges of~$\T_n$, i.e. these are update rules~$(2)$ and~$(3)$. 
Update rule~$(4)$ reflects the fact that, for~$j + k + n$ even, the inner faces~$(j, k)$ of~$\T_n$ are not destroyed by spider moves and hence the positions of the corresponding dual vertices in the t-embedding~$\T_{n+1}$ remain the same as in~$\T_n$. The update~$(5)$ corresponds to spider moves at faces for which $j + k + n$ is odd. Here we use the fact that the face weight at the face~$(j,k)$ of~$A'_{n+1}$ is equal to~$1$, and the explicit description~\cite[Equation (6)]{KLRR} of the spider move (the \emph{central move}) in terms of the t-embedding. Finally, the last two steps 
do not affect positions of the dual vertices in the t-embedding.
\end{proof}

Let~$\Or_{n}(j,k)$ be the origami map of~$(A'_{n+1})^*$, where the root white face is the one adjacent to~$\T_n(n,0)$ and~$\T_n(0,n)$. Then the following holds.

\begin{proposition}[\cite{Ch-R}]\label{prop:o_rec}
For $n \geq 1$ the origami map $\Or_{n+1}$ can be constructed from $\Or_n$ using the same update rules (2)--(5) as in Proposition~\ref{prop:t_rec_aztec}, with boundary conditions 
\begin{equation}\label{eq:or_bdry}
 \Or_n(n,0) = \Or_n(-n,0) = 1\quad\text{ and } \quad\Or_n(0,n)=\Or_n(0,-n) = \i.
 \end{equation}
\end{proposition}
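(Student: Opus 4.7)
The plan is to follow the proof of Proposition~\ref{prop:t_rec_aztec} almost line by line, replacing Proposition~\ref{prop:elem} with its origami-map counterpart Corollary~\ref{cor:or_elem}, and then to verify the boundary conditions~\eqref{eq:or_bdry} separately.

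For rules~(2)--(5), I would apply the same sequence of elementary transformations from Figure~\ref{fig:shuffling} taking $A'_{n+1}$ to $A'_{n+2}$. By Corollary~\ref{cor:or_elem}, at each step the origami map undergoes transformations parallel to those of the t-embedding. The first step (splitting each boundary edge of weight~$2$ into parallel edges of weight~$1$) inserts midpoints in both $\T$ and $\Or$, giving rules~(2) and~(3) verbatim. At each face with $j+k+n$ odd, the spider move corresponds via Corollary~\ref{cor:or_elem}(3) to a central move of the corresponding origami-map vertices. Since the relevant face weight equals~$1$ and the central move for $\Or$ preserves the local complex cross-ratio (as noted inside the proof of Corollary~\ref{cor:or_elem}), the same algebraic manipulation that produces rule~(5) for $\T$ will produce rule~(5) for~$\Or$: combining the equation $(u-u_1)(u-u_3) = -(u_2-u)(u_4-u)$ with its analogue for $\tilde u$ forces $u + \tilde u = \tfrac12(u_1 + u_2 + u_3 + u_4)$. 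Faces with $j+k+n$ even are not affected by the spider moves, yielding rule~(4), and the final contraction of degree-$2$ vertices together with the white-vertex gauge rescaling do not move any dual vertex.

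For the boundary conditions, I would use Definition~\ref{def:O}: $\Or_n$ fixes every point of the root white face $w_0$. Since $w_0$ is adjacent to both $\T_n(n,0) = 1$ and $\T_n(0,n) = \i$, this immediately gives $\Or_n(n, 0) = 1$ and $\Or_n(0, n) = \i$. For the other two corners I would invoke perfectness of $\T_n$: the outer face is the square with vertices $1, \i, -1, -\i$, each interior angle equals $\pi/2$, and the non-boundary edges incident to each corner lie on the bisector of that interior angle. Tracking the folding along a face path that crosses the non-boundary edge at $\T_n(0, n) = \i$, reflection across that bisector sends the boundary edge $\i \to -1$ onto the continuation of $1 \to \i$; since $\Or_n$ preserves edge lengths, the point $\T_n(-n, 0) = -1$ is folded onto~$1$. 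The symmetric argument through the corner at $\T_n(n, 0) = 1$ then gives $\Or_n(0, -n) = \i$.

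The main obstacle will be this last boundary computation: one has to argue carefully that the angle-bisector condition at every corner of the outer polygon collapses the image of the outer boundary under $\Or_n$ onto the single segment from $1$ to~$\i$ (traversed back and forth), and that this prescription is consistent with the face path used to propagate $\Or_n$ from $w_0$. Once this geometric picture is in place, the propagation of rules~(2)--(5) is a direct parallel of the proof of Proposition~\ref{prop:t_rec_aztec} with Corollary~\ref{cor:or_elem} in place of Proposition~\ref{prop:elem}.
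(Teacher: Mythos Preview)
Your proposal is correct and takes essentially the same approach as the paper: the paper's proof is the one-line ``follows from Proposition~\ref{prop:t_rec_aztec} combined with Corollary~\ref{cor:or_elem}'', which is exactly your plan for rules~(2)--(5). For the boundary values, the paper does not spell out the computation you propose; it relies instead on the general fact (noted after Definition~\ref{def:O}) that for a perfect t-embedding $\Or$ collapses the outer polygon onto a line, together with the choice of root face fixing $\Or_n(n,0)=1$ and $\Or_n(0,n)=\i$ and preservation of edge lengths --- which is precisely the geometric picture your bisector-reflection argument makes explicit.
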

\begin{proof}
This proposition follows from Proposition~\ref{prop:t_rec_aztec} combined with Corollary~\ref{cor:or_elem}.
\end{proof}

As it is shown in~\cite{Ch-R} the update rules (2), (3) and (5) from Proposition~\ref{prop:t_rec_aztec}
 can be rewritten in a more universal way. 
Let $\Lambda = \{(j,k,n)\in \ZZ^2\times \ZZ;j+k+n \text{ odd}\}.$
Given~$(b_0,b_E,b_N,b_W,b_S) \in \CC^5 $ 
the following conditions define the function $f:\Lambda \to \CC$ uniquely. 
\begin{enumerate}
\item For all $(j,k)\in\mathbb{Z}^2$
\begin{equation} \label{bc}
f(j,k,0)=0 \quad\text{ and }\quad f(j,k,-1)=0;
\end{equation}
\item If $j+k+n$ is even
\begin{equation} \label{eq:rec-unif}
\begin{split}
f&(j,k,n+1)+f(j,k,n-1)\\
 &- \frac12\big(f(j-1,k,n)+f(j+1,k,n)+f(j,k+1,n)+f(j,k-1,n)\big)\\
&\quad=\delta_{j,0}\delta_{k,0}\delta_{n,0}\cdot b_0 +
\frac12\big(\delta_{j,n}\delta_{k,0}\cdot b_E+
\delta_{j,-n}\delta_{k,0}\cdot b_W+
\delta_{j,0}\delta_{k,n}\cdot b_N+
\delta_{j,0}\delta_{k,-n}\cdot b_S\big).
\end{split}
\end{equation}\end{enumerate}

Note that $f(j, k, n) = 0$ if $n\leq 0$  or $|j| + |k|\geq n$.
\old{$\begin{cases} |j| + |k|\geq n\\
\{|j|, |k|\} \neq \{0,n\}
\end{cases}$. \textcolor{purple}{[I think that the condition $\{|j|, |k|\} \neq \{0,n\}$ for $f = 0$ should be removed.]}}

\begin{remark}\label{rmk:aztec_f}
Note that $\T_n(j,k)$ can be seen as the solution~$f(j,k,n)$ of the above system with the boundary conditions~$(0, 1,\i , -1, -\i )$.
To see this we use that~$\T_{n+1}(j,k)=\T_n(j,k)$ for~$j+k+n$ even. Similarly, the origami map corresponds to the boundary conditions~$(0, 1,\i , 1, \i )$. 
\end{remark}

Let $f_0$ be the fundamental solution to~\eqref{bc}--\eqref{eq:rec-unif}, i.e. the solution with the boundary condition \[(b_0,b_E,b_N,b_W,b_S)=(1,0,0,0,0).\]  
Let $f_E$, $f_N$, $f_W$ and $f_S$  be the solutions with boundary condition~$(0,1,0,0,0)$
and so on. Then the solution to the equation with boundary conditions~$(b_0,b_E,b_N,b_W,b_S)$
is given by
\begin{equation}
f=
b_0f_0+
b_Ef_E+b_Nf_N+b_Wf_W+b_Sf_S.
\end{equation} 
In particular, for $(j,k,n)\in \Lambda$
\begin{equation}\label{eq:T_f_e}
\T_n(j,k)=f_E(j,k,n)+\i f_N(j,k,n)-f_W(j,k,n)-\i  f_S(j,k,n)
\end{equation}
and
\begin{equation}\label{eq:O_f_e}
\Or_n(j,k)=f_E(j,k,n)+\i f_N(j,k,n)+f_W(j,k,n)+\i  f_S(j,k,n).
\end{equation}

As stated in~\cite{Ch-R}, we have the following lemma. 

\begin{lemma}[\cite{Ch-R}]\label{lem:f_e_unif} The function $f_E$ can be expressed via the fundamental solution~$f_0$ in the following way
\begin{equation}\label{eq:f_E_unif}
f_E(j,k,n)=\tfrac12\sum_{s=0}^{n+1} f_{0}(j-s,k,n-s).
\end{equation}
Similarly for $f_W$, $f_N$ and $f_S$ one has
\[f_W(j,k,n)=\tfrac12\sum_{s=0}^{n+1} f_{0}(j+s,k,n-s),\]
\[f_N(j,k,n)=\tfrac12\sum_{s=0}^{n+1} f_{0}(j,k-s,n-s) \quad\text{and}\quad 
f_S(j,k,n)=\tfrac12\sum_{s=0}^{n+1} f_{0}(j,k+s,n-s).
\]
\end{lemma}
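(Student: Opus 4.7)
The plan is to verify that the function
\[
g(j,k,n) := \tfrac{1}{2}\sum_{s=0}^{n+1} f_0(j-s,k,n-s)
\]
satisfies the same discrete wave equation~\eqref{eq:rec-unif} and boundary conditions~\eqref{bc} as $f_E$, and then invoke the uniqueness of the solution to this initial value problem to conclude $g=f_E$. The analogous arguments for $f_W$, $f_N$, $f_S$ follow by the obvious $90^\circ$ rotational symmetry of the problem, so I will only describe the case of $f_E$.

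First I would record the useful observation that because $f_0(j',k',n')=0$ whenever $n'\leq 0$, the sum defining $g(j,k,n)$ can be extended to run over all integers $s\geq 0$ without changing its value; in particular the cutoffs $s=n$ and $s=n+1$ contribute $0$. Also, running the recurrence~\eqref{eq:rec-unif} backwards from the initial data $f_0(\cdot,\cdot,0)=f_0(\cdot,\cdot,-1)=0$ shows that $f_0$ vanishes for all $n\leq 0$, which immediately yields $g(j,k,0)=g(j,k,-1)=0$, matching the boundary conditions~\eqref{bc} for $f_E$.

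The key step is to apply the linear operator
\[
L[f](j,k,n):=f(j,k,n+1)+f(j,k,n-1)-\tfrac12\bigl(f(j{-}1,k,n)+f(j{+}1,k,n)+f(j,k{+}1,n)+f(j,k{-}1,n)\bigr)
\]
termwise to $g$. The crucial observation is that the $s$-th summand in $L[g](j,k,n)$ is exactly $L[f_0]$ evaluated at the shifted point $(j-s,k,n-s)$: every shift by $\pm 1$ in $n$ (respectively in $j$ or $k$) acts on $f_0(j-s,k,n-s)$ in precisely the same way that $L$ acts on $f_0$ at $(j-s,k,n-s)$. Since $f_0$ is the fundamental solution, $L[f_0](j-s,k,n-s)=\delta_{j-s,0}\delta_{k,0}\delta_{n-s,0}$, which contributes $1$ only when $s=j=n$ and $k=0$. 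Therefore
\[
L[g](j,k,n)=\tfrac12\,\delta_{j,n}\delta_{k,0},
\]
which is exactly the right-hand side of~\eqref{eq:rec-unif} for the boundary datum $(b_0,b_E,b_N,b_W,b_S)=(0,1,0,0,0)$ defining $f_E$. By uniqueness of the solution of~\eqref{bc}--\eqref{eq:rec-unif} with prescribed source and zero initial data, $g=f_E$, establishing the formula.

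I do not anticipate any serious obstacle; the argument is essentially a discrete analogue of Duhamel's principle (convolving the fundamental solution against the travelling source $\tfrac12\delta_{(j,k,n)=(n,0,n)}$). The only points requiring minor care are the verification that the boundary terms $s=n,\,n+1$ of the truncated sum are genuinely zero and that $L$ commutes with the $(j,n)$-simultaneous shift $s\mapsto s$ in exactly the way used above — both of which are immediate from the translation invariance of $L$ along the characteristic direction $(j,n)\mapsto(j-1,n-1)$.
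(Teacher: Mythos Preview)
Your proposal is correct and follows exactly the same approach as the paper's own proof: verify that the right-hand side satisfies the initial conditions~\eqref{bc} and the recurrence~\eqref{eq:rec-unif} with source~$(0,1,0,0,0)$, then conclude by uniqueness. The paper states this in one sentence, whereas you spell out the Duhamel-type computation explicitly; the only extra care you take---extending the sum to all $s\geq 0$ using that $f_0$ vanishes for nonpositive time, so that $L$ commutes with the diagonal shift---is exactly the verification the paper leaves implicit.
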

\begin{proof} 
Note that the RHS of~\eqref{eq:f_E_unif} satisfies~\eqref{eq:rec-unif} with $(b_0,b_E,b_N,b_W,b_S)=(0,1,0,0,0)$ and vanishes at $n=0$ and $n=-1$. This conditions identify the function $f_E$ uniquely. 
\end{proof}

\subsection{Perfect t-embeddings and edge probabilities.}\label{sec:aztec_prob}
In this section we relate the values of the perfect t-embedding~$\T_n(j,k)$ of a reduced Aztec diamond $A'_{n+1}$ to certain edge inclusion probabilities on Aztec diamond of size~$n$, which is a step towards deriving exact formulas for~$\T_n(j,k)$. Recall that inner faces of~$A'_{n+1}$ are in natural correspondence with inner faces of~$A_n$.

Recall that the Aztec diamond~$A_{n}$ of size $n$ is made up of faces $(j,k)$ in the square grid $(\mathbb{Z}+\frac12)^2$ such that $|j|+|k| < n$. For~$(j, k, n) \in \Lambda$ (which means that~$j + k + n$ is odd), define~$p_E(j, k, n)$ as the probability that the east edge of the face~$(j, k)$ is present in a uniformly random dimer cover of Aztec diamond~$A_n$. 
We define~$p_E(j, k, n) = 0$ if~$n \leq 0$ or if the edge being referred to is not an edge of~$A_n$. Define~$p_W,p_N,p_S$ analogously. We again stress that $\mathcal{T}_n$ is a function on inner faces of $A'_{n+1}$, whereas $p_E(\cdot, \cdot, n)$ (and similarly for $W,N,S$) is a function on faces of $A_n$. However, we will consider them both as functions on $\Lambda$.

\old{
\begin{figure}
 \begin{center}
 \includegraphics[width=0.7\textwidth]{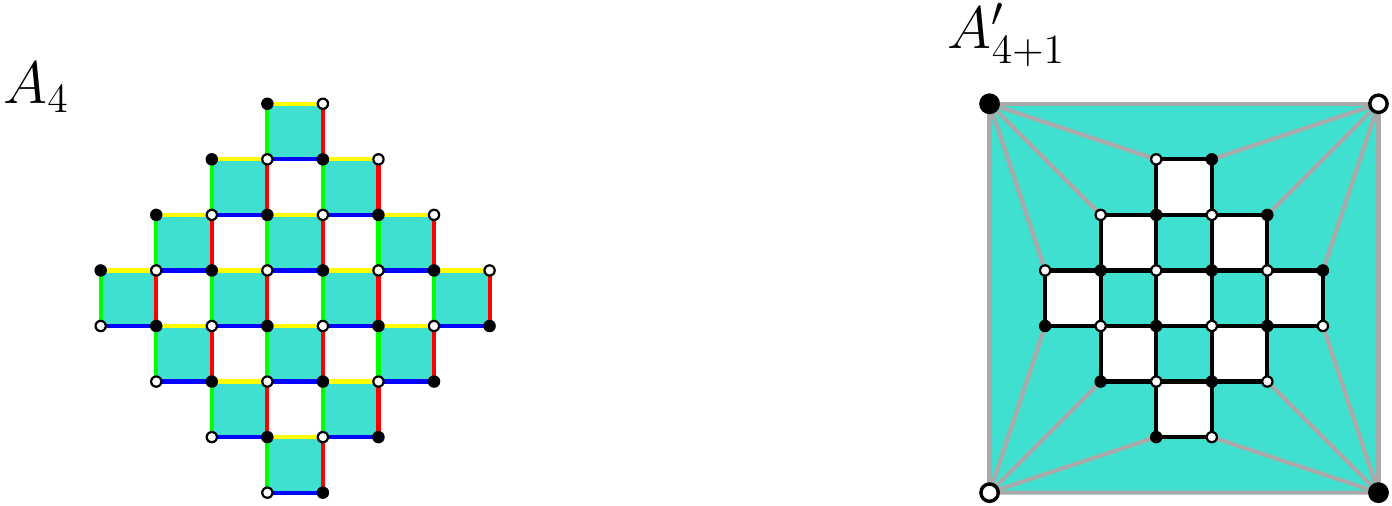}
  \caption{Above is a size $n =4 $ Aztec diamond (left) and a size $5 = 4 +1$ reduced Aztec diamond (right). The faces with $j + k + n$ odd are shaded in turquoise. In the Aztec diamond, the east edges are shaded in red, north edges are shaded in yellow, south in blue, and west in green.}\label{fig:NWSE_edges}
 \end{center}
\end{figure}
}

Define the function $f_{n\geq 1}:\Lambda\to\mathbb{R}$ by 
\[f_{n\geq 1}(j, k, n) := 1-p_E(j-1, k, n-1)  - p_W(j+1, k, n-1) - p_N(j, k-1, n-1)-p_S(j, k+1, n-1).\]
Due to the \emph{shuffling algorithm} for $n\geq 1$ the edge probability update rule for the uniform dimer model on the Aztec diamond is given by
\begin{align}\label{eq:p_e_shuffling}
\begin{split}
p_E(j, k, n) &= p_E(j-1, k, n-1) + \tfrac{1}{2} f_{n\geq 1}(j, k, n), \\
p_S(j, k, n) &= p_S(j, k+1, n-1)+ \tfrac{1}{2} f_{n\geq 1}(j, k, n), \\
p_N(j, k, n) &= p_N(j, k-1, n-1)+ \tfrac{1}{2} f_{n\geq 1}(j, k, n),\\
p_W(j, k, n) &= p_W(j+1, k, n-1) + \tfrac{1}{2} f_{n\geq 1}(j, k, n),
\end{split}
\end{align}
see e.g.~\cite[Section 3]{propp2003generalized}.
Therefore, for $n\geq 1$ one has 
\begin{align*}
 f_{n\geq 1}(j, k, n)&=2 (p_E(j, k, n) - p_E(j-1, k, n-1)) = 2(p_S(j, k, n) - p_S(j, k+1, n-1)) \\
&= 2 (p_N(j, k, n) - p_N(j, k-1, n-1)) = 2 (p_W(j, k, n) - p_W(j+1, k, n-1)).
\end{align*}

 \begin{lemma}\label{lem:fund}
 Let $f(j, k, n) :=2 (p_E(j, k, n) - p_E(j-1,k, n-1))$. Then $f$ is the fundamental solution of the recurrence~\eqref{eq:rec-unif} with the initial condition~\eqref{bc}.
 \end{lemma}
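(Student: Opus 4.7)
\emph{Proof plan.} We verify that $f$ satisfies the two conditions characterizing the fundamental solution~$f_0$: the initial conditions~\eqref{bc} and the recurrence~\eqref{eq:rec-unif} with source $(b_0, b_E, b_N, b_W, b_S) = (1, 0, 0, 0, 0)$. The initial conditions follow immediately from the convention $p_E(j,k,m) = 0$ for $m \le 0$: both $f(j,k,0)$ and $f(j,k,-1)$ vanish.

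For the base case of the recurrence at $(j,k,n) = (0,0,0)$, the Aztec diamond~$A_1$ is a single $4$-cycle admitting exactly two equiprobable dimer covers (one using the E,W edges, the other the N,S edges), so $p_E(0,0,1) = \tfrac12$ and $f(0,0,1) = 1$; since all other terms of~\eqref{eq:rec-unif} at $(0,0,0)$ vanish by the initial conditions, the recurrence reduces to $1 = \delta_{0,0}\delta_{0,0}\delta_{0,0}$, as required. Similarly, for any other $(j,k,0)$ with $j+k$ even, $f(j,k,1) = 2 p_E(j,k,1) = 0$ since $A_1$ contains only the face $(0,0)$.

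For all remaining cases $(j,k,n)$ with $j+k+n$ even and $n \ge 1$, the recurrence requires
\[
f(j,k,n+1) + f(j,k,n-1) = \tfrac12\bigl(f(j-1,k,n) + f(j+1,k,n) + f(j,k-1,n) + f(j,k+1,n)\bigr).
\]
The essential input is the \emph{isotropy} of the shuffling algorithm~\eqref{eq:p_e_shuffling}: each of the four increments $p_E(j,k,n) - p_E(j-1,k,n-1)$, $p_W(j,k,n) - p_W(j+1,k,n-1)$, $p_N(j,k,n) - p_N(j,k-1,n-1)$, $p_S(j,k,n) - p_S(j,k+1,n-1)$ equals $\tfrac12 f(j,k,n)$, yielding four equivalent directional formulas for~$f$. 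Combined with the balance form $f(j,k,n) = 1 - p_E(j-1,k,n-1) - p_W(j+1,k,n-1) - p_N(j,k-1,n-1) - p_S(j,k+1,n-1)$, averaging the four directional forms gives the algebraic identity $(p_E + p_W + p_N + p_S)(j,k,n) + (p_E(j-1,k,n-1) + p_W(j+1,k,n-1) + p_N(j,k-1,n-1) + p_S(j,k+1,n-1)) = 2$. The strategy is then to substitute the balance form for $f(j,k,n \pm 1)$ and an appropriately chosen directional form for each neighbor $f(j \pm 1, k, n)$, $f(j, k \pm 1, n)$, so that the time-$n$ edge probabilities cancel pairwise; the remaining time-$(n-1)$ contributions collapse by one more application of~\eqref{eq:p_e_shuffling} together with the identity above.

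The main obstacle is the bookkeeping: the recurrence is second-order in~$n$ and hence requires two rounds of the shuffling update to close. Particular care is needed for small~$n$ and for faces near the boundary of~$A_n$, where some $p_{\bullet}$ values vanish; these degenerate cases reduce to a finite list of directly verifiable identities using the convention $p_{\bullet}(j,k,m) = 0$ outside~$A_m$.
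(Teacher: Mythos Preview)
Your overall plan is the right one and matches the paper's: verify the initial conditions directly, handle $(0,0,0)$ by computing $p_E(0,0,1)=\tfrac12$, and for $n\ge 1$ reduce the recurrence to identities among edge probabilities via the shuffling rule~\eqref{eq:p_e_shuffling} and the balance form $f=f_{n\ge 1}$. The difficulty is in the bookkeeping you flag, and there the sketch does not close.

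Concretely, the claim that ``the time-$n$ edge probabilities cancel pairwise'' after substituting the balance form for $f(j,k,n\pm1)$ and directional forms for the four neighbours is not correct. The balance form for $f(j,k,n+1)$ produces $-p_E(j-1,k,n)-p_W(j+1,k,n)-p_N(j,k-1,n)-p_S(j,k+1,n)$, while any directional form $\tfrac12 f(j\pm1,k,n)$ or $\tfrac12 f(j,k\pm1,n)$ contributes these same terms \emph{with the same sign} on the other side of the equation; they add rather than cancel. If instead you use the co-balance form $f(j,k,n-1)=\sum_{\bullet}p_\bullet(j,k,n-1)-1$ (which is what your identity $(p_E+p_N+p_W+p_S)(j,k,n)+(\dots)=2$ really gives), the computation reduces cleanly to the single identity
\[
2\big(p_E+p_N+p_W+p_S\big)(j,k,n-1)\;+\!\!\sum_{\varepsilon,\varepsilon'\in\{\pm1\}}\!\!\Big(p_{-\varepsilon}^{\,\mathrm{hor}}+p_{-\varepsilon'}^{\,\mathrm{vert}}\Big)(j+\varepsilon,k+\varepsilon',n-1)\;=\;4,
\]
where the diagonal contributions are the two ``inward-pointing'' probabilities from each diagonal neighbour. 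This is exactly the sum of the four relations ``edge probabilities around a vertex sum to $1$'' at the four corners of the face $(j,k)$ in $A_{n-1}$. That vertex-sum fact is the missing ingredient: it does \emph{not} follow from the shuffling rule and the balance form alone (equivalently, from your displayed identity), and without it the argument does not close at time $n-1$.

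This is precisely how the paper proceeds: it derives~\eqref{eq:ffp_1} from one application of shuffling, and then obtains~\eqref{eq:ffp_2} by expanding each neighbour $f(\cdot,\cdot,n)$ in balance form and invoking the vertex-sum identity at the four corners of the face $(j,k,n-1)$; summing~\eqref{eq:ffp_1} and~\eqref{eq:ffp_2} yields the recurrence. So your sketch needs one more input---``the edge probabilities around each vertex sum to $1$''---in place of the claimed pairwise cancellation and second shuffling round. With that added, the boundary and small-$n$ cases are indeed routine, as you say.
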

 
\begin{proof}
Note that for~$n\leq 0$ the function $f(j, k, n)$ vanishes and therefore satisfies the initial condition~\eqref{bc}. It remains to check that $f$ satisfies recurrence~\eqref{eq:rec-unif} with 
\[(b_0,b_E,b_N,b_W,b_S)=(1,0,0,0,0).\]

It is easy to see, that~$f$ satisfies recurrence~\eqref{eq:rec-unif} for $n\leq 0$. Indeed, note that for~$n\leq 0$ the function~$f(j, k, n)$ vanishes and~$f(j, k, 1)=\delta_{j,0}\delta_{k,0}$ since~$p_E(0,0,1) =\frac{1}{2}$, $p_E(j, k, 1) =0$ for~$(j, k) \neq (0, 0)$, and~$p_E(j, k, 0) = 0$ for all~$j, k$.

Assume now that $n \geq 1$; then the function $f(j, k, n)$ coincides with the function $f_{n\geq 1}(j, k, n)$ defined above. Therefore, the edge probability update rule~\eqref{eq:p_e_shuffling} implies that  
\begin{align}\label{eq:ffp_1}
\begin{split}
&f(j, k, n+1) =  1 - \frac12\big(f(j-1,k,n)+f(j+1,k,n)+f(j,k+1,n)+f(j,k-1,n)\big)\\
&\,\,-p_E(j-2, k, n-1)  - p_W(j+2, k, n-1) - p_N(j, k-2, n-1)-p_S(j, k+2, n-1).  
\end{split}
\end{align}

Note that the edge probabilities around a vertex sum to $1$. Using this fact for each vertex of the face~$(j,k,n-1)$ and 
expanding the function~$f$ at the point~$(j-1,k,n)$ by
\[f(j-1,k,n) =1 - p_E(j-2, k, n-1)  - p_W(j, k, n-1) - p_N(j-1, k-1, n-1)-p_S(j-1, k+1, n-1),\] 
and similarly at points~$(j+1,k,n), (j,k+1,n)$ and~$(j,k-1,n)$ we obtain
\begin{align}\label{eq:ffp_2}
\begin{split}
&f(j, k, n-1)=
-1 +f(j-1,k,n)+f(j+1,k,n)+f(j,k+1,n)+f(j,k-1,n)\\
&\,\,+p_E(j-2, k, n-1)  + p_W(j+2, k, n-1) + p_N(j, k-2, n-1) + p_S(j, k+2, n-1). \\
\end{split}
\end{align}

To complete the proof note that~\eqref{eq:ffp_1} together with~\eqref{eq:ffp_2} imply that~$f$ satisfies
 recurrence~\eqref{eq:rec-unif}  for~$n \geq 1$ as well.
\end{proof}

The following corollary 
gives a representation of the perfect t-embedding~$\T_n$ and its origami map~$\Or_n$ in terms of edge probabilities.

\begin{corollary}\label{cor:p_E_and_f_E}
Let $p_E, p_S, p_N$ and $p_W$ be edge probabilities as defined above, then 
\[p_E = f_E, \quad p_S = f_S, \quad p_N = f_N  \quad \text{ and } \quad p_W = f_W,\]
where the functions $f_E, f_S, f_N$ and $f_W$ are as defined in Section~\ref{sec:aztec_rec}. In particular, for~${j + k + n}$ odd
\begin{equation}\label{eq:T_f_e}
\T_n(j,k)=p_E(j,k,n)+\i p_N(j,k,n)-p_W(j,k,n)-\i  p_S(j,k,n),
\end{equation}
\begin{equation}\label{eq:O_f_e}
\Or_n(j,k)=p_E(j,k,n)+\i p_N(j,k,n)+p_W(j,k,n)+\i  p_S(j,k,n).
\end{equation}
\end{corollary}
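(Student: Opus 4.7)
The plan is to deduce the identity $p_E = f_E$ (and analogously for $N, S, W$) by combining Lemma~\ref{lem:fund}, which identifies $2(p_E(j,k,n) - p_E(j-1,k,n-1))$ with the fundamental solution $f_0$, with the explicit formula for $f_E$ in terms of $f_0$ given by Lemma~\ref{lem:f_e_unif}. The key observation is that Lemma~\ref{lem:fund} turns the shuffling update rule \eqref{eq:p_e_shuffling} into a first-order telescoping recurrence in the pair of variables $(j, n)$, which can be solved explicitly.

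Concretely, I would rewrite Lemma~\ref{lem:fund} as
\[
p_E(j,k,n) = p_E(j-1,k,n-1) + \tfrac{1}{2} f_0(j,k,n).
\]
Iterating this $n+1$ times and using the boundary condition $p_E(\cdot,\cdot,-1) \equiv 0$ gives
\[
p_E(j,k,n) = \tfrac{1}{2} \sum_{s=0}^{n} f_0(j-s, k, n-s).
\]
Comparing with the formula
\[
f_E(j,k,n) = \tfrac{1}{2} \sum_{s=0}^{n+1} f_0(j-s, k, n-s)
\]
from Lemma~\ref{lem:f_e_unif}, the only discrepancy is the $s = n+1$ term, which vanishes because $f_0(j-(n+1), k, -1) = 0$ by the initial condition \eqref{bc}. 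Hence $p_E = f_E$ on all of $\Lambda$. The identities $p_N = f_N$, $p_S = f_S$, $p_W = f_W$ follow by an identical argument, using the corresponding telescoping obtained from the $N, S, W$ update rules in \eqref{eq:p_e_shuffling} (each giving $2(p_\star(j,k,n) - p_\star(\cdot, n-1)) = f_0(j, k, n)$ with the appropriate shift, which is exactly the setup matching the corresponding formula for $f_\star$ in Lemma~\ref{lem:f_e_unif}). Alternatively, one can invoke the obvious four-fold symmetry of the construction under $(j,k) \mapsto (-j,k)$, $(j,-k)$, $(k,j)$.

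The final formulas \eqref{eq:T_f_e} and \eqref{eq:O_f_e} of the corollary are then immediate: Remark~\ref{rmk:aztec_f} (combined with \eqref{eq:T_f_e} and \eqref{eq:O_f_e} of the earlier section) already expresses $\mathcal{T}_n$ and $\mathcal{O}_n$ as explicit $\mathbb{C}$-linear combinations of $f_E, f_N, f_W, f_S$ with coefficients $(1, \i, -1, -\i)$ and $(1, \i, 1, \i)$ respectively, so substituting the just-established equalities $f_\star = p_\star$ concludes the proof. I expect no substantive obstacle here beyond a careful bookkeeping of the telescoping sum and of the vanishing of boundary terms; the genuine work has already been done in Lemmas~\ref{lem:f_e_unif} and~\ref{lem:fund}.
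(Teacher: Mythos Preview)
Your proof is correct and follows essentially the same approach as the paper: combine Lemma~\ref{lem:fund} (identifying $2(p_E(j,k,n)-p_E(j-1,k,n-1))$ with $f_0$) with Lemma~\ref{lem:f_e_unif}, via the telescoping identity $p_E(j,k,n)=\tfrac12\sum_{s=0}^{n+1} f_0(j-s,k,n-s)$. The only cosmetic difference is that the paper cites this telescoping formula from \cite[Equation (1.5)]{cohn-elki-prop-96}, whereas you derive it directly from the recursion and the boundary condition $p_E(\cdot,\cdot,-1)=0$.
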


\begin{proof}
Again denote $f(j, k, n) = 2 (p_E(j, k, n) - p_E(j-1,k,n-1))$. The claim $p_E = f_E$ follows from Lemma~\ref{lem:f_e_unif} and Lemma~\ref{lem:fund}, along with the fact~\cite[Equation (1.5)]{cohn-elki-prop-96} that
\begin{equation}\label{eqn:peformula}
p_E(j, k, n) = \frac{1}{2} \sum_{s = 0}^{n+1} f(j-s, k, n-s).
\end{equation}

For $W, N$ and $S$ the proof is similar.
\end{proof}

\begin{remark}
Equation \eqref{eqn:peformula} first appeared as Equation (1.5) in \cite{cohn-elki-prop-96}, where the authors use this representation to prove the convergence of $p_E, p_W, p_N, p_S$ in the $n \rightarrow \infty$ limit. 
\end{remark}

\begin{remark}
An important component in deriving the above formula came from the observation that the recurrence relation for the fundamental solution has appeared before. In fact, in~\cite{DFSG14} the authors derive a recurrence relation for the so-called density function of a weighted Aztec diamond, and this recurrence relation, in the special case when all weights are one, coincides with the relation appearing here. Although we don't expect a direct generalization of Corollary~\ref{cor:p_E_and_f_E} to hold if we vary the weights, we still believe that the density functions could be used also to describe the perfect t-embedding corresponding to certain weighted Aztec diamonds. 
\end{remark}

\section{Perfect t-embeddings of tower graphs}\label{sec:tower} The goal of this section is to introduce a construction of perfect t-embeddings of so-called \emph{tower graphs} and show how they are related to t-embeddings of Aztec diamonds. The construction is very similar to the case of the Aztec diamond and also based on the shuffling algorithm. The main difference is that in the shuffling algorithm for tower graphs there are two rounds of spider moves. We will show that perfect t-embeddings of Aztec diamonds can be obtained by a change of coordinates from the perfect t-embeddings of tower graphs. 

\subsection{The tower graph and its reduction}
Let us first recall a definition of a tower graph  introduced in~\cite[Section 4]{borodin2015random} and similarly to the Aztec diamond case introduce a reduced tower graph. 

A tower graph~$\TG_n$ of size~$n$ is a union of~$3n-1$ columns of faces with~$n$ columns of hexagonal faces separated by two columns of square faces, where the first column contain~$n$ hexagons,  and the last column has~$2n$ squares. More precisely, for~$p\geq 0$ the~$(3p+1)$st column contains~$n+p$ hexagons and it is adjacent to a column of~$n+p-1$ squares on the left (if~$p > 0$) and to a column of~$n+p+1$ squares on the right, as shown on Figure~\ref{fig:tower}.


Note that faces of tower graphs have a structure of a square lattice (send a length of all horizontal edges of hexagonal faces to zero to see this). So we can naturally assign square lattice coordinates to faces of tower graphs.
For lattice coordinates $(j, k)$ on the faces of the tower graph lattice we choose a coordinate axes $\widehat{j}, \widehat{k}$ as shown on Figure~\ref{fig:tower}. We declare
the origin~$(0, 0)$ to correspond to the following face: Starting at the ``bottom left''
hexagonal face, move~$n-1$ units in the~$\widehat{k}$ direction, and the face we end up at is
at $(0, 0)$. Note that the set of faces of a tower graph $\TG_n$ of size $n$ is given by $(j, k)$ such that 
\begin{equation}\label{eq:tower_coord}
\begin{cases}
 j+k < 2n \quad &\text{ for } j, k \geq 0,\\
 |j|+|k| <n \quad &\text{ for } j, k \leq 0,\\
 k-2j < 2n \quad &\text{ for } j<0, k>0,\\
 j-2k < 2n \quad &\text{ for } j>0, k<0,
\end{cases}
\end{equation}
where all faces of the domain with \[-j-k+2n=2 \mod 3\] are hexagons and all other faces are squares, see Figure~\ref{fig:tower}.

To obtain a reduced tower graph~$\TG'_n$ 
from a tower graph~$\TG_n$ of size $n$ one should make the following sequence of moves:
\begin{itemize}
\item[$\bullet$] contract degree $2$ white vertices of faces $(j, k)$ of $\TG_{n}$ with $k-2j = 2n-1$ or $j-2k = 2n-1$;
\item[$\bullet$] contract degree $2$ black vertices of faces $(j, k)$ of $\TG_{n}$ with  $j+k = 2n-1$ or $ |j|+|k| =n-1$.  
\end{itemize}
Note that,  in contrast to the Aztec setup, the reduced  tower graph has the same set of inner faces as the tower graph itself, since for the tower graph we consider double edges as degree two faces rather than single edges with twice the weight as in the Aztec case, 
see Figure~\ref{fig:tower}. Therefore we can associate coordinates to all inner faces of a reduced tower graph~$\TG'_n$ in the same way as we did for faces of tower graphs, i.e. using~\eqref{eq:tower_coord}. 
The boundary vertex of augmented dual~$(\TG'_{n})^*$ adjacent to the dual vertex indexed by~$(0,-(n-1))$ we index by~$(0,-n)$. Similarly we index the other three boundary vertices of the augmented dual~$(\TG'_{n})^*$ by~$(0,2n)$, $(2n,0)$ and~$(-n,0)$.

\subsection{Recurrence relation for perfect t-embeddings of tower graphs}
The construction of perfect t-embeddings of tower graphs mimics the one for Aztec diamonds. The main idea is the same as in Section~\ref{sec:aztec_rec}: We use the shuffling algorithm~\cite{borodin2015random, GIClusterMod} to construct the tower graph of size~$n+1$ from the tower graph of size~$n$ using elementary transformations.

One can reconstruct the tower graph of size $n+1$ from the size $n$ tower graph using two rounds of spider moves, 
see~\cite[Figure 13]{tower_localstat}.  And similarly one can get the reduced tower graph~$\TG'_{n+1}$ from~$\TG'_{n}$. Since we construct perfect t-embeddings of reduced graphs, we only describe the shuffling algorithm for reduced ones, see Figure~\ref{fig:reduced_tower_update}. We refer the interested reader to e.g.~\cite{tower_localstat} for the shuffling algorithm for tower graphs themselves.  Due to two rounds of spider moves in the shuffling algorithm, we use a discrete time parameter $m$ which is related to $n$ as $m = 2 n$. 
Denote by~$\widetilde\T_{n}(j,k)$ the position of the perfect t-embedding of the vertex of~$(\TG'_{n})^*$ indexed by~$(j,k)$.
And in addition to $\widetilde{\T}_{n}(j,k)$ we will define~$\widetilde{\T}(j, k, m)$, the position of the dual vertex $(j, k)$ in the t-embedding at time $m$. For~$m=2n$ the values of $\widetilde{\T}(j, k, m)$ simply coincide with $\widetilde{\T}_{n}(j,k)$ and for odd~$m$ we will define~$\widetilde{\T}(j, k, m)$  recursively. 

\old{We fix the positions of boundary vertices of the augmented dual~$(\TG'_{n})^*$ at~$1, i, -1$ and~$-i$:
\begin{equation}
\widetilde{\T}_{n}(2n,0)=1, \quad \widetilde{\T}_{n}(0,2n)=i, 
\quad \widetilde{\T}_{n}(-n,0)=-1, \quad \widetilde{\T}_{n}(0,-n)=-i.
\end{equation}
We also fix boundary conditions for~$\widetilde{\T}(j, k, m)$ as follows
\begin{align*}
&\widetilde{\T}(2n, 0, 2n-2)=\widetilde{\T}(2n+1, 0, 2n-1)=1;\\
&\widetilde{\T}(0, 2n, 2n-2)=\widetilde{\T}(0, 2n+1, 2n-1)=i;\\
&\widetilde{\T}(-n, 0, 2n-2)=\widetilde{\T}(-n-1, 0, 2n-1)=-1;\\
&\widetilde{\T}(0, -n, 2n-2)=\widetilde{\T}(0,-n-1, 2n-1)=-i.
\end{align*}}

We start out with~$m = 2$ (i.e.~$n=1$), note that~$\TG_1$ has three interior faces~${\{(0, 0), (1,0), (0,1)\}}$. 
 The positions of the inner vertices of~$(\TG'_{1})^*$ are ~$\widetilde{\T}_1(0,0)=\widetilde{\T}(0, 0,  2) = 0$, $\widetilde{\T}_1(0,1)=\widetilde{\T}(0,1, 2)= \frac{i}{2}$ and~$\widetilde{\T}_1(1,0)=\widetilde{\T}(1,0, 2) = \frac{1}{2}$.

Define~$\HS_{m}$ to be the set of faces $(j, k)$ such that 
\[\begin{cases}
 j + k \leq  m  \quad &\text{ for } j, k \geq 0,\\
 -j - k \leq \frac{m}{2}  -1 \quad &\text{ for } j, k \leq 0,\\
 j - 2 k \leq  m \quad &\text{ for } j>0, k<0,\\
 k - 2 j \leq  m \quad &\text{ for } j<0, k>0.
\end{cases}\]
Note that the set~$\HS_{2n+1}$ coincides with the set of inner faces of $\TG_{n+1}$.

\old{
\begin{figure}
 \includegraphics[width=0.65\textwidth]{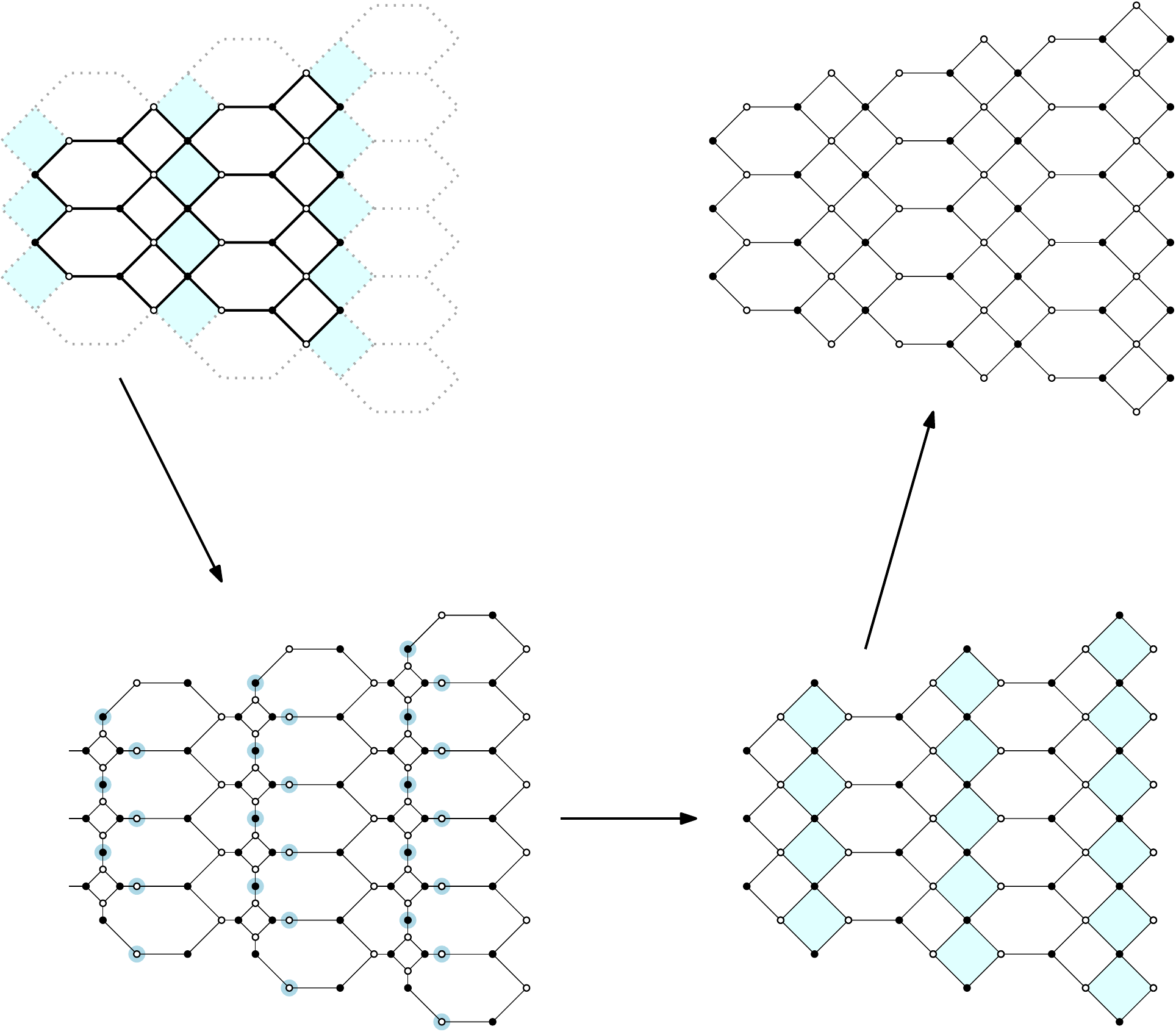}
\caption{The sequence above illustrates the sequence of elementary transformations which, starting with a decorated tower graph of size~$n$, produces a tower of size~$n+1$ for $n=2$. The decoration is shown in dotted grey, faces where one should apply a spider move and vertices degree two which should be contracted emphasised in light blue. The last arrow corresponds to a combination of spider moves and contracting degree two vertices after it. }
\label{fig:tower_shuffle}
\end{figure}
}

\begin{figure}
 \begin{center}
 \includegraphics[width=0.75\textwidth]{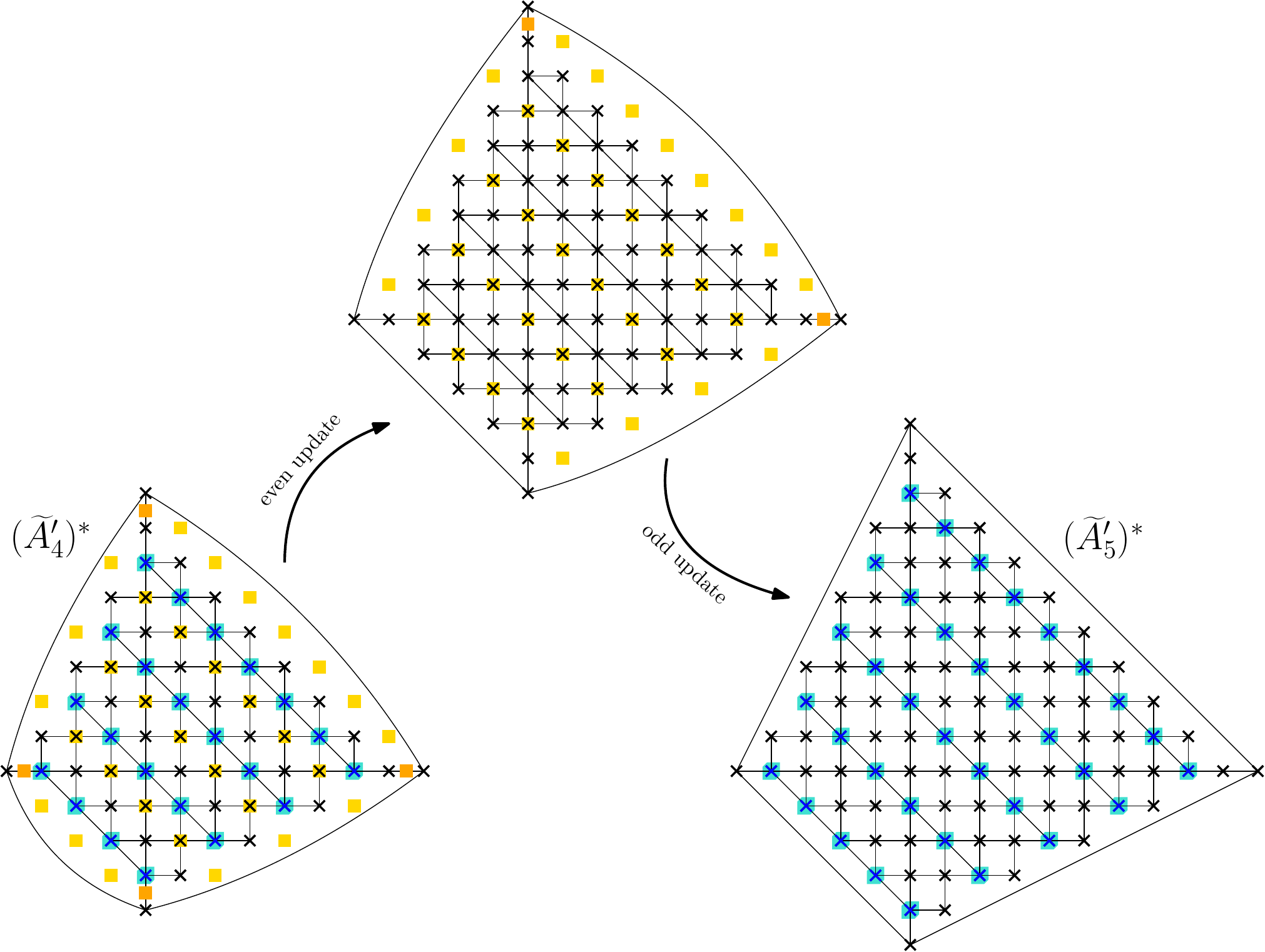}
 \end{center}
\caption{ Even and odd update rules to obtain the augmented dual of~$\TG'_{5}$ from the augmented dual of~$\TG'_{4}$. Hexagonal faces of reduced tower graphs emphasised in  turquoise.
Update rules: we apply central move at yellow faces and corner updates at orange points. }\label{fig:update}
\end{figure}

The following proposition is an analog of Proposition~\ref{prop:t_rec_aztec} for tower graphs.

\begin{proposition}\label{prop:t_rec} 
The perfect t-embedding~$\widetilde{\T}_{n+1}(j,k)$ can be obtained from~$\widetilde{\T}_{n}(j,k)$ using the following update rules

{\bf Boundary conditions:} for all $n\in\mathbb{Z}_{>0}$ we fix
\begin{equation*}
\widetilde{\T}_{n}(2n,0)=1, \quad \widetilde{\T}_{n}(0,2n)=i, 
\quad \widetilde{\T}_{n}(-n,0)=-1, \quad \widetilde{\T}_{n}(0,-n)=-i.
\end{equation*}

{\bf Even update:} For all $(j, k)\in\HS_{2n}$ define \[\widetilde{\T}(j, k, 2n):=
\begin{cases}
\widetilde{\T}_{n}(j,k) &\text{ if } (j,k) \text{ is a vertex of } (\TG'_{n})^*,\\
0 &\text{ otherwise}.
\end{cases}\]

\begin{enumerate}[(a)]
\item Boundary conditions. Set
\begin{align*}
\widetilde{\T}(2n+1, 0, 2n+1)=1;& \quad \widetilde{\T}(-n-1, 0, 2n+1)=-1;\\
\widetilde{\T}(0, 2n+1, 2n+1)=i;& \quad \widetilde{\T}(0,-n-1, 2n+1)=-i.
\end{align*}
\item For each $(j, k)\in\HS_{2n}$ such that 
\[\{j,k\}\neq\{0,2n\}, \quad \{j,k\}\neq\{-n,0\} \quad \text{ and } \quad -j - k + (2n+1) = 1 \mod 3\]
 we update $\widetilde{\T}$ according to the recurrence:
\begin{multline*}
\widetilde{\T}(j, k, 2n+1)+\widetilde{\T}(j, k, 2n) =\frac{1}{2} \Big( \widetilde{\T}(j-1, k, 2n) + \widetilde{\T}(j+1, k, 2n)\Big.\\
\Big. + \widetilde{\T}(j, k-1, 2n) + \widetilde{\T}(j, k+1, 2n) \Big).
\end{multline*}
Otherwise update $\widetilde{\T}(j, k, 2n+1) = \widetilde{\T}(j, k, 2n)$.
\item \emph{Corner Update}. Perform updates
\begin{align*} 
&\widetilde{\T}(2n, 0, 2n+1) = \tfrac{1}{2}\left( \widetilde{\T}(2n-1, 0, 2n) + \widetilde{\T}(2n, 0, 2n)\right),\\
&\widetilde{\T}(0, 2n, 2n+1) = \tfrac{1}{2}\left( \widetilde{\T}(0, 2n-1, 2n) + \widetilde{\T}(0, 2n, 2n)\right),\\
&\widetilde{\T}(-n, 0, 2n+1) = \tfrac{1}{2}\left( \widetilde{\T}(-n+1, 0, 2n) + \widetilde{\T}(-n, 0, 2n)\right),\\
&\widetilde{\T}(0, -n, 2n+1) = \tfrac{1}{2}\left( \widetilde{\T}(0, -n+1, 2n) +  \widetilde{\T}(0, -n, 2n)\right).
\end{align*}
 \end{enumerate}
 \item[ ] 
{\bf Odd update:} Let $\widetilde{\T}(j, k, 2n+1)$ be as obtained by even update. 
\begin{enumerate}[(a)]
\item Boundary conditions. Set
\begin{align*}
\widetilde{\T}(2n+2, 0, 2n+2)=1;& \quad \widetilde{\T}(-n-1, 0, 2n+2)=-1;\\
\widetilde{\T}(0, 2n+2, 2n+2)=i;& \quad \widetilde{\T}(0,-n-1, 2n+2)=-i.
\end{align*}
\item For each $(j, k)\in\HS_{2n+1}$ such that 
\[\{j,k\}\neq\{0,2n+1\} \quad \text{ and } \quad  -j - k + 2n+2 = 1 \mod 3\]
 we update $\widetilde{\T}$ according to the recurrence:
\begin{multline*}
\widetilde{\T}(j, k, 2n+2)+\widetilde{\T}(j, k, 2n+1) =\frac{1}{2} \Big(  \widetilde{\T}(j-1, k, 2n+1) +  \widetilde{\T}(j+1, k, 2n+1) \Big.\\
\Big. + \widetilde{\T}(j, k-1, 2n+1) + \widetilde{\T}(j, k+1, 2n+1) \Big).
\end{multline*}
Otherwise update $\widetilde{\T}(j, k, 2n+2) = \widetilde{\T}(j, k, 2n+1)$.
\item \emph{Corner Update}. Perform updates
\begin{align*}
\widetilde{\T}(2n+1, 0, 2n+2) = \frac{1}{2}( \widetilde{\T}(2n, 0, 2n+1) + \widetilde{\T}(2n+1, 0, 2n+1)), \\
\widetilde{\T}(0, 2n+1, 2n+2) = \frac{1}{2}( \widetilde{\T}(0, 2n, 2n+1) +  \widetilde{\T}(0, 2n+1, 2n+1)).
\end{align*}
\end{enumerate}
Now~$\widetilde{\T}_{n+1}(j,k)$ is given by  $\widetilde{\T}(j, k, 2n+2)$.
\end{proposition}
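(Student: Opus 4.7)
The plan is to mirror the argument of Proposition~\ref{prop:t_rec_aztec}, tracking the perfect t-embedding through the shuffling algorithm for reduced tower graphs (see e.g.~\cite{tower_localstat}) and invoking Proposition~\ref{prop:elem} at each step. The key structural difference from the Aztec case is that one step of the tower shuffling consists of two rounds of spider moves (separated by auxiliary moves) rather than one; this is what forces us to introduce the intermediate discrete time~$m=2n+1$.

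First I would fix notation for the two rounds. Starting from the reduced tower~$\TG_{n}'$, the first round of shuffling (the ``even update'') performs: (i) a splitting of the four degree-two boundary edges at the corners, which by Proposition~\ref{prop:elem}(1) amounts to inserting midpoints on the four corresponding boundary edges of the t-embedding --- this is exactly the Corner Update in part~(c) of the even update; (ii) spider moves at the faces~$(j,k)$ with~$-j-k+(2n+1)\equiv 1 \pmod 3$, excluding the corner faces near the boundary vertices~$(0,2n)$ and~$(-n,0)$ where a corner update has just been done; and (iii) contractions of degree-two vertices created by the spider moves, which by Proposition~\ref{prop:elem}(2) only adds/removes diagonals and does not move any dual vertex. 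The second round (the ``odd update'') repeats steps~(i)--(iii) but now for the faces with~$-j-k+(2n+2)\equiv 1 \pmod 3$ in the shifted set~$\HS_{2n+1}$, with corner updates only at the two corners where new spider faces appear.

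The next step is to translate the spider moves into the stated averaging recurrence. By Proposition~\ref{prop:elem}(3), a spider move corresponds to a central move of the dual vertex~$u=\widetilde{\T}(j,k,m)$ with respect to its four neighbors~$u_1,\ldots,u_4$. Using~\cite[Equation (6)]{KLRR} and the fact that every face weight of~$\TG_n'$ equals~$1$ (uniform weights), the central move equation~\eqref{eq:cental_move} reduces to
\begin{equation*}
\tilde{u}+u=\tfrac12\bigl(u_1+u_2+u_3+u_4\bigr),
\end{equation*}
which is exactly the recurrence stated in parts~(b) of both the even and odd updates, with~$\tilde u=\widetilde{\T}(j,k,m+1)$ and~$u=\widetilde{\T}(j,k,m)$. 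For faces not participating in the spider moves at the given round, Proposition~\ref{prop:elem} says the dual vertex positions are unchanged, giving the ``otherwise'' clause~$\widetilde\T(j,k,m+1)=\widetilde\T(j,k,m)$.

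Next I would verify the boundary conditions. The four outer vertices~$(2n,0),(0,2n),(-n,0),(0,-n)$ of~$(\TG_n')^*$ are required to map to~$1,i,-1,-i$ for every~$n$; this is the defining normalization of the perfect t-embedding and is preserved by the shuffling provided each intermediate graph still has the same outer quadrilateral, which is the case here (the corner updates only introduce/remove degree-two boundary vertices along the four outer edges). The corner update formulas in~(c) follow immediately from Proposition~\ref{prop:elem}(1): inserting a midpoint on a segment of the t-embedding gives the arithmetic mean of its two endpoints, because the local face weights at the corners are again~$1$.

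Finally, one assembles these observations: starting from~$\widetilde\T_n$, the even update produces~$\widetilde\T(\,\cdot\,,2n+1)$, and the odd update applied to this produces~$\widetilde\T(\,\cdot\,,2n+2)=\widetilde\T_{n+1}$. The main bookkeeping obstacle is to check that the sets of spider-move faces~$\{-j-k+(2n+1)\equiv 1 \pmod 3\}\cap \HS_{2n}$ and~$\{-j-k+(2n+2)\equiv 1\pmod 3\}\cap \HS_{2n+1}$ really do correspond, after the gauge and contractions, to the correct hexagonal faces of~$\TG_{n+1}'$ listed in the paragraph around~\eqref{eq:tower_coord}; this is a direct check using that hexagonal faces are precisely those with~$-j-k+2n\equiv 2 \pmod 3$ and tracking how this residue shifts under the two partial shuffling rounds.
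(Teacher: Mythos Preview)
Your overall strategy is the same as the paper's: track the perfect t-embedding through the tower-graph shuffling using Proposition~\ref{prop:elem}, and use that all face weights equal~$1$ to reduce each central move to the averaging recurrence. That part is fine.

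The gap is in the claimed correspondence between the Even/Odd updates and the two rounds of shuffling. You assert that the Even update is precisely the first round (splitting at all four corners, then spider moves at all faces with $-j-k+(2n+1)\equiv 1\pmod 3$) and the Odd update is the second round. This is not the case. In the actual shuffling for~$\TG_n'\to\TG_{n+1}'$ (the five-step sequence in the paper), the first round of spider moves hits faces with a \emph{different} residue class from the one appearing in Even~(b), and only two of the four corner splittings happen before that first round; the other two corner splittings and the boundary spider faces with $j+k=2n$ belong to the \emph{second} round of moves. So the Even update, taken literally, mixes pieces of both shuffling rounds.

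The paper handles this by splitting Even into Even$_1\sqcup$Even$_2$ and Odd into Odd$_1\sqcup$Odd$_2$ (Even$_2$ being exactly the $j+k=2n$ spider moves together with the $(2n,0)$ and $(0,2n)$ corner updates, and Odd$_2$ the $j+k=2n+1$ part). The crucial observation, which your sketch is missing, is that the positions produced by Even$_2$ are never read by Odd$_1$, so one may commute them: applying Even$_1$, Even$_2$, Odd$_1$, Odd$_2$ (the proposition as stated) gives the same result as Even$_1$, Odd$_1$, Even$_2$, Odd$_2$. It is this reordered sequence that matches the shuffling: Even$_1$ corresponds to steps~(1)--(2), Odd$_1$ together with Even$_2$ to steps~(3)--(4), and Odd$_2$ to step~(5). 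Without this commutation argument your sketch does not actually establish that the recurrence in the proposition reproduces~$\widetilde\T_{n+1}$; the ``bookkeeping obstacle'' you flag at the end is exactly this issue, and it is not just a residue check.
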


\begin{remark}
At any step where $\widetilde{\T}(j, k, m)$ has not been defined but its value is used in an update, we set it to $0$.
\end{remark}

\begin{figure}
 \begin{center}
 \includegraphics[width=0.8\textwidth]{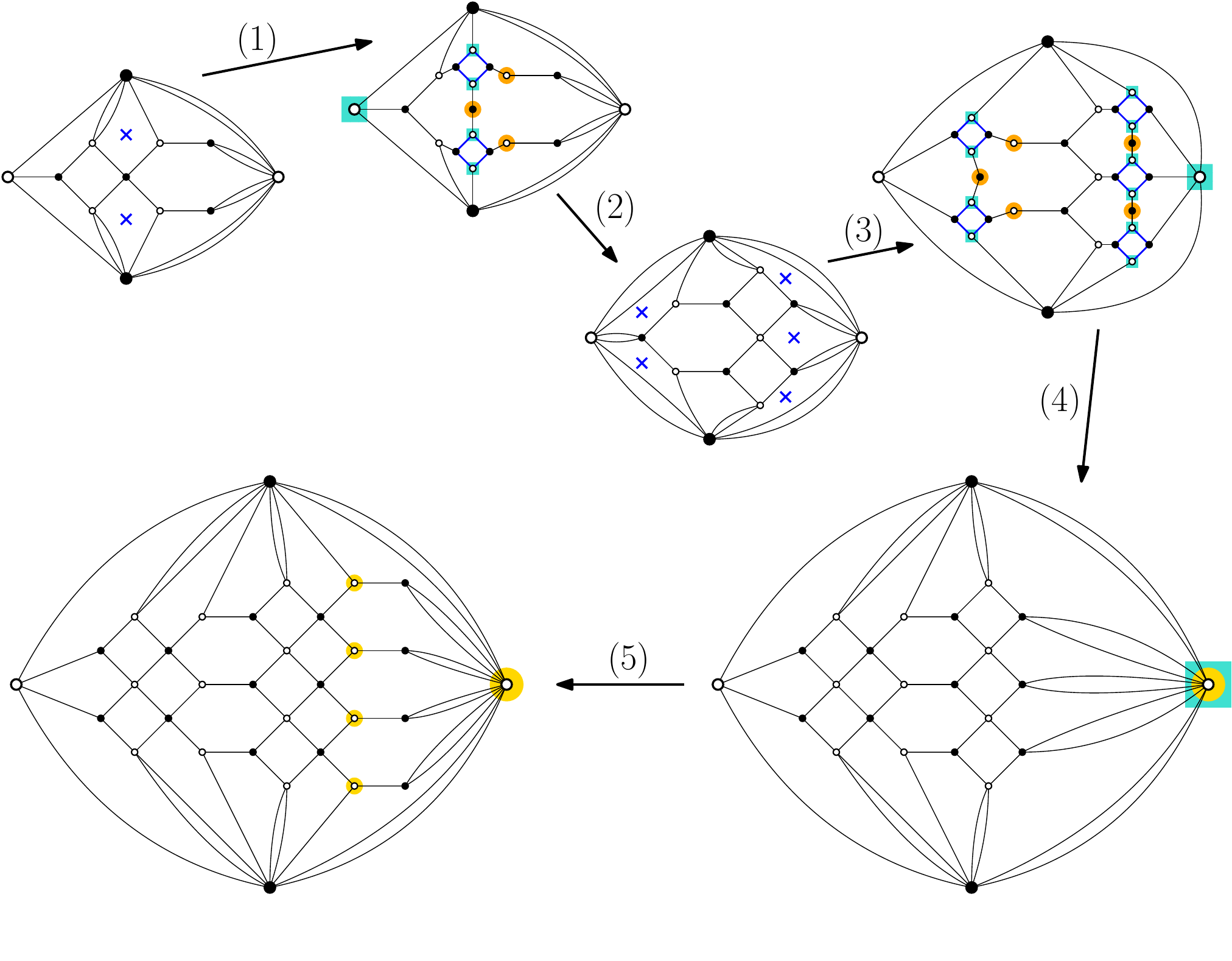}
$\,$ \vline $\,$ $\,$ 
  \includegraphics[width=0.15\textwidth]{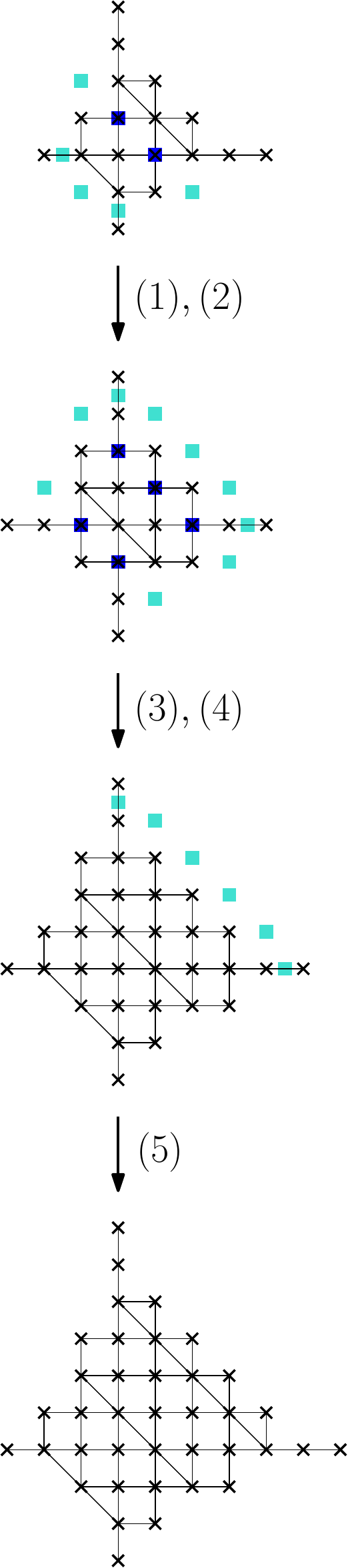}
 \end{center}
\caption{{\bf Left: }A sequence of~$5$ steps of elementary transformations producing reduced tower graph of size~$3$ from a reduced tower graph of size~$2$. Apply spider move at all faces marked with blue crosses; contract degree~$2$ vertices emphasised in orange; multiply all weights on edges adjacent to vertices emphasised in turquoise colour and double all edges with weight~$2$. The final step also includes splitting the marked boundary white vertex into vertices emphasised in yellow. Weights of black edges are~$1$ and of blue ones are~$\tfrac12$. {\bf Right:} Schematic representation of the same elementary transformations for corresponding augmented dual graphs.
Recall, that contracting/splitting vertices of the graph do not affect positions of the dual vertices in the t-embedding.
}\label{fig:reduced_tower_update}
\end{figure}

\begin{proof} The proof mimics the proof of~\cite[Proposition 2.4]{Ch-R}. To obtain reduced tower graph  of size $n+1$ from a reduced tower graph of size $n$ one should make the following sequence of moves, see Figure~\ref{fig:reduced_tower_update}: 
\begin{enumerate}
\item apply a spider move at each face indexed by $(j, k)$ with $2n-1-j-k = 1 \mod 3$;
\item note that spider moves in the previous step do change the set of faces, now:
\begin{itemize}
\item[$\bullet$] apply the following gauge transformation at all new white vertices obtained by the spider moves in the previous step and at the white boundary vertex adjacent to boundary faces indexed by $(-n,0)$ and $(0,-n)$: multiply all edge weights around these vertices by $2$; 
\item[$\bullet$] contract all degree~$2$ vertices obtained by spider moves;
\item[$\bullet$] split all edges of weight~$2$ to double edges with weight~$1$ each;
\end{itemize}
\item apply a spider move at each face indexed by $(j, k)$ with $2n-j-k = 1 \mod 3$;
\item in the obtained graph:
\begin{itemize}
\item[$\bullet$] at all new white vertices obtained by spider moves in the previous step and at the white boundary vertex adjacent to boundary faces indexed by $(2n,0)$ and $(0,2n)$ multiply all edge weights around these vertices by $2$; 
\item[$\bullet$] contract all degree~$2$ vertices obtained by spider moves;
\item[$\bullet$] split all edges of weight~$2$ to double edges with weight~$1$ each;
\end{itemize}
\item split the white boundary vertex adjacent to boundary faces indexed by~$(2n+1,0)$ and~$(0,2n+1)$ to~$2n+1$ white vertices as shown on Figure~\ref{fig:reduced_tower_update}; in the obtained graph:
\begin{itemize}
\item[$\bullet$] at the white boundary vertex adjacent to boundary faces indexed by~$(2n+1,0)$ and~$(0,2n+1)$ multiply all edge weights by~$2$; 
\item[$\bullet$] split all edges of weight~$2$ to double edges with weight~$1$ each.
\end{itemize}
\end{enumerate}

Recall that t-embeddings are preserved under elementary transformations and in the above steps we never apply spider move to dual vertices adjacent to boundary ones, therefore t-embeddings stay perfect. 

The update rules of t-embeddings~$\widetilde{\T}$ correspond to the update steps of reduced tower graphs. 
To see this let us split both $\operatorname{Even}$ and $\operatorname{Odd}$ updates described in the proposition to $\operatorname{Even}_1 \sqcup \operatorname{Even}_2$ and $\operatorname{Odd}_1 \sqcup \operatorname{Odd}_2$ in the following way:  

 {\bf Even$_1$ update: } Apply Even~(b) for all 
$(j, k)\in\HS_{2n}$ such that 
\[\{j,k\}\neq\{0,2n\}, \quad \{j,k\}\neq\{-n,0\}, \quad -j - k + 2n+1 = 1 \mod 3 \quad \text{ and } \quad j+k\neq 2n\]
and apply the corner update Even~(c) at points $(-n,0,2n-1)$ and $(0,-n,2n-1)$ only.

{\bf Even$_2$ update: } Apply Even~(b) for all 
$(j, k)\in\HS_{2n}$ such that 
\[\{j,k\}\neq\{0,2n\} \quad \text{ and } \quad j+k= 2n\]
and also apply the corner update Even~(c) at points $(2n,0,2n+1)$ and $(0,2n,2n+1)$.

 {\bf Odd$_1$ update: } Apply Odd~(b) for all 
$(j, k)\in\HS_{2n+1}$ such that 
\[\{j,k\}\neq\{0,2n+1\}, \quad  -j - k + 2n+2 = 1 \mod 3 \quad  \text{ and } \quad  j+k\neq 2n+1. \]

 {\bf Odd$_2$ update: } Apply Odd~(b) for all 
$(j, k)\in\HS_{2n+1}$ such that 
\[\{j,k\}\neq\{0,2n+1\} \quad  \text{ and } \quad  j+k = 2n+1 \]
and also apply the corner update Odd~(c).

Note that the positions of points obtained in Even$_2$ update are not used in Odd$_1$ update, therefore one can first apply Odd$_1$ and then Even$_2$. 
To conclude, note that Even$_1$ update corresponds to the update steps~(1)--(2) of reduced tower graphs;  Odd$_1$ and Even$_2$ updates correspond to the update steps~(3)--(4); and finally, Odd$_2$ update corresponds to update step~(5) of reduced tower graphs.
\end{proof}

\begin{remark} 
 Given the proof one may wonder why
the update rules in Proposition~\ref{prop:t_rec} are divided to Even and Odd in the way described in the statement 
and not in the way described in the proof. The statement is formulated in this way 
to make it easier to apply this proposition in the next section.
\end{remark}

Let~$\widetilde \Or_{n}(j,k)$ be the origami map of~$(\widetilde A_{n}')^*$, where the root white face is the one adjacent to~$\widetilde \T_n(n,0)$ and~$\widetilde \T_n(0,n)$. Also, let~$\widetilde \Or(j, k, m)$ denote the image of~$\widetilde{\T}(j, k, m)$ under the origami map corresponding to~$\widetilde{\T}$. Then the following holds.

\begin{proposition}\label{prop:o_rec_tower}
For $n \geq 1$ the origami map $\widetilde \Or_{n+1}$ can be constructed from $\widetilde \Or_n$ using the same update rules {\bf Even update} (a)-(c) and {\bf Odd update} (a)-(c) as in Proposition~\ref{prop:t_rec}, with boundary conditions 
\begin{equation}\label{eq:or_bdry_tower}
 \widetilde \Or_n(2n,0) = \widetilde \Or_n(-n,0) = 1\quad\text{ and } \quad \widetilde \Or_n(0,2n)= \widetilde \Or_n(0,-n) = \i .
 \end{equation}
 Boundary conditions in (a) of each step should be changed accordingly as well.
\end{proposition}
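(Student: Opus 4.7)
The plan is to mimic exactly the proof of the Aztec analog, Proposition~\ref{prop:o_rec}, which was obtained by combining the t-embedding recurrence (Proposition~\ref{prop:t_rec_aztec}) with Corollary~\ref{cor:or_elem} on preservation of origami maps under elementary transformations. Here one would combine Proposition~\ref{prop:t_rec} with the same Corollary~\ref{cor:or_elem}.

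More concretely, recall that the proof of Proposition~\ref{prop:t_rec} decomposed the transition from $\widetilde{\T}_n$ to $\widetilde{\T}_{n+1}$ into a finite sequence of elementary transformations on the reduced tower graph (two rounds of spider moves, interspersed with gauge transformations, contractions of degree-two vertices, and edge splittings, plus the final splitting of a boundary white vertex). By Corollary~\ref{cor:or_elem}, each of these steps acts on the origami map by the same geometric rule as on the t-embedding: edge mid-point insertions become edge mid-point insertions, the central move produced by a spider move becomes the corresponding central move on origami-map vertices, and contractions/splittings of degree-two vertices and gauge transformations leave the positions of dual vertices untouched. Hence the algebraic update relations in the Even and Odd steps (b), as well as the Corner Updates (c), transcribe verbatim from $\widetilde{\T}$ to $\widetilde{\Or}$.

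The only genuinely new point is the boundary data. Since $\widetilde{\T}_n$ is a \emph{perfect} t-embedding, the outer face is a tangential polygon to the unit circle and, as noted after Definition~\ref{def:O}, the origami map folds this outer polygon onto a line. With the root white face chosen adjacent to $\widetilde{\T}_n(2n, 0)$ and $\widetilde{\T}_n(0, 2n)$, this forces
\[\widetilde{\Or}_n(2n,0) = \widetilde{\T}_n(2n,0) = 1, \qquad \widetilde{\Or}_n(0,2n) = \widetilde{\T}_n(0,2n) = \i,\]
while the other two outer vertices get reflected onto the line through $1$ and $\i$ in such a way that $\widetilde{\Or}_n(-n, 0) = 1$ and $\widetilde{\Or}_n(0,-n) = \i$, yielding \eqref{eq:or_bdry_tower}. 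The corresponding boundary conditions in part (a) of the Even and Odd updates are then obtained by replacing the values $-1, -\i$ by $1, \i$ accordingly.

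I do not expect any real obstacle: once Corollary~\ref{cor:or_elem} is in hand, this is a direct translation of Proposition~\ref{prop:t_rec} to the origami map with the single adjustment of boundary values dictated by the folding of the outer tangential polygon onto a line. The proof can thus be written in one sentence, as is done in the Aztec case.
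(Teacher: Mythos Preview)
Your proposal is correct and follows exactly the paper's approach: the paper's proof is the single sentence ``This proposition follows from Proposition~\ref{prop:t_rec} combined with Corollary~\ref{cor:or_elem}.'' Your additional discussion of the boundary values (via folding of the outer tangential polygon onto the line through $1$ and $\i$) is accurate and fleshes out a point the paper leaves implicit, but the underlying argument is the same.
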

\begin{proof}
This proposition follows from Proposition~\ref{prop:t_rec} combined with Corollary~\ref{cor:or_elem}.
\end{proof}

\old{
\begin{lemma}
Let $\tilde{f}:\Lambda_+ \to\mathbb{R}$ be given by 
$\tilde{f}(j, k, m) := \widetilde{\T}(j, k, m).$ 
Then the following holds.
\begin{multline*}
\tilde{f}(j, k, m+1) + \tilde{f}(j, k, m-2)= \frac{1}{2} \Big(\tilde{f}(j+1, k, m-1) + \tilde{f}(j, k+1,m-1)\\
+ \big( \tilde{f}(j-1, k, m) +  \tilde{f}(j, k-1,m)\big) \Big) 
\end{multline*}

for all $(j, k, m+1) \in \Lambda_+ \setminus \{\text{corners}\}$, and at the corners one has
\[
\begin{cases}
\tilde{f}(m+1, 0, m) = \frac{1}{2} (\tilde{f}(m, 0, m-1) + 1), \\
\tilde{f}(0, m+1, m) = \frac{1}{2} (\tilde{f}(0, m, m-1) + i), \\
\tilde{f}(-(m+2)/2, 0, m) = \frac{1}{2} (\tilde{f}(-m/2, 0, m-2) - 1), \\
\tilde{f}(0, -(m+2)/2, m) = \frac{1}{2} (\tilde{f}(0, -m/2, m-2)  - i).
\end{cases}\]
\end{lemma}
\begin{proof} It is easy to check using update rules for $\T(j, k, m)$.
\end{proof}
}

\old{Let us perform a change of coordinates. Define
 \[G(j, k, m') := F(j, k, \frac{3}{2} m' - \frac{1}{2}(j + k) - 5/2)\]
 at points $(j, k, m') \in \mathbb{Z}^3$ such that
 $ j + k + m' = 1  \mod 2.$ In the next section we show that such a function $G$ gives a position of t-embeddings of reduced Aztec diamonds.
 }

\subsection{Perfect t-embeddings of Aztec diamonds and tower graphs}
In this section we prove Theorem~\ref{thm:Aztec_Tower} by performing a change of coordinates.

Let $\partial \widetilde\Lambda_+$ be the set of points $(j, k, m)\in\mathbb{Z}^2\times\mathbb{Z}_{\geq0}$ such that 
\[\begin{cases}
j + k = m - 1 \quad &\text{ for } j, k \geq 0,\\
 -j - k = \frac{m-1}{2} \quad &\text{ for } j, k < 0, \, m \text{ odd},\\
 j - 2 k = m-1 \quad &\text{ for } j>0, k<0,\\
 k - 2 j = m-1 \quad &\text{ for } j<0, k>0.
\end{cases}\]
Note that  $m - j - k = 1 \mod 3$ for all $(j, k, m)\in \partial \widetilde\Lambda_+.$ Let $\widetilde\Lambda_+ := \widetilde{\Lambda}_{\operatorname{int}} \cup \partial \widetilde\Lambda_+$, where ${\widetilde\Lambda}_{\operatorname{int}}$ is the set of points $(j, k, m)\in\mathbb{Z}^2\times\mathbb{Z}_{\geq0}$ such that 
\[\begin{cases}
m - j - k = 1 \mod 3,\\
j + k < m - 1,\\
 -j - k < \frac{m-1}{2},\\
 j - 2 k < m-1 \quad \text{ and }  \quad k - 2 j < m-1.\\
\end{cases}\]

Let $\widetilde\Lambda:=\{(j, k, m)\in\mathbb{Z}^3 \, | \, m - j - k = 1 \mod 3\}$.
Given~$(b_E,b_N,b_W,b_S) \in \CC^4$ 
the following conditions define the function $\tilde{f}:\widetilde\Lambda \to \CC$ uniquely. 
\begin{enumerate}
\item For all $(j,k)\in\mathbb{Z}^2$ 
\begin{equation} \label{bc_tower}
\tilde{f}(j,k,1)=0, \quad \tilde{f}(j,k,0)=0 \quad\text{ and }\quad \tilde{f}(j,k,-1)=0;
\end{equation}
\item If $(j, k, m+1) \in \widetilde\Lambda$ then
\begin{equation} \label{eq:rec-unif_tower}
\begin{split}
\tilde{f}&(j,k,m+1)+\tilde{f}(j,k,m-2)\\
 &- \frac12\big(\tilde{f}(j+1,k,m-1)+\tilde{f}(j,k+1,m-1)+\tilde{f}(j-1,k,m)+\tilde{f}(j,k-1,m)\big)\\
&\quad=
\frac12\big(\delta_{j,m}\delta_{k,0}\cdot b_E+
\delta^{2}_{m}\delta_{j,-\tfrac{m}{2}}\delta_{k,0}\cdot b_W+
\delta_{j,0}\delta_{k,m}\cdot b_N+
\delta^{2}_{m}\delta_{j,0}\delta_{k,-\tfrac{m}{2}}\cdot b_S\big),
\end{split}
\end{equation}
where by $\delta^{2}_{m}$ we denote $\delta_{\{m = 0 \mod 2\}}$.
\end{enumerate}

Note that $\tilde{f}(j,k,n)=0$ for all $(j,k,n)\in \widetilde\Lambda\smallsetminus\widetilde\Lambda_+$.

\begin{remark}\label{rmk:tower_f}
Similarly to Remark~\ref{rmk:aztec_f} the function $\widetilde\T(j,k,m)$ defined in Proposition~\ref{prop:t_rec} can be seen as the solution~$\tilde{f}(j,k,n)$ of the system~\eqref{bc_tower}--\eqref{eq:rec-unif_tower} with the boundary conditions~$(1,\, \i , -1, -\i )$. Here we use that $\widetilde\T(j,k,m)=\widetilde\T(j,k,m-1)$ for $m - j - k \neq 2 \mod 3$.
\end{remark}

The following theorem is a restatement of Theorem~\ref{thm:Aztec_Tower} in the introduction.

\begin{theorem}\label{thm:T_aztec_tower} Let $\T_{n}$ be the positions of the perfect t-embedding of the vertices of the augmented dual~$(A'_{n+1})^*$ of a reduced Aztec diamond of size~$n+1$ and let~$\widetilde{\T}(j,k,n)$ be as defined in the previous section. Similarly, let~$\Or_n$ be the positions of vertices of~$(A'_{n+1})^*$ under the origami map of~$\T_n$, and let~$\widetilde{\Or}(j,k,n)$ be as defined in the previous section. Then for all $j+k+n$ odd
\begin{align*}
\T_n(j,k) &=\widetilde{\T}\left(j, \,k, \,\tfrac32 n - \tfrac12(j+k)-\tfrac12\right) \\
\Or_n(j,k) &=\widetilde{\Or}\left(j, \,k, \,\tfrac32 n - \tfrac12(j+k)-\tfrac12\right) .
\end{align*}
\end{theorem}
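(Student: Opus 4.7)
The strategy is to use Remarks~\ref{rmk:aztec_f} and~\ref{rmk:tower_f}, which characterize $\T_n$ and $\widetilde{\T}$ as the unique solutions of linear recurrence systems with prescribed boundary data. Define $F:\Lambda\to\CC$ by
\begin{equation*}
F(j,k,n) := \widetilde{\T}\bigl(j,\,k,\,\tfrac{3n - (j+k) - 1}{2}\bigr).
\end{equation*}
For $(j,k,n)\in\Lambda$ (so $j+k+n$ is odd), the value $m := \tfrac{3n - (j+k) - 1}{2}$ is an integer satisfying $m - j - k \equiv 1 \bmod 3$, hence $(j,k,m)\in\widetilde{\Lambda}$ and $F$ is well-defined. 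The plan is to show that $F$ satisfies the Aztec system \eqref{bc}--\eqref{eq:rec-unif} with boundary data $(b_0,b_E,b_N,b_W,b_S) = (0,1,\i,-1,-\i)$; by uniqueness this yields $F = \T_n$ on $\Lambda$ and proves the first identity. The identity for the origami map then follows from the same argument applied to the data $(0,1,\i,1,\i)$, using Proposition~\ref{prop:o_rec_tower} on the tower side.

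The central step is to substitute the change of coordinates into the Aztec recurrence \eqref{eq:rec-unif} applied at a point $(j,k,n)$ with $j+k+n$ even, and to compare the result to the tower recurrence \eqref{eq:rec-unif_tower}. Setting $M := \tfrac{3n - (j+k)}{2}$, one checks that $M$ is an integer satisfying $M - j - k \equiv 0 \bmod 3$, so $(j,k,M+1)\in\widetilde{\Lambda}$. A direct computation using the coordinate change gives
\begin{align*}
F(j,k,n+1) &= \widetilde{\T}(j,k,M+1), & F(j,k,n-1) &= \widetilde{\T}(j,k,M-2), \\
F(j-1,k,n) &= \widetilde{\T}(j-1,k,M), & F(j+1,k,n) &= \widetilde{\T}(j+1,k,M-1), \\
F(j,k-1,n) &= \widetilde{\T}(j,k-1,M), & F(j,k+1,n) &= \widetilde{\T}(j,k+1,M-1),
\end{align*}
and these are precisely the six values appearing on the two sides of \eqref{eq:rec-unif_tower} at $(j,k,M+1)$. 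Thus the homogeneous parts of the Aztec and tower recurrences coincide term by term after substitution.

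The inhomogeneous delta terms will be matched by a case check at the four corner points. The East corner $(n,0,n)$ gives $M=n$ and triggers the tower delta $\delta_{j,m}\delta_{k,0}$ at $m=n$, matching $b_E=1$ on both sides. The West corner $(-n,0,n)$ gives $M=2n$, which is even, so the indicator $\delta^2_m$ in \eqref{eq:rec-unif_tower} fires and $\delta_{j,-m/2}\delta_{k,0}$ picks out $(-n,0)$, matching $b_W=-1$; the North and South cases are entirely analogous. The Aztec origin delta $\delta_{j,0}\delta_{k,0}\delta_{n,0}\cdot b_0$ has $b_0=0$, so there is nothing to check. Finally, for the initial conditions $F(j,k,n)=0$ when $n\leq 0$, under the coordinate change either $m\leq 1$, in which case $\widetilde{\T}(j,k,m)=0$ by \eqref{bc_tower}, or $(j,k,m)\notin\widetilde{\Lambda}_+$, again giving $\widetilde{\T}(j,k,m)=0$; this reduces to verifying that the Aztec support cone $\{|j|+|k|<n,\ n\geq 1\}$ maps exactly onto $\widetilde{\Lambda}_+$ under the coordinate change, a direct case-by-case comparison visualized in Figure~\ref{fig:tower_slice}. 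The main technical obstacle is the bookkeeping of the mod $3$ residues and shifted $m$-values needed to line up the six-term recurrences correctly; once this is done, the matching of boundary data follows term by term.
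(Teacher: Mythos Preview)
Your proof is correct and follows essentially the same approach as the paper's: both identify the statement as an equality between the unique solutions of the two linear recurrence systems (via Remarks~\ref{rmk:aztec_f} and~\ref{rmk:tower_f}) and verify that the coordinate change $(j,k,n)\mapsto(j,k,\tfrac{3n-(j+k)-1}{2})$ carries the six-term Aztec recurrence to the six-term tower recurrence, with the four corner $\delta$-terms matching one by one. Your explicit listing of the six transformed values is more detailed than the paper's one-line ``the recurrence relation~\eqref{eq:rec-unif} gives relation~\eqref{eq:rec-unif_tower} by performing a coordinate change,'' but the content is identical.

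Two small remarks. First, the direction of the uniqueness argument is reversed: you pull $\widetilde{\T}$ back and show it satisfies the Aztec system, whereas the paper pushes $f$ forward and shows it satisfies the tower system. This is cosmetic. Second, your handling of the initial conditions is slightly loose: the assertion that the Aztec support cone maps \emph{exactly onto} $\widetilde{\Lambda}_+$ is stronger than what is needed (you only need that the images of points with $n\in\{0,-1\}$ land where $\widetilde{\T}$ vanishes), and you leave it as an unjustified ``direct case-by-case comparison.'' The paper handles the analogous step more concretely by observing that the only lattice point with $m\le 1$ inside the Aztec support $\{|j|+|k|\le n\}$ is $(0,0,1)$, and that $f(0,0,1)=0$ precisely because $b_0=0$. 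You should also note that at $(j,k,n)=(0,0,0)$ all four corner deltas fire simultaneously on both sides; your sentence ``$b_0=0$, so there is nothing to check'' is correct in conclusion but hides that the corner contributions must (and do) also match there.
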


\begin{remark}
Note that for $j+k+n$ odd the value $\tfrac32 n - \tfrac12(j+k)-\tfrac12$ is integer. Therefore the right hand side of each equation in the above proposition is well defined.
\end{remark}

\begin{proof} 
Recall that $\T_n(j,k)$ and $\Or_n(j,k)$ can each be seen as the solution~$f(j,k,n)$ of the system~\eqref{bc}--\eqref{eq:rec-unif} with the boundary conditions~$(0, 1,\i , -1, -\i )$ and $(0, 1,\i , 1, \i )$, respectively. Therefore, due to Remark~\ref{rmk:tower_f}, it is enough to check that for all $j+k+n$ odd
\[f(j,k,n)=\tilde{f}\left(j, \,k, \,\tfrac32 n - \tfrac12(j+k)-\tfrac12\right),\]
where~$\tilde{f}$ is the solution of the system~\eqref{bc_tower}--\eqref{eq:rec-unif_tower} with the boundary conditions~$(1,\, \i , -1, -\i )$ (or~$(1,\, \i , -1, -\i )$, for the origami map).

Note that for $n$ negative none of the points~$\left(j, \,k, \,\tfrac32 n - \tfrac12(j+k)-\tfrac12\right)$ lie in $\widetilde\Lambda_+$, and therefore both $f$ and $\tilde{f}$ vanish for $n\leq0$. To match the boundary conditions~\eqref{bc} and~\eqref{bc_tower} it remains to check that~$f$ is zero for all~$j+k+n$ odd such that~$\tfrac32 n - \tfrac12(j+k)-\tfrac12\leq 1$. Recall that all non-zero values of $f$ appear inside  the set $|j|+|k|\leq n$. So we need to check that in the intersection of these sets both functions vanishes. The first inequality is equivalent to~$3n-3\leq j+k$ using the inequality $j + k \leq |j| + |k| $ we see that the only point in the intersection of these sets is the point $(j,k,n)=(0,0,1)$. And since $f(0,0,1)=0$ for $b_0=0$, we obtain that~$f$ vanishes for all~$j+k+n$ odd such that~$\tfrac32 n - \tfrac12(j+k)-\tfrac12\leq 1$.

To finish the proof note that the recurrence relation~\eqref{eq:rec-unif} gives relation~\eqref{eq:rec-unif_tower} by performing a coordinate change~$(j,k,n) \mapsto \left(j, \,k, \,\tfrac32 n - \tfrac12(j+k)-\tfrac12\right)$.
\end{proof}

\section{Verification of assumptions}\label{sec:assumptions}
The goal of this section is to prove Theorem~\ref{thm:Aztec_assumptions} under the \emph{rigidity condition}, Assumption~\ref{assumption:bounded} stated below. The fact that perfect t-embeddings~$\T_n$ of uniformly weighted Aztec diamonds fulfill the rigidity condition follows from Lemma~\ref{lem:bdd}. We postpone the proof of this lemma until Section~\ref{sec:complicated_stuff}.

More precisely, in Section~\ref{sec:T_n_conv}, we obtain the leading order term in the asymptotic expansion of the perfect t-embeddings and origami maps, both for uniformly weighted Aztec diamond and tower graphs. 
Then, in Section~\ref{sec:lip_expfat} we show that Assumptions~\ref{assumption:Lip} and~\ref{assumption:Exp_fat} of Theorem~\ref{thm:CLR2thm} hold for any sequence~$\T_n$ of perfect t-embeddings satisfying the rigidity condition.
 Finally, in Section~\ref{sec:main_thm_proof}, we prove Theorem~\ref{thm:Aztec_assumptions} under the rigidity condition.

\subsection{Convergence of t-embeddings and origami maps.}\label{sec:T_n_conv}
In this section we prove Proposition~\ref{prop:eqn:to_conv}. The proposition implies that the sequence~$(\T_n,\Or_n')$ of perfect t-embeddings and origami maps of uniformly weighted Aztec diamonds converges to the maximal surface in $\mathbb{R}^{2,1}$. We also show that the corresponding sequence of maps of tower graphs~$(\widetilde\T_n,\widetilde\Or_n')$ converges to the same maximal surface, so Corollary~\ref{cor:convergence_tower} holds. In the process we also prove Corollary~\ref{Cor_main_frozen}, i.e. we show that under perfect t-embeddings, frozen regions  map to points in the scaling limit for the uniformly weighted Aztec diamond. 

As discussed in Section~\ref{sec:intro}, let us define the other version of the origami map of the reduced Aztec diamond~$A'_{n+1}$  by
\begin{align}\label{def:Or_prime}
\Or'_n(j,k):=\e^{i\tfrac{\pi}{4}}\left(\Or_n(j,k) - \tfrac{1+i}{2}\right),
\end{align}
which is a composition of a translation and rotation of the origami map $\Or_n$ defined in Section~\ref{sec:aztec}. Note that~$\Or'_n$ still satisfy Proposition~\ref{prop:o_rec} with the boundary conditions
\begin{equation}\label{eq:or_real_bdry}
 \Or'_n(n,0) = \Or'_n(-n,0) = \tfrac{\sqrt{2}}{2}\quad\text{ and } \quad\Or'_n(0,n)=\Or'_n(0,-n) = -\tfrac{\sqrt{2}}{2}.
 \end{equation}
Recall, that~$\T_n$ maps the four outer augmented dual vertices to~$1, \i, -1, -\i$.

Note that the domain~$\Omega_{\T_n}$ covered by~$\T_n$ for each~$n$ is equal to~$\diamondsuit \coloneqq \{z = x+ \i y: |x | +| y| < 1\}$.  Let us denote by~$\partial \diamondsuit$ the boundary of the domain~$\diamondsuit$. Then
 the graph of~$\Or_{n}'$ restricted to~$\partial \diamondsuit$ is the non-planar quadrilateral, which we denote by~$C_{\diamondsuit}$, 
 in~$\mathbb{C} \times \mathbb{R}$ with vertices
\[\left(1, \tfrac{\sqrt{2}}{2}\right), \quad
\left(i, -\tfrac{\sqrt{2}}{2}\right), \quad
\left(-1, \tfrac{\sqrt{2}}{2}\right) \quad
\text{ and } \quad
\left(-i, -\tfrac{\sqrt{2}}{2}\right).\]
We define~$S_{\diamondsuit}$ as the maximal surface in the Minkowski space~$\mathbb{R}^{2,1}$ with boundary contour~$C_{\diamondsuit}$. This is the unique space-like surface in~$\mathbb{R}^{2,1}$ with this boundary contour which has vanishing mean curvature.

In Section~\ref{sec:aztec_prob} we described a connection between perfect t-embeddings~$\T_n$  of reduced Aztec diamond~$A'_{n+1}$  of size~$n+1$ and edge probabilities of the Aztec diamond~$A_n$ of size~$n$. More precisely, recall that 
\begin{equation}\label{eq:T_f_e_reminder}
\T_n(j,k)=p_E(j,k,n)+\i p_N(j,k,n)-p_W(j,k,n)-\i  p_S(j,k,n),
\end{equation}

\begin{align}\label{eq:O_f_e_reminder}
\Or_{n}'(j,k)= & \tfrac{1}{\sqrt{2}} \left( p_E(j,k,n)- p_N(j,k,n)+p_W(j,k,n) -  p_S(j,k,n) \right) \\
&+ \i \tfrac{1}{\sqrt{2}} \left(  -1 + p_E(j,k,n) + p_N(j,k,n) + p_S(j,k,n) + p_W(j,k,n) \right). \notag
\end{align}
where~\eqref{eq:O_f_e_reminder} is obtained by substituting~\eqref{eq:O_f_e} into~\eqref{def:Or_prime}.
The convergence of edge probabilities according to the uniform distribution of the Aztec diamond was shown in
\cite{cohn-elki-prop-96}. We use this result to show the convergence of perfect t-embeddings of uniformly weighted Aztec diamonds and tower graphs.

\begin{proposition}[\cite{cohn-elki-prop-96}]\label{prop:edge_prob_convergence}
Let~$p_E(j,k,n)$ be the probability that the east edge of the face~$(j, k)$ is present in a uniformly random dimer cover of Aztec diamond~$A_n$ of size~$n$. Then, for all~$j+k+n$ odd and~$(j/n,k/n)$ staying away from~$\left(\tfrac12,\pm \tfrac12\right)$ one has
\begin{equation}\label{eq:CEP}
p_E(j,k,n)=o(1)+
\begin{cases}
\delta_{(j/n)>1/2} &\text{ if } (j/n)^2+(k/n)^2\geq\tfrac12,\\
\tfrac{1}{2}+\tfrac{1}{\pi}\tan^{-1}\left(\frac{2(j/n)-1}{\sqrt{1-2(j/n)^2-2(k/n)^2}}\right) &\text{ if } (j/n)^2+(k/n)^2<\tfrac12,
\end{cases}
\end{equation}
uniformly on compact subsets as~$n\to\infty$. 
\end{proposition}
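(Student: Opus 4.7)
The plan is to represent $p_E(j,k,n)$ as a double contour integral via the inverse Kasteleyn matrix and then apply classical steepest descent. By Kenyon's formula for dimer local statistics, $p_E(j,k,n) = K(b,w)\,K^{-1}(w,b)$, where $(bw)$ is the east edge of face $(j,k)$ and $K$ is a Kasteleyn matrix for $A_n$. For the uniformly weighted Aztec diamond, the entry $K^{-1}(w,b)$ admits an explicit double contour integral representation with a rational integrand of the schematic form
$$K^{-1}(w,b) \;=\; \frac{1}{(2\pi\i)^2}\oint\oint \frac{(1+u)^{\alpha}(1+v)^{\beta}}{u^{\gamma} v^{\delta}(u-v)} \,\mathrm{d}u\,\mathrm{d}v,$$
with exponents $\alpha,\beta,\gamma,\delta$ that are linear functions of $j,k,n$, and contours encircling the origin with $|u|<|v|$. (Such formulas can be derived either by direct inversion in the Fourier basis for the Aztec Kasteleyn matrix or from the Eynard--Mehta structure; they are classical.) Alternatively, the formulas of Section~\ref{sec:aztec_rec} of the present paper, together with Corollary~\ref{cor:p_E_and_f_E}, would express $p_E$ as a partial sum involving the fundamental solution $f_0$ of the wave-like recurrence~\eqref{eq:rec-unif}, and $f_0$ itself can be written as a double integral by Fourier inversion using the dispersion relation $2\cos\omega=\cos\xi+\cos\eta$.

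Next I would set $j=\lfloor xn\rfloor$, $k=\lfloor yn\rfloor$ and rewrite the integrand as $e^{n\Phi(u,v;x,y)}$ times a rational prefactor, where $\Phi$ is explicit. The saddle point equations $\partial_u\Phi=\partial_v\Phi=0$ reduce to the \emph{same} quadratic equation in each variable, whose discriminant is proportional to $1-2x^2-2y^2$. Consequently, the inscribed circle $\{x^2+y^2=\tfrac12\}$ cleanly separates two regimes: outside, the two saddles are real and distinct; inside, they form a complex conjugate pair $u_0(x,y),\overline{u_0(x,y)}$. Deforming the $u$-contour across the $v$-contour to the steepest descent contour through the saddles picks up a residue at $u=v$, producing a single-variable integral between the two saddles plus exponentially suppressed remainders. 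In a frozen regime this residue integral equals $0$ or $1$ depending on which side of the real saddle point the deformed contour passes, reproducing the indicator function. In the liquid regime the residue can be evaluated explicitly as $\tfrac{1}{\pi}\arg u_0$ (up to an additive constant fixed by consistency at the boundary of the liquid region). Solving the quadratic saddle equation for $u_0(x,y)$ with uniform weights yields $\arg u_0=\arctan\big(\frac{\sqrt{1-2x^2-2y^2}}{2x-1}\big)$ (up to the usual branch convention), which rearranges to the stated formula.

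The main obstacle is establishing \emph{uniformity} of the steepest-descent estimates over compact subsets of $\diamondsuit\setminus\{(1/2,\pm 1/2)\}$. On compacts of the open liquid region the complex saddle $u_0$ stays bounded away from the real axis and from $\{0,-1,\infty\}$, so the standard Laplace-type bound $O(n^{-1/2})$ is uniform. On compacts of each open frozen region the two real saddles remain separated and the sub-exponential correction is uniform. The delicate range is a shrinking neighbourhood of the arctic circle, where the two saddles approach each other; here one must choose the contour deformation adaptively and track the transition through Airy-type scaling, but the leading-order formula — the indicator outside and the arctan inside — persists up to an $o(1)$ error that is uniform on compacta avoiding the two tangency points $(1/2,\pm 1/2)$. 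Those two points are excluded in the statement precisely because there the arctic circle is tangent to $\partial\diamondsuit$ and three saddles collide, forcing genuinely different (Pearcey-type) local asymptotics rather than a mere $o(1)$ correction to the stated formula.
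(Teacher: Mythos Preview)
The paper does not give its own proof of this proposition: it is quoted verbatim as a result of Cohn--Elkies--Propp~\cite{cohn-elki-prop-96} and used as a black box. So there is no ``paper's proof'' to compare against in the strict sense.

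That said, your approach is essentially the one the paper \emph{does} carry out later, in Section~\ref{sec:complicated_stuff}, for the finer asymptotics needed for the rigidity condition: write $p_E$ via the inverse Kasteleyn matrix (Lemma~\ref{lem:finite_t_embedding} uses exactly the Chhita--Johansson--Young double integral you allude to), isolate the action $S(z;x,y)$, and run steepest descent through the complex-conjugate saddles $\xi,\bar\xi$ (Lemma~\ref{lem:limit_integral}). So your strategy is sound and in fact closer to the paper's internal toolkit than to the original CEP argument, which worked instead with the creation-rate representation $p_E(j,k,n)=\tfrac12\sum_s f_0(j-s,k,n-s)$ (cf.\ \eqref{eqn:peformula}) and analysed the generating function of $f_0$ directly.

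Two points where your sketch is loose. First, the passage ``the residue can be evaluated explicitly as $\tfrac{1}{\pi}\arg u_0$'' hides a genuine computation: after picking up the residue at $u=v$ one gets a single integral of $dz/z$-type along an arc between the two saddles, and identifying its value with the stated $\arctan$ requires tracking the branch of the logarithm and the orientation of the contour carefully. Second, your uniformity discussion near the arctic circle is the weakest link. You correctly identify that the two saddles coalesce there, but invoking ``Airy-type scaling'' is not needed for the $o(1)$ statement on compacta: on any compact set avoiding $(\tfrac12,\pm\tfrac12)$ and contained in either the open liquid region or an open frozen region, the saddles stay uniformly separated (from each other in the frozen case, from $\mathbb{R}$ in the liquid case), and that already gives uniform $o(1)$. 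A compact set that straddles the arctic circle but avoids $(\tfrac12,\pm\tfrac12)$ can be handled by covering it with the two open types and noting the limiting formula is continuous across the circle away from the tangency points; no local Airy analysis is required at this level of precision.
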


\begin{remark}
Similarly, one get formulas for~$p_N$,~$p_W$ and~$p_S$. Indeed, the symmetry of the uniformly weighted Aztec diamond implies
\[p_E(j,k,n)=p_N(-k,j,n)=p_W(-j,-k,n)=p_S(k,-j,n).\]
\end{remark}

Note that the previous proposition, together with the representations~\eqref{eq:T_f_e_reminder}, \eqref{eq:O_f_e_reminder} of~$\T_n$ and~$\Or_n'$ in terms of edge probabilities, is enough to compute the limits~$z$ and~$\vartheta$. However, we want to match the limits with the limits predicted in~\cite{Ch-R}. To that end, we introduce some notation before we prove Proposition~\ref{prop:eqn:to_conv} (Propositio~\ref{prop:main_theorem_2} below). Following~\cite[Section 3]{Ch-R} 
we define the functions~$z(x,y)$ and~$\vartheta(x,y)$ for~$(x,y)\in \DAz\coloneqq \{x^2+y^2<\tfrac{1}{2}\}$ by
\begin{align}
z(x,y) &\coloneqq\Psi_E(x,y)+i\Psi_N(x,y)-\Psi_W(x,y)-i\Psi_S(x,y), \label{eq:limit_t}\\
\vartheta(x,y) &\coloneqq\tfrac{1}{\sqrt{2}}\left(\Psi_E(x,y)-\Psi_N(x,y)+\Psi_W(x,y)-\Psi_S(x,y)\right),\label{eq:limit_o}
\end{align}
where the function~$\Psi_E$ is given by
\begin{equation}\label{eq:psi_e}
\Psi_E(x,y)\coloneqq\int_0^1 \Psi_0(x-s,y,1-s)ds,
\end{equation}
with
\[\Psi_0(x,y,t)\coloneqq\begin{cases}
\tfrac{1}{\pi}(t^2-2x^2-2y^2)^{-1/2} & \text{ if } x^2+y^2<\tfrac12 t^2,\\
0 & \text{ otherwise. }
\end{cases}\]
In~\cite{Ch-R} the function~$\Psi_E$ was defined as the conjectured limit of the right hand side of~\eqref{eq:f_E_unif}, using that~$f_0$ can be seen as the fundamental solution to the discrete wave equation. Similarly we define~$\Psi_N(x,y)$,~$\Psi_W(x,y)$, and~$\Psi_S(x,y)$.

\begin{remark}\label{rmk:lor_min}
It was shown in~\cite[Proposition 3.1]{Ch-R}, that the graph $(z, \vartheta(z))$ is precisely equal to the maximal surface~$S_{\diamondsuit}$ with boundary contour~$C_{\diamondsuit}$, where~$z$ and~$\vartheta$ are defined by~\eqref{eq:limit_t}--\eqref{eq:limit_o}.
\end{remark}

\begin{proposition}\label{prop:main_theorem_2}
Let~$\T_n$ and~$\Or_n'$ be the perfect t-embedding and origami map of the reduced Aztec diamond~$A_{n+1}'$, then
\begin{align}\label{eqn:tconv_2}
\T_n(j,k)=&z(j/n,k/n)+o(1), \\ 
\mathcal O_n'(j,k)=&\vartheta(j/n,k/n)+o(1), \label{eqn:oconv_2}
\end{align}
as~$n\to \infty$, where~$z$ and~$\vartheta$ are defined by~\eqref{eq:limit_t} and~\eqref{eq:limit_o}. The convergence is uniform for~$(j/n,k/n)$ on compact subsets of~$\mathcal A\coloneqq\{|x|+|y|<1\}$.
\end{proposition}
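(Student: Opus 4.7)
The plan is to combine the two main tools developed in Section~\ref{sec:aztec_prob}: the exact formulas for $\T_n$ and $\Or_n'$ in terms of edge inclusion probabilities (Corollary~\ref{cor:p_E_and_f_E}), and the Cohn--Elkies--Propp asymptotics for those probabilities (Proposition~\ref{prop:edge_prob_convergence}). By Corollary~\ref{cor:p_E_and_f_E}, we have exact identities expressing $\T_n(j,k)$ and $\Or_n'(j,k)$ as explicit linear combinations of $p_E(j,k,n), p_N(j,k,n), p_W(j,k,n), p_S(j,k,n)$. This reduces the proposition to showing that each of these probabilities converges uniformly on compact subsets of $\mathcal{A}$ to the corresponding function $\Psi_E(j/n,k/n), \Psi_N(j/n,k/n), \Psi_W(j/n,k/n), \Psi_S(j/n,k/n)$, since then the limits automatically assemble into $z(j/n,k/n)$ and $\vartheta(j/n,k/n)$ by the defining formulas~\eqref{eq:limit_t} and~\eqref{eq:limit_o}.

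For the convergence of the edge probabilities themselves, I would invoke Proposition~\ref{prop:edge_prob_convergence}, together with the four-fold symmetry of the uniformly weighted Aztec diamond that reduces $p_N, p_W, p_S$ to $p_E$ via rotations. The key remaining step is to identify the closed-form CEP limit displayed in~\eqref{eq:CEP} with the integral $\Psi_E(x,y)=\int_0^1 \Psi_0(x-s,y,1-s)\,ds$ from~\eqref{eq:psi_e}. I would do this by direct evaluation: the support condition $(x-s)^2+y^2 < (1-s)^2/2$ is a quadratic inequality in $s$ whose solution interval is determined by the discriminant $8((1-x)^2-y^2)$; on this interval the integrand reduces to an elementary $\arctan$ antiderivative which, after simplification, matches~\eqref{eq:CEP} inside the liquid region $\tfrac{1}{\sqrt{2}}\mathbb{D}$, while in each frozen region the interval is either empty or the full $[0,1]$ so that $\Psi_E$ evaluates to $0$ or $1$, matching the indicator. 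A conceptually cleaner alternative is to note that both $\Psi_E$ and the CEP limit are continuous integrals (respectively weak limits of discrete sums) of fundamental solutions to the wave equation with identical source and boundary data on $\partial(\tfrac{1}{\sqrt{2}}\mathbb{D})$, and therefore coincide by uniqueness.

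The main technical caveat is that Proposition~\ref{prop:edge_prob_convergence} is stated with uniform convergence only on compact subsets of $\mathcal{A}$ staying away from the four Arctic tangency points $(\pm\tfrac12,\pm\tfrac12)$, whereas the conclusion requires uniformity on every compact $K\subset\mathcal{A}$. Since $p_E,p_N,p_W,p_S \in [0,1]$ are bounded and the targets $z,\vartheta$ extend continuously across these four isolated points, this gap can be closed by a standard $\varepsilon$-truncation: apply Proposition~\ref{prop:edge_prob_convergence} uniformly on $K$ with small disks around the four points excised, and absorb the excised neighborhoods using boundedness of both sides together with the uniform continuity of $z,\vartheta$. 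A more robust alternative is a direct steepest-descent analysis of the double contour integral formula for the inverse Kasteleyn matrix mentioned after Theorem~\ref{thm:T_edge_probabilities}, which handles neighborhoods of the tangency points as well. Substituting the identified limits into the formulas of Corollary~\ref{cor:p_E_and_f_E} then yields~\eqref{eqn:tconv_2} and~\eqref{eqn:oconv_2}. The substance of the argument lies in the explicit identification of the CEP limit with $\Psi_E$; the rest is bookkeeping.
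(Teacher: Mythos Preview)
Your approach is the same as the paper's: express $\T_n,\Or_n'$ via edge probabilities (Corollary~\ref{cor:p_E_and_f_E} together with~\eqref{def:Or_prime}), invoke the Cohn--Elkies--Propp asymptotics, and check by direct computation that the integral~\eqref{eq:psi_e} equals the right-hand side of~\eqref{eq:CEP}. The paper's proof does exactly this, adding only the explicit remark that $\Psi_E+\Psi_N+\Psi_W+\Psi_S\equiv 1$, which is needed so that the imaginary part of~\eqref{eq:O_f_e_reminder} vanishes in the limit and $\Or_n'$ indeed converges to the real-valued $\vartheta$; you should make this step explicit rather than absorb it into ``automatically assemble.''

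Your ``technical caveat'' paragraph, however, is based on a misreading. The tangency points $(\pm\tfrac12,\pm\tfrac12)$ satisfy $|x|+|y|=1$, so they lie on $\partial\mathcal A$, not in $\mathcal A$. Any compact $K\subset\mathcal A$ therefore automatically stays away from them, and Proposition~\ref{prop:edge_prob_convergence} applies with no further work; no $\varepsilon$-truncation is needed. Moreover, your proposed fix relies on the claim that $z,\vartheta$ extend continuously across those points, which is false: approaching $(\tfrac12,\tfrac12)$ from the East frozen region gives $z=1$, while approaching from the North frozen region gives $z=\i$. So the detour is both unnecessary and, as written, incorrect.
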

\begin{proof}
A computation shows that the integral~\eqref{eq:psi_e} is equal to the limit of the right hand side of \eqref{eq:CEP}, which proves the limits
\begin{align}
p_E(j,k,n)=\Psi_E(j/n,k/n)+o(1) \text{ as } n\to\infty, \label{eq:p_psi_E}\\
p_N(j,k,n)=\Psi_N(j/n,k/n)+o(1) \text{ as } n\to\infty,\label{eq:p_psi_N}\\
p_W(j,k,n)=\Psi_W(j/n,k/n)+o(1) \text{ as } n\to\infty, \label{eq:p_psi_W}\\
p_S(j,k,n)=\Psi_S(j/n,k/n)+o(1) \text{ as } n\to\infty, \label{eq:p_psi_S}
\end{align}
where the convergence is uniform for~$(j/n, k/n)$ on compact subsets of~$\mathcal A$. In addition, note that~$\Psi_E + \Psi_N + \Psi_W + \Psi_S \equiv 1$, see, e.g.,~\cite{cohn-elki-prop-96}. The statement now follows by comparing~$\T_n$ and~$\mathcal O'_n$ with~$z$ and~$\vartheta$, respectively, as expressed in~\eqref{eq:T_f_e_reminder},~\eqref{eq:O_f_e_reminder},~\eqref{eq:limit_t} and~\eqref{eq:limit_o}.
\end{proof}

Recall that the origami map can be seen as a map on $\diamondsuit$. Similarly, $\vartheta$ can be defined as a function of $z \in \diamondsuit$, as we describe in the proof below.

\begin{corollary}\label{Cor:main_lor_min}
We have the uniform convergence on compact subsets of~$\diamondsuit$
$$\Or_n'(z) \rightarrow \vartheta(z)$$
 where the graph of~$\vartheta$ is the maximal surface~$S_{\diamondsuit} \subset \mathbb{R}^{2,1}$ with boundary contour~$C_{\diamondsuit}$.
\end{corollary}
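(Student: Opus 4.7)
The plan is to leverage Proposition~\ref{prop:main_theorem_2}, Corollary~\ref{Cor_main_frozen}, and Remark~\ref{rmk:lor_min}, while reconciling two different coordinatizations: Proposition~\ref{prop:main_theorem_2} yields convergence in terms of the rescaled lattice coordinates $(j/n,k/n)\in\mathcal{A}$, whereas the corollary demands convergence as a function on $\diamondsuit = \Omega_{\T_n}$. Remark~\ref{rmk:lor_min} already asserts that $\{(z(x,y),\vartheta(x,y)):(x,y)\in \mathcal{A}\}=S_{\diamondsuit}$, so the first task is to explain why $\vartheta$ descends to a well-defined function on $\diamondsuit$ whose graph is $S_{\diamondsuit}$.

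To that end, I would decompose $\mathcal{A}$ into the liquid region $\tfrac{1}{\sqrt{2}}\mathbb{D}$ and the four frozen regions. On the liquid region, one checks directly from the integral formulas~\eqref{eq:limit_t}--\eqref{eq:psi_e} that $(x,y)\mapsto z(x,y)$ is a homeomorphism onto $\diamondsuit$ minus the four corners $\{\pm 1, \pm i\}$ (this is essentially the Kenyon--Okounkov conformal parameterization for the uniform Aztec diamond). Corollary~\ref{Cor_main_frozen} states that on each frozen region $z(x,y)$ is constant and equal to one of the four corners, and the value of $\vartheta(x,y)$ on that region is the matching corner height $\pm 1/\sqrt{2}$. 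Consequently one may unambiguously define $\vartheta:\diamondsuit\to\mathbb{R}$ by pulling back along the liquid-region homeomorphism and extending by continuity to the corners; by Remark~\ref{rmk:lor_min} the graph of this $\vartheta$ is $S_{\diamondsuit}$.

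To upgrade the vertex-level convergence of Proposition~\ref{prop:main_theorem_2} to uniform convergence on compact subsets $K\subset\diamondsuit$, fix such a $K$ and split it into (i) a part contained in the image of the liquid region, and (ii) small neighborhoods of the four corners. For (i), any $z\in K$ lies in some face of $\T_n$; Proposition~\ref{prop:main_theorem_2} applied to $\T_n$ shows that the preimage vertex $(j,k)$ under the t-embedding satisfies $(j/n,k/n)$ lying in a compact subset of the liquid region, and moreover that both $\T_n(j,k)\to z$ and $\Or_n'(j,k)\to\vartheta(z)$. Combined with the fact that $\Or_n'$ is $1$-Lipschitz on each face (it is a composition of reflections), the oscillation of $\Or_n'$ across any face is bounded by its diameter, which tends to $0$ uniformly on compact subsets of the liquid region; this yields the desired convergence. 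For (ii), near a corner, every point $z\in K$ close to e.g. $1$ lies in a face whose preimage vertex is in the East frozen region for $n$ large, and Corollary~\ref{Cor_main_frozen} together with the $1$-Lipschitz property of $\Or_n'$ gives $\Or_n'(z)\to 1/\sqrt{2}=\vartheta(1)$ uniformly on that neighborhood.

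The main obstacle is the first step: establishing rigorously that $(x,y)\mapsto z(x,y)$ is a bijection from the liquid region onto the interior of $\diamondsuit$ minus the corners, and controlling the behavior of the inverse near the corners. This requires an explicit manipulation of the integrals defining $\Psi_E,\Psi_N,\Psi_W,\Psi_S$; once this parameterization is in hand, the remaining argument is a routine compactness and oscillation-control argument combining Proposition~\ref{prop:main_theorem_2}, Corollary~\ref{Cor_main_frozen}, and the non-expanding nature of the origami unfolding.
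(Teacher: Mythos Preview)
Your approach is essentially the same as the paper's and is correct. Two remarks are worth making. First, your part~(ii) treating neighborhoods of the corners is unnecessary: the corners $\pm 1,\pm i$ lie on $\partial\diamondsuit$, not in the open set $\diamondsuit$, so any compact $K\subset\diamondsuit$ is already bounded away from them and your part~(i) suffices. Second, what you identify as ``the main obstacle''---that $z$ is a diffeomorphism from the liquid region onto $\diamondsuit$---is not re-proved in the paper but simply cited from~\cite{Ch-R}. Once this is granted, the paper picks for each $z_0\in K$ the vertex $(j,k)$ minimizing $|(j/n,k/n)-z^{-1}(z_0)|$ (so that $(j/n,k/n)\in\mathcal K\subset\tfrac{1}{\sqrt 2}\mathbb D$ by construction), rather than a vertex of the face of $\T_n$ containing $z_0$; this sidesteps the need to argue separately that such a face-vertex has rescaled coordinates in the liquid region. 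Otherwise the chain of estimates, using the $1$-Lipschitz property of $\Or_n'$ together with Proposition~\ref{prop:main_theorem_2}, is the same.
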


\begin{proof}
The convergence of graphs~$(\T_n, \Or_n')$ to a maximal surface in~$\R^{2,1}$, essentially follows from  Proposition~\ref{prop:main_theorem_2} together with Remark~\ref{rmk:lor_min}. We give details below for completeness. 
It is noted in~\cite{Ch-R} that~$z(x,y)$ diffeomorphically maps the liquid region~$\frac{1}{\sqrt{2}} \mathbb{D} \subset \mathcal{A}$ to the domain~$\diamondsuit$; it follows from this, and from the fact that~$z$ and~$\vartheta$ are constant on each frozen region, that~$\vartheta$ can also be viewed as a function of~$z \in \diamondsuit \subset \mathbb{C}$ which satisfies the property that~$\vartheta(z(x,y)) = \vartheta(x,y)$ for any~$(x,y) \in \mathcal{A}$. The function~$\vartheta$ is real valued, and it is shown in~\cite[Proposition~3.1]{Ch-R} that the graph of~$\vartheta$ is the maximal surface~$S_{\diamondsuit}$. Thus, all that remains is to deduce the (uniform on compact subsets) convergence~$\Or_n'(z) \rightarrow \vartheta(z)$ as functions on the domain~$\diamondsuit \subset \C$ from~\eqref{eqn:tconv_2} and~\eqref{eqn:oconv_2}.

In order to do this, let us suppose we have some compact subset~$K \subset \diamondsuit$. Define~$\mathcal{K}$ as a compact subset of~$\frac{1}{\sqrt{2}} \mathbb{D}$ strictly containing an~$\eps$-neighborhood of the compact subset~$z^{-1}(K)$, for some small enough~$\eps > 0$. Then, for a fixed point~$z_0  \in K \subset \diamondsuit$, we denote by~$(j, k)$ the vertex of~$( A_{n+1}')^*$ such that~$|(j/n,k/n) - z^{-1} (z_0) |$ is minimal. Using this notation, we have
\begin{align*}
\Or_n'(z_0) &= \Or_n'(z(j/n,k/n)) + O(1/n) \\
&= \Or_n'(j, k) + o(1) \\
&= \vartheta(j/n, k/n) + o(1) \\
&= \vartheta( z(j/n, k/n) ) + o(1) \\
&= \vartheta( z_0 ) + o(1) 
\end{align*}
and we must show that the error accumulated in each line can be taken to be uniformly small for~$z_0 \in K$. To do this, we note two facts. First, there exists~$N > 0$ independent of~$z_0 \in K$, such that for all~$n > N$, we will have~$(j/n,k/n) \in \mathcal{K}$. Second, by the previous observation  and~\eqref{eqn:tconv_2} in Proposition~\ref{prop:main_theorem_2}, we have
\begin{equation}\label{eqn:star}
\mathcal{T}_n(j, k) = z(j/n,k/n) + o(1)
\end{equation}
uniformly over~$z_0 \in K$. Now, from the fact that~$\Or_n'$ is~$1$-Lipschitz and~$z$ is smooth, it is clear that the error in the first line is uniformly small. In the second line, the error comes from approximating~$z(j/n,k/n)$ by~$\T_n(j, k)$, and so the error in the second line is uniformly small by~\eqref{eqn:star}. The error in the third line is uniform by the uniform convergence of~$\Or_n'$ to~$\vartheta$ on~$\mathcal{K}$. The fourth line has no additional error by definition, and in the fifth line the additional error is uniformly small because~$z$ is smooth and~$\vartheta$ is Lipschitz with constant~$1$ on the entire domain. This concludes the proof.
\end{proof}

Another direct consequence of the representation of~$\T_n$ and~$\mathcal O_n'$ in terms of edge probabilities, together with Proposition~\ref{prop:edge_prob_convergence} is Corollary~\ref{Cor_main_frozen}, which is stated as Corollary~\ref{Cor_main_frozen_2} below.

\begin{corollary}\label{Cor_main_frozen_2}
As~$n \rightarrow \infty$, the perfect t-embedding and the origami map of each frozen region of the uniformly weighted Aztec diamond maps to a point. More precisely, for all~$j+k+n$ odd 
one has
\begin{align*}
(\T_n(j,k),\mathcal{O}_n'(j,k))&=o(1)+
\begin{cases}
(1,1/\sqrt{2}) &\text{ if }\, (j/n, k/n) \in \mathcal A \cap  \{x^2 + y^2 \geq 1/2\} \cap  \{  x > 1/2 \}\\
(i,-1/\sqrt{2}) &\text{ if }\, (j/n, k/n) \in \mathcal A \cap  \{x^2 + y^2 \geq 1/2\} \cap  \{  y > 1/2 \}\\
(-1,1/\sqrt{2}) &\text{ if }\, (j/n, k/n) \in \mathcal A \cap  \{x^2 + y^2 \geq 1/2\} \cap  \{  x < - 1/2 \}\\
(-i,-1/\sqrt{2}) &\text{ if }\, (j/n, k/n) \in \mathcal A \cap  \{x^2 + y^2 \geq 1/2\} \cap  \{  y < -1/2 \}
\end{cases}
\end{align*}
where the~$o(1)$ error is uniform for~$(j/n, k/n)$ in compact subsets as~$n\to\infty$. 
\end{corollary}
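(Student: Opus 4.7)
The plan is to derive Corollary~\ref{Cor_main_frozen_2} as a direct consequence of the edge-probability representation of~$\T_n$ and~$\Or_n'$, namely identities~\eqref{eq:T_f_e_reminder} and~\eqref{eq:O_f_e_reminder}, combined with the Cohn--Elkies--Propp asymptotics recalled in Proposition~\ref{prop:edge_prob_convergence}. First, I would fix a compact subset of, say, the East frozen region $\mathcal{A} \cap \{x^2 + y^2 \geq 1/2\} \cap \{x > 1/2\}$ (and similarly treat the other three frozen regions by the obvious rotational symmetry). Throughout this compact set the limiting formula~\eqref{eq:CEP} gives $p_E(j,k,n)\to 1$ uniformly, and by the symmetry relations $p_N(j,k,n)=p_E(-k,j,n)$, $p_W(j,k,n)=p_E(-j,-k,n)$ and $p_S(j,k,n)=p_E(k,-j,n)$, the rotated points $(\mp k/n, \pm j/n)$ all lie in the frozen region corresponding to $p_E = 0$; hence $p_N, p_W, p_S \to 0$ uniformly on the chosen compact subset.

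Second, I would substitute these uniform limits directly into~\eqref{eq:T_f_e_reminder} to obtain
\begin{equation*}
\T_n(j,k) = 1 + \i\cdot 0 - 0 - \i\cdot 0 + o(1) = 1 + o(1),
\end{equation*}
and into~\eqref{eq:O_f_e_reminder} to obtain
\begin{equation*}
\Or_n'(j,k) = \tfrac{1}{\sqrt 2}(1 - 0 + 0 - 0) + \i\tfrac{1}{\sqrt 2}(-1 + 1 + 0 + 0) + o(1) = \tfrac{1}{\sqrt 2} + o(1),
\end{equation*}
giving the claimed limit $(1, 1/\sqrt 2)$ for the East frozen region. The computations for the North, West and South regions are verbatim the same after permuting the four probabilities $p_E, p_N, p_W, p_S$ according to the symmetry group of the diamond.

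The only genuine point requiring minor care is that Proposition~\ref{prop:edge_prob_convergence} excludes the singular points $(\tfrac12, \pm\tfrac12)$; these are precisely the tangency points of the arctic circle with the boundary of $\mathcal{A}$, so any compact subset of the open frozen region $\mathcal{A}\cap\{x^2+y^2\geq 1/2\}\cap\{x>1/2\}$ stays uniformly away from them, and uniform convergence of the edge probabilities on such compacts is unaffected. There is no substantial obstacle here; the content of the corollary is entirely contained in Theorem~\ref{thm:T_edge_probabilities} and the classical edge-probability asymptotics, and the proof is a short arithmetic verification after substitution.
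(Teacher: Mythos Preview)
Your proof is correct and follows exactly the approach the paper indicates: the corollary is stated there as ``a direct consequence of the representation of~$\T_n$ and~$\mathcal O_n'$ in terms of edge probabilities, together with Proposition~\ref{prop:edge_prob_convergence},'' and you have simply written out that substitution. One minor remark: your symmetry relation for $p_N$ has the rotation going the wrong way (the paper's convention gives $p_N(j,k,n)=p_E(k,-j,n)$ rather than $p_E(-k,j,n)$), but this does not affect the conclusion since in either case the rotated point lands outside $\{x>1/2\}$ and the argument goes through unchanged.
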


We continue by discussing the effect of the above results on the perfect t-embeddings of the tower graphs. 

Recall Theorem~\ref{thm:T_aztec_tower}, which can be restated as
\begin{align*}
\widetilde{\T}_n(j, k) &= \T_{\frac{1}{3} (1 + j + k + 4 n)}(j, k) \\
\widetilde \Or_{n}(j,k) &=  \Or_{\frac{1}{3} (1 + j + k + 4 n)}(j, k)
\end{align*}
for suitable~$(j,k,n)$, where~$\widetilde \T_n$ and~$\widetilde \Or_n$ are the perfect t-embedding and origami map of the tower graph defined in Section~\ref{sec:tower}. We consider, as we did in the case of the Aztec diamond, another version of the origami map,
\begin{equation*}
\widetilde\Or'_n(j,k)\coloneqq\e^{i\tfrac{\pi}{4}}\left(\widetilde\Or_n(j,k) - \tfrac{1+i}{2}\right).
\end{equation*}
Note that 
$\widetilde \Or'_{n}(j,k) =  \Or'_{\frac{1}{3} (1 + j + k + 4 n)}(j, k).$

The convergence of the perfect t-embedding and the origami map of the Aztec diamond together with Theorem~\ref{thm:T_aztec_tower} implies that the pairs~$(\widetilde\T_n, \widetilde\Or_n')$ converges to the same maximal surface as~$(\T_n,\Or_n')$.

\begin{corollary}\label{cor:tower_coord} 
We have the convergence of tower graph origami maps, uniform on compact subsets of~$\diamondsuit$,$$\widetilde\Or_n'(z) \rightarrow \vartheta(z),$$ where the graph of~$\vartheta$ is the maximal surface~$S_{\diamondsuit}$ with boundary contour~$C_{\diamondsuit}$.
\end{corollary}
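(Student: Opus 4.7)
The plan is to follow exactly the same template as the proof of Corollary~\ref{Cor:main_lor_min}, using Theorem~\ref{thm:Aztec_Tower} to translate everything into the Aztec setting, where the corresponding convergence is already established. First I would fix a compact subset $K \subset \diamondsuit$ and $z_0 \in K$, and for each $n$ choose the tower dual vertex $(j, k) = (j_n, k_n)$ such that $|\widetilde\T_n(j,k) - z_0|$ is minimal. By Theorem~\ref{thm:Aztec_Tower},
$$\widetilde\T_n(j, k) = \T_N(j, k), \qquad \widetilde\Or'_n(j, k) = \Or'_N(j, k), \qquad N := \tfrac{1}{3}(1 + j + k + 4n).$$
Since $z_0 \in K$, a compact subset of $\diamondsuit$, the diffeomorphism $z : \tfrac{1}{\sqrt{2}} \mathbb{D} \to \diamondsuit$ together with Proposition~\ref{prop:main_theorem_2} forces $(j/N, k/N) \to z^{-1}(z_0)$ as $n \to \infty$, uniformly in $z_0 \in K$, and in particular $(j/N, k/N)$ remains in a fixed compact subset $\mathcal{K}' \subset \tfrac{1}{\sqrt{2}} \mathbb{D}$. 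Moreover $N \to \infty$ uniformly in $z_0 \in K$ because the formula for $N$ gives $N = \Theta(n)$ with constants bounded over $K$, as visualized in Figure~\ref{fig:tower_slice}.

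Second, I would then run the same chain of estimates as in the proof of Corollary~\ref{Cor:main_lor_min}, namely
$$\widetilde\Or'_n(z_0) = \widetilde\Or'_n(j,k) + o(1) = \Or'_N(j,k) + o(1) = \vartheta(j/N, k/N) + o(1) = \vartheta(z(j/N, k/N)) + o(1) = \vartheta(z_0) + o(1),$$
where the first step uses the $1$-Lipschitz property of the origami map applied to $\widetilde\T_n$ (whose vertex spacing is $o(1)$), the second step uses Theorem~\ref{thm:Aztec_Tower}, the third step uses the Aztec convergence from Proposition~\ref{prop:main_theorem_2} on $\mathcal{K}'$, the fourth step uses that $\vartheta$ on $\diamondsuit$ and $\vartheta$ on $\mathcal{A}$ are related by $\vartheta(z(x,y)) = \vartheta(x,y)$, and the fifth step combines the Lipschitz continuity of $\vartheta$ on $\diamondsuit$ with $\T_N(j,k) = z(j/N,k/N) + o(1) \to z_0$. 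Each error is uniform in $z_0 \in K$, giving the claimed uniform convergence on compact subsets.

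The main technical obstacle is verifying that the coordinate change $(j,k,n) \mapsto (j, k, N(j,k,n))$ sends the preimage under $\widetilde\T_n$ of the compact set $K \subset \diamondsuit$ into a compact subset of the rescaled Aztec liquid region $\tfrac{1}{\sqrt{2}} \mathbb{D}$ whose constants depend only on $K$. The explicit formula $N = \tfrac{1}{3}(1 + j + k + 4n)$ and the geometric picture in Figure~\ref{fig:tower_slice} (where the tower domain is realized as the slice $t = \tfrac{1}{3}(x + y + 4)$ of the rescaled Aztec space-time cone) make this routine: the slice intersected with the cone is interior to the liquid region. Once this geometric fact is checked, the proof is essentially an application of the already-established Aztec convergence.
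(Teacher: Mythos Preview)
Your approach is essentially the same as the paper's: both reduce to the Aztec convergence (Proposition~\ref{prop:main_theorem_2}) via Theorem~\ref{thm:Aztec_Tower}, and both pass from convergence in the source coordinates to convergence on compact subsets of~$\diamondsuit$ by the template of Corollary~\ref{Cor:main_lor_min}. The paper first writes the limit explicitly in rescaled tower coordinates,
\[
\big(\widetilde\T_n(nx,ny),\,\widetilde\Or'_n(nx,ny)\big)\;\longrightarrow\;\Big(z\big(\tfrac{3x}{4+x+y},\tfrac{3y}{4+x+y}\big),\,\vartheta\big(\tfrac{3x}{4+x+y},\tfrac{3y}{4+x+y}\big)\Big),
\]
and then defers the passage to~$\diamondsuit$ to the Aztec argument; you instead run that chain of estimates directly. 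One small correction: your final sentence that ``the slice intersected with the cone is interior to the liquid region'' is not literally true --- the slice meets the full rescaled Aztec domain~$\mathcal A$, frozen regions included. The correct reason that $(j/N,k/N)$ lands in a fixed compact subset of~$\tfrac{1}{\sqrt 2}\mathbb D$ is the one you already gave earlier: $\T_N(j,k)=\widetilde\T_n(j,k)\in K\subset\diamondsuit$, and then the Aztec convergence $\T_N\to z$ together with the diffeomorphism~$z$ pulls~$K$ back into the liquid region.
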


\begin{proof}Note that,
by Proposition~\ref{prop:main_theorem_2}, the sequence~$(\T_n(nx,ny), \Or'_n(nx,ny))$ converges uniformly to~$(z(x,y),\vartheta(x,y) )$ on compact subsets of~$|x|+|y|<1$, as~$n\to\infty$. Then Theorem~\ref{thm:T_aztec_tower} implies that
\begin{equation}\label{eq:limit_tower}
\big(\widetilde\T_n(nx,ny), \widetilde\Or'_n(nx,ny)\big)\to\left(z(\tfrac{3x}{4+x+y},\tfrac{3y}{4+x+y}),\vartheta(\tfrac{3x}{4+x+y},\tfrac{3y}{4+x+y}) \right),
\end{equation}
as~$n\to\infty$, and as~$(x,y)$ range over the tower graph rescaled by~$1/n$, which is defined by
\[\begin{cases}
- 1 \,< \, x + y \, < \, 2, \\
x - 2 y < 2 \quad \text{and} \quad y - 2 x < 2, 
\end{cases}\]
the right hand side of~\eqref{eq:limit_tower} range over~$S_\diamondsuit$. Note that as~$(x, y)$ ranges over this set, the points~$\left(\frac{3x}{4+(x+y)}, \frac{3y}{4+(x+y)}\right)$ ranges over the rescaled Aztec diamond domain~$\mathcal A$. 

Since~$\widetilde \T_n$ and~$\widetilde \Or_n'$ converge uniformly on compact subsets of the tower graph, it follows that~$\widetilde \Or_n'$ converges on compact subsets of~$\diamondsuit$. See the proof of Theorem~\ref{thm:main_thm}, where we give this argument for the Aztec diamond, for details.
\end{proof}

We have observed above that the limiting maximal surface for perfect t-embeddings of tower graphs is \emph{the same} as the one for the Aztec diamond,  see Figure~\ref{fig:surface_both}.
Thus, if we equip the tower graph's liquid region with the metric of~$S_{\diamondsuit}$, then its conformal parameterization should be equal to that of the Aztec diamond up to the change of coordinates described in the proof of Corollary~\ref{cor:tower_coord}, which identifies the two liquid regions.

In addition, we know from \cite{Ch-R} that the conformal parameterization of the liquid region obtained from the limiting maximal surface for the Aztec diamond agrees with the Kenyon--Okounkov conformal structure. We now check the following:

\begin{claim}\label{claim:same_structure}
The change of coordinates of Corollary~\ref{cor:tower_coord}
$$(x, y) \mapsto \left(\frac{3x}{4+(x+y)}, \frac{3y}{4+(x+y)}\right)$$
diffeomorphically maps the liquid region of the tower graph to that of the Aztec diamond, and transforms the conformal structure describing the Gaussian fluctuations on the Aztec diamond into the conformal structure describing the Gaussian fluctuations on the tower graph.
\end{claim}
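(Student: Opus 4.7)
The plan is to exploit the fact, established earlier in this section, that the limiting maximal surface $S_\diamondsuit$ is the same for the Aztec diamond and the tower graph, and that the two rescaled liquid regions parameterize $S_\diamondsuit$ via maps related precisely by $\Phi(x,y) := (\tfrac{3x}{4+x+y}, \tfrac{3y}{4+x+y})$. First I would identify both liquid regions explicitly: for the Aztec diamond the rescaled liquid region is $\mathcal{A}_{\mathrm{liq}} = \{x^2+y^2 < 1/2\}$, and Corollary~\ref{Cor_main_frozen_2} combined with Theorem~\ref{thm:T_aztec_tower} identifies the four frozen regions of the rescaled tower graph as exactly the $\Phi$-preimages of the four Aztec frozen regions, so the tower liquid region is $T_{\mathrm{liq}} = \Phi^{-1}(\mathcal{A}_{\mathrm{liq}})$. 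Inverting $\Phi$ yields $\Phi^{-1}(u,v) = (\tfrac{4u}{3-u-v}, \tfrac{4v}{3-u-v})$, which is smooth on a neighborhood of $\overline{\mathcal{A}_{\mathrm{liq}}} \subset \{u+v<1\}$; together with a short Jacobian computation showing $\det D\Phi$ has constant sign on $T_{\mathrm{liq}}$, this gives the diffeomorphism claim.

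For the conformal statement, I would use that $(x,y) \mapsto (z(x,y), \vartheta(x,y))$ is by definition a conformal parameterization of $S_\diamondsuit$ in the conformal structure induced by the pseudo-Riemannian embedding in $\mathbb{R}^{2,1}$, and it is proved in \cite{Ch-R} that this surface-induced conformal structure on $\mathcal{A}_{\mathrm{liq}}$ coincides with the Kenyon--Okounkov conformal structure governing the Gaussian fluctuations of the uniform Aztec diamond. By Corollary~\ref{cor:tower_coord} the tower graph parameterizes the \emph{same} surface $S_\diamondsuit$ via $(x,y) \mapsto (z(\Phi(x,y)), \vartheta(\Phi(x,y)))$, so the conformal structure on $T_{\mathrm{liq}}$ induced by $S_\diamondsuit$ is, tautologically, the pullback of the Kenyon--Okounkov structure under $\Phi$. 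Hence $\Phi$ transforms the Aztec conformal structure into the conformal structure on $T_{\mathrm{liq}}$ induced by $S_\diamondsuit$.

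What remains, and what I expect to be the main obstacle, is to verify that this surface-induced conformal structure on the tower liquid region actually coincides with the conformal structure of the Gaussian fluctuations on the tower graph computed in \cite{BoutillierLiSHL}. My plan is to rephrase both structures in terms of the complex slope $\xi$ solving the appropriate complex Burgers equation. On the Aztec side, the slope can be read off from the asymptotics of the edge probabilities $p_E, p_N, p_W, p_S$ as recorded in \eqref{eq:CEP}; on the tower side, \cite{BoutillierLiSHL} provides an analogous description. Using Theorem~\ref{thm:T_aztec_tower} together with the identity $\mathcal{T}_n(j,k) = p_E + \i p_N - p_W - \i p_S$ from Theorem~\ref{thm:T_edge_probabilities}, one checks that the tower limiting edge densities are the pullback under $\Phi$ of the Aztec limiting edge densities, so the two complex slopes match under $\Phi$, and the two conformal parameterizations must therefore agree up to a biholomorphism of the disk. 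This matching reduces the claim to a direct but routine comparison of explicit integral formulas, and completes the proof.
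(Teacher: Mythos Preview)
Your reduction is sound up to the point you yourself flag as ``the main obstacle'': you correctly observe that the surface-induced conformal structure on the tower liquid region is, by construction, the $\Phi$-pullback of the Aztec Kenyon--Okounkov structure, so everything hinges on identifying that surface-induced structure with the fluctuation structure computed in \cite{BoutillierLiSHL}. But your proposed resolution of this step does not work with the tools in the paper. Theorem~\ref{thm:T_edge_probabilities} expresses the \emph{Aztec} t-embedding in terms of \emph{Aztec} edge probabilities, and Theorem~\ref{thm:Aztec_Tower} relates the two t-embeddings; neither gives you access to tower-graph edge probabilities, and no analogue of Theorem~\ref{thm:T_edge_probabilities} is proved for the tower graph. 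So the sentence ``one checks that the tower limiting edge densities are the pullback under $\Phi$ of the Aztec limiting edge densities'' is not something you can actually carry out here, and the argument stalls at exactly the point you predicted would be difficult.

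The paper sidesteps the entire conceptual reduction and proceeds by direct computation: it writes down the explicit uniformizer $\xi_{\text{aztec}}(x,y)$ for the Aztec liquid region and the explicit uniformizer $\xi_{\text{tower}}(x,y)$ extracted from \cite{BoutillierLiSHL}, and then simply verifies the algebraic identity $\xi_{\text{aztec}}\!\left(\tfrac{3x}{4+x+y},\tfrac{3y}{4+x+y}\right)=\xi_{\text{tower}}(x,y)$. The diffeomorphism statement then drops out for free, since both liquid regions are characterized by $\im \xi>0$. Your approach is more structural and would, if completed, explain \emph{why} the identity holds; the paper's approach is a one-line check that it \emph{does} hold. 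To salvage your route you would need an independent argument that the Kenyon--Okounkov structure for the tower graph agrees with the $S_\diamondsuit$-induced structure, which is essentially the content of the claim itself; the explicit formula from \cite{BoutillierLiSHL} is the missing ingredient, and once you have it the direct verification is the shortest path.
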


 Due to~\cite{BoutillierLiSHL}, the conformal structure describing the height fluctuations on the tower graph, after switching the coordinate system to~$(x, y)$ coordinates, is given by 
\begin{equation}\label{eqn:tower_cs}
\xi_{\text{tower}}(x, y) =
 \frac{-3 x - 3 y + 
 \sqrt{ 18 x^2 + 18 y^2 - (4 + x + y)^2}}{4 + 4 y - 2 x},
 \end{equation}
 which maps the liquid region of the tower graph to the upper half plane.
   In more detail, to obtain this formula one should solve \cite[Equation (4.3)]{BoutillierLiSHL}
   with~$\mathbf{m}_{\omega}$ equal to the uniform measure on~$[0,1]$, as this is the boundary condition corresponding to tower graphs. Then changing from~$(\chi, \kappa)$ coordinates to~$(x,y)$ coordinates gives the formula for~$\xi_{\text{tower}}$ written above but with~$x$ and~$y$ swapped; however, observe that the limiting height function fluctuations are invariant under this swap because of the invariance of the uniform dimer model on the tower graph under the reflection about~$y = x$. The complex coordinate \eqref{eqn:tower_cs} also can be seen to agree with the Kenyon--Okounkov conformal structure by choosing a fundamental domain for the tower graph and then solving the complex Burgers equation (see \cite{OkounkovKenyon2007Limit} for details on the complex Burgers equation, and see \cite{tower_localstat} for the solution via characteristics in this particular case).

Next, it is known that the conformal structure for the limiting Gaussian free field of the Aztec diamond is induced by the following diffeomorphism of the liquid region to the upper half plane
\begin{equation}\label{eq:conformal_structure_aztec}
\xi_{\text{aztec}}(x, y) =\frac{x + y - \sqrt{-1 + 2 x^2 + 2 y^2}}{-1 + x - y},
\end{equation}
see \cite{CKBAztec, bufetov2016fluctuations}. Note that if we map \eqref{eq:conformal_structure_aztec} to the unit disc, under the conformal map
\begin{equation}
z\mapsto \frac{1-\i}{\sqrt{2}}\frac{z-\i}{z+\i},
\end{equation}
we obtain the conformal parametrization of the surface~$S_\diamondsuit$ given in \cite{Ch-R}. We can directly check that
   \begin{align}\label{eqn:coord_change}
  \xi_{\text{aztec}}\left(\frac{3x}{4+(x+y)}, \frac{3y}{4+(x+y)}\right) &=  \frac{-3 x - 3 y + \sqrt{
 18 x^2 + 18 y^2 - (4 + x + y)^2}}{4 + 4 y - 2 x} \\
 &= \xi_{\text{tower}}(x, y). \notag
   \end{align}
  The calculation above, along with the fact that the liquid region in either domain is the region where~$\xi_\text{aztec/tower}$ satisfies~$\text{Im}(\xi_\text{aztec/tower}) > 0$, verifies Claim~\ref{claim:same_structure}.

  \begin{remark} The claim has two consequences. First, it implies that in the case of tower graphs, the Kenyon--Okounkov conformal structure and the one obtained from the maximal surface agree; they are both induced by~$\xi_{\text{aztec}}\left(\frac{3x}{4+(x+y)}, \frac{3y}{4+(x+y)}\right)$ (c.f. Theorem~\ref{thm:Aztec_assumptions} and the proof of Corollary~\ref{cor:tower_coord}). Second, under the change of coordinates in \eqref{eqn:coord_change}, the complex structures on tower graphs and Aztec diamonds, and thus their limiting height fluctuation fields, are identified. The perfect t-embedding machinery, together with Theorem~\ref{thm:T_aztec_tower}, is what reveals this non-trivial relationship.
  \end{remark}

\subsection{$\LipKd$ and~$\ExpFat$ under structural rigidity of~$\mathcal{T}_n$}\label{sec:lip_expfat}
In this section we assume that~$\T_n$ is a sequence of perfect t-embeddings of weighted bipartite planar 
graphs~$\G_n, n\to\infty$ (not necessary Aztec diamonds or tower graphs). Assume that~$\operatorname{deg}(v)<D$ for each vertex~$v$ of the dual graph~$\G^*_n$, where~$D>0$ is a constant that does not depend on~$n$.   In addition, we assume that the sequence of discrete domains~$\Omega_{\T_n}$ converges to a simply-connected convex bounded domain~$\Omega$ (in the Hausdorff sense). And as before, we denote the origami map of~$\T_n$ by~$\Or_n$. We also fix a sequence of positive reals~$\mu_n \rightarrow 0$.

\begin{assumption}[Rigidity condition]\label{assumption:bounded} 
Given a compact set~$\mathcal K\subset\Omega$, there exist positive constants~$N_\mathcal{K}, C_\mathcal{K}$ and~$\varepsilon_\mathcal{K}$ which only depend on~$\mathcal{K}$,
such that for all pairs of adjacent vertices~$v, v'$ of the dual graph~$\G_n^*$
such that both~$\mathcal{T}_n(v)$ and~$\mathcal{T}_n(v')$ are contained in~$\mathcal{K}$ we have
\[\frac{\mu_n}{C_\mathcal{K}}  \leq |\mathcal{T}_n(v') - \mathcal{T}_n(v) | \leq \mu_n C_\mathcal{K}\]
for all~$n>N_\mathcal{K}$. 
In addition the angles of the faces of the perfect t-embedding inside~$\mathcal{K}$ are contained in~$(\varepsilon_\mathcal{K}, \pi-\varepsilon_\mathcal{K})$ for all~$n >N_\mathcal{K}$.
\end{assumption}

\begin{proposition}\label{prop:lip}
Suppose the above assumption holds for a compact set~$\mathcal{K}\subset\Omega$. Then for all~$n>N_\mathcal{K}$ there exist constants~$\kappa=\kappa(\mathcal{K}) \in (0,1)$ and~$C'=C'_\mathcal{K} > 0$ such that for~$\delta = C' \mu_n$ the condition~$\LipKd$ holds for the perfect t-embedding~$\T_n$, i.e. 
\begin{equation*}\label{eq:lip_5_1}
|\mathcal{O}_n(z) - \mathcal{O}_n(z')| < \kappa |z - z'|
\end{equation*}
for each~$z, z' \in \mathcal{K}$ with~$|z - z'| > \delta$.
\end{proposition}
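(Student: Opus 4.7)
The plan is to build on the elementary observation that $\Or_n$ is always 1-Lipschitz, which follows directly from~\eqref{eq:dO} since $d\Or_n$ has modulus $|dz|$ on every face of $\T_n$. The content of the proposition is therefore to upgrade this global 1-Lipschitz bound to a strict contraction with a uniform constant $\kappa<1$ on scales larger than $\delta = C'\mu_n$.

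First I would fix $z,z'\in\mathcal K$ with $L:=|z-z'|\geq \delta$, parametrize the straight segment $\gamma=[z,z']$, and decompose it into sub-segments $\gamma_i$ given by the intersections of $\gamma$ with the successive faces $F_1,\dots,F_k$ of $\T_n$ it traverses. Since $\Or_n$ acts as a rigid motion on each face (a rotation by $\eta_w^2$ on a white face, a reflection by $\bar\eta_b^2$ on a black face), one obtains
\[
\Or_n(z')-\Or_n(z)=\sum_{i=1}^k r_i\ell_i,\qquad r_i\in S^1,\quad \ell_i=|\gamma_i|,\quad \sum_i\ell_i=L,
\]
and the elementary identity
\[
L^2-|\Or_n(z')-\Or_n(z)|^2 \;=\; 2\sum_{i<j}\ell_i\ell_j\bigl(1-\cos\alpha_{ij}\bigr),\qquad \alpha_{ij}:=\arg(r_i\bar r_j),
\]
reduces the proposition to producing a pair of indices with $\ell_i\ell_j\gtrsim \mu_n^2$ and $|\alpha_{ij}|$ bounded below by a positive constant depending only on $\mathcal{K}$.

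Both inputs come from the rigidity assumption. The upper bound $|\T_n(v)-\T_n(v')|\leq C_{\mathcal K}\mu_n$ for adjacent dual vertices forces $k\gtrsim L/(C_\mathcal{K}\mu_n)\gtrsim C'/C_\mathcal K$, and combined with the matching lower bound and the bounded-angle condition, a chord-length argument produces a positive proportion of $\ell_i$'s of order $\mu_n$. For the angle variation, a short computation using the origami square-root identity~\eqref{eq:origami_sq_roor} shows that, with $\theta:=\arg(z'-z)$ and $\e^{\i\beta_{e_i}}$ the direction of the edge shared by $F_i,F_{i+1}$, one has $r_{i+1}/r_i=\e^{\pm 2\i(\beta_{e_i}-\theta)}$ with sign alternating in the parity of $i$; in particular the $\theta$-dependence cancels for $j=i+2$, giving $r_{i+2}/r_i=\e^{2\i(\beta_{e_i}-\beta_{e_{i+1}})}$, a quantity depending only on the difference of two consecutive crossed edge directions. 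The bounded-angle part of Assumption~\ref{assumption:bounded} forces edges meeting at a vertex of $\T_n$ to point in directions separated by at least $\varepsilon_\mathcal{K}$, which produces at least one pair with $|\alpha_{i,i+2}|\geq 2\varepsilon_\mathcal{K}$.

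Combining these two inputs yields $L^2 - |\Or_n(z')-\Or_n(z)|^2\geq c(\mathcal K)\mu_n^2$, whence
\[
|\Or_n(z')-\Or_n(z)|^2\leq L^2\Bigl(1-\frac{c(\mathcal K)}{(C')^2}\Bigr),
\]
and choosing $C'=C'_\mathcal{K}$ large enough makes $\kappa:=\sqrt{1-c(\mathcal K)/(C')^2}\in (0,1)$. I expect the main obstacle to be the angular-variation step: one must rule out the degenerate scenario in which all consecutive pairs of crossed edges $(e_i,e_{i+1})$ are essentially parallel to each other, for example when $\gamma$ slides through a long strip of faces having parallel opposite sides, each of which is then nearly aligned with $\gamma$. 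Excluding this uniformly in the orientation of $\gamma$ relies on combining the bounded-angle condition (which forces the two edges of a single face crossed by $\gamma$ to either share a vertex with a large angle, or be opposite edges constrained by the directions of the remaining two sides) with a careful geometric case analysis along the sequence $F_1,\dots,F_k$.
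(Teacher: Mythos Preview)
Your quadratic identity is a clean starting point and the computation $r_{i+2}/r_i=\e^{2\i(\beta_{e_i}-\beta_{e_{i+1}})}$ is correct, but there is a genuine gap at the final step that is \emph{not} the one you flag. From a single good pair you obtain $L^2-|\Or_n(z')-\Or_n(z)|^2\geq c(\mathcal K)\mu_n^2$, and then assert $|\Or_n(z')-\Or_n(z)|^2\leq L^2\bigl(1-c(\mathcal K)/(C')^2\bigr)$. That implication would require $\mu_n^2\geq L^2/(C')^2$, i.e.\ $L\leq C'\mu_n=\delta$, which is the \emph{opposite} of the hypothesis $L\geq\delta$. A savings of order $\mu_n^2$ is negligible when $L$ is of order~$1$; for a uniform $\kappa<1$ you need savings of order $L^2$, hence roughly $(L/\mu_n)^2$ contributing pairs in your double sum, not one. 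A clean repair is to first prove the bound only for $L\in[\delta,2\delta]$ (where your estimate is adequate) and then chop a longer segment into collinear pieces of length in $[\delta,2\delta]$, using $|\Or_n(z')-\Or_n(z)|\leq\sum_i|\Or_n(z_{i+1})-\Or_n(z_i)|\leq\kappa\sum_i|z_{i+1}-z_i|=\kappa L$. But then you must exhibit a good pair on \emph{every} such sub-segment, which forces exactly the case analysis you anticipate in your last paragraph (the crossed edges $e_i,e_{i+1}$ need not share a vertex, so the bounded-angle assumption does not directly bound $|\beta_{e_i}-\beta_{e_{i+1}}|$).

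The paper avoids the scaling issue by working linearly rather than quadratically. It first discards ``short'' sub-segments (length $\lesssim\mu_n/C_{\mathrm{long}}$), showing via the degree bound $D$ that at most $D$ of them can occur consecutively, so they contribute a fixed proportion of the total length. On the remaining ``long'' sub-segments it proves, by a three-case analysis on the colors and adjacency of the faces carrying $\ell_{i_{j-1}},\ell_{i_j},\ell_{i_{j+1}}$, the triple bound $\bigl|\sum_{s=-1}^{1}\int_{\ell_{i_{j+s}}} d\Or\bigr|<\kappa'\sum_{s=-1}^{1}|\ell_{i_{j+s}}|$ for every consecutive triple. Partitioning the long sub-segments into disjoint triples and applying the triangle inequality then gives the contraction on the whole long part with the same $\kappa'$, uniformly in $L$. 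Your $r_{i+2}/r_i$ formula is essentially the engine behind their Cases~1 and~2; their Case~3 is precisely the ``parallel crossed edges'' degeneracy you identify, handled by the observation that it cannot occur for two consecutive triples because of the angle bound.
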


\begin{figure}
\centering
\includegraphics[scale=0.8]{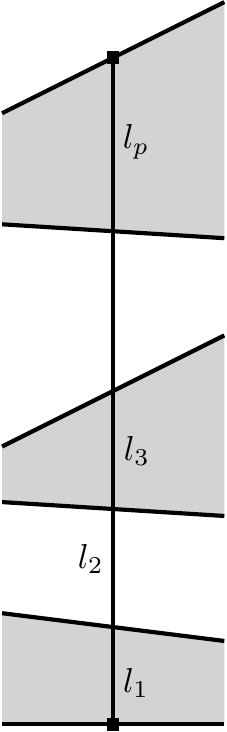}
\caption{The line segment~$\ell$ divided to smaller segments~$\ell_i$ by edges of the t-embedding in the proof of Proposition~\ref{prop:lip}.}
\label{fig:l_i}
\end{figure}

\begin{proof}
We want to show that we can choose~$C$ and~$\kappa$ such that for any line segment~$\ell$ connecting two points~$z, z' \in \mathcal{K}$ with~$|z - z'| > C \mu_n$, we have 
\begin{align}\label{eq:Lip_l}
|\mathcal{O}_n(z) - \mathcal{O}_n(z')| <  \kappa\cdot\text{length}(\ell).
\end{align}
Note that it is enough to show~\eqref{eq:Lip_l} for all segments~$\ell$ contained in~$\mathcal{K}$ that do not contain any vertices of~$\mathcal{T}_n$, and that they start and end on some edges of~$\mathcal{T}_n$. 
Indeed, recall that~$d\Or_n$ is linear on each face form and~$\mathcal{O}_n(z) - \mathcal{O}_n(z')=\int_{\ell} \; d \mathcal{O}_n,$ therefore we can get~\eqref{eq:Lip_l} for any other~$\ell$ by linearity.

For a fixed~$n > N_\mathcal{K}$ consider a line segment~$\ell$ contained in~$\mathcal{K}$ which does not contain any vertices of~$\mathcal{T}_n$, and that it starts and ends on an edge of~$\mathcal{T}_n$. 
Then, if~$f_1,f_2, \dots, f_p$ are the faces crossed by~$\ell$ (in consecutive order), we define~$\ell_i$ to be the segment of~$\ell$ contained in~$f_i$, see Figure~\ref{fig:l_i}. 

\smallskip

\textbf{Claim 1}: There exists~$C_{\operatorname{long}} > 0$ (independent of the line segment~$\ell$) such that there are no more than~$D$ segments~$\ell_i$ with~$\text{length}(\ell_i) \leq \mu_n/C_{\operatorname{long}}$ in a row. 

\begin{proof}[Proof of Claim 1]
Choose~$C_{\operatorname{long}}\gg C_\mathcal{K}$. If~$\text{length}(\ell_i) < \mu_n/C_{\operatorname{long}}$, there must be a vertex~$v_i$ of the t-embedding such that~$\text{dist}(\ell_i, v_i) < c \mu_n/C_{\operatorname{long}}$, for some constant~$c = c(\varepsilon_\mathcal{K})\geq 1$ since all faces of the t-embedding are convex polygons with sides of length at least~$\frac{1}{C_\mathcal{K}} \mu_n$ and angles bounded from~$0$ and~$\pi$. 

As a result, if~$\ell_i, \ell_{i+1}, \dots, \ell_{i+k}$ all have length less than~$\mu_n/C_{\operatorname{long}}$, then we get a sequence of vertices~$v_m$ with~$|v_m - v_{m+1}| < 4 c \mu_n/C_{\operatorname{long}}$. Choose~$C_{\operatorname{long}}$ large enough so that this is impossible unless all of the~$v_m$ are equal. So there is a vertex~$v = v_i = v_{i+1} = \cdots = v_{i + k}$ which is a vertex of each of the faces~$f_i, f_{i+1}, \dots, f_{i+k}$. Thus~$k+1$ must be bounded by the degree of the graph, which is~$D$, and thus we have shown the claim.
\end{proof}

Using the~$C_{\operatorname{long}}$ from above, we call a segment~$\ell_i$ \emph{long} if~$\text{length}(\ell_i) > \mu_n/ C_{\operatorname{long}}$. 


\smallskip

\textbf{Claim 2}:  It suffices to show that there are constants~$C', \kappa' > 0$ (independent of~$\ell$) such that for all~$z,z'\in \mathcal{K}$ with~$|z - z'| > C' \mu_n$ one has
\begin{equation}\label{eqn:longLip}
\Big| \sum_{\ell_i  \text{ long}} \int_{\ell_i} \; d \mathcal{O} \Big| 
<
 \kappa' \Big| \sum_{\ell_i  \text{ long}} \int_{\ell_i} \; d z \Big|.
\end{equation}

\begin{proof}[Proof of Claim 2]
Set 
\[L_\mathcal{O} := \Big| \sum_{\ell_i  \text{ long}} \int_{\ell_i} \; d \mathcal{O} \Big|, 
\quad
S_{\mathcal{O}} := \Big| \sum_{\ell_i  \text{ not long}} \int_{\ell_i} \; d \mathcal{O} \Big|,\] 
recall that~$d\T$ is simply~$d z$ and similarly define 
\[L_\mathcal{T} := \Big| \sum_{\ell_i  \text{ long}} \int_{\ell_i} \; d z \Big|,
\quad
 S_\mathcal{T} := \Big| \sum_{\ell_i  \text{ not long}} \int_{\ell_i} \; d z \Big|.\] 
 Note that~$| \sum_{\ell_i } \int_{\ell_i} \; d z |= \sum_{\ell_i } |\int_{\ell_i} \; d z|$.
 
 Assuming we have~$C', \kappa'$ such that equation \eqref{eqn:longLip} holds,  we are looking for constants~$C, \kappa$ such that for any segment~$\ell\in \mathcal{K}$ with~$\text{length}(\ell) > C \mu_n$
\[L_\mathcal{O} + S_{\mathcal{O}} < \kappa (L_{\mathcal{T}} + S_{\mathcal{T}}).\]

Choose~$C$ large enough so that~$|z - z'| > C \mu_n$ implies that we must have at least one long segment in~$\ell$. The statement of Claim~$1$ implies that there exists~$c=c(\mathcal{K})>0$ (independent of~$\ell$) such that 
\[ S_{\mathcal{T}} \,<\, c L_{\mathcal{T}}.\] Therefore since~$S_{\mathcal{O}} \leq S_{\mathcal{T}}$, we get
\begin{align*}
\frac{L_\mathcal{O} + S_{\mathcal{O}}}{L_{\mathcal{T}} + S_{\mathcal{T}}} \,<\, 
\frac{L_\mathcal{O} + c L_{\mathcal{T}} }{(c+1) L_{\mathcal{T}}} 
\,<\,  \left( \frac{1}{c+1} \kappa' + \frac{c}{c+1} \right),
\end{align*}
so if we set~$\kappa = \frac{1}{c+1} \kappa' + \frac{c}{c+1}  < 1$, then the Lipschitz condition holds with~$C, \kappa$.
\end{proof}

\textbf{Claim 3}: There exists~$C', \kappa'$, independent of~$\ell$, such that \eqref{eqn:longLip} holds.

\begin{proof}[Proof of Claim 3]
By choosing~$C$ large enough we can make~$\ell$ to contain at least three long segments. 
It suffices to show that, if~$\text{length}(\ell) \geq C \mu_n$ for appropriately chosen~$C$, there exists~$\kappa'$ such that for any three consecutive long edges~$\ell_{i_{j-1}}, \ell_{i_{j}}, \ell_{i_{j+1}}$, the following holds
\begin{equation}\label{eqn:3term}
\Big| \sum_{s = -1}^1 \int_{\ell_{i_{j+s}}} \; d \mathcal{O} \Big| < \kappa' \sum_{s = -1}^1 \text{length}(\ell_{i_{j+s}}). 
\end{equation}
Assume that~$\ell_{i_{j}}$ is contained in a white face~$w$. WLOG we also assume that~$\ell$ is a vertical segment and~$\eta_w=1.$

\smallskip

\begin{figure}
\centering
\includegraphics[scale=0.8]{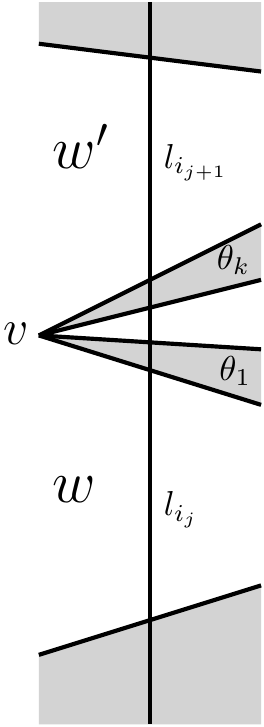}
$\qquad\qquad\qquad$
\includegraphics[scale=0.8]{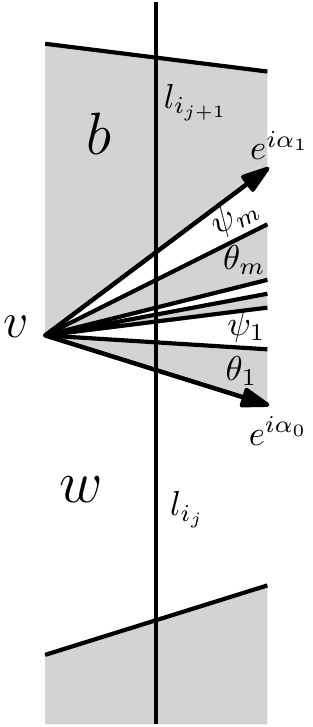}
$\qquad\qquad\qquad$
\includegraphics[scale=0.8]{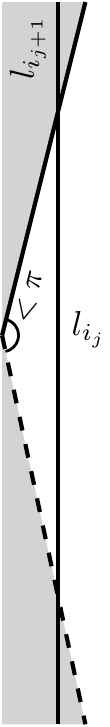}
\caption{Schematic representation of cases of location of consecutive long edges~$\ell_{i_{j}}$ and~$\ell_{i_{j+1}}$. {\bf Left:} Case~$1$. {\bf Middle:} Case~$2$. {\bf Right:} Case~$3$: the line shown in dashed is not possible.}
\label{fig:l_i_cases}
\end{figure}

{\bf Case 1:} Assume first, that the next long segment~$\ell_{i_{j+1}}$ is contained in a white face~$w'$. Then 
\begin{align*}
\Big|\int_{\ell_{i_{j}}} \; d \mathcal{O} + \int_{\ell_{i_{j+1}}} \; d \mathcal{O}\Big| 
&= \Big| \int_{\ell_{i_{j}}} dz + \int_{\ell_{i_{j+1}}}  {\eta_{w'}}^2 d{z}\Big| \\
&= \left | i |\ell_{i}| + i {\eta_{w'}}^2 |{\ell_{i+1}}|  \right| .
\end{align*}
Note that two white faces are not adjacent, therefore there are some not long segments between~$\ell_{i_{j}}$ and~$\ell_{i_{j+1}}$, see Figure~\ref{fig:l_i_cases}. Due to the proof of Claim~$1$ all these short segments belong to faces adjacent to the same vertex~$v$. Let~$\theta_1, \ldots, \theta_k$ be angles of the black faces intersecting~$\ell$ adjacent to the vertex~$v$ (as shown in Figure~\ref{fig:l_i_cases}). Then
\[\eta_{w'}=\exp (-2i(\theta_1 + \ldots + \theta_k)).\]
Recall that all angles~$\theta_s$ are bounded from~$0$ by Assumption~\ref{assumption:bounded}, therefore
\[0 < \varepsilon_K \leq \theta_1 + \ldots + \theta_k \leq \pi-\varepsilon_K <\pi\]
since the sum of  angles of all black faces around~$v$ is~$\pi$. This implies that~$|1-\eta^2_{w'}|$ is bounded from zero, and therefore there exists~$\kappa'_{0}$ such that 
\[\Big|\int_{\ell_{i_{j}}} \; d \mathcal{O} + \int_{\ell_{i_{j+1}}} \; d \mathcal{O}\Big| <
\kappa'_{0}
\Big|\int_{\ell_{i_{j}}} \; d z + \int_{\ell_{i_{j+1}}} \; d z\Big|.
\] 
Hence, since for long~$\ell_j$ we have~$\mu_n \lesssim \text{length}(\ell_{j}) \lesssim \mu_n$, there exists~$\kappa'$ such that~\eqref{eqn:3term} holds.

\smallskip

Now assume that the next long segment~$\ell_{i_{j+1}}$ is contained in a black face~$b$. There are two cases when that happens.

\smallskip

{\bf Case~$2$.} Let us first consider the case when the faces~$w$ and~$b$ are not adjacent. This case is similar to the Case~$1$. Indeed, note that in this case 
\begin{align*}
\Big|\int_{\ell_{i_{j}}} \; d \mathcal{O} + \int_{\ell_{i_{j+1}}} \; d \mathcal{O}\Big| 
&= \Big| \int_{\ell_{i_{j}}} dz + \int_{\ell_{i_{j+1}}}  \overline\eta_{b}^2 d\overline{z}\Big| \\
&= \left | i |\ell_{i}| - i \overline\eta_{b}^2 |{\ell_{i+1}}|  \right| .
\end{align*}
Therefore, similar to the first case, it suffices to show that~$|1+\overline\eta^2_{b}|$ is bounded from zero.

Let~$v, \theta_1, \ldots, \theta_m$ be as in Case~$1$. Denote by~$\psi_1, \ldots, \psi_m$ the angles of the white faces intersecting~$\ell$ adjacent to the vertex~$v$. Let also~$\e^{i\alpha_0}$ and~$\e^{i\alpha_1}$ be the directions of the edges of the t-embedding adjacent to~$v$ and the faces~$w$ and~$b$ correspondingly, see Figure~\ref{fig:l_i_cases}. Then we have
\[\overline{\eta}_b^2 = \exp(2 \i (\alpha_1 - \theta_1 - \cdots - \theta_m)).\]
Note that 
\begin{align*}
\frac{\pi}{2} - \sum_{j=1}^m \theta_j  
\,\,\geq \,\,
\alpha_1 -   \sum_{j=1}^m \theta_j  
= \alpha_0 +   \sum_{j=1}^m \psi_j  
\,\, \geq \,\,
-\frac{\pi}{2} + \sum_{j=1}^m \psi_j,
\end{align*}
therefore~$|1+\overline\eta^2_{b}|$ is bounded from zero.

\smallskip

{\bf{Case~$3$:}} Finally, assume that the faces~$w$ and~$b$ are adjacent. Then it can happen that the edge of the t-embedding between~$w$ and~$b$ is almost parallel to~$\ell$. This would imply that 
\[\Big|\int_{\ell_{i_{j}}} \; d \mathcal{O} + \int_{\ell_{i_{j+1}}} \; d \mathcal{O}\Big| 	\approx
\Big|\int_{\ell_{i_{j}}} \; d z + \int_{\ell_{i_{j+1}}} \; d z\Big|.
\] Recall, that we want to show that~\eqref{eqn:3term} holds for the sum along three consecutive long edges. Therefore, it suffices to show that the `bad scenario' of Case~$3$ cannot happen twice in a row. Note that at least one of these edges has to be away from the direction of the segment~$\ell$ since all angles of the t-embedding are bounded from~$0$ and~$\pi$.
This finishes the proof of Claim~$3$.
\end{proof}

The second and third claim together imply the proposition, so we are done.
\end{proof}

\begin{remark}
The proof above does not rely on the fact that the t-embeddings are perfect t-embeddings. In other words, Proposition~\ref{prop:lip} holds for any sequence of t-embeddings satisfying the rigidity condition~\ref{assumption:bounded}. 
\end{remark}

\begin{corollary}\label{cor:lip_under_rigiditi}
Suppose Assumption~\ref{assumption:bounded} holds for all compact sets~$\mathcal{K}\subset \Omega$ with~$\mu_n=\tfrac{1}{n}$. Then~$\T_{n}$ satisfies Assumption~\ref{assumption:Lip}, with the sequence~$\delta=\delta_{n}=\tfrac{\log n}{n}$.
\end{corollary}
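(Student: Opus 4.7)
The plan is to apply Proposition~\ref{prop:lip} essentially as a black box, and then observe that the $\LipKd$ condition is monotone in $\delta$ in the convenient direction. First I would fix an arbitrary compact $\mathcal{K}\subset\Omega$. By hypothesis, Assumption~\ref{assumption:bounded} holds on $\mathcal{K}$ with the scale $\mu_n=1/n$, so Proposition~\ref{prop:lip} yields constants $\kappa=\kappa(\mathcal{K})\in(0,1)$, $C'=C'_{\mathcal{K}}>0$, and an index $N_{\mathcal{K}}$ such that for every $n>N_{\mathcal{K}}$ the perfect t-embedding $\mathcal{T}_n$ satisfies $\mathrm{Lip}(\kappa,C'/n)$ on $\mathcal{K}$.

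Next I would use the elementary monotonicity of the $\LipKd$ property in $\delta$: the condition is only imposed on pairs $z,z'\in \mathcal{K}$ with $|z'-z|\geq\delta$, so enlarging $\delta$ shrinks the set of constraints and gives a weaker condition. Since $\delta_n=\log n/n\geq C'/n$ for every $n\geq e^{C'}$, we conclude that $\mathcal{T}_n$ satisfies $\mathrm{Lip}(\kappa,\delta_n)$ on $\mathcal{K}$ for all sufficiently large $n$, with the same $\kappa\in(0,1)$. Since $\mathcal{K}$ was arbitrary, this is exactly Assumption~\ref{assumption:Lip} with $\delta_n=\log n/n$.

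There is no substantive obstacle: the entire analytical content is already contained in Proposition~\ref{prop:lip}, and this corollary is essentially a bookkeeping statement. The only point worth emphasising is why the particular scale $\delta_n=\log n/n$ appears at all, since the argument above works verbatim for any sequence $\delta_n$ eventually dominating $C'_{\mathcal{K}}/n$ for every compact $\mathcal{K}$. The logarithmic factor is reserved for the companion Assumption~\ref{assumption:Exp_fat}, where one needs $\exp(-\delta'\delta_n^{-1})$ to decay polynomially (and hence beat the lattice spacing), and both assumptions must be checked with the \emph{same} sequence $\delta_n$ in order to invoke Theorem~\ref{thm:CLR2thm}. Thus the choice $\delta_n=\log n/n$ is dictated by the tightest of the two conditions, not by the Lipschitz condition itself.
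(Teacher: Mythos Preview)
Your argument is correct and is exactly what the paper intends: the corollary is stated without proof in the paper, being an immediate consequence of Proposition~\ref{prop:lip} via the monotonicity of $\LipKd$ in $\delta$ that you spell out. Your additional remark on why the factor $\log n$ is chosen (to synchronize with Assumption~\ref{assumption:Exp_fat}) is also accurate and matches the later use in Proposition~\ref{prop:exp_fat_under_rigiditi}.
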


Next, we prove that under Assumption~\ref{assumption:bounded}, the~$\ExpFat$ condition holds as well.

\begin{proposition}\label{prop:exp_fat_under_rigiditi}
In the setup of Corollary~\ref{cor:lip_under_rigiditi} the sequence of perfect t-embeddings~$\T_n$ satisfies assumption~$\ExpFat$, Assumption~\ref{assumption:Exp_fat}, with~$\delta=\delta_{n}=\tfrac{\log n}{n}$.
\end{proposition}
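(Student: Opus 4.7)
The plan is to verify the $\ExpFat$ condition on a compact subset $\mathcal{K} \subset \Omega$ by showing that, under Assumption~\ref{assumption:bounded} with $\mu_n = 1/n$, one can choose splittings $(\T_n)_{\mathrm{spl}}^{\tb}$ and $(\T_n)_{\mathrm{spl}}^{\tw}$ such that every white face and every black triangle (respectively every black face and every white triangle) lying in $\mathcal{K}$ contains an inscribed disk of radius at least $c_\mathcal{K}/n$ for some constant $c_\mathcal{K} > 0$ depending only on $\mathcal{K}$. With the prescribed $\delta_n = (\log n)/n$, taking $\delta'_n := 2(\log n)^2/n$ will yield $\delta'_n \to 0$ and
\[
\exp(-\delta'_n / \delta_n) \;=\; \exp(-2\log n) \;=\; n^{-2} \;\leq\; c_\mathcal{K}/n
\]
for all $n$ large enough, so every such face in $\mathcal{K}$ will be $\exp(-\delta'_n \delta_n^{-1})$-fat and therefore removed; no face will remain in $\mathcal{K}$, and the required convergence of vertex connected component sizes will hold vacuously.

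First I would note that under Assumption~\ref{assumption:bounded} every face of $\T_n$ meeting $\mathcal{K}$ is a convex polygon with at most $D$ vertices, edge lengths in $[1/(n C_\mathcal{K}), C_\mathcal{K}/n]$, and interior angles in $(\varepsilon_\mathcal{K}, \pi - \varepsilon_\mathcal{K})$; convexity is immediate because no angle exceeds $\pi$. A standard planar argument---pick any point in the kernel of the polygon and estimate its distance to each edge using the angle and edge bounds---shows that such a polygon has an inscribed disk of radius at least $c_0/n$ with $c_0 = c_0(D, C_\mathcal{K}, \varepsilon_\mathcal{K}) > 0$.

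Second, I would construct $(\T_n)_{\mathrm{spl}}^{\tb}$ by fan-triangulating each black face of degree larger than $3$ from a single vertex. The resulting triangles have pairwise vertex distances at most $D\,C_\mathcal{K}/n$ and at least $1/(n C_\mathcal{K})$, and a short argument by induction on the polygon degree shows that every angle of every triangle in the fan is bounded below by some $\varepsilon'_\mathcal{K} > 0$ depending only on $D, C_\mathcal{K}, \varepsilon_\mathcal{K}$; hence each black triangle has inradius at least $c_1/n$. Combined with the bound on white faces from the previous paragraph, every white face and black triangle of $(\T_n)_{\mathrm{spl}}^{\tb}$ in $\mathcal{K}$ is $c_\mathcal{K}/n$-fat with $c_\mathcal{K} = \min(c_0, c_1)$; the construction of $(\T_n)_{\mathrm{spl}}^{\tw}$ is identical after exchanging the roles of the two colours.

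The main obstacle is the geometric lemma of the second step: that a fan triangulation of a convex polygon of bounded degree with edges of comparable length and interior angles bounded away from $0$ and $\pi$ produces triangles whose angles are themselves bounded away from $0$ and $\pi$ by constants depending only on these data. The angle bound $\varepsilon_\mathcal{K}$ is essential, because without it the polygon could be arbitrarily ``long and thin'' and the fan diagonals would produce degenerate triangles; the upper and lower bounds on the edge lengths together with the uniform lower bound on angles rule out such degeneracies, and this is where the full strength of Assumption~\ref{assumption:bounded} is used.
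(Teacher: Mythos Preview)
Your proposal is correct and follows essentially the same approach as the paper: both argue that the rigidity condition forces every face and every triangle in a splitting to have an inscribed circle of radius $\sim 1/n$, which exceeds $\exp(-\delta'_n/\delta_n)$ for a suitable $\delta'_n \to 0$, so nothing survives the removal step on $\mathcal{K}$. The only cosmetic differences are that the paper takes $\delta'_n = 1/\log n$ (versus your $2(\log n)^2/n$) and asserts the triangle bounds for an arbitrary splitting rather than singling out a fan triangulation.
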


\begin{proof}
We claim that~$\ExpFat$ holds with an arbitrary choice of the splittings of black and white faces of the sequence of perfect t-embeddings~$\mathcal{T}_n$. Indeed, the non-degeneracy of the angles and the assumption about edge lengths of the embedding implies that for large enough~$C$, all edges~$e$ of faces in~$\mathcal{K}$ of a splitting of~$\mathcal{T}_n$ are also bounded as
\[ \frac{1}{C}n^{-1}  \leq \text{length}(e) \leq C n^{-1}\]
for all~$n$ large enough. One may also see that for any splitting, the angles of all triangles will also be uniformly bounded away from~$0$ and~$\pi$ in~$\mathcal{K}$, for all~$n$ large enough.

Thus for~$\delta=\delta_{n}=\tfrac{\log n}{n}$ it is clear that $\ExpFat$ holds, since for large enough~$n$, in each triangle one can inscribe a circle of radius~$\geq c n^{-1}$, for some~$c$ small enough which depends only on the bound on lengths and angles that we have in~$\mathcal{K}$. Hence, we may, for instance take~$\delta'=\tfrac{1}{\log n}$.
\end{proof}

\begin{remark}\label{rem:main_thm_1}
We will show in Section~\ref{sec:complicated_stuff} that perfect t-embeddings~$\T_n$ of the uniformly weighted Aztec diamonds~$A_{n+1}$ satisfy the rigidity condition with~$\mu_n=\tfrac{1}{n}$.  Together with Corollary~\ref{cor:lip_under_rigiditi} and Proposition~\ref{prop:exp_fat_under_rigiditi} this would conclude hypothesis~\ref{item:c} of Theorem~\ref{thm:Aztec_assumptions}.
\end{remark}

\subsection{Proof of Theorem~\ref{thm:Aztec_assumptions}}\label{sec:main_thm_proof}

\old{
Note that the domain~$\Omega_{\T_n}$ covered by~$\T_n$ for each~$n$ is equal to~$\diamondsuit \coloneqq \{z = x+ \i y: |x | +| y| < 1\}$.  Let us denote by~$\partial \diamondsuit$ the boundary of the domain~$\diamondsuit$. Then
 the graph of~$\Or_{n}'$ restricted to~$\partial \diamondsuit$ is the non-planar quadrilateral, which we denote by~$C_{\diamondsuit}$, 
 in~$\mathbb{C} \times \mathbb{R}$ with vertices
\[\left(1, \tfrac{\sqrt{2}}{2}\right), \quad
\left(i, -\tfrac{\sqrt{2}}{2}\right), \quad
\left(-1, \tfrac{\sqrt{2}}{2}\right) \quad
\text{ and } \quad
\left(-i, -\tfrac{\sqrt{2}}{2}\right).\]
We define~$S_{\diamondsuit}$ as the Lorentz-minimal surface with boundary contour~$C_{\diamondsuit}$. This is the unique space-like surface in~$\mathbb{R}^{2,1}$ with this boundary contour which has vanishing mean curvature.}

In this section we give the precise statement of the main theorem, Theorem~\ref{thm:Aztec_assumptions}, and prove it assuming  Lemma~\ref{lem:bdd} holds.


 Let~$\diamondsuit$, $S_{\diamondsuit}$ and~$C_{\diamondsuit}$  be as defined in Section~\ref{sec:T_n_conv}. 

\begin{theorem}\label{thm:main_thm}
Let~$\T_n$ be the sequence of perfect t-embeddings of the reduced uniformly weighted Aztec diamonds~$A_{n+1}'$ defined in Section~\ref{sec:aztec}, with corresponding origami maps~$\Or_n'$ given by~\eqref{def:Or_prime}. The hypotheses~\ref{item:a},~\ref{item:b}, and~\ref{item:c} of Theorem~\ref{thm:CLR2thm} hold for the sequence~$\T_n$, with the choice~$\delta_n = \tfrac{\log n}{n}$. 
\end{theorem}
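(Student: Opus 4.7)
The plan is to assemble the theorem from the ingredients already developed in Sections~\ref{sec:T_n_conv} and~\ref{sec:lip_expfat}, invoking Lemma~\ref{lem:bdd} as a black box. Each of the three hypotheses of Theorem~\ref{thm:CLR2thm} will be verified separately.

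First, hypothesis~\ref{item:a} is essentially immediate: because the perfect t-embedding $\T_n$ is normalized so that the four outer boundary vertices of $(A_{n+1}')^*$ are sent to $1,\i,-1,-\i$, and because the outer polygon of any t-embedding is tangential to the unit circle in the perfect case, one checks that $\Omega_{\T_n}=\diamondsuit$ for every $n$. Thus $\Omega_{\T_n}$ trivially approximates $\diamondsuit$ in the Hausdorff sense, with limiting domain $\Omega=\diamondsuit$.

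Next, for hypothesis~\ref{item:b} I would quote Corollary~\ref{Cor:main_lor_min}, which already gives the uniform convergence on compact subsets $\Or_n'(z)\to\vartheta(z)$ for $z\in\diamondsuit$, together with the identification of the graph of $\vartheta$ with the maximal surface $S_\diamondsuit\subset\R^{2,1}$ carrying the boundary contour $C_\diamondsuit$ (this last point was established in \cite{Ch-R}, see Remark~\ref{rmk:lor_min}). Since $S_\diamondsuit$ is a space-like zero-mean-curvature surface in Minkowski space, hypothesis~\ref{item:b} holds.

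Finally, hypothesis~\ref{item:c} is the most substantial part. Here the strategy is to apply Corollary~\ref{cor:lip_under_rigiditi} and Proposition~\ref{prop:exp_fat_under_rigiditi}, which establish that any sequence of perfect t-embeddings of uniformly-bounded-degree bipartite graphs satisfying the rigidity condition (Assumption~\ref{assumption:bounded}) with scale $\mu_n=1/n$ automatically satisfies both $\LipKd$ and $\ExpFat$ on compact subsets with the scale $\delta_n=\log n/n$. Since the Aztec diamond has bounded face degree, and since $\Omega_{\T_n}=\diamondsuit$ is a fixed simply-connected convex bounded domain, the only thing that remains to check is that the sequence $\T_n$ does satisfy Assumption~\ref{assumption:bounded}. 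This is precisely the content of Lemma~\ref{lem:bdd} (with its precise formulations Lemmas~\ref{lem:bound_edges} and~\ref{lem:bound_angles}), which is proven in Section~\ref{sec:complicated_stuff}. The main obstacle in the whole argument is therefore not contained in the present proof, but in the verification of Lemma~\ref{lem:bdd}: one needs second-order asymptotics of the exact contour-integral formula for $\T_n$ derived from Theorem~\ref{thm:T_edge_probabilities} in order to show that edge lengths of $\T_n$ are comparable to $1/n$ and that angles of faces are bounded away from $0$ and $\pi$, uniformly on compact subsets of $\diamondsuit$. Assuming Lemma~\ref{lem:bdd}, the three hypotheses are verified, completing the proof.
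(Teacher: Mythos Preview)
Your proof is correct and follows essentially the same approach as the paper's own proof: hypothesis~\ref{item:a} is immediate from $\Omega_{\T_n}=\diamondsuit$, hypothesis~\ref{item:b} is Corollary~\ref{Cor:main_lor_min}, and hypothesis~\ref{item:c} follows from Corollary~\ref{cor:lip_under_rigiditi} and Proposition~\ref{prop:exp_fat_under_rigiditi} once the rigidity condition is supplied by Lemma~\ref{lem:bdd} (proved as Lemmas~\ref{lem:bound_edges} and~\ref{lem:bound_angles}).
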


\begin{proof}
Hypothesis~\ref{item:a} of Theorem~\ref{thm:CLR2thm} is immediate 
since the domain~$\Omega_{\T_n}$ covered by~$\T_n$ for each~$n$ is equal to~$\diamondsuit$, 
 and hence  the limiting domain~$\Omega = \diamondsuit$.

Hypothesis~\ref{item:b} is the content of Corollary~\ref{Cor:main_lor_min}.

Finally, hypothesis~\ref{item:c}, i.e. the validity of the assumptions~$\ExpFat$ and~$\LipKd$ on compact subsets with~$\delta_n = \frac{\log n}{n}$, follows from the results of Sections~\ref{sec:lip_expfat} and~\ref{sec:complicated_stuff}. More precisely, in Section~\ref{sec:lip_expfat} we proved that any sequence of t-embeddings satisfying a certain~\emph{rigidity condition} satisfies hypothesis~\ref{item:c}; see Corollary~\ref{cor:lip_under_rigiditi} and Proposition~\ref{prop:exp_fat_under_rigiditi}.  It follows from Lemma~\ref{lem:bdd} that the perfect t-embeddings of the Aztec diamond satisfy this condition; for its proof see Lemmas~\ref{lem:bound_edges} and~\ref{lem:bound_angles}.
\end{proof}

\section{Validation of the rigidity condition}\label{sec:complicated_stuff}

In this section we prove that the rigidity condition, Assumption~\ref{assumption:bounded}, holds for the perfect t-embeddings~$\T_n$ of the uniform Aztec diamond, or more precisely, of the reduced Aztec diamond~$\mathcal G_n=A_{n+1}'$. 

Recall that~$\diamondsuit\coloneqq\{|x|+|y|<1\}$ is the image of the limit of the perfect t-embedding. Recall also that~$\mathcal A$ and~$\DAz$ are the rescaled Aztec diamond and the circle inscribed in~$\mathcal A$, see Section~\ref{sec:T_n_conv}. Let~$j,k,n,\iota,\iota'\in \ZZ$ with~$|\iota|+|\iota'|=1$. We denote the edges in the dual graph~$(A_{n+1}')^*$ by~$e_{\iota,\iota'}(j,k)$, where~$e_{\iota,\iota'}(j,k)$ is the edge between~$(j,k)$ and~$(j+\iota,k+\iota')$. We suppress the~$n$ dependence in the notation of the edges. We orient the edges such that they have a white vertex to the left. Recall that
\begin{equation}\label{eq:t-embedding_edge}
d \T_n (e_{\iota,\iota'}(j,k))=\pm\left(\T_n(j+\iota,k+\iota')-\T_n(j,k)\right),
\end{equation}
where the sign depends on the orientation of the edge. The sign will not play any role, so we will not specify it more explicitly. 

We prove Assumption~\ref{assumption:bounded} by first reformulate it in terms of compact subsets of~$\DAz$ instead of~$\diamondsuit$. Then we use a double integral formula for~$\T_n$, which is derived using Corollary~\ref{cor:p_E_and_f_E} and a double integral formula for the inverse Kasteleyn matrix, to prove the assumption.

As in Assumption~\ref{assumption:bounded},~$\mu_n$ denotes a sequence converging to zero as~$n\to \infty$. 
\begin{assumption}\label{assumption:bounded_aztec} 
Given a compact set~$\mathcal K\subset\DAz$, there exist positive constants~$N_\mathcal{K}, C_\mathcal{K}$ and~$\varepsilon_\mathcal{K}$ which only depend on~$\mathcal{K}$, such that for all edges~$e_{\iota,\iota'}(j,k)\subset \mathcal K$,
\[\frac{\mu_n}{C_\mathcal{K}}  \leq |d\T_n(e_{\iota,\iota'}(j,k))| \leq \mu_n C_\mathcal{K}\]
for all~$n>N_\mathcal{K}$. 
In addition, for~$n >N_\mathcal{K}$ and~$e_{\iota,\iota'}(j,k),e_{\tilde\iota,\tilde \iota'}(j,k)\subset \mathcal K$ the angle between two adjacent edges~$d\T_n(e_{\iota,\iota'}(j,k))$ and~$d\T_n(e_{\tilde\iota,\tilde \iota'}(j,k))$, which are adjacent to a common face, is contained in~$(\varepsilon_\mathcal{K}, \pi-\varepsilon_\mathcal{K})$. These angles are precisely the angles of the faces of the image of the perfect t-embedding.
\end{assumption}

\begin{lemma}\label{lem:compact_pullback}
Assumption~\ref{assumption:bounded_aztec} implies Assumption~\ref{assumption:bounded}.
\end{lemma}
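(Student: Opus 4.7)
Fix a compact $\mathcal{K} \subset \diamondsuit$. The strategy is to produce a compact $\mathcal{K}' \subset \DAz$ such that for $n$ large enough, any interior dual vertex $(j,k)$ with $\mathcal{T}_n(j,k) \in \mathcal{K}$ satisfies $(j/n,k/n) \in \mathcal{K}'$. Granted this, any edge $e_{\iota,\iota'}(j,k)$ of $(A_{n+1}')^*$ with both endpoints mapped into $\mathcal{K}$ under $\mathcal{T}_n$ has both of its rescaled endpoints in $\mathcal{K}'$, and Assumption~\ref{assumption:bounded_aztec} applied to $\mathcal{K}'$ immediately yields the edge-length and face-angle bounds required by Assumption~\ref{assumption:bounded}.

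To construct $\mathcal{K}'$, I invoke the limit map $z : \mathcal{A} \to \overline{\diamondsuit}$ from Proposition~\ref{prop:eqn:to_conv}. As recalled in the proof of Corollary~\ref{Cor:main_lor_min}, $z$ restricts to a diffeomorphism $\DAz \to \diamondsuit$ and collapses each frozen region together with its adjacent arctic arc to one of the corners $\pm 1,\pm i \in \partial\diamondsuit$. Since $\mathcal{K}$ has positive distance from $\partial\diamondsuit$, the preimage $z^{-1}(\mathcal{K})$ is a compact subset of $\DAz$ at positive distance from both the arctic circle $\partial\DAz$ and the four tangent points $(\pm\tfrac{1}{2},\pm\tfrac{1}{2}) \in \partial\mathcal{A}$; I take $\mathcal{K}'$ to be a compact neighborhood of $z^{-1}(\mathcal{K})$ inside $\DAz$ that preserves this separation.

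The required inclusion is then proved by contradiction and compactness. If it failed, a subsequence would give $(j_n/n,k_n/n) \to (x^*,y^*) \in \overline{\mathcal{A}}$ with $\mathcal{T}_n(j_n,k_n) \to T^* \in \mathcal{K}$ and $(j_n/n,k_n/n) \notin \mathcal{K}'$. If $(x^*,y^*) \in \mathcal{A}$, Proposition~\ref{prop:eqn:to_conv} yields $T^* = z(x^*,y^*)$, so $(x^*,y^*) \in z^{-1}(\mathcal{K}) \subset \mathrm{int}(\mathcal{K}')$, contradicting the exclusion. If $(x^*,y^*) \in \partial\mathcal{A}$ is not a tangent point, then $(x^*,y^*)$ lies in the closure of an open frozen region and Corollary~\ref{Cor_main_frozen_2} (applied to a compact subset of the closed frozen region containing $(x^*,y^*)$) pins $T^*$ to one of the corners $\pm 1,\pm i$, again contradicting $T^* \in \mathcal{K}$.

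The main obstacle is the tangent-point case $(x^*,y^*) = (\pm\tfrac{1}{2},\pm\tfrac{1}{2})$: the map $z$ has a jump discontinuity there, and the uniformity in Proposition~\ref{prop:edge_prob_convergence} is explicitly lost. I will handle it by direct asymptotic analysis of the edge-probability formulas of Corollary~\ref{cor:p_E_and_f_E} via the standard double contour-integral representation of the inverse Kasteleyn matrix: a steepest-descent computation in a shrinking neighborhood of, say, $(\tfrac{1}{2},\tfrac{1}{2})$ shows $p_W, p_S \to 0$ uniformly there, which together with the a priori inclusion $\mathcal{T}_n(j_n,k_n) \in \overline{\diamondsuit}$ pins $T^*$ onto the NE side of $\partial\diamondsuit$, yielding the final contradiction. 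The remaining three tangent points follow by the evident rotational symmetry of the uniform Aztec diamond.
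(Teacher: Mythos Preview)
Your reduction---producing a compact $\mathcal{K}'\subset\DAz$ with the property that $\mathcal{T}_n(j,k)\in\mathcal{K}$ forces the rescaled vertex into $\mathcal{K}'$---is exactly what the paper does, and once this is granted the conclusion is immediate. The difference is in how that inclusion is established.

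The paper avoids your case analysis entirely by a short topological argument that exploits the fact that each $\mathcal{T}_n$ is a \emph{proper embedding}. Choose a simple loop $\Gamma\subset\diamondsuit$ separating $\mathcal{K}$ from $\partial\diamondsuit$ at distance $\varepsilon>0$ from both, and set $\gamma=z^{-1}(\Gamma)\subset\DAz$; take $\mathcal{K}'$ to be the closure of the interior of $\gamma$. By uniform convergence of $\mathcal{T}_n$ to $z$ on a compact neighborhood of $\gamma$, for large $n$ one can find a simple discrete loop $\gamma_n$ in $(A_{n+1}')^*$ approximating $\gamma$ with $\dist(\mathcal{T}_n(\gamma_n),\Gamma)<\varepsilon/2$, so $\mathcal{T}_n(\gamma_n)$ is a Jordan curve separating $\mathcal{K}$ from $\partial\diamondsuit$. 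Since $\mathcal{T}_n$ is an embedding it sends the interior (resp.\ exterior) of $\gamma_n$ to the interior (resp.\ exterior) of $\mathcal{T}_n(\gamma_n)$; hence $\mathcal{T}_n(j,k)\in\mathcal{K}$ forces $(j,k)$ inside $\gamma_n$, i.e.\ $\tfrac{1}{n+1}(j,k)\in\mathcal{K}'$. No analysis near $\partial\mathcal{A}$ or the tangent points is needed.

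By contrast, your compactness-and-contradiction route runs into genuine difficulties at the boundary. In Case~2 you invoke Corollary~\ref{Cor_main_frozen_2}, but the convergence there (and in Proposition~\ref{prop:main_theorem_2}) is stated only on compact subsets of the open set $\mathcal{A}$, so a sequence $(j_n/n,k_n/n)\to(x^*,y^*)\in\partial\mathcal{A}$ escapes every such compact and the corollary does not literally apply; you would need to strengthen the convergence up to $\partial\mathcal{A}\setminus\{(\pm\tfrac12,\pm\tfrac12)\}$. Case~3 is the more serious gap: you only sketch a steepest-descent argument near the tangent points, but there the two critical points of the action coalesce on the real line and the standard saddle analysis of Section~\ref{sec:complicated_stuff} breaks down---one would need a genuine cusp/Airy-type estimate to get the claimed uniform bounds on $p_W,p_S$. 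This is substantially more work than the four-line Jordan-curve argument, and none of it is carried out. I would recommend replacing your boundary case analysis with the embedding argument above.
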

\begin{proof}
It is enough to show that for each compact set~$\mathcal K_\diamondsuit \subset \diamondsuit$, there exists a compact set~${\mathcal K\subset \DAz}$ and~$N>0$, such that if~$n>N$ and~$\T_n(j,k)\in \mathcal K_\diamondsuit$, then~$\frac{1}{n+1}(j,k)\in \frac{1}{n+1}(A_{n+1}')^*\cap \mathcal K$.

Let~$\Gamma$ be a simple loop, which contains~$\mathcal K_\diamondsuit$ in its interior, and which has a distance~$\eps>0$ from~$\partial \diamondsuit$ and ~$\mathcal K_\diamondsuit$. Let~$\gamma=z^{-1}(\Gamma)\subset \DAz$ be the preimage of~$\Gamma$ under~$z$. Since~$z$ is an embedding,~$\gamma$ is also a simple loop. If we take~$\mathcal K$ as the closure of the interior of~$\gamma$, it has the above described properties.

Indeed, since~$z$ is continuous and~$\T_n$ converges to~$z$ uniformly on compact subsets, there is an~$N>0$ such that for all~$n\geq N$ there exists a simple loop~$\gamma_n$ in the dual graph~$(A_{n+1}')^*$ which approximates~$\gamma$ from within, such that~$\dist(\Gamma,\T_n(\gamma_n))<\eps/2$. Since~$\dist(\Gamma,\mathcal K_\diamondsuit)>\eps$ we get that~$\dist(\T_n(\gamma_n),\mathcal K_\diamondsuit)>\eps/2$. In particular~$\T_n(\gamma_n)\cap\mathcal K_\diamondsuit=\emptyset$. The fact that~$\T_n$ is an embedding implies that~$\T_n(\gamma_n)$ also is a simple loop, and that~$\T_n$ maps the interior (resp. exterior) of~$\gamma_n$ to the interior (resp. exterior) of~$\T_n(\gamma_n)$. We conclude that~$\mathcal K_\diamondsuit$ lies in the interior of~$\T_n(\gamma_n)$, and, hence, if~$\T_n(j,k)\in \mathcal K_\diamondsuit$, then~$\frac{1}{n+1}(j,k)\in \frac{1}{n+1}(A_{n+1}')^*\cap \mathcal K$.
\end{proof}

\subsection{Inverse Kasteleyn matrix and exact formulas for the perfect t-embedding}
From now on, and for the rest of this paper, we will use~$z$ as a variable and not as the limit of the perfect t-embedding.

We define the Kasteleyn matrix~$K$, see the discussion in Section~\ref{sec:definitions}, for the Aztec diamond~$A_n$. Recall that the white vertices are points of the form~$(j\pm \frac{1}{2},k\mp\frac{1}{2})$, and black vertices are points of the form~$(j\pm \frac{1}{2},k\pm\frac{1}{2})$, for~$(j,k)\in \ZZ^2$ and~$j+k+n$ is odd. We use the notation
\begin{equation*}
K(b,w)=K_{(j\pm \frac{1}{2},k\pm\frac{1}{2}),(j\pm \frac{1}{2},k\mp\frac{1}{2})},
\end{equation*}
if~$b=(j\pm \frac{1}{2},k\pm\frac{1}{2})$ and~$w=(j\pm \frac{1}{2},k\mp\frac{1}{2})$. We chose the Kasteleyn signs so that, for~$j_i+k_i+n$ odd,~$i=1,2$, the Kasteleyn matrix is given by
\begin{equation}\label{eq:kasteleyn}
K_{(j_1+\frac{1}{2},k_1+\frac{1}{2}),(j_2+\frac{1}{2},k_2-\frac{1}{2})}
= 
\begin{cases}
(-1)^{k_1+n+1}, & (j_2,k_2)=(j_1,k_1), \\
(-1)^{k_1+n+1}\i, & (j_2,k_2)=(j_1-1,k_1+1), \\
(-1)^{k_1+n}, & (j_2,k_2)=(j_1,k_1+2), \\
(-1)^{k_1+n}\i, & (j_2,k_2)=(j_1+1,k_1+1), \\
0, & \text{otherwise}. 
\end{cases}
\end{equation}
See Figure~\ref{fig:kasteleyn_sign}.

The local statistics, and in particular the edge probabilities, are expressible in terms of the inverse Kasteleyn matrix~$K^{-1}$. Indeed, if~$j+k+n$ is odd, then
\begin{equation}\label{eq:edge_prob_kasteleyn}
p_E(j,k,n)=K_{(j+\frac{1}{2},k+\frac{1}{2}),(j+\frac{1}{2},k-\frac{1}{2})}K^{-1}_{(j+\frac{1}{2},k-\frac{1}{2}),(j+\frac{1}{2},k+\frac{1}{2})}. 
\end{equation}
In a similar way we may express~$p_W$,~$p_S$ and~$p_N$ in terms of~$K$ and~$K^{-1}$. 

The inverse Kasteleyn matrix for the uniform Aztec diamond was first given by Helfgott~\cite{H19}, and later, for the Aztec diamond with a bias, by Chhita--Johansson--Young~\cite{CKBAztec}. We use the formula given in the latter reference. The inverse Kasteleyn matrix is given by
\begin{multline}\label{eq:inverse_kasteleyn}
K^{-1}_{(j_1+\frac{1}{2},k_1-\frac{1}{2}),(j_2+\frac{1}{2},k_2+\frac{1}{2})} 
\\ 
=
\begin{cases}
f_1((j_1, k_1), (j_2, k_2)), & j_1+k_1<j_2+k_2+2, \\
f_1((j_1, k_1), (j_2, k_2)) - f_2((j_1, k_1), (j_2, k_2)), & \text{otherwise}, 
\end{cases}
\end{multline}
where
\begin{align*}
f_1((j_1, k_1), (j_2, k_2)) &= \frac{(-1)^{\frac{1}{2}(k_1+k_2+2n)}}{(2 \pi \i)^2} \int_{\mathcal{E}_2} \int_{\mathcal{E}_1} \frac{w^{\frac{1}{2}(j_2+k_2+n+1)} }{z^{\frac{1}{2}(j_1+k_1+n+1)}} \\
&\times \frac{(1+z)^{\frac{1}{2}(k_1-j_1+n-1)} (z-1)^{\frac{1}{2}(j_1-k_1+n+1)}}{(1+w)^{\frac{1}{2}(k_2-j_2+n+1)}(w-1)^{\frac{1}{2}(j_2-k_2+n+1)}} \frac{\d z \d w}{w-z}, \\
&\vspace{1pt}\\
f_2((j_1, k_1), (j_2, k_2)) &= \frac{(-1)^{\frac{1}{2}(k_1+k_2+2n)}}{2 \pi \i}  \int_{\mathcal{E}_1} 
 \frac{(1+z)^{{\frac{1}{2}(j_2+k_2-j_1-k_1)}} z^{{\frac{1}{2}(k_2-j_2-k_1+j_1)}}}{(z+2)^{{\frac{1}{2}(k_2-j_2-k_1+j_1+2)}}}\d z,
\end{align*}
and~$\mathcal{E}_1$ is a small circle around~$0$ and~$\mathcal{E}_2$ is a small circle around~$1$, both with radius strictly less than~$1/2$ and oriented in positive direction, see Figure~\ref{fig:steepest_descent_curves}. The coordinate change between the formula given here and the one given in \cite[Theorem 2.3]{CKBAztec} is~$j_1=\frac{1}{2}(x_1-x_2)$,~$k_1=\frac{1}{2}(x_1+x_2-2n)$, where~$(x_1,x_2)$ are the coordinates in \cite{CKBAztec}, and similarly for~$(j_2,k_2)$. The parameter~$a$ in the same paper is set to~$1$ here.

\begin{figure}
\begin{center}
\begin{tikzpicture}[scale=.6]

\draw (-3,1)--(3,1);
\draw (-3,-1)--(3,-1);
\draw (-1,-3)--(-1,3);
\draw (1,-3)--(1,3);

\draw (-1,1) node[circle,draw=black,fill=white,inner sep=2pt]{};
\draw (1,-1) node[circle,draw=black,fill=white,inner sep=2pt]{};
\draw (-1,-1) node[circle,draw=black,fill=black,inner sep=2pt]{};
\draw (1,1) node[circle,draw=black,fill=black,inner sep=2pt]{};

 \draw (0,0) node {$(j,k)$};

 \draw (-2,1) node[above] {$-\i$};
 \draw (-2,-1) node[below] {$-\i$};
 \draw (2,1) node[above] {$-\i$};
 \draw (2,-1) node[below] {$-\i$};

 \draw (-1,2) node[left] {$-1$};
 \draw (-1,-2) node[left] {$-1$};
 \draw (1,2) node[right] {$-1$};
 \draw (1,-2) node[right] {$-1$};

 \draw (0,1) node[above] {$\i$};
 \draw (0,-1) node[below] {$\i$};
 \draw (-1,0) node[left] {$1$};
 \draw (1,0) node[right] {$1$};
\end{tikzpicture}
\end{center}
\caption{The Kasteleyn signs used in the definition of the Kasteleyn matrix. Note that the alternating product of the complex edge weights around each face is~$-1$. Here~$j+k+n$ and~$k+n$ are odd. \label{fig:kasteleyn_sign}}
\end{figure}
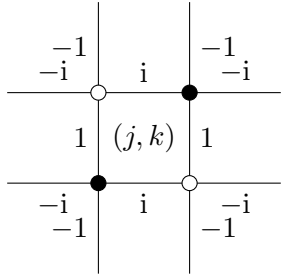

Recall that Corollary~\ref{cor:p_E_and_f_E} expresses the perfect t-embedding in terms of edge probabilities. So from the integral formula of the inverse Kasteleyn matrix, we obtain an integral formula for~$\mathcal T_n(j,k)$. 
\begin{lemma}\label{lem:finite_t_embedding}
For~$(x,y)\in \mathcal A$, we define the action function~$S(z)=S(z;x,y)$ as
\begin{equation*}
S(z)\coloneqq\frac{1}{2}(1+x+y)\log z-\frac{1}{2}(1-x+y)\log(z+1)-\frac{1}{2}(1+x-y)\log(z-1),
\end{equation*}
and
\begin{equation}\label{eq:g_t}
G_{\mathcal T}(z,w)\coloneqq\frac{(1-\i)(z-\i)(w-\i)}{(z-1)(z+1)w}. 
\end{equation}
Let~$(j,k)\in (A_{n+1}')^*$ with~$j+k+n$ odd, and set~$(x,y)=\frac{1}{n+1}(j,k)$. Then the perfect t-embedding
\begin{equation}\label{eq:t-embedding_odd}
\mathcal T_n(j,k)=\frac{1}{(2\pi \i)^2}\int_{\mathcal E_2}\int_{\mathcal E_1}\e^{(n+1)(S(w)-S(z))}G_{\mathcal T}(z,w)\frac{\,d z \,d w}{z-w}-1,
\end{equation}
where~$\mathcal{E}_1$ and~$\mathcal{E}_2$ are positively oriented circles around~$0$ and~$1$, respectively, with radius strictly less than~$1/2$.
\end{lemma}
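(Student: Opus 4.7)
The plan is to derive~\eqref{eq:t-embedding_odd} by combining Corollary~\ref{cor:p_E_and_f_E}, which writes
\[
\mathcal T_n(j,k) = p_E(j,k,n) + \i\, p_N(j,k,n) - p_W(j,k,n) - \i\, p_S(j,k,n),
\]
with the identity $p_\bullet = K(b_\bullet, w_\bullet) K^{-1}(w_\bullet, b_\bullet)$ from~\eqref{eq:edge_prob_kasteleyn}, the Kasteleyn signs~\eqref{eq:kasteleyn}, and the explicit formula~\eqref{eq:inverse_kasteleyn} for $K^{-1}$. Reading off the relevant indices, one finds that the inequality $j_1+k_1 < j_2+k_2+2$ holds strictly for $E$ and $N$, and with equality for $W$ and $S$. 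Thus $p_E$ and $p_N$ involve only $f_1$, whereas $p_W$ and $p_S$ acquire a single-integral correction $-f_2$. A short residue calculation shows that the $f_2$ integrand has no pole inside $\mathcal E_1$ in the $S$ case (so $f_2^S=0$) and has residue $1$ at $z=0$ in the $W$ case (so $f_2^W = (-1)^{k+n}$); combining with $K(b_W,w_W)=(-1)^{k+n+1}$ and the sign in front of $-p_W$ produces exactly the constant $-1$ appearing on the right-hand side of~\eqref{eq:t-embedding_odd}.

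For the $f_1$ contributions, the key observation is that the integrand of~\eqref{eq:inverse_kasteleyn} factors as a product of a function of $z$ alone and a function of $w$ alone, divided by $w-z$, and that the $z$- and $w$-dependent factors essentially coincide with $\e^{\mp(n+1)S(z)}$ and $\e^{\pm(n+1)S(w)}$. Using the identity
\[
\e^{-(n+1)S(z)} = z^{-(j+k+n+1)/2}(z+1)^{(k-j+n+1)/2}(z-1)^{(j-k+n+1)/2}
\]
and its $w$-analogue, one obtains case by case that the $f_1$-integrand equals $\e^{(n+1)(S(w)-S(z))}$ times one of $\frac{1}{z+1}$, $\frac{1}{z-1}$, $\frac{1}{(z-1)w}$, $\frac{1}{(z+1)w}$ for $E, N, W, S$ respectively, divided by $w-z$. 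After absorbing the Kasteleyn signs, the prefactors $\i^{k_1+k_2+2n}$ of $f_1$ (which for $N$ and $S$ carry a genuine power of $\i$), and the signs $+1, +\i, -1, -\i$ from the linear combination, the four contributions to $\mathcal T_n(j,k)$ combine into a single double integral whose rational factor is
\[
-\frac{1}{z+1} + \frac{\i}{z-1} + \frac{1}{(z-1)w} + \frac{\i}{(z+1)w}
= \frac{(\i-1)(wz-1) + (\i+1)(z+w)}{w(z-1)(z+1)}.
\]

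The final step is an algebraic identity. The factorizations $\i-1=-(1-\i)$ and $\i+1=\i(1-\i)$, together with
\[
1 - wz + \i(z+w) = -(z-\i)(w-\i),
\]
reduce the numerator to $-(1-\i)(z-\i)(w-\i)$. The overall minus sign is absorbed via $1/(w-z) = -1/(z-w)$, so the combined rational factor equals $G_{\mathcal T}(z,w)/(z-w)$, which yields~\eqref{eq:t-embedding_odd}. The main obstacle is bookkeeping: tracking the interaction of the Kasteleyn signs~\eqref{eq:kasteleyn} with the half-integer-exponent prefactors of $f_1$ in the $N$ and $S$ cases requires care, but once the four rational factors are correctly identified, the factorization into $G_{\mathcal T}(z,w)$ is a direct computation.
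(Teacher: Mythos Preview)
Your proposal is correct and follows essentially the same route as the paper: both start from Corollary~\ref{cor:p_E_and_f_E}, plug in the Kasteleyn entries~\eqref{eq:kasteleyn} and the inverse Kasteleyn formula~\eqref{eq:inverse_kasteleyn}, identify the four rational factors $\tfrac{1}{z+1},\tfrac{1}{z-1},\tfrac{1}{w(z-1)},\tfrac{1}{w(z+1)}$, and observe that the $f_2$ term vanishes for $S$ and contributes the constant $-1$ for $W$. The only difference is presentational: the paper records each $p_E,p_N,p_W,p_S$ separately as a double integral and leaves the final summation implicit, whereas you carry out the linear combination explicitly and exhibit the factorization $1-wz+\i(z+w)=-(z-\i)(w-\i)$ that produces $G_{\mathcal T}$.
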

\begin{proof}
We get from \eqref{eq:kasteleyn}, \eqref{eq:edge_prob_kasteleyn} and \eqref{eq:inverse_kasteleyn} that
\begin{multline*}
p_E(j,k,n)=(-1)^{k+n+1}f_1((j,k),(j,k))\\
= \frac{-1}{(2 \pi \i)^2} \int_{\mathcal{E}_2} \int_{\mathcal{E}_1} \frac{w^{\frac{1}{2}(j+k+n+1)} }{z^{\frac{1}{2}(j+k+n+1)}}
 \frac{(1+z)^{\frac{1}{2}(k-j+n-1)} (z-1)^{\frac{1}{2}(j-k+n+1)}}{(1+w)^{\frac{1}{2}(k-j+n+1)}(w-1)^{\frac{1}{2}(j-k+n+1)}} \frac{\,d z \,d w}{w-z} \\
=\frac{1}{(2 \pi \i)^2} \int_{\mathcal{E}_2} \int_{\mathcal{E}_1} \e^{(n+1)(S(w)-S(z))}\frac{1}{1+z} \frac{\,d z \,d w}{z-w}.
\end{multline*}
Similarly
\begin{multline*}
p_N(j,k,n)=(-1)^{k+n+1}\i f_1((j-1,k+1),(j,k))\\
= \frac{-1}{(2 \pi \i)^2} \int_{\mathcal{E}_2} \int_{\mathcal{E}_1} \e^{(n+1)(S(w)-S(z))}\frac{1}{z-1} \frac{\,d z \,d w}{z-w},
\end{multline*}

\begin{equation*}
p_W(j,k,n)
= \frac{1}{(2 \pi \i)^2} \int_{\mathcal{E}_2} \int_{\mathcal{E}_1} \e^{(n+1)(S(w)-S(z))}\frac{1}{w(z-1)} \frac{\,d z \,d w}{z-w}+1,
\end{equation*}
and
\begin{equation*}
p_S(j,k,n)
= \frac{1}{(2 \pi \i)^2} \int_{\mathcal{E}_2} \int_{\mathcal{E}_1} \e^{(n+1)(S(w)-S(z))}\frac{1}{w(z+1)} \frac{\,d z \,d w}{z-w}.
\end{equation*}
In the calculations of~$p_W$ we have used that~$f_2((j-1,k+1),(j-1,k-1))=(-1)^{k+n}$, and in the calculations of~$p_S$, that~$f_2((j,k),(j-1,k-1))=0$.

Summing up these integrals according to Corollary~\ref{cor:p_E_and_f_E}, proves the statement.
\end{proof}

As we saw in Section~\ref{sec:assumptions}, we already know the limit of the perfect t-embedding, by Corollary~\ref{cor:p_E_and_f_E} and the results of \cite{cohn-elki-prop-96}. However, to prove Assumption~\ref{assumption:bounded_aztec} for the uniform Aztec diamond, we need finer asymptotics, namely the leading order term of~$d \T_n (e_{\iota,\iota'}(j,k))$. For this we cannot rely on the results of~\cite{cohn-elki-prop-96}. Instead we use the formula in previous lemma and perform a steepest descent analysis. Before doing so, we extend the formula to faces where~$j+k+n$ is even. 

Let~$j+k+n$ and~$\iota+\iota'$ be odd, so that~$(j+\iota)+(k+\iota')+n$ is even. Then, by Proposition~\ref{prop:t_rec_aztec},
\begin{equation*}
\mathcal T_n(j+\iota,k+\iota')=\mathcal T_{n+1}(j+\iota,k+\iota').
\end{equation*}
We apply Lemma~\ref{lem:finite_t_embedding} to the right hand side and obtain
\begin{multline}\label{eq:t-embedding_even}
\mathcal T_n(j+\iota,k+\iota') \\
=\frac{1}{(2\pi \i)^2}\int_{\mathcal E_2}\int_{\mathcal E_1}\e^{(n+1)(S(w;x,y)-S(z;x,y))}\e^{S(w;\iota,\iota')-S(z;\iota,\iota')}G_{\mathcal T}(z,w)\frac{\,d z \,d w}{z-w}-1,
\end{multline}
where~$(x,y)=\frac{1}{n+1}(j,k)$. We have used that
\begin{equation*}
(n+2)S\left(w;\frac{j+\iota}{n+2},\frac{k+\iota'}{n+2}\right)=(n+1)S\left(w;\frac{j}{n+1},\frac{k}{n+1}\right)+S(w;\iota,\iota').
\end{equation*} 

The following lemma now follows from \eqref{eq:t-embedding_edge}, \eqref{eq:t-embedding_odd} and \eqref{eq:t-embedding_even}.
\begin{lemma}
Consider the same setting as in Lemma~\ref{lem:finite_t_embedding}. In addition, let~$|\iota|+|\iota'|=1$, and~$e_{\iota,\iota'}(j,k)\in (A_{n+1}')^*$. Then 
\begin{equation*}
d \T_n(e_{\iota,\iota'}(j,k))=\pm\frac{1}{(2\pi \i)^2}\int_{\mathcal E_2}\int_{\mathcal E_1}\e^{(n+1)(S(w)-S(z))}G_{\iota,\iota'}(z,w)G_{\mathcal T}(z,w) \,d z \,d w,
\end{equation*}
where the sign depends on the orientation of the edge, as in \eqref{eq:t-embedding_edge}, and
\begin{multline}\label{eq:g_direction}
G_{1,0}(z,w)=\frac{1}{z(w-1)}, \quad G_{0,1}(z,w)=-\frac{1}{z(w+1)}, \\
G_{-1,0}(z,w)=\frac{1}{w+1}, \quad \text{and} \quad G_{0,-1}(z,w)=\frac{1}{w-1}.
\end{multline}
\end{lemma}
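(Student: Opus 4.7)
The plan is to reduce this lemma to a direct computation based on the integral formulas already established for $\mathcal T_n$. By definition \eqref{eq:t-embedding_edge}, one has $d\mathcal T_n(e_{\iota,\iota'}(j,k)) = \pm(\mathcal T_n(j+\iota,k+\iota') - \mathcal T_n(j,k))$, so without loss of generality I assume $j+k+n$ is odd (otherwise swap the role of the two endpoints, which only flips the overall sign). Then $\mathcal T_n(j,k)$ is given by \eqref{eq:t-embedding_odd} and $\mathcal T_n(j+\iota,k+\iota')$ by \eqref{eq:t-embedding_even} (with the same $(x,y) = \tfrac{1}{n+1}(j,k)$ and the same contours $\mathcal E_1, \mathcal E_2$). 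Subtracting these expressions, the additive constants $-1$ cancel and the common prefactor $\e^{(n+1)(S(w)-S(z))}G_{\mathcal T}(z,w)/(z-w)$ factors out, leaving
\begin{equation*}
d\mathcal T_n(e_{\iota,\iota'}(j,k)) = \pm \frac{1}{(2\pi \i)^2}\int_{\mathcal E_2}\int_{\mathcal E_1}\e^{(n+1)(S(w)-S(z))}G_{\mathcal T}(z,w)\,\frac{\e^{S(w;\iota,\iota')-S(z;\iota,\iota')}-1}{z-w}\,\mathrm{d}z\,\mathrm{d}w.
\end{equation*}

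Next I would verify case-by-case that the ratio $(\e^{S(w;\iota,\iota')-S(z;\iota,\iota')}-1)/(z-w)$ equals $\pm G_{\iota,\iota'}(z,w)$ for each of the four unit vectors. The key observation is that for each $(\iota,\iota')$ with $|\iota|+|\iota'|=1$, the function $\e^{S(\cdot;\iota,\iota')}$ reduces to a simple rational function of $z$, e.g.\ $\e^{S(z;1,0)} = z/(z-1)$, $\e^{S(z;0,1)} = z/(z+1)$, $\e^{S(z;-1,0)} = 1/(z+1)$, $\e^{S(z;0,-1)} = 1/(z-1)$. Consequently the difference $\e^{S(w;\iota,\iota')-S(z;\iota,\iota')}-1$ has a common denominator in which the numerator manifestly contains a factor of $(w-z)$; this factor cancels the $(z-w)$ in the prefactor. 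A short calculation then yields precisely $G_{\iota,\iota'}(z,w)$ as defined in \eqref{eq:g_direction}, up to a sign which is absorbed into the orientation-dependent $\pm$ already present in the statement.

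There is essentially no obstacle here: the statement follows by bookkeeping from the two contour-integral representations for $\mathcal T_n$ at adjacent faces. The only subtlety worth noting is that the formulas \eqref{eq:t-embedding_odd} and \eqref{eq:t-embedding_even} are for faces of opposite parity of $j+k+n$, and both can be written with a common exponential factor $\e^{(n+1)(S(w)-S(z))}$ at the same base point $(x,y)$ — this is what makes the subtraction meaningful. One should also verify that the resulting integrand is still of the form $\e^{(n+1)(S(w)-S(z))}$ times an $n$-independent factor, which is exactly what appears in $G_{\iota,\iota'}(z,w)G_{\mathcal T}(z,w)$, thereby setting up the steepest descent analysis needed in the subsequent proof of the rigidity condition.
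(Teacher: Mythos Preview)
Your proposal is correct and follows exactly the approach the paper indicates: the paper simply states that the lemma ``follows from \eqref{eq:t-embedding_edge}, \eqref{eq:t-embedding_odd} and \eqref{eq:t-embedding_even}'' without further detail, and you have correctly filled in those details. In fact your case-by-case verification that $(\e^{S(w;\iota,\iota')-S(z;\iota,\iota')}-1)/(z-w)$ equals $G_{\iota,\iota'}(z,w)$ checks out exactly (no extra sign needs to be absorbed beyond the orientation sign already in \eqref{eq:t-embedding_edge}).
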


\subsection{Steepest descent analysis}

The action function~$S(z)=S(z;x,y)$ has two critical points, and for~$(x,y)\in \DAz$ the critical points comes as complex conjugate pairs. We denote the critical point in the upper half plane by
\begin{equation*}
\xi=\xi(x, y) = \frac{y-x + \i\sqrt{1-(2x^2+2y^2)}}{1-x-y}.
\end{equation*}
We also define~$\theta_{\xi} \coloneqq \frac{1}{2} \arg(S''(\xi))$.
\begin{lemma}\label{lem:limit_integral}
Let~$\mathcal K \subset\DAz$ be a compact set,~$|\iota|+ |\iota'| = 1$, and define~$G(z, w)\coloneqq G_{\iota,\iota'}(z, w) G_{\mathcal{T}}(z,w)$, where~$G_{\mathcal{T}}$ and~$G_{\iota,\iota'}$ are defined by~\eqref{eq:g_t} and~\eqref{eq:g_direction}. For~$(j,k) \in \mathbb{Z}^2$ such that~$(x,y) = \frac{1}{n+1}(j, k) \in \mathcal K$, 
\begin{multline}\label{eq:double_integral}
\frac{1}{(2\pi\i)^2}\int_{\mathcal{E}_2}\int_{\mathcal{E}_1}\e^{(n+1)\big(S(w)-S(z)\big)}G(z,w)\,d z\,d w \\
=
-\frac{1}{n+1} \frac{1}{2\pi \i |S''(\xi)| } \bigg(  \frac{1}{\e^{2 \i \theta_{\xi}} } G( \xi, \xi)  + \e^{-2 \i (n+1)\im(S(\xi) )} G( \xi, \overline{\xi})  \\
-\frac{1}{\e^{-2 \i \theta_{\xi}} } G( \overline{\xi}, \overline{\xi})     -    \e^{2 \i (n+1)\im(S(\xi) )} G( \overline{\xi}, \xi) 
 \bigg)(1+o(1))
\end{multline} 
as~$n\to \infty$, where the error term is uniform for~$\frac{1}{n+1}(j, k)  \in \mathcal K$.
\end{lemma}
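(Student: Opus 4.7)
The stated asymptotic is a classical double saddle-point / steepest descent formula, and my plan is to derive it by deforming both contours to pass through the two critical points $\xi$ and $\overline\xi$ of $S(\cdot;x,y)$ and then applying Laplace's method at each of the four resulting saddle pairs.

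First, I would check that the integrand admits the needed analytic flexibility. Because $j+k+n$ is odd, all four exponents $(n+1\pm j\pm k)/2$ are integers, so $\e^{(n+1)S(z)}$ is in fact a rational function with poles only at $\{0,\pm 1\}$, and similarly in $w$. Moreover the seemingly singular factor $1/(z-w)$ appearing in the formula for $\mathcal T_n$ in Lemma~\ref{lem:finite_t_embedding} has already been cancelled in passing to $d\mathcal T_n(e_{\iota,\iota'}(j,k))$: the numerator $\e^{S(w;\iota,\iota')-S(z;\iota,\iota')}-1$ vanishes on the diagonal, and a direct calculation shows that the ratio equals (up to sign) the explicit rational function $G_{\iota,\iota'}(z,w)$. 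As a consequence the product $G(z,w)=G_{\iota,\iota'}(z,w)G_{\mathcal T}(z,w)$ has no pole at $z=w$, and both contours may be deformed freely and independently in $\mathbb{C}\setminus\{0,\pm 1\}$ without picking up residues.

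Next, I would construct deformations $\mathcal{E}_1\leadsto\Gamma_z$ and $\mathcal{E}_2\leadsto\Gamma_w$ so that both new contours pass through $\xi$ and $\overline\xi$, with tangents along the direction of steepest descent for $\Re(-S)$ (respectively $\Re(S)$) at each saddle. Such contours exist by the usual topological argument: the level set $\{\Re(S)=\Re(S(\xi))\}$ forms a figure-eight near each saddle, and arcs of (slight perturbations of) these level sets can be stitched together to produce closed curves with the required homotopy classes around $\{0\}$ and $\{1\}$. On $\Gamma_z$ one has $\Re(S(z))\ge \Re(S(\xi))$ with strict inequality off arbitrarily small neighborhoods of $\xi,\overline\xi$, and the reverse on $\Gamma_w$; this whole construction varies continuously in $(x,y)$ and is uniform on $\mathcal K$, because on a compact subset of $\DAz$ the saddles stay bounded away from each other and from $\{0,\pm 1\}$, and $S''(\xi)$ is uniformly bounded away from $0$.

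Finally, I would apply the standard one-variable Laplace formula
\begin{equation*}
\frac{1}{2\pi\i}\int_{\Gamma}\e^{\mp(n+1)S(z)}g(z)\,dz
=\frac{\e^{\i\alpha}\,g(\xi_0)\,\e^{\mp(n+1)S(\xi_0)}}{\sqrt{2\pi(n+1)|S''(\xi_0)|}}(1+o(1))
\end{equation*}
separately at each saddle $\xi_0\in\{\xi,\overline\xi\}$, for both the inner ($z$, upper sign) and outer ($w$, lower sign) integrals; the tangent angles satisfy $\e^{\i\alpha_{z,\xi}}=\pm \e^{-\i\theta_\xi}$ and $\e^{\i\alpha_{w,\xi}}=\pm \i\,\e^{-\i\theta_\xi}$ (with complex conjugates at $\overline\xi$), the global signs being fixed by the orientation of $\Gamma_z,\Gamma_w$. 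Multiplying the four combinations, using $|S''(\overline\xi)|=|S''(\xi)|$ and $S(\xi)-S(\overline\xi)=2\i\,\Im S(\xi)$, and collecting the overall constant $1/(2\pi\i)^2$, produces the four summands in the claimed formula: the diagonal pairs $(\xi,\xi)$ and $(\overline\xi,\overline\xi)$ have vanishing exponent and give the non-oscillatory $\e^{\pm 2\i\theta_\xi}$ terms, while the off-diagonal pairs $(\xi,\overline\xi)$ and $(\overline\xi,\xi)$ produce the oscillatory $\e^{\pm 2\i(n+1)\Im S(\xi)}$ terms. The main technical difficulty will be bookkeeping the tangent-angle signs so that they match the specific form stated in the lemma, together with verifying uniformity of the $o(1)$ error over $\mathcal K$; the latter is a routine compactness argument since no degenerations occur strictly inside $\DAz$.
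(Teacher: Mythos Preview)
Your proposal is correct and follows essentially the same steepest descent approach as the paper: deform both contours through the conjugate saddles $\xi,\overline\xi$ (noting there is no $z=w$ pole), localize to neighborhoods of the four saddle pairs, and apply the one-variable Laplace formula at each. The only cosmetic difference is that the paper works with the exact steepest descent/ascent level curves $\Im S=\Im S(\xi)$ (describing $\gamma_z$ as an arc from $1$ to $-1$ and $\gamma_w$ as an arc from $0$ to $\infty$, checking the residue at $w=\infty$ vanishes), whereas you describe the contours more abstractly via homotopy classes; this does not affect the argument.
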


\begin{figure}
 \begin{center}
 \begin{tikzpicture}[scale=1]
    \draw (-2.5,0) node {\includegraphics[trim={0cm 0cm 1cm 3cm},clip, width=0.22\textwidth]{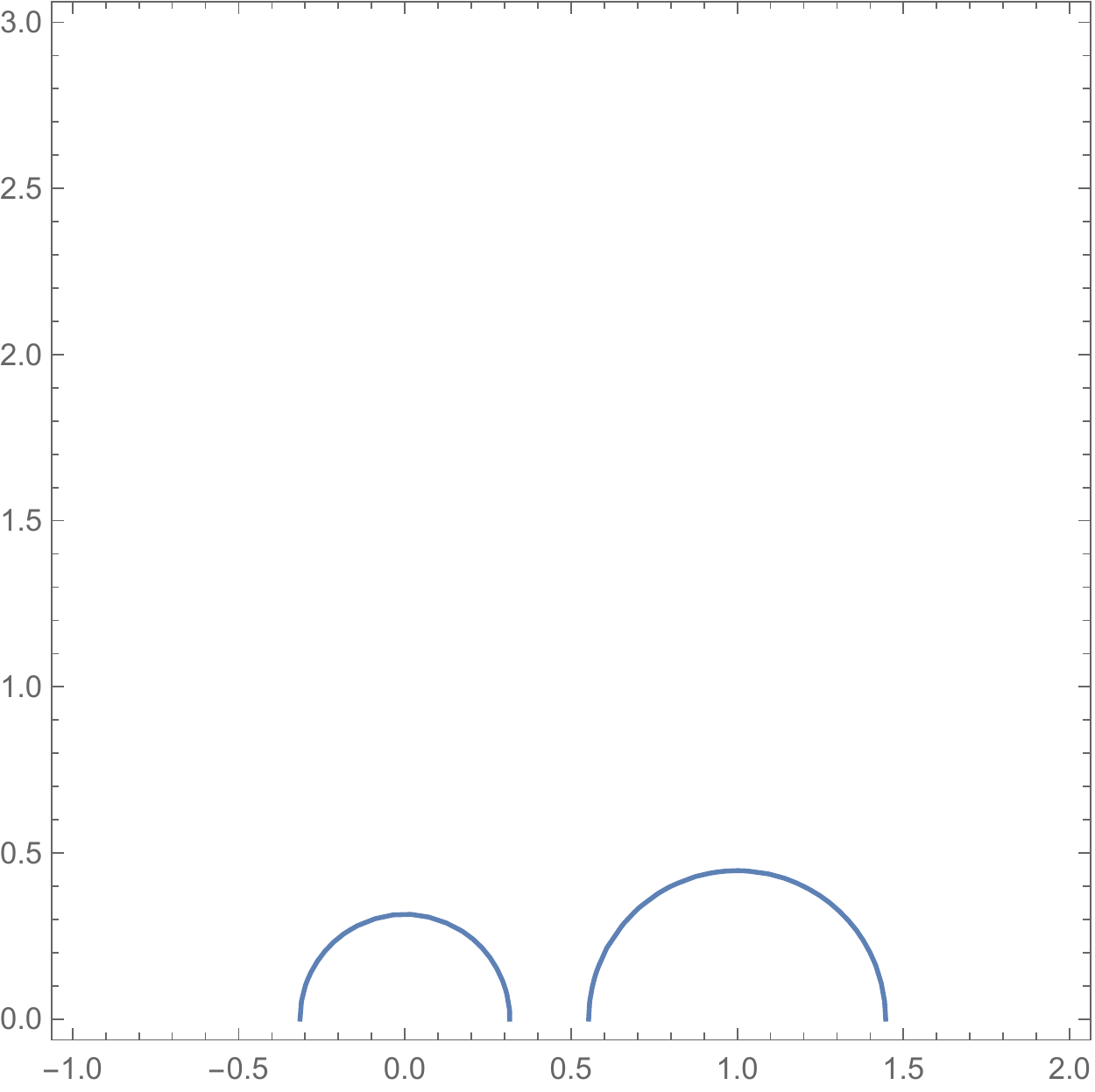}};
    \draw (2.5,0) node {\includegraphics[trim={0cm 0cm .2cm 3cm},clip, width=0.22\textwidth]{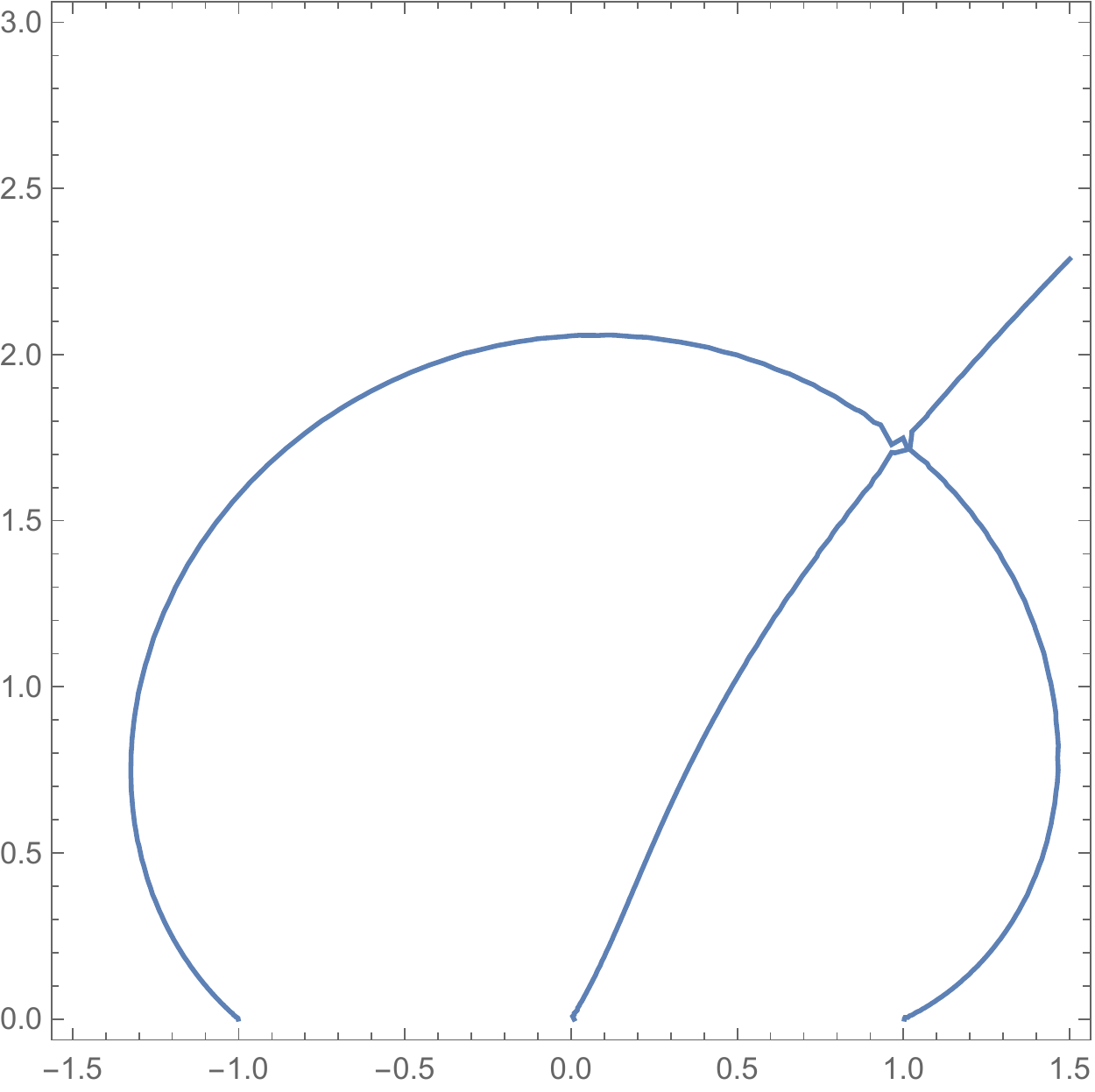}};
	\draw (-2.8,-.6) node{$\mathcal E_1$};
	\draw (-1.7,-.4) node{$\mathcal E_2$};
	\draw (2.1,.7) node{$\gamma_z$};
	\draw (2.7,0) node{$\gamma_w$};
  \end{tikzpicture}
 \end{center}
  \caption{Part of the curves~$\mathcal E_1$,~$\mathcal E_2$,~$\gamma_z$ and~$\gamma_w$. The curves are symmetric with respect to the real line. In the example given here~$x=.1$ and~$y=.5$.\label{fig:steepest_descent_curves}}
\end{figure}

\begin{proof}

We perform the proof in steps, and as the argument is standard we do not give all details.

\begin{enumerate}[1.]
\item  First we deform the contours, without crossing any residues, to the steepest descent and ascent contours, of~$\left|\e^{(n+1)S}\right|=\e^{(n+1)\re S}$, passing through~$\xi$ and~$\overline{\xi}$, given by the level lines~$\im (S(z)) = \im(S(\xi))$ and~$\im(S(z)) = \im(S(\overline{\xi}))$. We denote the~$z$ contour by~$\gamma_z$ and the~$w$ contour by~$\gamma_w$. The contours~$\gamma_z$ and~$\gamma_w$ are as follows: each consists of a piece in the upper half plane through~$\xi$, and the reflection of this piece in the lower half plane. The part of~$\gamma_w$ in the upper half plane goes from~$0$ to~$\infty$, and the part of~$\gamma_z$ in the upper half plane is an arc from~$1$ to~$-1$, see Figure~\ref{fig:steepest_descent_curves}. Since
\begin{equation*}
\e^{(n+1)(S(w)-S(z))}=\frac{w^{\frac{n+1}{2}(1+x+y)}(z+1)^{\frac{n+1}{2}(1-x+y)}(z-1)^{\frac{n+1}{2}(1+x-y)}}{(w+1)^{\frac{n+1}{2}(1-x+y)}(w-1)^{\frac{n+1}{2}(1+x-y)}z^{\frac{n+1}{2}(1+x+y)}},
\end{equation*}
we see that the poles of the integrand all occur at~$z = 0, w = \pm 1$, and the residue of the integrand at~$w = \infty$ is~$0$. Thus, as argued in \cite[Section 5]{CKBAztec}, we can indeed deform the contours~$\mathcal E_1$ and~$\mathcal E_2$ to~$\gamma_z$ and~$\gamma_w$, respectively, and in our case we do not pick up any residue because we have no pole at~$z=w$ in the integrand.

\item The contribution to the remaining integral comes from the union of~$4$ regions where~$(z, w)$ are both in a small neighborhood of the points~$\{\xi, \overline{\xi}\}$. Indeed, for small enough~$\eps>0$ we can truncate the integral to the parts of the contours in balls of radius~$(n+1)^{-\frac{1}{2} + \eps}$ around the critical points, at the cost of an error of order~$\e^{-c (n+1)^{\eps}}$ for some~$c >0$. We see this as follows: if we choose a pair of points~$\Xi_z, \Xi_w \in \{\xi, \overline{\xi}\} $, and write 
$$\tilde{z}= \sqrt{(n+1)}(z - \Xi_z), \quad \text{and} \quad \tilde{w} = \sqrt{(n+1)}(w - \Xi_w), $$
then we see that when the contours for~$\tilde{w}, \tilde{z}$ cross the boundary of this neighborhood, the norm of the integrand is of order 
$\exp(- c (n+1)^{2 \eps})$
for some small~$c > 0$. Since the contours are in the direction of steepest ascent and descent, the entire integral along the parts of the contour outside of these neighborhoods of critical points decays as~$\exp(-c (n+1)^{2 \eps})$.

\item Now we compute the leading order contribution coming from pairs of neighborhoods around points~$\{\xi, \overline{\xi}\}$. For each choice of pair~$\Xi_z, \Xi_w \in \{\xi, \overline{\xi}\}$, we define~$t, s$ by 
\begin{equation*}
\tilde{w} = \frac{t}{\i |S''(\xi)|^{1/2} \e^{\i \theta_{ \Xi_w} }}, \quad \text{and} \quad \tilde{z} = \pm  \frac{s}{|S''(\xi)|^{1/2} \e^{\i \theta_{ \Xi_z}  }},
\end{equation*}
where we use~$\theta_{ \Xi_w} = \frac{1}{2} \arg(S''(\Xi_w))$, and~$\theta_{\Xi_z} = \frac{1}{2} \arg(S''(\Xi_z)) $. We take the~$+$ when~$\Xi_z = \xi$ and~$-$ if~$\Xi_z = \overline{\xi}$. 

We Taylor expand~$S$ and~$G$ around~$(\Xi_z,\Xi_w)$, make the substitution given above, and then extend the curves of integration to~$\pm \infty$. We obtain that the left hand side of \eqref{eq:double_integral} is equal to 
\begin{multline*}
\frac{1}{(n+1) \i |S''(\xi)| } \sum_{\Xi_z, \Xi_w \in \{\xi, \overline{\xi}\}} \frac{\pm 1}{\e^{\i \theta_{\Xi_w}} \e^{\i \theta_{ \Xi_z}}} G(\Xi_z, \Xi_w) \e^{(n+1)\big(S(\Xi_w) - S(\Xi_z)\big)} \\
 \times \frac{1}{(2\pi \i)^2} \int_{-\infty}^\infty \int_{-\infty}^\infty \e^{-\frac{t^2}{2} - \frac{s^2}{2}}\,d s\,d t 
 \left(1  + \Ordo\left(n^{3\eps-1/2}\right) \right),
\end{multline*}
as~$n\to \infty$. The signs are taken according to the signs in the definition of~$\tilde z$. Hence, the leading order term is equal to
 \begin{multline*}
-\frac{1}{n+1} \frac{1}{2\pi \i |S''(\xi)| } \bigg(  \frac{1}{\e^{2 \i \theta_{\xi}} } G( \xi, \xi)  + \frac{1}{\e^{ \i \theta_{\xi}} \e^{ \i \theta_{\overline{\xi}}} }  \e^{-(n+1)\big(S(\xi) - S(\bar \xi)\big)} G( \xi, \overline{\xi})  \\
-\frac{1}{\e^{2 \i \theta_{\overline{\xi}}} } G( \overline{\xi}, \overline{\xi})     -   \frac{1}{ \e^{ \i \theta_{\overline{\xi}}} \e^{ \i \theta_{\xi}} }   \e^{(n+1)\big(S(\xi) - S(\bar\xi)\big)} G( \overline{\xi}, \xi) 
 \bigg) . 
 \end{multline*}
We observe that by real analyticity of~$S(z)$, this is exactly the expression in the theorem. 

It remains to observe that all implicit constants in the error terms can be taken to be continuous in~$(x, y)$, and thus can be taken to be uniform for~$\frac{1}{n+1} (j, k) \in \mathcal K$.
\end{enumerate}
\end{proof}

\subsection{Edges and angles}
For the bound of the size of the edges and angles in Assumption~\ref{assumption:bounded_aztec}, we use the leading term in Lemma~\ref{lem:limit_integral}. Recall that~$G=G_{\iota,\iota'}G_{\mathcal T}$, where~$G_{\mathcal{T}}$ and~$G_{\iota,\iota'}$ are defined by~\eqref{eq:g_t} and~\eqref{eq:g_direction},  and~$|\iota|+|\iota'|=1$. 

We can simplify the leading term in Lemma~\ref{lem:limit_integral} slightly. To that end we define~$h_{\iota,\iota'}$ and~$g_{\iota,\iota'}$ from \eqref{eq:g_direction} so that~$G_{\iota,\iota'}(z,w)=h_{\iota,\iota'}(z)g_{\iota,\iota'}(w)$. To specify the functions completely, we require~$h_{\iota,\iota'}(1)=1$. We denote 
\begin{equation}\label{eqn:defs} 
\alpha \coloneqq \frac{1}{\e^{2 \i \theta_{\xi}}}, \;\; \text{and} \;\; \beta \coloneqq \e^{-2 (n+1) \i \mathrm{Im}(S(\xi))}. 
\end{equation}

The explicit expression of~$G_\T$, \eqref{eq:g_t}, yield
\begin{equation*}
G(\xi,\xi)\alpha+G(\xi,\bar \xi)\beta=\frac{(1-\i)h_{\iota,\iota'}(\xi)(\xi-\i)}{(\xi-1)(\xi+1)}\left(\frac{g_{\iota,\iota'}(\xi)(\xi-\i)}{\xi}\alpha+\frac{g_{\iota,\iota'}(\bar \xi)(\bar \xi-\i)}{\bar \xi}\beta\right),
\end{equation*}
and
\begin{equation*}
G(\bar \xi,\bar \xi)\bar \alpha+G(\bar\xi,\xi)\bar \beta=\frac{(1-\i)h_{\iota,\iota'}(\bar \xi)(\bar \xi-\i)}{(\bar \xi-1)(\bar \xi+1)}\left(\frac{g_{\iota,\iota'}(\bar \xi)(\bar \xi-\i)}{\bar \xi}\bar \alpha+\frac{g_{\iota,\iota'}(\xi)(\xi-\i)}{\xi}\bar \beta\right).
\end{equation*}
Since~$|\alpha|=|\beta|=1$, we get that
\begin{equation*}
(G(\xi,\xi)\alpha+G(\xi,\bar \xi)\beta)=\alpha \beta\frac{(\bar \xi-1)(\bar \xi+1) h_{\iota,\iota'}(\xi)}{(\xi-1)(\xi+1) h_{\iota,\iota'}(\bar \xi)}\frac{\xi-\i}{\bar \xi-\i}(G(\bar \xi,\bar \xi)\bar \alpha+G(\bar \xi, \xi)\bar \beta).
\end{equation*}
In particular up to a sign, the numerator of the coefficient of~$\frac{1}{n+1}$ in Lemma~\ref{lem:limit_integral} is given by
\begin{multline}\label{eq:length_edge_factor}
G(\xi,\xi)\alpha+G(\xi,\bar \xi)\beta-G(\bar \xi,\bar \xi)\bar \alpha-G(\bar \xi, \xi)\bar \beta \\
=\left(\alpha \beta\frac{(\bar \xi-1)(\bar \xi+1) h_{\iota,\iota'}(\xi)}{(\xi-1)(\xi+1) h_{\iota,\iota'}(\bar \xi)}\frac{\xi-\i}{\bar \xi-\i}-1\right)(G(\bar \xi,\bar \xi)\bar \alpha+G(\bar \xi, \xi)\bar \beta)\\
=\frac{\bar \alpha(1-\i)h_{\iota,\iota'}(\bar \xi)g_{\iota,\iota'}(\bar \xi)(\bar \xi-\i)^2}{(\bar \xi-1)(\bar \xi+1)\bar \xi}
\left(\alpha \beta\frac{(\bar \xi-1)(\bar \xi+1) h_{\iota,\iota'}(\xi)}{(\xi-1)(\xi+1) h_{\iota,\iota'}(\bar \xi)}\frac{\xi-\i}{\bar \xi-\i}-1\right)
\left(1+\alpha\bar \beta\frac{g_{\iota,\iota'}(\xi)\bar \xi}{g_{\iota,\iota'}(\bar \xi)\xi}\frac{\xi-\i}{\bar \xi-\i}\right). 
\end{multline}

The following lemma proves the boundedness of the edges in Assumption~\ref{assumption:bounded_aztec}, with~$\mu_n=\frac{1}{n+1}$. 
\begin{lemma}\label{lem:bound_edges}
Let~$\mathcal K\subset\DAz$ be a compact set. There exist positive constants~$N_\mathcal{K}$,~$C_\mathcal{K}>1$ such that
\begin{equation*}
\frac{1}{n+1}\frac{1}{C_\mathcal{K}}  \leq |d\T_n(e_{\iota,\iota'}(j,k))| \leq \frac{1}{n+1}C_\mathcal{K},
\end{equation*}
for all edges~$e_{\iota,\iota'}(j,k)\subset \mathcal K$ and all~$n>N_\mathcal{K}$. 
\end{lemma}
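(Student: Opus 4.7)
The plan is to use the steepest-descent asymptotics provided by Lemma~\ref{lem:limit_integral} together with the factorization~\eqref{eq:length_edge_factor} of the leading-order coefficient. Lemma~\ref{lem:limit_integral} says that
\begin{equation*}
d\T_n(e_{\iota,\iota'}(j,k)) \;=\; \pm\frac{-1}{(n+1)}\cdot\frac{1}{2\pi\i\,|S''(\xi)|}\cdot E_n(j,k)\cdot(1+o(1)),
\end{equation*}
where $E_n(j,k)$ is the bracketed expression in that lemma, and the $o(1)$ is uniform for $(j,k)/(n+1)\in\mathcal K$. Thus it is enough to show that $|S''(\xi)|$ and $|E_n(j,k)|$ are bounded above and below by positive constants depending only on $\mathcal K$ (the $(1+o(1))$ factor will then be absorbed by enlarging $N_\mathcal K$ and $C_\mathcal K$).

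First I would check the tame, $n$-independent quantities. For $(x,y)\in\mathcal K\subset\DAz$, the map $(x,y)\mapsto\xi(x,y)$ is continuous and sends $\mathcal K$ into a compact subset of the open upper half plane that stays uniformly away from $\{0,\pm1,\i\}$. Consequently $|S''(\xi)|$ is bounded above and below, and the prefactor
\begin{equation*}
\frac{\bar\alpha(1-\i)h_{\iota,\iota'}(\bar\xi)g_{\iota,\iota'}(\bar\xi)(\bar\xi-\i)^2}{(\bar\xi-1)(\bar\xi+1)\bar\xi}
\end{equation*}
in~\eqref{eq:length_edge_factor} has modulus bounded above and bounded below (recall $|\bar\alpha|=1$ and the rational factors from~\eqref{eq:g_direction} have only the forbidden singularities $0,\pm1$).

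The main step, and the only place where the discrete parameter $n$ enters non-trivially, is to bound the two oscillating factors
\begin{equation*}
B \;=\; \alpha\beta\,r_B-1,\qquad C \;=\; 1+\alpha\bar\beta\,r_C,
\end{equation*}
away from zero, uniformly in $n$ and in $(j,k)$ with $(j,k)/(n+1)\in\mathcal K$. The key observation is that the $\xi$-dependent ratios $r_B$ and $r_C$ have modulus strictly less than $1$. Indeed, $h_{\iota,\iota'}$ and $g_{\iota,\iota'}$ have real coefficients and the factors $\xi\pm1$ come in conjugate pairs, so the ``algebraic'' parts of $r_B$ and $r_C$ have modulus $1$; the only non-unit contribution is $(\xi-\i)/(\bar\xi-\i)$, which satisfies
\begin{equation*}
\left|\frac{\xi-\i}{\bar\xi-\i}\right|^2 \;=\; \frac{(\re\xi)^2+(\im\xi-1)^2}{(\re\xi)^2+(\im\xi+1)^2} \;<\; 1
\end{equation*}
whenever $\im\xi>0$. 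Since $\mathcal K$ is compactly contained in $\DAz$, we have $\im\xi\geq c_1>0$ uniformly on $\mathcal K$, which yields $|r_B|=|r_C|\leq 1-c_2$ for some $c_2>0$ depending only on $\mathcal K$. Combined with $|\alpha|=|\beta|=1$, this gives $|B|\geq c_2$, $|C|\geq c_2$, and the trivial upper bounds $|B|,|C|\leq 2$.

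Multiplying the bounds on $|S''(\xi)|^{-1}$, the prefactor, $|B|$, and $|C|$ yields constants $C_\mathcal K>1$ and $N_\mathcal K$ such that
\begin{equation*}
\frac{1}{(n+1)C_\mathcal K}\;\leq\;|d\T_n(e_{\iota,\iota'}(j,k))|\;\leq\;\frac{C_\mathcal K}{n+1}
\end{equation*}
for all $n>N_\mathcal K$ and all edges $e_{\iota,\iota'}(j,k)\subset\mathcal K$. The only real obstacle is the lower bound, which a priori could fail if the oscillatory factors $B$ or $C$ happened to hit zero for certain $n$; the $\im\xi>0$ argument above rules this out, which is why the bound $\mathcal K\subset\DAz$ (the liquid region) is used rather than $\mathcal K\subset\mathcal A$.
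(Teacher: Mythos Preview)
Your proof is correct and follows essentially the same approach as the paper's: reduce to bounding the three factors in~\eqref{eq:length_edge_factor}, then use that the ``algebraic'' parts of $r_B,r_C$ have modulus~$1$ (reality of $h_{\iota,\iota'},g_{\iota,\iota'}$) while $|(\xi-\i)/(\bar\xi-\i)|<c<1$ uniformly on~$\mathcal K$. One small inaccuracy: $\xi$ need not stay away from~$\i$ (indeed $\xi(0,0)=\i$), but this is harmless since the prefactor involves $(\bar\xi-\i)^2$ with $\bar\xi$ in the lower half plane, so it never vanishes; the only genuine requirement is that $\xi$ stay away from~$\mathbb{R}$, which you correctly deduce from $\mathcal K\Subset\DAz$.
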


\begin{proof}
Recall that~$|\iota|+|\iota'|=1$ for the edge~$e_{\iota,\iota'}(j,k)$, and we assume without loss of generality that~$j+k+n$ is odd. We set~$(x,y)=\frac{1}{n+1}(j,k)$. Note that for~$(x,y)\in \mathcal K$,~$\xi = \xi(x,y)$ is bounded away from the real line, and~$|S''(\xi)|$ is bounded away from zero and infinity. Since the limit in Lemma~\ref{lem:limit_integral} is uniform on compact subsets, it is sufficient to prove that
\begin{equation*}
G(\xi,\xi)\alpha+G(\xi,\bar \xi)\beta-G(\bar \xi,\bar \xi)\bar \alpha-G(\bar \xi, \xi)\bar \beta,
\end{equation*}
is bounded away from zero and infinity on~$\mathcal K$.

The statement now follows from \eqref{eq:length_edge_factor}. Since~$\xi(x,y)$ is in the upper half plane and bounded away from~$\mathbb{R}$, the first factor in \eqref{eq:length_edge_factor} is uniformly bounded away from zero and infinity. The same is true for the second and third factor, since
\begin{equation*}
\left|\alpha \beta\frac{(\bar \xi-1)(\bar \xi+1)}{(\xi-1)(\xi+1)}\frac{h_{\iota,\iota'}(\xi)}{h_{\iota,\iota'}(\bar \xi)}\right|=1
=\left|\alpha\bar \beta\frac{\bar\xi}{\xi}\frac{g_{\iota,\iota'}(\xi)}{g_{\iota,\iota'}(\bar\xi)}\right|,
\end{equation*}
where we used that~$h_{\iota,\iota'}$ and~$g_{\iota,\iota'}$ are real, while
\begin{equation*}
\left|\frac{\xi-\i}{\bar \xi-\i}\right|<c<1,
\end{equation*}
for some constant~$c$ which only depends on~$\mathcal K$. Hence, there exists a constant~$\widetilde C>1$ so that for~$(x,y)\in \mathcal K$,
\begin{equation*}
\frac{1}{\widetilde C}\leq |G(\xi,\xi)\alpha+G(\xi,\bar \xi)\beta-G(\bar \xi,\bar \xi)\bar \alpha-G(\bar \xi, \xi)\bar \beta|\leq \widetilde C,
\end{equation*}
which proves the statement.
\end{proof}

We end the section by proving the second part of Assumption~\ref{assumption:bounded_aztec}, that the angles are bounded away from~$0$ and~$\pi$. 
\begin{lemma}\label{lem:bound_angles}
Let~$\mathcal K\subset \DAz$ be a compact set. Then there exist constants~$N_{\mathcal K}>0$ and~${\varepsilon_\mathcal{K}\in (0,\pi)}$, such that if~$n >N_\mathcal{K}$, and~$\frac{1}{n+1}e_{\iota_1,\iota_1'}(j,k),\frac{1}{n+1}e_{\iota_2,\iota_2'}(j,k)$ are contained in~$\mathcal K$, the angle between two edges~$d\T_n(e_{\iota_1,\iota_1'}(j,k))$ and~$d\T_n(e_{\iota_2,\iota_2'}(j,k))$, which are adjacent to a common face, is contained in~$(\varepsilon_\mathcal{K}, \pi-\varepsilon_\mathcal{K})$. 
\end{lemma}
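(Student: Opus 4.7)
The proof will parallel that of Lemma~\ref{lem:bound_edges}. By the argument of Lemma~\ref{lem:compact_pullback} I first reduce to a compact $\mathcal K \subset \DAz$; on such $\mathcal K$ the critical point $\xi = \xi(x,y)$ stays in a compact subset $K' \subset \HH$ and $|S''(\xi)|$ is bounded away from zero. Lemma~\ref{lem:limit_integral} then gives, uniformly for $(j/(n+1), k/(n+1)) \in \mathcal K$,
\[
d\T_n(e_{\iota,\iota'}(j,k)) = \frac{\pm 1}{(n+1) \cdot 2\pi\i |S''(\xi)|}\,\Phi_{\iota,\iota'}(\xi,\beta)\,(1 + o(1)),
\]
with $\Phi_{\iota,\iota'}$ the four-term combination from \eqref{eq:length_edge_factor} and $\beta = \e^{-2(n+1)\i \im S(\xi)} \in \bbT$. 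Hence the angle between two consecutive edges at a common face converges, modulo $\pi$, to $\arg(\Phi_{\iota_1,\iota_1'}/\Phi_{\iota_2,\iota_2'})$ with uniform $o(1)$ error. Since Lemma~\ref{lem:bound_edges} already gives $|\Phi_{\iota_i,\iota_i'}|$ bounded below, it is enough to prove that this ratio is never real on $K' \times \bbT$; continuity and compactness then deliver the uniform $\varepsilon_\mathcal{K}$.

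The plan for the non-reality check is to exploit \eqref{eq:length_edge_factor} systematically. The prefactor of $\Phi_{\iota,\iota'}$ depends on direction only through $h_{\iota,\iota'}(\bar\xi) g_{\iota,\iota'}(\bar\xi)$, the second factor only through the ratio $h_{\iota,\iota'}(\xi)/h_{\iota,\iota'}(\bar\xi)$, and the third only through $g_{\iota,\iota'}(\xi)/g_{\iota,\iota'}(\bar\xi)$. Since $h_{\iota,\iota'} \in \{1, 1/z\}$ and $g_{\iota,\iota'} \in \{\pm 1/(w\pm 1)\}$, the second and third factors take only two distinct values each, say $A^{(1)}, A^{(2)}$ and $B^{(1)}, B^{(2)}$, with $A^{(i)} = \alpha\beta \sigma_i - 1$ and $B^{(i)} = 1 + \alpha\bar\beta \tau_i$ for explicit $\sigma_i, \tau_i$ satisfying $|\sigma_i|, |\tau_i| < 1$ on $K'$ (the bound $|(\xi-\i)/(\bar\xi-\i)| < 1$ already used in the proof of Lemma~\ref{lem:bound_edges}). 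Combining these pieces, the four consecutive ratios take the closed form
\[
r_1 = -\tfrac{\bar\xi+1}{\bar\xi-1} \cdot \tfrac{B^{(1)}}{B^{(2)}}, \quad r_2 = -\tfrac{1}{\bar\xi} \cdot \tfrac{A^{(1)}}{A^{(2)}}, \quad r_3 = \tfrac{\bar\xi-1}{\bar\xi+1} \cdot \tfrac{B^{(2)}}{B^{(1)}}, \quad r_4 = \bar\xi \cdot \tfrac{A^{(2)}}{A^{(1)}},
\]
each the product of a non-real $\xi$-dependent factor (a Möbius image of $\bar\xi \in \overline{\HH} \setminus \R$) with a ratio of the form $(1+\text{(unimodular)}\cdot \tau_i)/(1+\text{(unimodular)}\cdot \tau_j)$. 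The direct check $r_1 r_3 = r_2 r_4 = -1$ shows that only $r_1$ and $r_2$ need to be handled.

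The hard part will be the remaining algebraic verification that $r_1$ and $r_2$ are never real on $K' \times \bbT$. Each case reduces to showing that a specific identity of the form (non-real constant) $\cdot B^{(1)}\overline{B^{(2)}} = $ (its conjugate) $\cdot \overline{B^{(1)}}B^{(2)}$ (for $r_1$; with $A$'s replacing $B$'s for $r_2$) has no solution; this is to be verified by direct computation exploiting $\im \xi > 0$ together with $|\sigma_i|, |\tau_i| < 1$. As a sanity check, at the center $(x,y) = (0,0)$ one has $\xi = \i$, $\alpha = \beta = 1$, $\tau_1 = \tau_2 = \sigma_1 = \sigma_2 = 0$, and a direct computation gives $(\Phi_{1,0}, \Phi_{0,1}, \Phi_{-1,0}, \Phi_{0,-1}) = (2\i, -2, -2\i, 2)$: the four edges form a perfectly orthogonal cross with all angles $\pi/2$, in accordance with the $\pi/2$-rotational symmetry of the Aztec diamond at the origin.
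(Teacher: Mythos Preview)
Your setup tracks the paper's closely: the factorization of $\Phi_{\iota,\iota'}$ via \eqref{eq:length_edge_factor}, the observation that consecutive directions share one of $h$ or $g$, and the reduction to the four ratios $r_1,\dots,r_4$ (with the symmetry $r_1r_3=r_2r_4=-1$) are all correct and match the paper's case split. However, you stop precisely where the actual work begins. Writing ``this is to be verified by direct computation exploiting $\im \xi > 0$ together with $|\sigma_i|,|\tau_i|<1$'' is not a proof; it is a restatement of what remains to be proved. The ratios $B^{(1)}/B^{(2)}$ and $A^{(1)}/A^{(2)}$ depend on the unimodular parameter $u=\alpha\bar\beta$ (resp.\ $\alpha\beta$), and you must show that as $u$ ranges over all of~$\bbT$, the product with the prefactor never lands on~$\bbR$. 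The bounds $|\sigma_i|,|\tau_i|<1$ alone do not give this: a generic expression $c\,(1+u\tau_1)/(1+u\tau_2)$ with $|\tau_i|<1$ and fixed non-real $c$ can certainly be real for some $u\in\bbT$. A specific cancellation involving the prefactor, the $\tau_i$, and the factor $(\xi-\i)/(\bar\xi-\i)$ is needed, and you have not exhibited it.

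The paper does carry out this computation. It expands the normalized product $\overline{\Phi_1}\Phi_2/|\Phi_1||\Phi_2|$, observes that the cross terms are real, and shows (see \eqref{eq:scalar_product_imaginary}) that the imaginary part equals a strictly positive factor $1-|(\xi-\i)/(\bar\xi-\i)|^2$ times $\im\!\bigl(g_{\iota_1,\iota_1'}(\xi)g_{\iota_2,\iota_2'}(\bar\xi)\bigr)/|g_{\iota_1,\iota_1'}(\xi)||g_{\iota_2,\iota_2'}(\bar\xi)|$, all over a denominator bounded above and below; the last quotient is then computed to be $2\im\xi/(|\xi-1||\xi+1|)>0$. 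This explicit positivity is the content you are missing. There is also a second, smaller gap: your argument only addresses vertices with $j+k+n$ odd (the parity for which the integral formula is written). The paper handles the even parity by a separate short argument, using convexity of the quadrilateral faces together with the angle condition of the t-embedding; you should include this reduction as well.
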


\begin{proof}
It is enough to prove the statement for~$j+k+n$ odd. Indeed, a face of~$\T_n((A_{n+1}')^*\cap \mathcal{K})$ is a convex quadrilateral with edges bounded from above and below, by Lemma~\ref{lem:bound_edges}. If we know the statement is true for~$j+k+n$ odd, then it means that two opposite angles in the quadrilateral are bounded from~$\pi$ and~$0$. It follows that the other two angles in the quadrilateral are bounded away from~$0$. Recall the angle condition in the definition of perfect t-embeddings, Definition~\ref{def:temp}, which states that the sum of all angles around a vertex corresponding to the corner of a black (white) face is equal to~$\pi$. Since we consider the square lattice, and since all angles are bounded away from~$0$, we conclude from the angle condition that all angles are also bounded away from~$\pi$.


By the above discussion we assume that $j+k+n$ is odd. We also assume that~$(\iota_1,\iota_1')$ follows by~$(\iota_2,\iota_2')$ when viewed as point on the unit circle oriented in positive direction. 
 
In addition to the notation \eqref{eqn:defs} we set
\begin{multline*}
A_i\coloneqq G_{\iota_i,\iota_i'}(\xi,\xi)G_\mathcal T(\xi,\xi), \quad B_i\coloneqq G_{\iota_i,\iota_i'}(\xi,\bar \xi)G_\mathcal T(\xi,\bar \xi), \\
 \widetilde A_i\coloneqq G_{\iota_i,\iota_i'}(\bar \xi,\bar \xi)G_\mathcal T(\bar \xi,\bar \xi), \quad \text{and} \quad
\widetilde B_i\coloneqq G_{\iota_i,\iota_i'}(\bar \xi,\xi)G_\mathcal T(\bar \xi,\xi).
\end{multline*}

Let us consider 
\begin{equation}\label{eq:scalar_product}
\frac{\overline{\left(A_1\alpha+B_1\beta-\widetilde A_1\bar \alpha-\widetilde B_1\bar \beta\right)}\left(A_2\alpha+B_2\beta-\widetilde A_2\bar \alpha-\widetilde B_2\bar \beta\right)}{\left|A_1\alpha+B_1\beta-\widetilde A_1\bar \alpha-\widetilde B_1\bar \beta\right|\left|A_2\alpha+B_2\beta-\widetilde A_2\bar \alpha-\widetilde B_2\bar \beta\right|},
\end{equation}
as by Lemma~\ref{lem:limit_integral} this is the leading order term of~$\e^{\i \theta}$, as~$n\to \infty$, where~$\theta$ is the angle between~$d\T_n(e_{\iota_1,\iota_1'}(j,k))$ and~$d\T_n(e_{\iota_2,\iota_2'}(j,k))$. To prove the statement we show that the imaginary part of \eqref{eq:scalar_product} is positive and bounded away from zero for~$(x,y)\in \mathcal K$. 

Since we consider consecutive points, we see from the explicit formulas \eqref{eq:g_direction} that either
\begin{equation*}
\frac{h_{\iota_1,\iota_1'}(\xi)}{h_{\iota_1,\iota_1'}(\bar\xi)}=\frac{h_{\iota_2,\iota_2'}(\xi)}{h_{\iota_2,\iota_2'}(\bar\xi)} \quad \text{or} \quad \frac{g_{\iota_1,\iota_1'}(\xi)}{g_{\iota_1,\iota_1'}(\bar\xi)}=\frac{g_{\iota_2,\iota_2'}(\xi)}{g_{\iota_2,\iota_2'}(\bar\xi)}.
\end{equation*}
Recall that~$h_{\iota_i,\iota_i'}$ is the factor of~$G_{\iota,\iota'}$ depending on~$z$ and~$g_{\iota_i,\iota_i'}$ the factor depending on~$w$. 

We assume first that~$h_{\iota_1,\iota_1'}(\xi)/h_{\iota_1,\iota_1'}(\bar\xi)=h_{\iota_2,\iota_2'}(\xi)/h_{\iota_1,\iota_1'}(\bar\xi)$, that is,
\begin{equation*}
h_{\iota_1,\iota_1'}(z)=h_{\iota_2,\iota_2'}(z)=\frac{1}{z}, \quad g_{\iota_1,\iota_1'}(w)=\frac{1}{w-1}\quad \text{and} \quad g_{\iota_2,\iota_2'}(w)=-\frac{1}{w+1},
\end{equation*}
or
\begin{equation*}
h_{\iota_1,\iota_1'}(z)=h_{\iota_2,\iota_2'}(z)=1, \quad g_{\iota_1,\iota_1'}(w)=\frac{1}{w+1}\quad \text{and} \quad g_{\iota_2,\iota_2'}(w)=\frac{1}{w-1}.
\end{equation*}
From \eqref{eq:length_edge_factor} we get that \eqref{eq:scalar_product} is equal to
\begin{equation}\label{eq:scalar_product_simplified}
\frac{\left(g_{\iota_1,\iota_1'}(\xi)+\overline{\alpha\bar\beta\frac{\bar \xi}{\xi}\frac{\xi-\i}{\bar \xi-\i}}g_{\iota_1,\iota_1'}(\bar \xi)\right)
\left(g_{\iota_2,\iota_2'}(\bar \xi)+\alpha\bar\beta\frac{\bar\xi}{\xi}\frac{\xi-\i}{\bar \xi-\i}g_{\iota_2,\iota_2'}(\xi)\right)}
{\left|g_{\iota_1,\iota_1'}(\xi)+\overline{\alpha\bar\beta\frac{\bar \xi}{\xi}\frac{\xi-\i}{\bar \xi-\i}}g_{\iota_1,\iota_1'}(\bar \xi)\right|
\left|g_{\iota_2,\iota_2'}(\bar \xi)+\alpha\bar\beta\frac{\bar\xi}{\xi}\frac{\xi-\i}{\bar \xi-\i}g_{\iota_2,\iota_2'}(\xi)\right|}.
\end{equation}  
We expand the product in the numerator. The sum of the crossing terms is real, so the imaginary part of \eqref{eq:scalar_product_simplified} is equal to
\begin{multline}\label{eq:scalar_product_imaginary}
\frac{\im\left(g_{\iota_1,\iota_1'}(\xi)g_{\iota_2,\iota_2'}(\bar\xi)+g_{\iota_1,\iota_1'}(\bar\xi)g_{\iota_2,\iota_2'}(\xi)\left|\frac{\xi-\i}{\bar\xi-\i}\right|^2\right)}
{\left|g_{\iota_1,\iota_1'}(\xi)+\overline{\alpha\bar\beta\frac{\bar \xi}{\xi}\frac{\xi-\i}{\bar \xi-\i}}g_{\iota_1,\iota_1'}(\bar \xi)\right|
\left|g_{\iota_2,\iota_2'}(\bar \xi)+\alpha\bar\beta\frac{\bar\xi}{\xi}\frac{\xi-\i}{\bar \xi-\i}g_{\iota_2,\iota_2'}(\xi)\right|} \\
=\frac{\left(1-\left|\frac{\xi-\i}{\bar\xi-\i}\right|^2\right)}
{\left|1+\overline{\alpha\bar\beta\frac{\bar \xi}{\xi}\frac{\xi-\i}{\bar \xi-\i}}\frac{g_{\iota_1,\iota_1'}(\bar \xi)}{g_{\iota_1,\iota_1'}(\xi)}\right|
\left|1+\alpha\bar\beta\frac{\bar\xi}{\xi}\frac{\xi-\i}{\bar \xi-\i}\frac{g_{\iota_2,\iota_2'}(\xi)}{g_{\iota_2,\iota_2'}(\bar \xi)}\right|}\frac{\im\left(g_{\iota_1,\iota_1'}(\xi)g_{\iota_2,\iota_2'}(\bar\xi)\right)}{|g_{\iota_1,\iota_1'}(\xi)||g_{\iota_2,\iota_2'}(\bar\xi)|}.
\end{multline}  
On compact subsets of the upper half plane, it follows, as in the proof of Lemma~\ref{lem:bound_edges}, by the inequality 
\begin{equation*}
\left|\frac{\xi-\i}{\bar \xi-\i}\right|<c<1,
\end{equation*}
that the numerator and denominator in the first factor of \eqref{eq:scalar_product_imaginary} are bounded away from zero and infinity.  Moreover, by the explicit expression of~$g_{\iota_i,\iota_i'}$, 
\begin{equation*}
\frac{\im\left(g_{\iota_1,\iota_1'}(\xi)g_{\iota_2,\iota_2'}(\bar\xi)\right)}{|g_{\iota_1,\iota_1'}(\xi)||g_{\iota_2,\iota_2'}(\bar\xi)|}=\frac{2\im \xi}{|\xi-1||\xi+1|}>C>0,
\end{equation*}
for some constant~$C$. This proves the statement if~$h_{\iota_1,\iota_1'}(\xi)/h_{\iota_1,\iota_1'}(\bar\xi)=h_{\iota_2,\iota_2'}(\xi)/h_{\iota_1,\iota_1'}(\bar\xi)$. 

If instead~$g_{\iota_1,\iota_1'}(\xi)/g_{\iota_1,\iota_1'}(\bar\xi)=g_{\iota_2,\iota_2'}(\xi)/g_{\iota_1,\iota_1'}(\bar\xi)$, then a similar argument still goes through.
\end{proof}

\bibliographystyle{alpha}
\bibliography{bib}

\end{document}